\documentclass[onecolumn,journal,noadjust]{IEEEtran}
\usepackage{cite}

\ifCLASSINFOpdf
\else
\fi
\usepackage{amssymb}
\usepackage{dsfont}
\usepackage{float}
\usepackage{algorithm}
\usepackage{algorithmic}
\usepackage{graphicx}
\newtheorem{lemma}{Lemma}

\newtheorem{definition}{Definition}

\newtheorem{theorem}{Theorem}
\newtheorem{remark}{Remark}
\newtheorem{corollary}{Corollary}
\usepackage{xcolor}
\usepackage{mathrsfs}
\usepackage{mathtools}
\usepackage{caption}
\usepackage{mathrsfs}
\usepackage[ colorlinks = true,              linkcolor = blue, urlcolor  =
blue,              citecolor = red,              anchorcolor = green,
]{hyperref}

\newcommand{\ind}{\perp \!\!\! \perp}

\usepackage [autostyle, english = american]{csquotes}
\MakeOuterQuote{"}

\usepackage{amsmath}
\DeclareMathOperator*{\argmax}{arg\,max}
\DeclareMathOperator*{\argmin}{arg\,min}
\begin{document}
%

\title{Weighted Unequal Error Protection over a\\Rayleigh Fading Channel}
%
%
%

\author{%
    Adeel Mahmood \\
    Radio Systems Research\\
    Nokia Bell Labs  
}

\maketitle

\begin{abstract}
We study a variant of unequal error protection in channel coding, where the message bit string is divided into a finite number of blocks and the maximization objective is a weighted sum of per-block decoding success probabilities. The channel model is quasi-static Rayleigh fading with channel state information available to the receiver but unavailable to the transmitter. We analyze the asymptotic and finite blocklength performance of two achievability schemes, one based on power-domain superposition (PDS) and another based on orthogonal resource allocation (ORA), also known as time-sharing. Upper bounds on the optimal number of blocks to transmit are derived. Algorithms to compute the optimal power and time splits for the two schemes are given. Simplified algorithms to compute locally optimal power and time splits are also given. Our results show that PDS outperforms ORA, but the performance differential is less than $2\%$ in both the asymptotic and finite blocklength regimes (Figures \ref{asymptotic_PDS_vs_ORA} - \ref{finite_blocklength_PDS_vs_ORA_5000}). For both PDS and ORA, numerical results also upper bound the gap between the asymptotic and finite blocklength performance by approximately $10\%$ for $n = 1000$ and $3\%$ for $n = 5000$ (Figures \ref{finite_blocklength_within_PDS_1000} - \ref{finite_blocklength_within_ORA_5000}).                  
\end{abstract}

\begin{IEEEkeywords}
Fading, AWGN, power allocation, channel coding, finite blocklength regime. 
\end{IEEEkeywords}
\section{Introduction}

Shannon's separation-based architecture proposed to split the task of transmitting an information source through a noisy channel into two separate stages: source coding of the information source into bits followed by transmitting the bits through a noisy channel via channel coding. While separate source-channel coding (SSCC) is first-order asymptotically optimal, joint source-channel coding (JSCC) has been shown to outperform SSCC in the finite blocklength regime via second-order asymptotic analysis based on the normal approximation \cite{kostina_JSCC}. Furthermore, while SSCC abstracts away what the bits mean and why they are being exchanged, communication performance in many modern applications is increasingly judged by whether the receiver can complete a task or recover the critical meaning, not by whether every payload bit is correct. This is encapsulated by semantic and task-oriented communications \cite{gunduz2022transmittingbitscontextsemantics}. Therefore, semantic communication through joint source-channel coding has received significant interest in recent years. Deep learning-based JSCC systems, for example, can exhibit graceful degradation, avoiding the sharp “cliff effect” of many separation-based pipelines when channel quality varies \cite{deepJSCC}.

However, deploying fully joint designs is difficult because real communication networks are inherently modular; application providers responsible for source coding and network providers responsible for channel coding are typically distinct entities connected through standardized packet interfaces \cite{homaspaper1}. It is impractical to  relay the real-time channel state information to the application provider to enable channel-aware source coding. Likewise, conveying the raw information source to the network providers is difficult.

Recent works \cite{homaspaper1}, \cite{homapaper2} proposed a new standard interface, called the multi-level reliability interface, between the application provider and network provider in order to (i) abstract the channel into multiple reliability levels for the application provider and (ii) simultaneously allow the application provider to indicate multiple levels of importance within the compressed information source to the network provider. With this multi-level reliability interface with $K$ reliability levels, the source coder is trained as a JSCC with respect to independent multi-level block erasure channels \cite{homapaper2} with erasure probabilities $\epsilon_1 < \epsilon_2 < \cdots < \epsilon_K$ advised by the multi-level reliability interface. Subsequently, the encoded bits given to the network provider can be partitioned into bit blocks, each block having a semantic importance assigned to it. A channel coder at the network provider can then optimize the transmission of these bits with respect to such a partition.  

Consider a $K$-level reliability interface and let $d_1 > d_2 > \cdots > d_K > 0$ denote the importance levels (implicitly specified by the interface through the erasure probabilities) for the bit blocks that are input to the network provider. We characterize the achievable channel coding performance for any given $K$ and parameters $d_1, \ldots, d_K$. This will be the main problem studied in this paper. Since the channel is known only to the receiver in many cases, the transmitter optimizes its transmission by assigning a Rayleigh distribution to the channel amplitude $|H|$. This problem then becomes a variant of bit-wise unequal error protection \cite{UEP_paper}; key aspects of our model/results include 
\begin{itemize}
    \item a random channel coefficient $H$ known only to the receiver,
    \item asymptotic analysis and finite blocklength analysis based on error exponent and normal approximation bounds, 
    \item explicit upper bounds on the optimal number of bit blocks to transmit as a function of $d_1, \ldots, d_K$, per-block coding rate $R$ and expected channel power gain (Theorems \ref{active_layer_theorem} and \ref{active_layer_theoremORA}),
    \item first-order asymptotic analysis of two achievability schemes: power-domain superposition (PDS) and orthogonal resource allocation (ORA),
    \item analytical and numerical comparisons between PDS and ORA in the asymptotic and finite blocklength regimes (Section \ref{finalcompnumerical}),
    \item showing that the ORA scheme achieves comparable performance to the PDS scheme (Figures \ref{asymptotic_PDS_vs_ORA}, \ref{finite_blocklength_PDS_vs_ORA_1000} and \ref{finite_blocklength_PDS_vs_ORA_5000}), 
    \item the achievable performance comparison for different values of $K$ (Figures \ref{partition_asymptotic} and \ref{partition_finite}).   
\end{itemize}

We now rigorously set up the problem. Define a channel encoder $\operatorname{f}$ and channel decoder $\operatorname{g}$ as 
\begin{align}
    &\operatorname{f} : \{0, 1 \}^k \to \mathbb{C}^n\\
    &\operatorname{g} : \mathbb{C}^n \to \{0, 1 \}^k. 
\end{align}
Let $B^k = (B_1^{m}, \ldots, B_K^m)$ be a binary string that represents a message to be transmitted, where $m = \frac{k}{K}$ is an integer. We assume that $B^k$ is uniformly distributed over $\{0,1 \}^k$, which implies that the bit blocks $B_i^m$'s are independent and each $B_i^m$ is uniformly distributed over the set $\{0,1 \}^m$. We will interchangeably use the terminology "layers" or "packets" to refer to the bit blocks. We use $\mathbf{X}$ and $\mathbf{Y}$ for the channel input and output vectors. We have 
\begin{align*}
    B^k \stackrel{\operatorname{f}}{\longrightarrow} \mathbf{X} \longrightarrow \operatorname{ channel } \longrightarrow \mathbf{Y} \stackrel{\operatorname{g}}{\longrightarrow} \widehat{B}^k,
\end{align*}
where $\mathbf{X}, \mathbf{Y} \in \mathbb{C}^n$. We define $R \coloneqq m/n$ as the per-block coding rate. For a given importance vector $\overrightarrow d = (d_1, \ldots, d_K)$, where $d_1 > d_2 > \cdots > d_K > 0$ and $\sum_{i=1}^K d_i = 1$, the goal is to design the channel code $(\operatorname{f}, \operatorname{g})$ to maximize 
\begin{align}
    \sum_{i=1}^K \mathbb{P}(B_i^{m} = \widehat{B}_i^{m}) d_i.  \label{operational_obj}
\end{align}
The maximization objective in $(\ref{operational_obj})$ is suitable for applications where certain packets of information take higher priority but each packet has a standalone value. For example, the packets $B_1^m, \ldots, B_K^m$ could represent distinct attributes or features in decreasing order of importance. Hence, any subset of blocks is useful to the end-receiver, e.g., for approximate reconstruction of the original source message \cite[Section III]{homapaper2}. 

Our channel model is a single-input, single-output quasi-static flat fading channel with input–output relation 
\begin{align}
    \mathbf{Y} = H \mathbf{X} + \mathbf{Z}, \label{channel_model} 
\end{align}
where $H \sim \mathcal{C} \mathcal{N}(0, \sigma^2)$ is the random channel coefficient which is constant over the whole length-$n$ block, $\mathbf{Z} \sim \mathcal{CN}(\mathbf{0}, \mathbf{I}_n)$, $\mathbf{X}$ and $\mathbf{Z}$ are independent and
\begin{align}
    \mathbb{E}\left [ ||\mathbf{X}||^2 \right] =  \sum_{i=1}^n \mathbb{E}\left [ |X_i|^2 \right] \leq nP. \label{expected_cost_constraint}
\end{align}
We assume that the channel state information (CSI) is available to the receiver only, but both the transmitter and receiver know the distribution of $H$. Our model is delay-limited and non-ergodic, which means that the receiver must decode by the end of each block and the performance is governed by outage-type behavior rather than ergodic averaging.

The first achievability scheme, which we call PDS, is based on a power-domain superposition (PDS) encoder $\operatorname{f}$ with a successive interference cancellation (SIC) decoder $\operatorname{g}$. Let $\alpha_1, \ldots, \alpha_K$ be nonnegative fractions such that $\sum_{i=1}^K \alpha_i = 1$, and $\alpha_i$ is the power fraction used for transmitting the $i$th bit block. For each bit block $B_i^m$, where $1 \leq i \leq K$, we construct a triple $(\mathcal{C}_i, \operatorname{f}_i, \operatorname{g}_i)$ where  
\begin{itemize}
    \item $\mathcal{C}_i \subset \mathbb{C}^n$ is a random codebook of size $2^m$ generated by drawing all codeword entries i.i.d.\ according to $\mathcal{CN}(0,1)$ and
independently across $i$.
\item $\operatorname{f}_i : \{0,1\}^m \to \mathcal{C}_i$ is the encoder mapping. In particular, $\mathbf{X}_i = \operatorname{f}_i(B_i^m)$ so that $\mathbf{X}_i \sim \mathcal{C N}(\mathbf{0}, \mathbf{I}_n)$ and $\mathbb{E}[||\mathbf{X}_i||^2] = n$ since $B_i^m \sim \operatorname{Unif}\left(\{0,1 \}^m \right)$.
\item $\operatorname{g}_i : \mathbb{C}^n \to \{0,1 \}^m$ is a maximum-likelihood decoder with known $H$ so that 
\begin{align*}
    \operatorname{g}_i(\mathbf{y}) &= \argmin_{B_i^m} \| \mathbf{y} - H \sqrt{\alpha_i P} \operatorname{f}_i(B_i^m) \|^2. 
\end{align*}

\end{itemize}
The PDS encoder transmits  
\begin{align}
    \mathbf{X} = \operatorname{f}(B^k) &= \sum_{i=1}^K  \sqrt{ \alpha_i P}\, \operatorname{f}_i(B_i^m)   \\
    &= \sum_{i=1}^K  \sqrt{ \alpha_i P}\, \mathbf{X}_i.  \label{xcons}
\end{align}

\begin{lemma}
    The construction of $\mathbf{X}$ in $(\ref{xcons})$ guarantees that $\mathbb{E}[||\mathbf{X}||^2] = nP$. 
    \label{power_cons_satisfied}
\end{lemma}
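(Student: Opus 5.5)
The plan is to expand the squared norm of the superposition in $(\ref{xcons})$ and handle the diagonal and cross terms separately. Writing $\mathbf{X} = \sum_{i=1}^K \sqrt{\alpha_i P}\,\mathbf{X}_i$, we have
\begin{align*}
\|\mathbf{X}\|^2 = \sum_{i=1}^K \alpha_i P \|\mathbf{X}_i\|^2 + \sum_{i \neq j} \sqrt{\alpha_i \alpha_j}\, P\, \langle \mathbf{X}_i, \mathbf{X}_j \rangle,
\end{align*}
where $\langle \mathbf{u}, \mathbf{v}\rangle = \sum_{t=1}^n u_t \overline{v_t}$. Taking expectations and using linearity, the diagonal terms contribute $\sum_i \alpha_i P\, \mathbb{E}[\|\mathbf{X}_i\|^2] = \sum_i \alpha_i P n = nP$. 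This uses $\mathbb{E}[\|\mathbf{X}_i\|^2]=n$, which was noted in the construction, together with $\sum_i \alpha_i = 1$.

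The main step is to show that the cross terms have zero mean. For $i \neq j$, the pair $(\mathcal{C}_i, B_i^m)$ is independent of $(\mathcal{C}_j, B_j^m)$: the codebooks are generated independently across $i$, and the blocks $B_i^m$ are independent and independent of the codebooks. Hence $\mathbf{X}_i = \operatorname{f}_i(B_i^m)$ and $\mathbf{X}_j$ are independent. It follows that $\mathbb{E}[X_{i,t}\overline{X_{j,t}}] = \mathbb{E}[X_{i,t}]\,\overline{\mathbb{E}[X_{j,t}]}$.

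It remains to check that each coordinate has zero mean. A selected codeword coordinate is marginally $\mathcal{CN}(0,1)$, because $B_i^m$ is independent of the i.i.d.\ codebook entries. Therefore $\mathbb{E}[X_{i,t}] = 0$, and every cross term vanishes, which gives $\mathbb{E}[\|\mathbf{X}\|^2] = nP$ exactly.

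The only delicate point is which source of randomness the expectation is taken over. The independence argument uses averaging over the random codebook ensemble as well as the messages. If the lemma were meant for a fixed codebook realization, the cross terms would only vanish on average over the ensemble, not for each realization. I will therefore state explicitly that the expectation in $(\ref{expected_cost_constraint})$ is over both the codebooks and $B^k$, consistent with the statement $\mathbf{X}_i \sim \mathcal{CN}(\mathbf{0},\mathbf{I}_n)$ made in the construction.
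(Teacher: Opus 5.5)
Your proposal is correct and takes essentially the same route as the paper. Both expand $\mathbb{E}[\|\mathbf{X}\|^2]$ as a double sum over layers (the paper writes it via the inner product on $L^2(\Omega;\mathbb{C}^n)$), remove the $i\neq j$ terms using independence of the codebooks across layers, and sum the diagonal terms $P\alpha_i n$ to get $nP$. Your explicit checks that each $X_{i,t}$ has zero mean and that the expectation includes the codebook ensemble make precise what the paper's step $(a)$ uses implicitly; note that for a fixed codebook realization even the diagonal terms $\mathbb{E}[\|\mathbf{X}_i\|^2]$ equal $n$ only on average over the ensemble, so your caveat applies to them as well.
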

\textit{Proof:} The proof of Lemma \ref{power_cons_satisfied} is given in Appendix \ref{power_cons_satisfied_proof}.  
  
The SIC decoder $\operatorname{g}(\mathbf{Y}) = (\widehat{B}_1^m, \ldots, \widehat{B}_K^m)$ decodes in the order $1, \ldots, K$. Specifically, let $\mathbf{Y}^{(1)} = \mathbf{Y}$. Then for $i = 1, \ldots, K$, 
\begin{align*}
    \widehat{B}_i^m &= \operatorname{g}_i(\mathbf{Y}^{(i)})\\
    \mathbf{Y}^{(i + 1)} &= \mathbf{Y}^{(i)} - \sqrt{\alpha_i P} \operatorname{f}_i(\widehat{B}_i^m). 
\end{align*}

Using the PDS strategy is a layered broadcast coding approach that is often used to transmit layers of a successively refinable source \cite{SR_broadcast_andrea}, \cite{SR_via_broadcast}, \cite{utility_max}. Transmission of a Gaussian source, represented as multiple layers using successive refinement, is considered in both \cite{SR_broadcast_andrea} and \cite{SR_via_broadcast}, where the objective is to minimize the expected distortion. In \cite{utility_max}, the optimization objective is a finite weighted sum of marginal utilities, which is very similar to $(\ref{operational_obj})$. However, the $i$th marginal utility gain in \cite[(5)]{utility_max} requires cumulative success in decoding all the layers from $1$ to $i$, whereas $(\ref{operational_obj})$ rewards marginal correctness of each layer independently. We also assume that $B^k = (B_1^{m}, \ldots, B_K^m)$ is uniformly distributed and, therefore, each bit block $B_i^m$ does not necessarily represent a layer of a successively refinable source.

The use of PDS with SIC is optimal for a degraded Gaussian broadcast channel \cite{1055184}. Transmission across a fading AWGN channel with unknown CSIT (which is the channel model in our paper) admits a degraded Gaussian broadcast channel viewpoint \cite{1237140} for which PDS-SIC is optimal. Indeed, by using the PDS-SIC achievability scheme, we show that  $(\ref{operational_obj})$ can be asymptotically lower bounded by 
\begin{align}
    \sum_{i=1}^K  \exp \left(- \frac{\tau_i}{\sigma^2}  \right) d_i, \label{5-?}
\end{align}
where $\tau_1 \leq \cdots \leq \tau_K$ are the channel gains of a $K$-receiver degraded Gaussian broadcast channel with $K$ message sets of equal size $2^{nR}$, such that the equal rate-tuple $(R, \ldots, R)$ lies in the capacity region of the said $K$-receiver degraded Gaussian broadcast channel. For a fixed $R > 0$, this is a constraint on the set of feasible $\tau_1, \ldots, \tau_K$, and maximizing $(\ref{5-?})$ over the feasible $\tau_1, \ldots, \tau_K$ yields the best asymptotic lower bound to $(\ref{operational_obj})$ over the class of PDS schemes; see Section \ref{PDS_results} for a rigorous analysis of the PDS scheme without resorting to the degraded broadcast channel viewpoint.  

A key characteristic of the degraded Gaussian broadcast channel is that stronger receivers can decode everything intended for weaker receivers. However, the objective $(\ref{operational_obj})$ involves \emph{marginal} success probabilities and hence, does not mandate that the $i$th receiver decodes $B_i^m$ as well as $B_1^m, \ldots, B_{i-1}^m$ to obtain a "reward" of $d_i$. In other words, the success events may not be nested. Therefore, a converse establishing the optimality of PDS-SIC for the objective $(\ref{operational_obj})$ does not automatically follow from the known converse results for the degraded Gaussian broadcast channel \cite{1055184}. An investigation into whether $d_1 > \cdots > d_K$ in $(\ref{operational_obj})$ results in a nested structure without loss of optimality is left for future research. We instead compare the PDS scheme with another achievability scheme described next.

The second achievability scheme we consider is based on orthogonal resource allocation (ORA) encoding with maximum likelihood (ML) decoding of the orthogonal segments. This is also called time-sharing. Let $n_1, \ldots, n_K$ be nonnegative integers such that $\sum_{i=1}^K n_i = n$, and $n_i$ is the number of channel uses used for transmitting the $i$th bit block. For each bit block $B_i^m$, where $1 \leq i \leq K$, we construct a triple $(\mathcal{C}_i, \operatorname{f}_i, \operatorname{g}_i)$ where
\begin{itemize}
    \item $\mathcal{C}_i \subset \mathbb{C}^{n_i}$ is a random codebook of size $2^m$ generated by drawing all codeword entries i.i.d.\ according to $\mathcal{CN}(0,1)$ and
independently across $i$.
\item $\operatorname{f}_i : \{0,1\}^m \to \mathcal{C}_i$ is the encoder mapping. In particular, $\mathbf{X}_i = \operatorname{f}_i(B_i^m)$ so that $\mathbf{X}_i \sim \mathcal{C N}(\mathbf{0}, \mathbf{I}_{n_i})$ and $\mathbb{E}[||\mathbf{X}_i||^2] = n_i$ since $B_i^m \sim \operatorname{Unif}\left(\{0,1 \}^m \right)$.
\item $\operatorname{g}_i : \mathbb{C}^{n_i} \to \{0,1 \}^m$ is a maximum-likelihood decoder so that 
\begin{align*}
    \operatorname{g}_i(\mathbf{y}) &= \argmin_{B_i^m} \| \mathbf{y} - H \sqrt{P} \operatorname{f}_i(B_i^m) \|^2. 
\end{align*}
\end{itemize}
The ORA encoder then transmits
\begin{align*}
    \mathbf{X} = \sqrt{P}(\mathbf{X}_1, \ldots, \mathbf{X}_K).     
\end{align*}
The channel output can be written as $\mathbf{Y} = (\mathbf{Y}_1, \ldots, \mathbf{Y}_K)$ so that the ML decoder is simply 
\begin{align*}
    \operatorname{g}(\mathbf{Y}) &= ( \operatorname{g}_1(\mathbf{Y}_1), \ldots, \operatorname{g}_K(\mathbf{Y}_K)). 
\end{align*}
Unlike the PDS-SIC scheme, the ORA scheme does not impose a nested decoding structure and is therefore structurally better aligned with the objective $(\ref{operational_obj})$, while also allowing for a simpler practical implementation. Our results show that ORA achieves very close performance to the PDS both asymptotically and in the finite blocklength approximation.

In both PDS and ORA, we construct the random codebook for each bit block with codeword entries that are i.i.d. standard complex Gaussian. One reason for focusing only on an i.i.d. Gaussian ensemble is to simplify the analysis, especially in the finite blocklength regime. For the PDS scheme with SIC decoder, the i.i.d. Gaussian codewords make the "interference noise" Gaussian so that conditioned on the channel state $H = h$ known to the receiver and the correct decoding of prior bit blocks, the error analysis at each step $i$, where $1 \leq i \leq K$, is that of a simple additive white Gaussian noise channel with an explicitly determined SNR. Another reason for focusing only on an i.i.d. Gaussian codebook distribution is that the channel state information is not available to the transmitter (CSIT) so the encoder cannot fully optimize the selection of codewords. In the presence of CSIT, an optimal coding scheme is a variable-power scheme called truncated channel inversion \cite{7156144}. In our case, unavailable CSIT together with the delay constraint precludes fancier coding involving power optimization across different fading blocks.        

The rest of this paper is organized as follows. Section \ref{prelim_section} provides some notation, definitions and lemmas on the upper bounds to the average error probability across an AWGN channel. Section \ref{PDS_results} provides all the main results on the PDS scheme. Section \ref{ORA_results} provides all the main results on the ORA scheme. Section \ref{finalcompnumerical} provides a numerical comparison of the PDS and ORA schemes. Section \ref{combined_proof} provides the proof of Theorems \ref{active_layer_theorem}, \ref{global_maximizer_thm} and \ref{local_maximizer_thm} on the PDS scheme. Section \ref{combined_proofORA} provides the proof of Theorems \ref{active_layer_theoremORA}, \ref{global_maximizer_thmORA} and \ref{local_maximizer_thmORA} on the ORA scheme. All the remaining proofs are given in the appendices.     

\section{Preliminaries \label{prelim_section}}

We write $\log$ to denote logarithm to the base $2$ and $\ln$ to denote logarithm to the base $e$. Define the standard $(k-1)$-dimensional simplex as 
\begin{align*}
    \Delta^{K - 1} \coloneqq \left \{ \overrightarrow \alpha \in \mathbb{R}^K : \alpha_i \geq 0, \sum_{i=1}^K \alpha_i = 1 \right \}.
\end{align*}
For integers $a > b$, we adopt the convention that 
\begin{align}
    \sum_{i=a}^b x_i = 0. \label{sum_convention} 
\end{align}
Define $\psi : (0, \infty] \to [0, 4 e^{-2}]$ as $\psi(y) \coloneqq y^2 e^{-y}$. Key facts about $\psi(y)$:
\begin{itemize}
    \item $\psi(0^+) = 0$ and $\psi(y) \to 0$ as $y \to \infty$.
    \item $\psi(y)$ is unimodal, increasing for $y \leq 2$ and decreasing for $y \geq 2$.
    \item $\psi'(y) = y e^{-y}(2-y)$; therefore, $\psi(y)$ attains a maximum at $y = 2$ with $\psi(2) = 4e^{-2}$. 
\end{itemize}
Consider the equation 
\begin{align}
    \psi(y) = c  \label{fh} 
\end{align}
for some constant $c \in (0, 4 e^{-2})$. It is clear that $(\ref{fh})$ has two solutions $y^- \in (0, 2)$ and $y^+ \in (2, \infty)$. These two solutions can be written in terms of the Lambert function as described next. With a change of variable $w = -y/2$, equation $(\ref{fh})$ becomes 
\begin{align}
 w e^w = - \frac{\sqrt{c}}{2},   
\end{align}
where
\begin{align}
    -\frac{\sqrt{c}}{2} \in \left(-\frac{1}{e} ,0 \right). 
\end{align}
The two solutions of $(\ref{fh})$ are thus given by the two real brances of the Lambert function, the principal branch $W_0$ and the secondary branch $W_{-1}$ (see \cite[Section 4.13]{NISTHandbook}). Specifically,  
\begin{align}
    y^- &= -2 W_0\left(- \frac{\sqrt{c}}{2} \right) \in (0, 2)\\
    y^+ &= -2 W_{-1}\left( -\frac{\sqrt{c}}{2} \right) \in (2, \infty). 
\end{align}
In this paper, we consider the functions $W_0$ and $W_{-1}$ with domains $[-1/e, 0)$ so that 
\begin{itemize}
    \item $W_0 : [-1/e, 0) \to [-1, 0)$ and $W_0(x)$ is increasing, 
    \item $W_{-1} : [-1/e, 0) \to (-\infty, -1]$ and $W_{-1}(x)$ is decreasing in $x$. 
\end{itemize}
\begin{definition} $\mathrm{BisectionSearch}(f,\,a,\,b)$ denotes the standard bisection method that returns the unique root of $f$ in $[a,b]$, under assumptions ensuring that such a unique root exists. 
\end{definition}
\begin{remark}
    Throughout this paper, we invoke $\mathrm{BisectionSearch}(f,\,a,\,b)$ only when $f : \mathbb{R} \to \mathbb{R}$ is continuous and monotone on
    $[a,b]$ and has a unique root in the interval $[a, b]$. Hence, all instances of $\mathrm{BisectionSearch}(f,\,a,\,b)$ in the paper can be replaced by any comparable root-finding routine.    
\end{remark}

For $\mathbf{X} \sim \overline{P}$ and  channel $W$, we use $\overline{P} \circ W$ to denote the joint probability distribution and $\overline{P}W$ to denote the induced output distribution, i.e., $(\mathbf{X}, \mathbf{Y}) \sim \overline{P} \circ W$ and $\mathbf{Y} \sim \overline{P}W$.

The capacity $C(\rho)$ and dispersion $V(\rho)$ of a complex additive white Gaussian noise channel with SNR $\rho$ are given by
\begin{align*}
    C(\rho) &= \log(1 + \rho),\\
    V(\rho) &= \log^2(e) \frac{\rho(\rho + 2)}{(\rho + 1)^2}.
\end{align*}
Mathematically, 
\begin{align*}
    V(\rho) &= \frac{\log^2(e)}{n} \mathbb{E} \left [ \operatorname{Var}\left(  \ln \left(\frac{W(\mathbf{Y} | \mathbf{X})}{\overline{P}W(\mathbf{Y})} \right) \Bigg | \mathbf{X} \right) \right],
\end{align*}
where $(\mathbf{X}, \mathbf{Y}) \sim \overline{P} \circ W$, $\overline{P}= \mathcal{CN}(\mathbf{0}, \rho \mathbf{I}_n)$ and $W(\cdot|\mathbf{x}) = \mathcal{CN}(\mathbf{x}, \mathbf{I}_n)$. The dispersion $V(\rho)$ appears in the characterization of the optimal second- and higher-order coding performance under a maximal (per codeword) cost constraint. 
However, in this paper, we focus only on the achievability analysis for an i.i.d. complex Gaussian codebook where the channel input $\mathbf{X} \sim \mathcal{CN}(\mathbf{0}, \rho \mathbf{I}_n)$. In this case, the directly relevant quantity is the total information density variance $V_{\operatorname{tot}}(\rho)$ instead of the dispersion $V(\rho)$:  
\begin{align*}
    V_{\operatorname{tot}}(\rho) &\coloneqq \frac{\log^2(e)}{n} \operatorname{Var}\left(  \ln \left(\frac{W(\mathbf{Y} | \mathbf{X})}{\overline{P}W(\mathbf{Y})} \right) \right).  
\end{align*}
For random channel codes with equal-power channel input ($||\mathbf{X}||^2 = n \rho$ almost surely), we have $V(\rho) = V_{\operatorname{tot}}(\rho)$ (cf. \cite[Lemma 2]{timidboldadeel} for DMCs). But for $\mathbf{X} \sim \mathcal{CN}(\mathbf{0}, \rho \mathbf{I}_n)$, we have 
\begin{align*}
    V_{\operatorname{tot}}(\rho) &= V(\rho) + \log^2(e) \frac{\rho^2}{(1 + \rho)^2} = \log^2(e) \frac{2\rho}{1 + \rho}. 
\end{align*}

Finite blocklength analysis of the maximization objective $(\ref{operational_obj})$ will require non-asymptotic bounds on the average error probability $\mathbb{P}(B_i^m \neq \widehat{B}_i^m)$. We now state the error exponent \cite{gallager_paper_65}, \cite{gallager1968} and normal approximation bounds specialized to our i.i.d. codebook design. 

\begin{lemma}[Error Exponent]
    Consider a channel $W(\cdot|\mathbf{x}) = \mathcal{CN}(\mathbf{x}, \mathbf{I}_n)$. Consider a random channel code $(\operatorname{f}_n, \operatorname{g}_n)$ with codebook size $2^{nR}$ such that the channel input $\mathbf{X} \sim \mathcal{CN}(\mathbf{0}, \rho \mathbf{I}_n)$.  Then for any $n$ and $R$, the ensemble average error probability of this code is upper bounded by
    \begin{align}
        \exp \left(-n \max_{\lambda \in [0, 1]}\left [ \lambda \ln \left(1 + \frac{\rho}{1 + \lambda} \right) - \lambda R \ln(2) \right] \right). \label{error_exp_bound}
    \end{align}
    \label{error_exp_iid}
\end{lemma}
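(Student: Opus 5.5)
We will prove the bound by specializing Gallager's random coding bound to the i.i.d. complex Gaussian ensemble. For a code with $M = 2^{nR}$ codewords drawn i.i.d. from $\overline{P}$ and ML decoding, Gallager's bound gives, for every $\lambda \in [0,1]$,
\begin{align*}
\bar{\epsilon} \leq (M-1)^{\lambda} \int \left( \mathbb{E}_{\overline{P}}\left[ W(\mathbf{y}|\mathbf{X})^{\frac{1}{1+\lambda}} \right] \right)^{1+\lambda} d\mathbf{y}.
\end{align*}
Since the input is a product distribution and the channel is memoryless, the integral factorizes into the $n$-th power of a single-letter integral. We then bound $(M-1)^\lambda \leq 2^{nR\lambda} = \exp(n\lambda R\ln 2)$. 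Finally, we optimize over $\lambda$.

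The main step is computing the single-letter quantity
\begin{align*}
\int_{\mathbb{C}} \left( \mathbb{E}\left[ W(y|X)^{\frac{1}{1+\lambda}} \right] \right)^{1+\lambda} dy,
\end{align*}
with $X \sim \mathcal{CN}(0,\rho)$ and $W(y|x) = \pi^{-1} e^{-|y-x|^2}$. We will proceed in three Gaussian computations.

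First, $W(y|x)^{1/(1+\lambda)}$ equals $\pi^{-1/(1+\lambda)} e^{-|y-x|^2/(1+\lambda)}$. Averaging this over $X$ is a Gaussian convolution: we complete the square in $x$ and obtain a constant times $\exp\bigl(-|y|^2/(1+\lambda+\rho)\bigr)$.

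Second, we raise this to the power $1+\lambda$, which gives a Gaussian in $y$ with variance parameter $(1+\lambda+\rho)/(1+\lambda)$.

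Third, we integrate over $y$ and collect the constants. The result should be $\left(1 + \frac{\rho}{1+\lambda}\right)^{-\lambda}$, which is Gallager's well-known $E_0(\lambda) = \lambda \ln\left(1+\frac{\rho}{1+\lambda}\right)$ for complex Gaussian input. Keeping track of the powers of $\pi$ and of the factor $(1+\lambda)$ is the only place errors could creep in. As a sanity check we will verify the result at $\lambda = 0$, where the integral must equal $1$.

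Combining the pieces gives, for each $\lambda \in [0,1]$,
\begin{align*}
\bar{\epsilon} \leq \exp\left( -n\left[ \lambda \ln\left(1+\frac{\rho}{1+\lambda}\right) - \lambda R \ln 2 \right] \right).
\end{align*}
Taking the best $\lambda$ yields (\ref{error_exp_bound}). The maximum over $[0,1]$ is attained because the exponent is continuous on a compact set. The bound holds for every $n$ and $R$: when the bracketed term is nonpositive for all $\lambda$, the choice $\lambda = 0$ gives the trivial bound $1$.

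One technical point is that the codebook input distribution is unconstrained Gaussian rather than power-shell constrained, so Gallager's bound applies directly with no cost-constraint modification (no $e^{r(\cdot)}$ tilting is needed). The Gaussian integral is the expected main, though routine, obstacle.
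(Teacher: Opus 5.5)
Your proposal is correct and follows essentially the same route as the paper: specialize Gallager's random coding bound to the i.i.d.\ $\mathcal{CN}(0,\rho)$ ensemble and evaluate the resulting Gaussian integrals, which indeed give $\bigl(1+\frac{\rho}{1+\lambda}\bigr)^{-\lambda}$ per letter. The only cosmetic difference is that the paper writes Gallager's bound as an expectation over the output distribution $\overline{P}W$ of the exponentiated information density, with $2^{n\lambda R}$ in place of your $(M-1)^\lambda$, and does the computation directly in $n$ dimensions; this is equivalent to your integral form with single-letter factorization.
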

\textit{Proof:} See \cite[Theorem 10]{gallager_paper_65}. For completeness and to keep the paper self-contained, the proof is given in Appendix \ref{error_exp_iid_proof}.

\begin{lemma}[Normal Approximation]
    Consider a channel $W(\cdot|\mathbf{x}) = \mathcal{CN}(\mathbf{x}, \mathbf{I}_n)$. Consider a random channel code $(\operatorname{f}_n, \operatorname{g}_n)$ with codebook size $2^{nR}$ such that the channel input $\mathbf{X} \sim \mathcal{CN}(\mathbf{0}, \rho \mathbf{I}_n)$.  Then for any $n$ and $R$, the ensemble average error probability of this code is upper bounded by
    \begin{align}
        \min \left \{1, \Phi\left( \frac{\sqrt{n}\left(R - C(\rho) \right) + \frac{\log n}{2\sqrt{n}}}{\sqrt{V_{\operatorname{tot}}(\rho)}} \right) + \frac{2}{\sqrt{n}} \right \}. \label{normal_approximation}
    \end{align}
\label{lemma_happy_CLT}
\end{lemma}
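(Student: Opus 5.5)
The approach is Feinstein's dependence-testing (threshold decoding) bound combined with the Berry--Esseen theorem for the information density under the i.i.d.\ Gaussian input. Let $\imath(\mathbf{x};\mathbf{y}) = \log \frac{W(\mathbf{y}|\mathbf{x})}{\overline{P}W(\mathbf{y})}$. Here $\overline{P}W = \mathcal{CN}(\mathbf{0},(1+\rho)\mathbf{I}_n)$, so
\[
\imath(\mathbf{X};\mathbf{Y}) = \sum_{j=1}^n \Big[ \log(1+\rho) + \log(e)\Big( \tfrac{|Y_j|^2}{1+\rho} - |Z_j|^2\Big)\Big],
\]
a sum of i.i.d.\ terms. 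Their mean is $C(\rho)$ and their variance is $V_{\operatorname{tot}}(\rho)$, as defined above. Since the ML decoder minimizes the ensemble average error among all decoders for the random codebook, it suffices to bound the error of a suboptimal threshold decoder. For the random coding ensemble with $M = 2^{nR}$ codewords, the standard DT/Feinstein-type bound gives, for any $\gamma$,
\[
\bar\epsilon \le \mathbb{P}\big(\imath(\mathbf{X};\mathbf{Y}) \le \log\gamma\big) + \frac{M-1}{\gamma}\,\mathbb{P}\big(\imath(\bar{\mathbf{X}};\mathbf{Y}) > \log\gamma\big),
\]
where $\bar{\mathbf{X}}$ is independent of $\mathbf{Y}$. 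Using a change of measure, the second probability is at most $2^{-\log\gamma} \cdot 1$. The second term is therefore at most $M/\gamma$.

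I will choose $\log \gamma = nR + \tfrac12\log n$, which makes the second term equal to $1/\sqrt{n}$. The first term is $\mathbb{P}\big(\sum_j (\imath_j - C(\rho)) \le n(R - C(\rho)) + \tfrac12\log n\big)$. After normalizing by $\sqrt{n V_{\operatorname{tot}}(\rho)}$, this equals
\[
\mathbb{P}\Big(S_n \le \tfrac{\sqrt{n}(R-C(\rho)) + \frac{\log n}{2\sqrt n}}{\sqrt{V_{\operatorname{tot}}(\rho)}}\Big).
\]
By Berry--Esseen this is at most $\Phi(\cdot) + \frac{C_0 T}{V_{\operatorname{tot}}^{3/2}\sqrt{n}}$, where $T$ is the third absolute central moment of a single term. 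The final bound is then $\Phi(\cdot) + \frac{1}{\sqrt n}\big(1 + C_0 T/V_{\operatorname{tot}}^{3/2}\big)$, and the trivial bound $\bar\epsilon\le 1$ supplies the $\min\{1,\cdot\}$.

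The main obstacle is showing that the Berry--Esseen remainder is at most $1/\sqrt{n}$ uniformly in $\rho$, i.e.\ $C_0 T(\rho)/V_{\operatorname{tot}}(\rho)^{3/2} \le 1$, so that the total additive slack is $2/\sqrt n$. A single term, up to the factor $\log e$, is $\frac{|Y_j|^2}{1+\rho} - |Z_j|^2$. This is a difference of two correlated unit-mean exponentials, namely a quadratic form in the complex Gaussian pair $(X_j, Z_j)$. Its centered third absolute moment can be bounded through its fourth moment, $T \le (\mathbb{E}[\cdot^4])^{3/4}$. Using the eigenvalues of the quadratic form, this gives a ratio $T/V_{\operatorname{tot}}^{3/2}$ that is a bounded function of $\rho$. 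I would compute it explicitly, since it should depend only on $\rho/(1+\rho) \in (0,1)$, and check its supremum against the best Berry--Esseen constant $C_0 \approx 0.4748$ for i.i.d.\ summands. If the crude moment bound is too loose near $\rho \to 0$, I would fall back on computing the exact third absolute moment numerically. Alternatively, I would absorb the constant by noting that the bound is trivial whenever the right-hand side exceeds $1$.
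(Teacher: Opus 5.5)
Your skeleton matches the paper's. Lemma~\ref{stillaaron} (Shannon's bound) is your threshold bound with $\log\gamma = nR + n\kappa$, and the paper's choice $\kappa = \frac{\log n}{2n}$ is exactly your $\log\gamma = nR + \frac{1}{2}\log n$, so the second term is $1/\sqrt{n}$ in both. (Minor slip: with $\bar{\mathbf{X}}$ independent of $\mathbf{Y}$ the prefactor should be $M-1$, not $(M-1)/\gamma$; as displayed you would get $(M-1)/\gamma^2$. Alternatively keep $(M-1)/\gamma$ and take the probability under the joint law. Your conclusion $M/\gamma$ is correct.)

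\textbf{The gap.} It sits in the step you flag as the main obstacle, and the routes you propose there do not give a remainder of at most $1/\sqrt{n}$. Diagonalize the $2\times 2$ Hermitian form; its eigenvalues are $\pm\lambda$ with $\lambda = \sqrt{\rho/(1+\rho)}$. This shows the per-letter term is $\zeta \stackrel{d}{=} \lambda(|U|^2 - |V|^2)$ with $U, V$ i.i.d.\ $\mathcal{CN}(0,1)$, i.e.\ a Laplace$(0,\lambda)$ variable. Hence $\mathbb{E}|\zeta|^k = k!\,\lambda^k$, and $\mathbb{E}|\zeta|^3/\operatorname{Var}(\zeta)^{3/2} = 6/2^{3/2} = 3/\sqrt{2}$ for every $\rho$.

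So the difficulty is not localized near $\rho\to 0$: the ratio is a fixed number, and the constant must beat it outright. Your checks fail as follows.
\begin{enumerate}
\item The fourth-moment bound replaces $3/\sqrt{2}\approx 2.12$ by $24^{3/4}/2^{3/2}\approx 3.83$. This fails for every admissible Berry--Esseen constant; even Esseen's lower bound $0.4097$ gives about $1.57 > 1$.
\item The exact ratio also fails with the constant you name: $0.4748\cdot 3/\sqrt{2}\approx 1.007 > 1$. You need $C_0 \le \sqrt{2}/3 \approx 0.4714$.
\end{enumerate}

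\textbf{How the paper closes it.} It combines the exact Laplace moments with Shevtsova's sharper i.i.d.\ constant $C_{\operatorname{BE}} \le 0.4690$. This gives $0.4690\cdot 3/\sqrt{2}\approx 0.995 \le 1$.

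Your last fallback (the bound is trivial when the right-hand side exceeds $1$) does not repair the deficit. The excess of roughly $0.007/\sqrt{n}$ matters exactly when the right-hand side is small, e.g.\ $R$ well below $C(\rho)$ with $n$ large so that $\Phi(\cdot)\approx 0$. The only other slack in your chain, $1/\gamma = 2^{-nR}/\sqrt{n}$, is negligible there once $nR$ exceeds about $7$.

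So the missing ingredients are two. First, the exact identification of the per-letter law, which makes the third-moment ratio $\rho$-free and computable. Second, a Berry--Esseen constant below $\sqrt{2}/3$.
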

\textit{Proof:} The proof is given in Appendix $\ref{lemma_happy_CLT_proof}$.

Error exponent bounds are most useful when the rate $R$ is fixed below capacity and the error probability is allowed to converge to zero as $n \to \infty$. Normal approximation bounds are useful when the desired error probability is fixed and the rate is allowed to converge to the capacity as $n \to \infty$. We will use the bounds $(\ref{error_exp_bound})$ and $(\ref{normal_approximation})$ in deriving finite blocklength achievability results for the objective $(\ref{operational_obj})$. However, in these achievability results (Theorems \ref{PDSfiniteblocklengthbound} and \ref{ORAfiniteblocklengthbound}), neither the rate nor the error probability can be assumed to be fixed and independent of $n$; even the effective capacity for each of the $K$ bit blocks can depend on $n$. This is because the achievability results involve an optimization over power allocations in the case of PDS (Theorem \ref{PDSfiniteblocklengthbound}) and resource allocations in the case of ORA (Theorem \ref{ORAfiniteblocklengthbound}). An optimal power allocation $\overrightarrow \alpha^\star$ and an optimal resource allocation $\overrightarrow v^\star$ will have dependence on $n$, which will cause the effective rate and SNR experienced by each of the $K$ bit blocks to also depend on $n$. Therefore, to obtain a tight upper bound for any given blocklength $n$, rate $R$ and SNR $\rho$, we take the minimum of the two upper bounds from Lemmas \ref{error_exp_iid} and \ref{lemma_happy_CLT}. Specifically, define 
\begin{align}
    \mathcal{E}_{\operatorname{nor}}(n, R, \rho) &\coloneqq \min \left \{1, \Phi\left( \frac{\sqrt{n}\left(R - C(\rho) \right) + \frac{\log n}{2\sqrt{n}}}{\sqrt{V_{\operatorname{tot}}(\rho)}} \right) + \frac{2}{\sqrt{n}} \right \}, \label{noreps}\\
    \mathcal{E}_{\operatorname{exp}}(n, R, \rho) &\coloneqq \exp \left(-n \max_{\lambda \in [0, 1]}\left [ \lambda \ln \left(1 + \frac{\rho}{1 + \lambda} \right) - \lambda R \ln(2) \right] \right),\\
    \mathcal{E}(n, R, \rho) &\coloneqq \min \left \{ \mathcal{E}_{\operatorname{nor}}(n, R, \rho), \mathcal{E}_{\operatorname{exp}}(n, R, \rho)  \right \}. \label{finitelengthupperbound}
\end{align}
By continuous extension, we define $\mathcal{E}(0, R, \rho) = 1$ for all $R > 0$ and $\rho \geq 0$. We define $\mathcal{E}(n, R, 0) = 1$ for all $R > 0$ and $n \geq 0$. Note that $\mathcal{E}(n, R, \rho)$ is well-defined even if $n \geq 0$ is not an integer. Furthermore, if $R$ and $\rho$ are constants independent of $n$, then  
\begin{align}
    \lim_{n \to \infty} \mathcal{E}(n, R, \rho) = \begin{cases}
        0 & R < \log(1 + \rho),\\
        1/2 & R = \log(1 + \rho),\\
        1 & R > \log(1 + \rho).
    \end{cases}
    \label{firstorderapproxRconstant}
\end{align}
Note that Lemmas \ref{error_exp_iid} and \ref{lemma_happy_CLT} do not require $R$ and $\rho$ to be constants independent of $n$. Figure \ref{error_bounds_comparison} shows a comparison of $\mathcal{E}_{\operatorname{nor}}(n, R, \rho)$ and $\mathcal{E}_{\operatorname{exp}}(n, R, \rho)$ for $\rho = 3$ and $n = 10000$. 

\begin{figure}[H]
    \centering
\includegraphics[width=10cm]{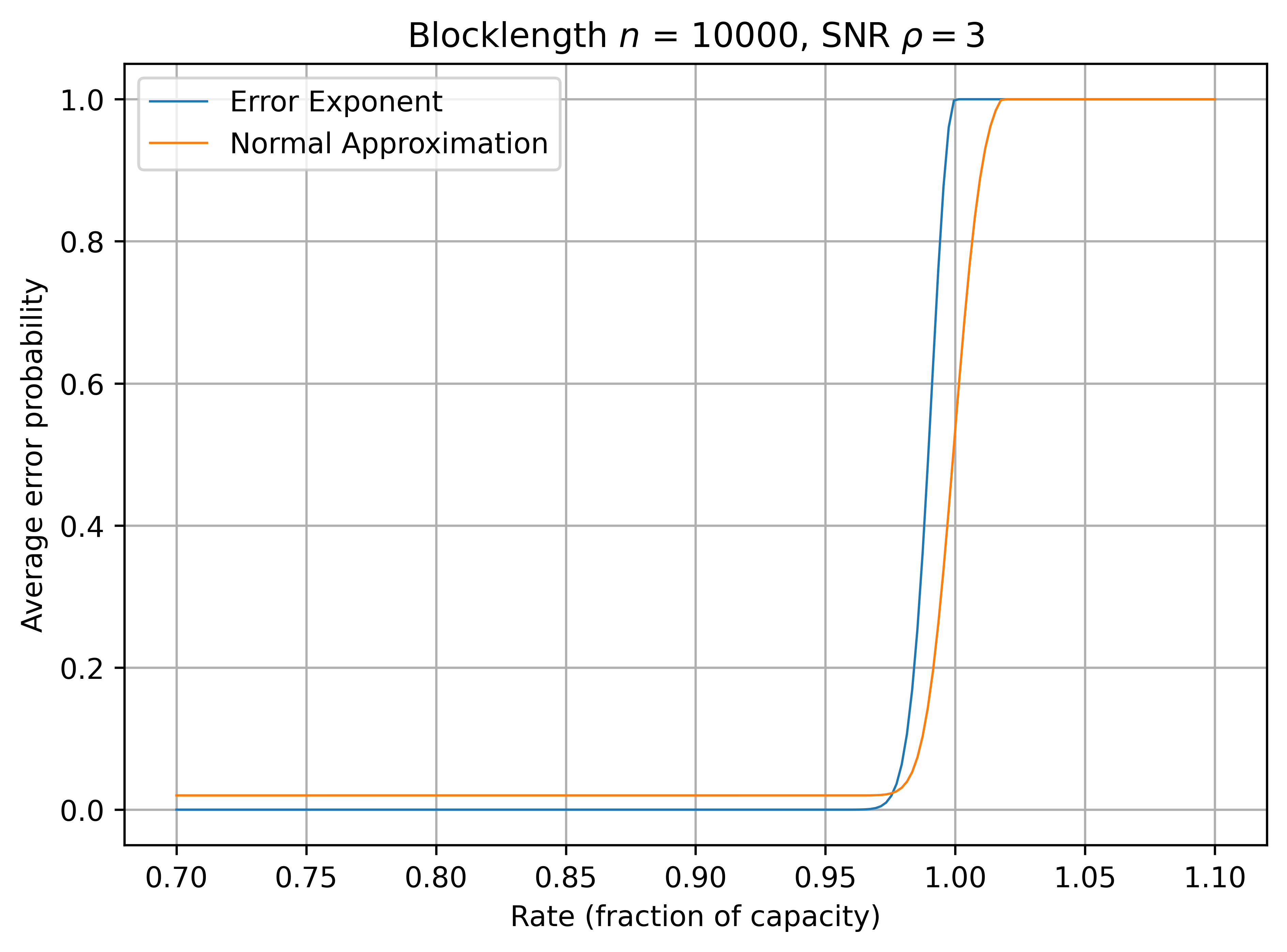}
\caption{The error exponent bound is tighter than the normal approximation bound in the low-rate regime, but the opposite is true in the high-rate/near-capacity regime. The horizontal axis is the coding rate represented as a fraction of the capacity $C(\rho) = \log(1 + \rho)$. }
\label{error_bounds_comparison}
\end{figure}

The proof of Lemma \ref{lemma_happy_CLT} is based on Shannon's achievability bound (see Appendix \ref{lemma_happy_CLT_proof}). For sufficiently large $n$ and
denoting the error probability by $\epsilon$, $(\ref{normal_approximation})$ implies \begin{align}
    \epsilon &\leq \Phi\left( \frac{\sqrt{n}\left(R - C(\rho) \right) + \frac{\log n}{2\sqrt{n}} }{\sqrt{V_{\operatorname{tot}}(\rho)}} \right) + O(n^{-1/2})\\
\implies nR &\geq n C(\rho) + \sqrt{n V_{\operatorname{tot}}(\rho)} \Phi^{-1}(\epsilon) - \frac{1}{2} \log n +O(1). \label{4x--}  
\end{align}
This is similar in form to the well-known third-order result
\begin{align}
    nR &\geq n C(\rho) + \sqrt{n V(\rho)} \Phi^{-1}(\epsilon) + \frac{1}{2} \log n +O(1) \label{herepoptmm}
\end{align}
shown by Tan and Tomamichel \cite{7056434} for the case when the channel input $\mathbf{X}$ is uniformly distributed on an $(n-1)$-sphere with radius $\sqrt{n \rho}$. The third-order term $1/2 \log n$ in $(\ref{herepoptmm})$ is shown to be optimal in the average error probability and maximal cost constraint framework \cite[Theorem 54]{ppv}. An improvement of the third-order $\log n$ term in $(\ref{4x--})$ to match that in $(\ref{herepoptmm})$ is possible by revising the proof of Lemma \ref{lemma_happy_CLT} to use Polyanskiy, Poor and Verdú’s random coding union bound as the starting point. Doing so and then using a similar proof technique\footnote{Although the argument from \cite{7056434} can be significantly simplified for our i.i.d. Gaussian case} as in \cite{7056434} would lead to a result of the form 
\begin{align}
        \min \left \{1, \Phi\left( \frac{\sqrt{n}\left(R - C(\rho) \right) - \frac{\log n}{2\sqrt{n}} + \frac{C_1}{\sqrt{n}}}{\sqrt{V_{\operatorname{tot}}(\rho)}} \right) + \frac{C_2}{\sqrt{n}} \right \} \label{norm2al_approximation}
    \end{align}
in lieu of $(\ref{normal_approximation})$, where $C_1$ and $C_2$ are some positive constants. While $(\ref{norm2al_approximation})$ is \emph{asymptotically} a tighter upper bound than $(\ref{normal_approximation})$, the constants $C_1$ and $C_2$ are hard to write down explicitly without assuming sufficiently large $n$, and the fourth-order $1/\sqrt{n}$ term in $(\ref{norm2al_approximation})$ comprising the constants $C_1$ and $C_2$ is worse than that in $(\ref{normal_approximation})$. Since we do not assume sufficiently large $n$, it is not clear that $(\ref{norm2al_approximation})$ is a tighter upper bound than $(\ref{normal_approximation})$ for small or even moderate values of $n$; hence, we will use the bound $(\ref{normal_approximation})$ owing to its simplicity and explicit constants.
        
We also emphasize that many finite blocklength results in channel coding \cite{7056434}, \cite{ppv}, \cite{7156144} are \emph{refined asymptotic} results such as those of the form $(\ref{herepoptmm})$. In such asymptotic expansions, the highest-order $O(1)$ term is usually discarded and the resulting truncated expression is used as a finite blocklength approximation (see, e.g., \cite[(296)]{ppv}). In this paper, however, all finite blocklength results (Lemmas \ref{error_exp_iid} and \ref{lemma_happy_CLT} and Theorems \ref{PDSfiniteblocklengthbound} and \ref{ORAfiniteblocklengthbound}) will be strictly non-asymptotic that hold for any finite $n \geq 1$.

\section{Main Results on the PDS scheme \label{PDS_results}}

\begin{theorem}
    Under the PDS scheme, we have   
    \begin{align}
        \sum_{i=1}^K \mathbb{P}(B_i^{m} = \widehat{B}_i^{m}) d_i \geq \max_{\overrightarrow \alpha \in \Delta^{K - 1}} \sum_{i=1}^K \mathbb{E}_{\gamma}\left [\prod_{j=1}^i \left(1 - \mathcal{E}\left(n, R, \frac{\gamma \alpha_j P}{1 + \gamma P \beta_j} \right) \right) \right] d_i \label{finitenPDSopt}
    \end{align}
    for any $n \geq 1$, where $\beta_i \coloneqq \sum_{j=i + 1}^K \alpha_j$ and the expectation is w.r.t. an exponential random variable $\gamma$ with mean $\sigma^2$.  
    \label{PDSfiniteblocklengthbound}
\end{theorem}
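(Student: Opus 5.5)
Fix an arbitrary $\overrightarrow\alpha \in \Delta^{K-1}$; it suffices to show the weighted sum under the PDS scheme with this $\overrightarrow\alpha$ is at least the right-hand side evaluated at $\overrightarrow\alpha$, and then take the maximum. The weights $d_i$ are positive, so I will lower bound each $\mathbb{P}(B_i^m = \widehat B_i^m)$ separately.

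\textbf{Step 1: nested success.} The event $\{\widehat B_j^m = B_j^m \text{ for all } j \le i\}$ is contained in $\{\widehat B_i^m = B_i^m\}$. Hence
\begin{align*}
\mathbb{P}(B_i^m = \widehat B_i^m) \ge \mathbb{P}\Big(\bigcap_{j=1}^i \{\widehat B_j^m = B_j^m\}\Big) = \mathbb{E}_H\Big[\prod_{j=1}^i \mathbb{P}\big(\widehat B_j^m = B_j^m \,\big|\, H,\ \widehat B_l^m = B_l^m,\ l<j\big)\Big],
\end{align*}
where the last equality is the chain rule applied conditionally on $H$. This step is where the SIC structure is used.

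\textbf{Step 2: each stage is an AWGN channel.} Conditioned on $H=h$ and correct decoding of layers $1,\dots,j-1$, the SIC residual is
\begin{align*}
\mathbf{Y}^{(j)} = h\sqrt{\alpha_j P}\,\mathbf{X}_j + \Big(h\sum_{l>j}\sqrt{\alpha_l P}\,\mathbf{X}_l + \mathbf{Z}\Big).
\end{align*}
The bracketed term is independent of $\mathbf{X}_j$ and distributed as $\mathcal{CN}(\mathbf{0},(1+|h|^2P\beta_j)\mathbf{I}_n)$, because the codebooks are i.i.d.\ Gaussian, independent across layers, and the messages are uniform. Normalizing by $\sqrt{1+|h|^2P\beta_j}$ gives a unit-noise AWGN channel with input $\mathcal{CN}(\mathbf{0},\rho_j\mathbf{I}_n)$, where
\begin{align*}
\rho_j = \frac{|h|^2\alpha_j P}{1+|h|^2 P\beta_j}.
\end{align*}
The rate is $m/n = R$ and the codebook size is $2^{nR}$. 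Lemmas~\ref{error_exp_iid} and \ref{lemma_happy_CLT} then give an ensemble error probability of at most $\mathcal{E}(n,R,\rho_j)$. The degenerate cases $\rho_j = 0$ or $\alpha_j = 0$ are covered by the convention $\mathcal{E}(n,R,0)=1$.

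\textbf{Main obstacle.} The decoder $\operatorname{g}_j$ is the single-user minimum-distance rule. Since the aggregate noise is white Gaussian with known variance given $h$, minimum distance is ML for this channel, so the lemmas (stated for ML/ensemble decoding) apply. The subtle point is conditioning. The conditioning event concerns only $\mathbf{X}_1,\dots,\mathbf{X}_{j-1}$ and the decoding of those layers, which depends on $\mathbf{Z}$ and $\mathbf{X}_{j},\dots,\mathbf{X}_K$ as well. I would handle this via the ensemble average: average over the joint randomness, and note that on the event of correct prior decoding, $\mathbf{Y}^{(j)}$ coincides with the genie residual. I would then lower bound the joint success probability by taking the product of unconditional genie-aided stage success probabilities given $h$. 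Concretely, I would write $\mathbb{P}(\cap_{j\le i} S_j) \ge 1 - \sum$ or use the product form; to obtain the exact product in the theorem, I would argue through the chain rule and show that the conditional stage error given prior success is bounded by $\mathcal{E}$. This last point is the step I expect to require care.

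\textbf{Step 3: finish.} Since $H\sim\mathcal{CN}(0,\sigma^2)$, $\gamma = |H|^2$ is exponential with mean $\sigma^2$. Taking the expectation over $\gamma$ gives the $i$th term of (\ref{finitenPDSopt}). Because the codebook ensemble attains this average, some realization achieves at least the average. Summing over $i$ with the weights $d_i$ and maximizing over $\overrightarrow\alpha$ completes the argument.
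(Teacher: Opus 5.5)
Your outline matches the paper's proof. It uses the nested-success event $\bigcap_{j\le i}\{\widehat B_j^m=B_j^m\}$ and the chain rule given $\gamma$. Each stage is treated as an AWGN channel with SINR $\rho_j(\gamma)$, justified by Gaussianity of the undecoded interference. Lemmas~\ref{error_exp_iid} and \ref{lemma_happy_CLT} are applied, and the result is averaged over $\gamma$.

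However, the step you flag is a genuine gap, and what you propose does not close it. Let $\mathbf{N}_j=h\sum_{l>j}\sqrt{\alpha_lP}\,\mathbf{X}_l+\mathbf{Z}$, and let $G_j$ be success of $\operatorname{g}_j$ on the genie residual $h\sqrt{\alpha_jP}\,\mathbf{X}_j+\mathbf{N}_j$. Your Step~2 correctly gives the \emph{unconditional} bound $\mathbb{P}(G_j^c\mid\gamma)\le\mathcal{E}(n,R,\rho_j(\gamma))$. The product form needs the \emph{conditional} bound $\mathbb{P}(G_j^c\mid E_j,\gamma)\le\mathcal{E}(n,R,\rho_j(\gamma))$, and conditioning on $E_j$ changes the law of $(\mathbf{X}_j,\mathbf{N}_j)$. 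The reason is that $h\sqrt{\alpha_jP}\,\mathbf{X}_j+\mathbf{N}_j$ is exactly the effective noise of layer $j-1$. So on $E_j$ the pair is in general neither Gaussian nor independent. In particular, small $\|\mathbf{X}_j\|$ is favoured, which hurts layer $j$. Lemmas~\ref{error_exp_iid}--\ref{lemma_happy_CLT} therefore say nothing about the conditional law.

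Your fallback, $\mathbb{P}(\bigcap_{j\le i}G_j\mid\gamma)\ge\prod_{j\le i}\mathbb{P}(G_j\mid\gamma)$, would need positive association of the genie events, and this can fail. Take $n=m=1$, $K=2$ and $\gamma P\alpha_1\gg\gamma P\alpha_2\gg1$. A leading-order computation shows that $G_1$ and $G_2$ are negatively correlated, because the $\|\mathbf{X}_2\|$ effect dominates the shared-noise effect. The paper's proof does not resolve this either: it simply asserts that, conditioned on $E_i$ and $\gamma$, layer $i$ sees an AWGN channel with SNR $\rho_i(\gamma)$.

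What your ingredients do prove rigorously is the following. On $\bigcap_{l<j}\{\widehat B_l^m=B_l^m\}$ the SIC residual coincides with the genie residual, so $\bigcap_{j\le i}\{\widehat B_j^m=B_j^m\}=\bigcap_{j\le i}G_j$ exactly. The union bound then gives
\[
\mathbb{P}\big(B_i^m=\widehat B_i^m\,\big|\,\gamma\big)\ \ge\ \max\Big\{0,\ 1-\sum_{j=1}^{i}\mathcal{E}\big(n,R,\rho_j(\gamma)\big)\Big\}.
\]
This is weaker than the product in (\ref{finitenPDSopt}). It has the same pointwise limit $\prod_{j\le i}\mathds{1}(\gamma>\tau_j)$, so the first-order conclusion of Theorem~\ref{finitetofirstorderreduction} is unaffected. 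Obtaining the exact product requires an additional argument bounding the conditional stage error, and neither your proposal nor the paper supplies one.
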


 \begin{IEEEproof}
     Conditioned on the fading state $|H|^2 = \gamma$ and the correct decoding and subtraction of the prior bit blocks $1, \ldots, i - 1$, the signal-to-interference-plus-noise ratio (SINR) at step $i$ is 
\begin{align}
    \rho_i(\gamma) &\coloneqq \frac{\gamma \alpha_i P}{1 + \gamma P \beta_i}, \label{sinrdef}
\end{align}
where $\beta_i = \sum_{j=i + 1}^K \alpha_j$ is the residual power\footnote{Note that $\beta_K = 0$ by our convention in $(\ref{sum_convention})$.} in the not-yet-decoded lower layers. Hence, conditioned on the fading state $|H|^2 = \gamma$ and the event, denoted as $E_i$, that the prior blocks $1, \ldots, i - 1$ are decoded correctly, layer $i$ effectively experiences a complex additive white Gaussian noise channel with SNR given by $(\ref{sinrdef})$. This argument relies on the fact that the random codebooks $\{\mathcal{C}_i \}_{i=1}^K$ for coding each bit block are generated i.i.d. $\mathcal{C} \mathcal{N}(0, 1)$, so that the interference "noise" from the undecoded lower layers is Gaussian. Hence, we can use the finite blocklength bound $(\ref{finitelengthupperbound})$ to write 
\begin{align*}
    \mathbb{P}\left(B_i^m \neq \widehat{B}_i^m \big | E_i, |H|^2 =  \gamma \right) \leq \mathcal{E}\left(n, R, \rho_i(\gamma) \right) 
\end{align*}
for $i \in \{1, \ldots, K \}$, where
\begin{align*}
    E_i &= \bigcap_{j=1}^{i-1} \{B_j^m = \widehat{B}_j^m \}
\end{align*}
and $E_1 = \Omega$ by convention. Then 
\begin{align*}
    \mathbb{P}(B_i^{m} = \widehat{B}_i^{m} \big | |H|^2 =  \gamma) &\geq \mathbb{P}\left( \bigcap_{j=1}^i \{ B_j^{m} = \widehat{B}_j^{m} \} \Big | |H|^2 = \gamma \right)\\
    &\geq \prod_{j=1}^i \left(1 - \mathcal{E}\left(n, R, \rho_j(\gamma) \right) \right).
\end{align*}
Taking the expectation over $\gamma \sim \operatorname{Exp}(\sigma^2)$ establishes the result. 
 \end{IEEEproof}

Define 
\begin{align}
    G_n(\overrightarrow \alpha) \coloneqq \sum_{i=1}^K \mathbb{E}_{\gamma}\left [\prod_{j=1}^i \left(1 - \mathcal{E}\left(n, R, \frac{\gamma \alpha_j P}{1 + \gamma P \beta_j} \right) \right) \right] d_i. \label{tt33}
\end{align}

\begin{theorem}
    Let $R > 0$ be any constant independent of $n$. Then 
    \begin{align}
        \lim_{n \to \infty}\,\, \max_{\overrightarrow \alpha \in \Delta^{K - 1}}\,G_n(\overrightarrow \alpha) &= \max_{\overrightarrow \alpha \in \Delta^{K - 1}} \sum_{i=1}^K  \exp \left(- \frac{\max \{\tau_1, \ldots, \tau_i \}}{\sigma^2}  \right) d_i, \label{nqu}
    \end{align}
    where 
    \begin{align}
    \tau_i =  \begin{cases}
         \frac{2^{R} - 1}{P(\alpha_i -(2^{R} - 1) \beta_i)} & \text{ if } \alpha_i -(2^{R} - 1) \beta_i > 0,\\
         +\infty & \text{ otherwise}. 
    \end{cases} 
    \label{deftau}
\end{align}
\label{finitetofirstorderreduction}    
\end{theorem}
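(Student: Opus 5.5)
The plan is a $\Gamma$-convergence-type argument in three steps: pointwise convergence of $G_n$, a lower bound on $\liminf_n \max G_n$, and an upper bound on $\limsup_n \max G_n$ via convergent subsequences of maximizers.

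Write $G(\overrightarrow\alpha)\coloneqq\sum_{i=1}^K \exp(-\max\{\tau_1,\ldots,\tau_i\}/\sigma^2)\,d_i$ for the right-hand side objective. First I record an elementary equivalence. From $(\ref{sinrdef})$, $\rho_j(\gamma)\ge 2^R-1$ is equivalent to $\gamma P(\alpha_j-(2^R-1)\beta_j)\ge 2^R-1$, which in turn is equivalent to $\gamma\ge\tau_j$ (with $\tau_j=+\infty$ when the bracket is nonpositive). Hence
\begin{align*}
\mathbb{P}_\gamma\Bigl(\gamma\ge \max_{j\le i}\tau_j\Bigr)=\exp\Bigl(-\max_{j\le i}\tau_j/\sigma^2\Bigr),
\end{align*}
because $\gamma\sim\operatorname{Exp}$ with mean $\sigma^2$. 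Next I check that $G$ is continuous on the compact simplex $\Delta^{K-1}$, so that its maximum is attained. Where $\tau_j<\infty$ it is a continuous function of $\overrightarrow\alpha$. As $\alpha_j-(2^R-1)\beta_j\downarrow 0$ we have $\tau_j\to\infty$, so $e^{-\tau_j/\sigma^2}\to 0$, which agrees with its value in the $+\infty$ region.

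\emph{Step 1 (pointwise convergence, used for the lower bound).} Fix $\overrightarrow\alpha$. For every $\gamma\notin\{\tau_1,\ldots,\tau_K\}$, the SNRs $\rho_j(\gamma)$ are constants different from $2^R-1$. Therefore $(\ref{firstorderapproxRconstant})$ gives $1-\mathcal{E}(n,R,\rho_j(\gamma))\to \mathds{1}\{\gamma>\tau_j\}$. The case $\rho_j=0$ is also covered, since $\mathcal{E}(n,R,0)=1$. The product over $j\le i$ then converges to $\mathds{1}\{\gamma>\max_{j\le i}\tau_j\}$ for almost every $\gamma$. All integrands lie in $[0,1]$, so dominated convergence gives $G_n(\overrightarrow\alpha)\to G(\overrightarrow\alpha)$. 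Applying this at a maximizer $\overrightarrow\alpha^\star$ of $G$ yields $\liminf_n\max G_n\ge \lim_n G_n(\overrightarrow\alpha^\star)=\max G$.

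\emph{Step 2 (upper bound, the main obstacle).} Pointwise convergence alone does not control $\max G_n$, because the maximizers $\overrightarrow\alpha_n$ move with $n$. I also want to avoid needing monotonicity of $\mathcal{E}$ in $\rho$, since $V_{\operatorname{tot}}$ makes $\mathcal{E}_{\operatorname{nor}}$ awkward. So I will prove the following uniform statement instead: for any $\rho'<2^R-1$,
\begin{align*}
\sup_{0\le\rho\le\rho'}\bigl(1-\mathcal{E}(n,R,\rho)\bigr)\to 0 .
\end{align*}
This follows from two observations.
\begin{itemize}
\item For $\mathcal{E}_{\operatorname{exp}}$: when $\rho\le\rho'$, we have $\lambda\ln(1+\rho/(1+\lambda))-\lambda R\ln 2\le \lambda(\ln(1+\rho')-R\ln2)\le 0$. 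The maximum over $\lambda\in[0,1]$ is therefore $0$, and $\mathcal{E}_{\operatorname{exp}}=1$.
\item For $\mathcal{E}_{\operatorname{nor}}$: $V_{\operatorname{tot}}(\rho)\le 2\log^2 e$, and $R-C(\rho)\ge R-C(\rho')>0$. So the argument of $\Phi$ is at least $\sqrt n\,(R-C(\rho'))/(\sqrt2\log e)\to\infty$ uniformly in $\rho\in[0,\rho']$.
\end{itemize}
Since $\mathcal{E}$ is the minimum of the two, $\mathcal{E}\to 1$ uniformly on $[0,\rho']$.

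\emph{Step 3 (conclusion).} Take a subsequence achieving $\limsup_n\max G_n$, with maximizers $\overrightarrow\alpha_n$; these exist, or one can use $1/n$-near-maximizers. Pass to a further subsequence with $\overrightarrow\alpha_n\to\bar{\overrightarrow\alpha}$. Fix $i$ and a $\gamma$ with $\gamma<\max_{j\le i}\tau_j(\bar{\overrightarrow\alpha})$, and pick $j\le i$ with $\gamma<\tau_j(\bar{\overrightarrow\alpha})$. Then $\rho_j(\gamma;\bar{\overrightarrow\alpha})<2^R-1$. The map $\overrightarrow\alpha\mapsto\rho_j(\gamma;\overrightarrow\alpha)$ is continuous, so for large $n$ we get $\rho_j(\gamma;\overrightarrow\alpha_n)\le\rho'<2^R-1$ for some fixed $\rho'$. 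By Step 2 the $j$th factor tends to $0$, and the product over $j\le i$ tends to $0$ as well. Hence
\begin{align*}
\limsup_n \prod_{j\le i}\bigl(1-\mathcal{E}(n,R,\rho_j(\gamma;\overrightarrow\alpha_n))\bigr)\le \mathds{1}\Bigl\{\gamma\ge\max_{j\le i}\tau_j(\bar{\overrightarrow\alpha})\Bigr\}
\end{align*}
for every $\gamma$. Reverse Fatou with the bound $1$ then gives $\limsup_n G_n(\overrightarrow\alpha_n)\le G(\bar{\overrightarrow\alpha})\le\max G$. Combined with Step 1, this establishes $(\ref{nqu})$.
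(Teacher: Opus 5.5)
Your proposal is correct and follows essentially the paper's route: a convergent subsequence of maximizers of $G_n$ handles the $\limsup$, and a fixed maximizer of the limit objective (constant sequence plus dominated convergence) handles the $\liminf$. The only difference is that the paper merely asserts (\emph{it can be checked}) that $\mathcal{E}(n,R,\rho_j^{(n)}(\gamma))$ converges along moving sequences $\overrightarrow\alpha^{(n)}\to\overrightarrow\alpha$, whereas you prove just the one-sided statement actually needed, via the uniform bound in Step 2 and reverse Fatou, which makes that step rigorous.
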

\textit{Proof:} The proof of Theorem \ref{finitetofirstorderreduction} is given in Appendix \ref{finitetofirstorderreduction_proof}.

\begin{definition}
We define a first-order asymptotically optimal power split to be a solution to the following optimization problem:
\begin{align}
    \max_{\overrightarrow \alpha \in \Delta^{K - 1}} \sum_{i=1}^K  \exp \left(- \frac{\max \{\tau_1, \ldots, \tau_i \}}{\sigma^2}  \right) d_i. \label{a1}
\end{align}
\label{firstorderPDSsol}
\end{definition}

Since directly solving the RHS of $(\ref{finitenPDSopt})$ is both analytically and numerically difficult, we first focus on solving for the first-order asymptotically optimal solution $\overrightarrow \alpha^\star$ in $(\ref{a1})$, which gives us the asymptotic performance of the PDS scheme by Theorem \ref{finitetofirstorderreduction}. Most of our  analytical results will focus on characterizing the solution $\overrightarrow \alpha^\star$ in $(\ref{a1})$. In Section \ref{finalcompnumerical}, we make use of $\overrightarrow \alpha^\star$ in approximating the RHS of $(\ref{finitenPDSopt})$.

The optimization problem $(\ref{a1})$ is equivalent to   
\begin{align}
\max_{\overrightarrow \alpha \in \Delta^{K - 1}} \sum_{i=1}^K  \exp \left(- \frac{\tau_i}{\sigma^2}  \right) d_i. \label{c1}
\end{align}
This follows from Theorem \ref{opt_properties_theorem} and Corollary \ref{corl4} below.  

\begin{theorem}[Structural Properties of an optimal solution]
    Let $\overrightarrow \alpha$ be an optimal solution in $(\ref{c1})$. Then the following hold:  
    \begin{enumerate}
        \item\label{opt_prop_1} $\tau_1 \leq \tau_2 \leq \cdots \leq \tau_K$. 
        \item \label{opt_prop_2} $\tau_i = + \infty$ $\iff$ $\alpha_i = 0$.
        \item \label{opt_prop_3} $\tau_1 < \infty$ and $\alpha_1 > 0$.
     \item \label{opt_prop_4} $\alpha_i - (2^R - 1) \beta_i \geq 0$ for all $1 \leq i \leq K$.
     \item \label{opt_prop_5} For each $i \in \{2, \ldots, K \}$, we have  $\alpha_i - (2^R - 1) \beta_i = 0$ $\iff$ $\beta_{i-1} = 0$. 
     \item \label{opt_prop_6} Let $\ell \in \{1, \ldots, K \}$ be the largest integer such that $\tau_i < \infty$ for all $i \leq \ell$. Then $\tau_1 < \cdots < \tau_\ell < \infty$ or equivalently, $\alpha_i > 2^R \alpha_{i + 1}$ for all $i \in \{1, \ldots, \ell \}$. 
    \end{enumerate}
    \label{opt_properties_theorem}
\end{theorem}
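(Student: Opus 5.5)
We write $c \coloneqq 2^R - 1$ and $s_i \coloneqq \alpha_i - c\beta_i$, so that $\tau_i = c/(P s_i)$ when $s_i > 0$ and $\tau_i = +\infty$ otherwise. The plan is to turn the problem into a resource-allocation problem over thresholds and then argue every property by an improving exchange that contradicts optimality.

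\textbf{Step 1: a cost identity.} Since $\beta_{i-1} = \alpha_i + \beta_i$ and $\beta_0 = \sum_i \alpha_i = 1$, the definition of $\tau_i$ gives, for every layer with $\tau_i < \infty$,
\begin{align*}
\beta_{i-1} = (1+c)\beta_i + \frac{c}{P\tau_i}.
\end{align*}
For layers with $\tau_i = \infty$ we only have $\alpha_i \geq 0$, hence $\beta_{i-1} \geq \beta_i \geq (1+c)\beta_i - c\beta_i$. In either case $\beta_{i-1} \geq (1+c)\beta_i + (c/P)\,u_i\,\mathbb{1}\{\tau_i<\infty\}$, where $u_i \coloneqq 1/\tau_i$. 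Unrolling from $i=1$ to $K$ yields
\begin{align*}
1 \geq \sum_{i:\tau_i<\infty} \frac{c}{P}(1+c)^{i-1}u_i .
\end{align*}
This holds with equality exactly when no power is placed on layers with infinite threshold and those layers sit below all finite ones.

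Conversely, take any target vector $u_1, \ldots, u_K \geq 0$ whose zero entries come last and for which $\frac{c}{P}\sum_i (1+c)^{i-1}u_i \leq 1$. Solving backward via $\alpha_i = c(u_i/P + \beta_i)$, and putting any leftover power into $\alpha_1$, gives a feasible $\overrightarrow\alpha$ with $\tau_i \leq 1/u_i$.

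The objective is $\sum_i d_i g(u_i)$ with $g(u) = \exp(-1/(u\sigma^2))$, which is strictly increasing in $u$ (and $g(0)=0$). So the problem becomes: maximize $\sum_i d_i g(u_i)$ subject to a linear budget whose weights $w_i = (1+c)^{i-1}$ are strictly increasing in $i$.

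\textbf{Step 2: Properties \ref{opt_prop_1}--\ref{opt_prop_5}.} Let $\overrightarrow\alpha$ be optimal and suppose the $u_i$ are not nonincreasing in $i$. Reassign the same multiset of values $u_i$ in decreasing order of index.
\begin{itemize}
\item By the rearrangement inequality (larger $u$ paired with smaller weight $w_i$), the budget does not increase.
\item Also by rearrangement (larger $u$ paired with larger $d_i$), the objective strictly increases, since $d_1 > \cdots > d_K$, $g$ is strictly increasing, and some pair is strictly out of order.
\end{itemize}
Step 1 makes the rearranged vector feasible, contradicting optimality; this proves Property \ref{opt_prop_1}.

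Next, suppose power is wasted, i.e.\ some $\alpha_i > 0$ with $\tau_i = \infty$. Then the inequality in Step 1 is strict, and the leftover budget can be put into $u_1$, strictly improving the objective because $d_1 > 0$. Hence $\tau_i = \infty \Rightarrow \alpha_i = 0$. Conversely, $\alpha_i = 0$ gives $s_i = -c\beta_i \leq 0$, so $\tau_i = \infty$; this is Property \ref{opt_prop_2}. The same argument shows the budget identity is tight, and tightness with $\sum_i\alpha_i = 1$ forces $\tau_1 < \infty$, which gives Property \ref{opt_prop_3}.

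For Property \ref{opt_prop_4}, suppose $s_i < 0$. Then $\tau_i = \infty$, so $\alpha_i = 0$. By Property \ref{opt_prop_1} every $\tau_j$ with $j > i$ is infinite, so $\alpha_j = 0$ for those $j$ as well. Hence $\beta_i = 0$ and $s_i = 0$, a contradiction.

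For Property \ref{opt_prop_5}: if $\beta_{i-1} = 0$ then $\alpha_i = \beta_i = 0$, so $s_i = 0$. Conversely, $s_i = 0$ gives $\tau_i = \infty$, hence $\alpha_i = 0$, hence $\beta_i = c^{-1}\alpha_i = 0$ (for $c>0$), and so $\beta_{i-1} = 0$.

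\textbf{Step 3: Property \ref{opt_prop_6} (strictness).} Suppose $\tau_i = \tau_{i+1} = \tau < \infty$ for some $i < \ell$. Move a small budget amount $\epsilon$ from layer $i+1$ to layer $i$: set $u_i \mathrel{+}= \epsilon/w_i$ and $u_{i+1} \mathrel{-}= \epsilon/w_{i+1}$. This keeps the budget fixed and, for small $\epsilon$, preserves the ordering of Property \ref{opt_prop_1}. The first-order change in the objective is
\begin{align*}
\epsilon\, g'(1/\tau)\left(\frac{d_i}{w_i} - \frac{d_{i+1}}{w_{i+1}}\right),
\end{align*}
which is strictly positive because $d_i > d_{i+1}$, $w_i < w_{i+1}$ and $g' > 0$ on $(0,\infty)$. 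This contradicts optimality, so $\tau_1 < \cdots < \tau_\ell$.

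For the equivalent form: $\tau_i < \tau_{i+1}$ iff $s_i > s_{i+1}$. Substituting $\beta_i = \alpha_{i+1} + \beta_{i+1}$, this reads $\alpha_i - c\alpha_{i+1} - c\beta_{i+1} > \alpha_{i+1} - c\beta_{i+1}$, i.e.\ $\alpha_i > 2^R\alpha_{i+1}$. At $i = \ell$ we have $\alpha_{\ell+1} = 0$ and $\alpha_\ell > 0$, and when $\ell = K$ we take $\alpha_{K+1} \coloneqq 0$.

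\textbf{Main obstacle.} The delicate step is Step 1 in the presence of infinite thresholds and unsorted layers. One must check that the inequality $\beta_{i-1} \geq (1+c)\beta_i + (c/P)u_i\mathbb{1}\{\tau_i<\infty\}$ really holds when $s_i \leq 0$; this uses $\alpha_i \geq 0$, which gives $\beta_{i-1} \geq \beta_i$. One must also check that the backward reconstruction of Step 1 yields a feasible $\overrightarrow\alpha$ achieving thresholds at least as good, so that every exchange argument above stays within $\Delta^{K-1}$.
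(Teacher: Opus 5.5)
Step~1 is false as stated, and it fails in exactly the configurations your Property~1 argument has to handle. For a layer with $s_i<0$ you only get $\beta_{i-1}\ge\beta_i$, not $\beta_{i-1}\ge(1+c)\beta_i$; the latter is equivalent to $s_i\ge 0$. The identity $\beta_{i-1}=s_i+(1+c)\beta_i$ holds for every $i$ and unrolls to $1=\sum_{i=1}^K(1+c)^{i-1}s_i$. Hence $\frac{c}{P}\sum_{i:\tau_i<\infty}(1+c)^{i-1}u_i=1-\sum_{i:\tau_i=\infty}(1+c)^{i-1}s_i\ge 1$, with strict inequality whenever some $s_i<0$. Concretely, for $K=2$ and $\alpha_1<c\,\alpha_2$ you have $\tau_1=\infty$ but $\frac{c}{P}(1+c)u_2=2^R\alpha_2>\alpha_1+\alpha_2=1$.

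Layers with $s_i<0$ occur only above some finite-threshold layer, i.e.\ in the unsorted situations (the paper's Case~1, $\tau_i=\infty>\tau_{i+1}$) that Property~1 must exclude. In those cases your step \emph{the rearranged budget does not exceed the original one, hence is at most $1$} starts from a budget that can exceed $1$. So you have not shown that the rearranged vector is feasible.

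The repair is to compress before rearranging. Let $f_1<\cdots<f_m$ be the indices with $\tau_{f_j}<\infty$. Only nonnegative $\alpha$'s lie strictly between consecutive finite layers, so $\beta_{f_j}\ge\beta_{f_{j+1}-1}$. Combining this with $\beta_{f_j-1}=\frac{c}{P}u_{f_j}+(1+c)\beta_{f_j}$ gives
\begin{align*}
1=\beta_0\ge\beta_{f_1-1}\ge\frac{c}{P}\sum_{j=1}^m(1+c)^{j-1}u_{f_j},
\end{align*}
with strict inequality iff some infinite-threshold layer carries positive power. Your decreasing rearrangement puts the nonzero $u$'s into positions $1,\dots,m$. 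By the rearrangement inequality its budget is at most this compressed budget, hence at most $1$. Your objective comparison and backward reconstruction then go through unchanged. The strictness clause supplies the slack you use for Property~2, although Property~2 also follows from Property~1 directly: the infinite layers form a suffix, and backward induction from $s_K=\alpha_K$ forces zero power there.

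With this fix your route is genuinely different from the paper's, and simpler. The paper proves Property~1 by two-layer exchanges in $\overrightarrow\alpha$. Its finite-threshold case needs a second-derivative computation and the inequality $(1+u)\ln u\ge 2(u-1)$ to show that a stationary point along the exchange direction is a local minimum. In your coordinates, swapping an out-of-order pair saves budget $(w_{i+1}-w_i)(u_{i+1}-u_i)>0$ and raises the objective outright.

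Your Step~3 coincides with the paper's first-order exchange for Property~6. Your claim that the perturbation preserves the ordering can fail when $\tau_{i-1}=\tau_i$, but this is harmless: your reconstruction only needs the zeros to come last.
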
 

\textit{Proof:} The proof of Theorem $\ref{opt_properties_theorem}.(\ref{opt_prop_1})$ is given in Appendix \ref{opt_prop_1_proof}. The proof of Theorem $\ref{opt_properties_theorem}.(\ref{opt_prop_2})$ is given in Appendix \ref{opt_prop_2_proof}. The proof of the remaining points is given in Appendix \ref{3-6_proof}.

\begin{corollary}
    We have 
    \begin{align}
        \max_{\overrightarrow \alpha \in \Delta^{K - 1}} \sum_{i=1}^K  \exp \left(- \frac{\max \{\tau_1, \ldots, \tau_i \}}{\sigma^2}  \right) d_i &= \max_{\overrightarrow \alpha \in \Delta^{K - 1}} \sum_{i=1}^K  \exp \left(- \frac{\tau_i}{\sigma^2}  \right) d_i,\\
        \argmax_{\overrightarrow \alpha \in \Delta^{K - 1}} \sum_{i=1}^K  \exp \left(- \frac{\max \{\tau_1, \ldots, \tau_i \}}{\sigma^2}  \right) d_i &= \argmax_{\overrightarrow \alpha \in \Delta^{K - 1}} \sum_{i=1}^K  \exp \left(- \frac{\tau_i}{\sigma^2}  \right) d_i.
    \end{align}
    \label{corl4}
\end{corollary}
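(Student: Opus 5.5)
Write $F(\overrightarrow\alpha) = \sum_{i=1}^K \exp(-\max\{\tau_1,\ldots,\tau_i\}/\sigma^2)\, d_i$ for the left objective and $G(\overrightarrow\alpha) = \sum_{i=1}^K \exp(-\tau_i/\sigma^2)\, d_i$ for the right one, both defined on $\Delta^{K-1}$. The proof rests on two facts. The first is the pointwise inequality $F \le G$. The second is equality $F = G$ at every maximizer of $G$, which is Theorem \ref{opt_properties_theorem}.(\ref{opt_prop_1}).

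\emph{Step 1 (pointwise domination).} For every $\overrightarrow\alpha \in \Delta^{K-1}$ and every $i$ we have $\max\{\tau_1,\ldots,\tau_i\} \ge \tau_i$, with the convention $e^{-\infty}=0$. Since $x \mapsto e^{-x/\sigma^2}$ is nonincreasing on $[0,\infty]$, each term of $F$ is at most the corresponding term of $G$. Hence $F(\overrightarrow\alpha) \le G(\overrightarrow\alpha)$ for all $\overrightarrow\alpha$, and so $\sup F \le \sup G$.

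\emph{Step 2 (existence of a maximizer of $G$).} Theorem \ref{opt_properties_theorem} speaks about an optimal solution of (\ref{c1}), so we need one to exist. I would check that each map $\overrightarrow\alpha \mapsto \exp(-\tau_i/\sigma^2)$ is continuous on the compact simplex. Set $D_i = \alpha_i - (2^R-1)\beta_i$, which is linear and hence continuous in $\overrightarrow\alpha$. The term equals $\exp\bigl(-(2^R-1)/(\sigma^2 P D_i)\bigr)$ when $D_i>0$ and equals $0$ otherwise. As $D_i \downarrow 0$ this tends to $0$, so the term is a continuous function of $D_i$ on all of $\mathbb{R}$. Therefore $G$ is continuous and attains its maximum; the same argument shows $F$ attains its maximum, since a max of finitely many $\tau_j$ is handled termwise in the same way.

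\emph{Step 3 (equality of the maxima).} Let $\overrightarrow\alpha^\star$ maximize $G$. By Theorem \ref{opt_properties_theorem}.(\ref{opt_prop_1}) we have $\tau_1 \le \cdots \le \tau_K$, so $\max\{\tau_1,\ldots,\tau_i\} = \tau_i$ for each $i$ and therefore $F(\overrightarrow\alpha^\star) = G(\overrightarrow\alpha^\star)$. Combining with Step 1 gives
\begin{align*}
\max G = G(\overrightarrow\alpha^\star) = F(\overrightarrow\alpha^\star) \le \max F \le \max G,
\end{align*}
so the two maxima coincide. This proves the first identity.

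\emph{Step 4 (equality of the argmax sets).} First, if $\overrightarrow\alpha^\star \in \argmax G$, then Step 3 gives $F(\overrightarrow\alpha^\star) = \max G = \max F$, so $\overrightarrow\alpha^\star \in \argmax F$. Conversely, if $\overrightarrow\alpha \in \argmax F$, then $\max G = \max F = F(\overrightarrow\alpha) \le G(\overrightarrow\alpha) \le \max G$. All of these are equalities, so $\overrightarrow\alpha \in \argmax G$.

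The only place where real content enters is the monotonicity of the $\tau_i$ at optimum, and that is supplied by Theorem \ref{opt_properties_theorem}.(\ref{opt_prop_1}). Apart from that, the one point needing care is the continuity argument in Step 2, specifically the behaviour of the terms at the boundary where $\tau_i$ jumps to $+\infty$. I expect that boundary check to be the main, though mild, obstacle.
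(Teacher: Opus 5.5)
Your proposal is correct and follows essentially the same route as the paper: the pointwise bound $F \le G$, then Theorem~\ref{opt_properties_theorem}.(\ref{opt_prop_1}) to get $F = G$ at any maximizer of $G$, then the same two sandwich arguments for the two inclusions of argmax sets. Your Step~2 adds one thing the paper leaves implicit when it writes $\max$: each term is a continuous function of the linear quantity $\alpha_i - (2^R-1)\beta_i$, and each term of $F$ is a minimum of such functions, so maximizers of both $F$ and $G$ exist on the compact simplex.
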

\textit{Proof:} The proof of Corollary \ref{corl4} is given in Appendix \ref{corl4_proof}.

Before stating subsequent results on the optimal solution in $(\ref{c1})$, we define 
\begin{align}
    \theta \coloneqq \frac{2^R - 1}{P \sigma^2}. \label{thetadef}
\end{align}
This definition will remain in effect throughout the paper. The numerator is the threshold SNR needed to successfully transmit one bit block of size $m = nR$. The denominator is the average SNR of the channel. Hence, $\theta$ is the threshold-to-average SNR ratio. All of our results will characterize an optimal solution $\overrightarrow \alpha^\star$ in terms of $\theta$. Results on the orthogonal resource allocation scheme will also be presented in terms of $\theta$.

\begin{theorem}
Let $K = 2$. Let $\overrightarrow \alpha^\star = (\alpha^\star, 1- \alpha^\star)$ denote an optimal solution in $(\ref{c1})$. Define 
\begin{align*}
    \xi \coloneqq \left( \frac{\theta 2^{R/2}}{1 +  \sqrt{1-2^R\theta^2 }}\right)^2  \exp \left( \theta(2^R - 1) +  2 \sqrt{1 - 2^R \theta^2} \right).
\end{align*}
Then the following hold:  
    \begin{itemize}
        \item If $\theta \geq 2^{-R/2}$, then $\alpha^\star = 1$.
        \item If $0 < \theta < 2^{-R/2}$ and $\frac{d_2}{d_1} \leq \xi$, then $\alpha^\star = 1$.
        \item If $0 < \theta < 2^{-R/2}$ and $\frac{d_2}{d_1} > \xi$, then there exists a unique solution $q_0$ to
        \begin{align}
            \frac{2^R}{q^2} \exp \left( \theta \left(q + 2^R - 1 -  \frac{2^R}{q} \right) \right) = \frac{d_2}{d_1} \label{equationone}
        \end{align}
        over the interval 
        \begin{align}
            q_0 \in \left( \frac{1}{\theta} - \sqrt{\frac{1}{\theta^2} - 2^R}, \frac{1}{\theta} + \sqrt{\frac{1}{\theta^2} - 2^R}  \right). \label{int_q}
        \end{align}
        Then 
        \begin{align*}
            \alpha^\star &= \begin{cases}
                \frac{q_0 + 2^R - 1}{q_0 + 2^R} & \text{ if } \exp\left(-\frac{\theta 2^R}{q_0} \right) \left( 1 + \frac{2^R}{q_0^2} \right) > 1\\
                1 & \text{ otherwise}.
            \end{cases}
        \end{align*}
    \end{itemize}
    \label{K=2theorem}
\end{theorem}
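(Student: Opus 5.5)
The plan is to reduce $(\ref{c1})$ for $K=2$ to a one-variable problem, study the sign of its derivative through an auxiliary function that turns out to be the left-hand side of $(\ref{equationone})$, and finish with a case analysis in $\theta$ and $d_2/d_1$. Write $s \coloneqq 2^R-1$ and $\overrightarrow\alpha=(\alpha,1-\alpha)$, so that $\beta_1 = 1-\alpha$ and $\beta_2=0$.

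\textbf{Step 1: candidates.} By $(\ref{deftau})$, $\tau_2 = s/(P(1-\alpha))$, and $\tau_1<\infty$ iff $\alpha > s/2^R$.
\begin{itemize}
\item If $\alpha\le s/2^R$, then $\tau_1=\infty$ and the objective is $d_2 e^{-\theta/(1-\alpha)} \le d_2 e^{-\theta 2^R}$.
\item At $\alpha=1$ the objective is $d_1 e^{-\theta}$, which strictly exceeds $d_2e^{-\theta 2^R}$ because $d_1>d_2$.
\end{itemize}
So the first region is never optimal, consistent with Theorem \ref{opt_properties_theorem}. On $\alpha\in(s/2^R,1)$ I reparametrize by $q\in(0,\infty)$ via $\alpha = (q+s)/(q+2^R)$. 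Then $1-\alpha = 1/(q+2^R)$ and $2^R\alpha - s = q/(q+2^R)$, which give $\tau_1/\sigma^2 = \theta(1+2^R/q)$ and $\tau_2/\sigma^2=\theta(q+2^R)$. The objective becomes
\begin{align*}
f(q) = d_1 e^{-\theta(1+2^R/q)} + d_2 e^{-\theta(q+2^R)},
\end{align*}
with $f(0^+) = d_2 e^{-\theta 2^R}$ and $f(\infty)=d_1e^{-\theta}$; the limit $q\to\infty$ is the candidate $\alpha=1$. Hence $\alpha^\star$ is either $1$ or an interior local maximizer of $f$.

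\textbf{Step 2: derivative.} Differentiating gives
\begin{align*}
f'(q) = \theta d_2 e^{-\theta(q+2^R)}\left[\frac{d_1}{d_2} h(q) - 1\right],
\end{align*}
where $h(q)$ is the left-hand side of $(\ref{equationone})$. So $f$ increases exactly where $h > d_2/d_1$, and interior local maxima of $f$ are points where $h$ crosses the level $d_2/d_1$ from above. A direct computation gives
\begin{align*}
(\ln h)'(q) = \frac{\theta q^2 - 2q + \theta 2^R}{q^2}.
\end{align*}
Its roots are $q_\pm = 1/\theta \pm \sqrt{1/\theta^2 - 2^R}$, which are real iff $\theta \le 2^{-R/2}$. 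Moreover $h(0^+) = 0$ and $h(\infty)=\infty$. So either $h$ is increasing (when $\theta\ge 2^{-R/2}$), or $h$ increases on $(0,q_-)$, decreases on $(q_-,q_+)$, and increases on $(q_+,\infty)$.

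\textbf{Step 3: case analysis.}
\begin{itemize}
\item \emph{$\theta\ge 2^{-R/2}$.} Here $h$ crosses $d_2/d_1$ only upward. Thus $f$ first decreases and then increases, so its supremum is an endpoint value, and $\alpha^\star=1$ by the comparison in Step 1.
\item \emph{$\theta<2^{-R/2}$.} I would verify that $\xi = h(q_+)$ by substituting $q_+$ and simplifying with $\theta q_+^2 = 2q_+ - \theta 2^R$; this should produce the stated closed form.
\begin{itemize}
\item If $d_2/d_1 \le \xi$, then $h\ge d_2/d_1$ on all of $[q_+,\infty)$, so no downward crossing occurs. Again $f$ has no interior local maximum and $\alpha^\star = 1$.
\item If $d_2/d_1>\xi$, I need a downward crossing inside $(q_-,q_+)$. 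It suffices to show $h(q_-) \ge 1 > d_2/d_1$. Then the intermediate value theorem and strict monotonicity of $h$ on $(q_-,q_+)$ give a unique $q_0$ in the interval $(\ref{int_q})$, and this $q_0$ is the unique interior local maximizer of $f$.
\end{itemize}
\end{itemize}

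\textbf{Step 4: final comparison.} Using $d_2/d_1 = h(q_0)$ to eliminate $d_2$,
\begin{align*}
\frac{f(q_0)}{d_1 e^{-\theta}} = e^{-\theta 2^R/q_0} + \frac{d_2}{d_1}e^{-\theta(q_0+2^R-1)} = e^{-\theta 2^R/q_0}\left(1+\frac{2^R}{q_0^2}\right).
\end{align*}
So the interior point beats $\alpha=1$ exactly under the stated condition. Converting back through $\alpha = (q_0+s)/(q_0+2^R)$ gives the stated formula for $\alpha^\star$.

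\textbf{Main obstacle.} I expect the main difficulty to be the inequality $h(q_-)\ge 1$. I would first try writing $h(q_-)$ in the same closed form as $\xi$ but with $1-\sqrt{1-2^R\theta^2}$ in the denominator, then bound it below using $q_- \le 2^R\theta$ and an elementary inequality such as $e^x \ge 1+x$. If this fails for some parameters, the fallback is to argue directly that on $(q_-,q_+)$ any crossing of $h$ with the level $d_2/d_1$ is a downward crossing. Since $h$ is strictly decreasing there, this gives uniqueness. For existence when $h(q_-)<d_2/d_1$, I would need to show that $f$ then has no interior maximum, so that the "otherwise" branch, $\alpha^\star = 1$, still applies.
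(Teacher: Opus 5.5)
Your route is the paper's own argument almost step for step. You reduce to one variable, substitute $\alpha=(q+2^R-1)/(q+2^R)$, and read the sign of $f'$ off $h$. You then get $q_\pm$ from $(\ln h)'$, identify $\xi=h(q_+)$, and compare $f(q_0)$ with $d_1e^{-\theta}$ using $h(q_0)=d_2/d_1$. Those computations check out. However, one required step is left open, and Step 1 contains a slip.

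\textbf{The missing step: $h(q_-)>d_2/d_1$.} This inequality is what gives existence of $q_0$ in the third bullet; it is the content of the paper's Lemma~\ref{hq->1}. Your fallback cannot replace it. The statement asserts that a root $q_0$ exists in $(q_-,q_+)$. If $h(q_-)<d_2/d_1$ could occur together with $d_2/d_1>\xi$, the third bullet would simply be false, and showing $\alpha^\star=1$ in that case would not rescue it. So the inequality must be proved.

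Your closed-form idea does close the gap. Let $u=\sqrt{1-2^R\theta^2}\in(0,1)$, so that $q_-=(1-u)/\theta$ and $2^R\theta^2=(1-u)(1+u)$. Then $2^R/q_-^2=(1+u)/(1-u)$ and $\theta(q_--2^R/q_-)=-2u$. Hence, with your $h$ (the left side of $(\ref{equationone})$),
\[
h(q_-)=\frac{1+u}{1-u}\,e^{-2u}\,e^{\theta(2^R-1)}>\frac{1+u}{1-u}\,e^{-2u}\geq 1>\frac{d_2}{d_1}.
\]
The middle inequality is $\ln\frac{1+u}{1-u}=2\sum_{k\geq 0}\frac{u^{2k+1}}{2k+1}\geq 2u$. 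Note that $e^x\geq 1+x$ combined with $q_-\le 2^R\theta$ does not obviously yield this; the displayed logarithmic inequality is what you actually need. The paper proves the stronger bound $h(q_-)\geq 2^R$ another way: it writes $t=q_-\in(0,2^{R/2})$, so that $\theta=2t/(t^2+2^R)$, and shows $t^{-2}\exp(\cdots)\geq 1$ with the minimum at $t=1$.

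\textbf{A slip in Step 1.} On $\alpha\le(2^R-1)/2^R$ the objective $d_2e^{-\theta/(1-\alpha)}$ is decreasing in $\alpha$. So $d_2e^{-\theta 2^R}$ is its minimum there, not an upper bound. Its maximum is $d_2e^{-\theta}$, attained at $\alpha=0$. The conclusion survives because $d_2e^{-\theta}<d_1e^{-\theta}$; this is the comparison $w(1)>w(0)$ that the paper uses. Still, the inequality as you wrote it is wrong, and you invoke it again in Step 3 to discard that region.
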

\textit{Proof:} The proof of Theorem \ref{K=2theorem} is given in Appendix \ref{K=2theorem_proof}.
\begin{remark}
    The LHS of $(\ref{equationone})$ is strictly decreasing over the interval given in $(\ref{int_q})$ so the unique solution $q_0$ can be obtained by a simple bisection search.  
\end{remark}

To solve for the general $K \geq 2$ case, we first perform a change of variable. Define 
\begin{align}
    \Delta_+^{K-1} &\coloneqq \left \{\overrightarrow \alpha \in \Delta^{K-1}:  \alpha_i - (2^R - 1) \beta_i \geq 0 \text{ for all } 1\leq i \leq K \right \},\\
    \mathcal{S}_K &\coloneqq \left  \{\overrightarrow x \in \mathbb{R}^K: \sum_{i=1}^{K} (2^R)^{i-1} x_i = 1 \text{ and } x_j \geq 0 \text{ for all } 1 \leq j \leq K   \right \}. \label{consonxx}
\end{align}

\begin{lemma}
Let $\overrightarrow x = M_B(\overrightarrow \alpha)$ be specified as 
$x_i = \alpha_i - (2^R - 1)\beta_i$ for all $1 \leq i \leq K$. Then $M_B$ is a bijection from $\Delta_+^{K-1}$ to $\mathcal{S}_K$ with the inverse mapping $\overrightarrow{\alpha} = M_B^{-1}(\overrightarrow x)$ given by $\alpha_K = x_K$ and 
\begin{align}
    \alpha_{i} &= x_{i} + (2^R - 1)\sum_{j=i + 1}^{K} 2^{R(j-i-1)} x_{j} \label{xtoa}
\end{align}
for all $1 \leq i \leq K - 1$. 
\label{atox}
\end{lemma}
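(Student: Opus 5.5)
The plan is to show that $M_B$ is linear and triangular, and then check that it maps $\Delta_+^{K-1}$ into $\mathcal{S}_K$ and that the proposed inverse maps $\mathcal{S}_K$ into $\Delta_+^{K-1}$, with the two compositions equal to the identity. Write $r \coloneqq 2^R - 1$, so $r + 1 = 2^R$.

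The map $\overrightarrow\alpha \mapsto \overrightarrow x$ with $x_i = \alpha_i - r\sum_{j>i}\alpha_j$ is given by an upper triangular matrix with unit diagonal. It is therefore invertible on all of $\mathbb{R}^K$, and back substitution gives the inverse. Starting from $\alpha_K = x_K$ and using $\alpha_i = x_i + r\beta_i$, I will prove by backward induction the closed form
\begin{align*}
\beta_i = \sum_{j=i+1}^K 2^{R(j-i-1)} x_j .
\end{align*}
The inductive step uses $\beta_{i-1} = \alpha_i + \beta_i = x_i + (1+r)\beta_i = x_i + 2^R\beta_i$. Substituting the closed form into $\alpha_i = x_i + r\beta_i$ gives exactly $(\ref{xtoa})$. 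Since both maps are restrictions of mutually inverse linear maps on $\mathbb{R}^K$, it only remains to check that each sends its domain set into the other's.

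\emph{Forward direction.} Take $\overrightarrow\alpha \in \Delta_+^{K-1}$. Then $x_i \ge 0$ by the definition of $\Delta_+^{K-1}$. For the normalization, the recursion $\beta_{i-1} = x_i + 2^R\beta_i$ unrolled from $i=1$ gives
\begin{align*}
\sum_{i=1}^K \alpha_i = \alpha_1 + \beta_1 = x_1 + 2^R\beta_1 = \sum_{i=1}^K 2^{R(i-1)} x_i .
\end{align*}
This is the closed form above with a formal $\beta_0 \coloneqq \sum_i \alpha_i$. Hence $\sum_i 2^{R(i-1)}x_i = 1$, and $\overrightarrow x \in \mathcal{S}_K$.

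\emph{Reverse direction.} Take $\overrightarrow x \in \mathcal{S}_K$. Each $\alpha_i$ in $(\ref{xtoa})$ is a nonnegative combination of the nonnegative $x_j$, so $\alpha_i \ge 0$. By the same identity, $\sum_i \alpha_i = \sum_i 2^{R(i-1)}x_i = 1$. Finally, $\alpha_i - r\beta_i = x_i \ge 0$, so $\overrightarrow\alpha \in \Delta_+^{K-1}$.

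There is no real obstacle here. The only step needing care is the index bookkeeping in the induction for $\beta_i$, including the convention $(\ref{sum_convention})$ that makes $\beta_K = 0$. The normalization identity then follows from that same induction.
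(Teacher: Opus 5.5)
Your proof is correct and follows the paper's argument. It uses the same recursion $\beta_{i-1} = x_i + 2^R\beta_i$, the same closed form for $\beta_i$ giving $\sum_i \alpha_i = \beta_0 = \sum_i 2^{R(i-1)}x_i$, and the same nonnegativity checks. The only difference is cosmetic: you get invertibility from the unit upper-triangular structure of the linear map on $\mathbb{R}^K$ and then check only the two set inclusions, whereas the paper verifies both $M_B(\Psi(\overrightarrow x)) = \overrightarrow x$ and $\Psi(M_B(\overrightarrow\alpha)) = \overrightarrow\alpha$ explicitly.
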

\textit{Proof:} The proof of Lemma \ref{atox} is given in Appendix \ref{atox_proof}. 

\begin{remark}
    The mapping $\overrightarrow x \mapsto \overrightarrow \alpha$ in $(\ref{xtoa})$ can be implemented by the following backward recursion: $\alpha_K = x_K$ and for $i = K - 1, \ldots, 1$, 
    \begin{align}
    \begin{split}
        \beta_i &= \sum_{j = i + 1}^K \alpha_j\\
        \alpha_i &= x_i + (2^R - 1) \beta_i.
    \end{split}
    \label{xtoa2}
    \end{align}
\end{remark}

Define $g : [0, 1] \to [0, 1)$ as 
\begin{align}
  g(x) &\coloneqq \exp \left(- \frac{\theta}{x}  \right) \label{smallgdef}
\end{align}
with the convention $$g(0) = \lim_{x \downarrow 0} g(x) =  0.$$
We then have 
    \begin{align}
        \max_{\overrightarrow \alpha \in \Delta^{K-1}} \sum_{i=1}^K  \exp \left(- \frac{\tau_i}{\sigma^2}  \right) d_i = \max_{\overrightarrow \alpha \in \Delta^{K-1}_+} \sum_{i=1}^K  \exp \left(- \frac{\tau_i}{\sigma^2}  \right) d_i = \max_{\overrightarrow x \in \mathcal{S}_K} \sum_{i=1}^K g(x_i) d_i, \label{lhs=rhsopt} 
    \end{align}
where the first equality above follows from Theorem $\ref{opt_properties_theorem}.(\ref{opt_prop_4})$ and the second equality above follows from Lemma \ref{atox}. Hence, if $\overrightarrow{x}^\star = (x_1^\star, \ldots, x_K^\star)$ is a maximizer in the RHS of $(\ref{lhs=rhsopt})$, then an optimal power split $\overrightarrow \alpha^\star$ can be constructed from $\overrightarrow{x}^\star$ using $(\ref{xtoa})$ or $(\ref{xtoa2})$. We thus focus our attention on solving the optimization problem 
\begin{align}
    \max_{\overrightarrow x \in \mathcal{S}_K} G(\overrightarrow x),  \label{b3}
    \end{align}
where we define 
\begin{align*}
    G(\overrightarrow x) \coloneqq \sum_{i=1}^K g(x_i) d_i.
\end{align*}
A maximizer in $(\ref{b3})$ exists because the objective is continuous and the feasible set $\mathcal{S}_K$ is compact (note that $\overrightarrow x \in \mathcal{S}_K$ satisfies $x_i \leq 1$ for all $i$). 
Given the bijection in Lemma \ref{atox}, we can directly map the structural properties of an optimal solution $\overrightarrow \alpha^\star$ in $(\ref{c1})$ as established in Theorem \ref{opt_properties_theorem} to the optimality conditions for a solution in $(\ref{b3})$.  
\begin{corollary}
    Let $\overrightarrow x^\star$ be an optimal solution in $(\ref{b3})$. Then the following hold: 
    \begin{enumerate}
        \item $x_1^\star \geq x_2^\star \geq \cdots \geq x_K^\star$,
        \item $x_1^\star > 0$,
        \item \label{opt_prop_3x}  For each $i \in \{2, \ldots, K \}$, we have $x_i^\star = 0$ $\iff$ $x_j^\star = 0$ for all $i \leq j \leq K$.
        \item \label{opt_prop_4x} Let $\ell \in \{1, \ldots, K \}$ be the largest integer such that $x_i^\star > 0$ for all $i \leq \ell$. Then $x_1^\star > \cdots > x_\ell^\star > 0$. 
    \end{enumerate}
    \label{optxproperties}
\end{corollary}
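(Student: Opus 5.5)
The plan is to pull each item back from Theorem~\ref{opt_properties_theorem} through the bijection $M_B$ of Lemma~\ref{atox}, which takes the optimal $\overrightarrow x^\star$ of $(\ref{b3})$ to an optimal power split of $(\ref{c1})$.

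\textbf{Step 1: the preimage is optimal for $(\ref{c1})$.} Let $\overrightarrow x^\star$ be optimal in $(\ref{b3})$ and set $\overrightarrow \alpha^\star \coloneqq M_B^{-1}(\overrightarrow x^\star) \in \Delta_+^{K-1}$. For every $\overrightarrow\alpha\in\Delta_+^{K-1}$ with $\overrightarrow x = M_B(\overrightarrow \alpha)$, definitions $(\ref{deftau})$ and $(\ref{thetadef})$ give
\begin{align*}
\frac{\tau_i}{\sigma^2} = \frac{2^R-1}{P\sigma^2 x_i} = \frac{\theta}{x_i} \quad \text{whenever } x_i>0,
\end{align*}
and $\tau_i=+\infty$ exactly when $x_i=0$. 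Hence $\exp(-\tau_i/\sigma^2)=g(x_i)$ in all cases, using the convention $g(0)=0$. So the objective of $(\ref{c1})$ at $\overrightarrow\alpha^\star$ equals $G(\overrightarrow x^\star)$. By the chain of equalities $(\ref{lhs=rhsopt})$, this is the maximum of $(\ref{c1})$ over the whole simplex $\Delta^{K-1}$. Therefore $\overrightarrow\alpha^\star$ is an optimal solution of $(\ref{c1})$, and every item of Theorem~\ref{opt_properties_theorem} applies to it.

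This is the only non-mechanical step: Theorem~\ref{opt_properties_theorem} concerns optimizers over $\Delta^{K-1}$, not $\Delta^{K-1}_+$, so we need the equality of maxima in $(\ref{lhs=rhsopt})$. That equality in turn relies on Theorem~\ref{opt_properties_theorem}.(\ref{opt_prop_4}).

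\textbf{Step 2: translating the items.} We use the identity $\tau_i = \theta\sigma^2/x_i$, read as $+\infty$ when $x_i=0$. On $[0,1]$ the map $x\mapsto \theta\sigma^2/x$ is strictly decreasing into $(0,\infty]$, since $\theta > 0$.
\begin{itemize}
\item Item 1. Theorem~\ref{opt_properties_theorem}.(\ref{opt_prop_1}) gives $\tau_1\le\cdots\le\tau_K$, which is equivalent to $x_1^\star\ge\cdots\ge x_K^\star$. If $\tau_i=\tau_{i+1}=\infty$, then both $x$'s are $0$; if only $\tau_{i+1}=\infty$, then $x_{i+1}^\star=0\le x_i^\star$.
\item Item 2. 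Theorem~\ref{opt_properties_theorem}.(\ref{opt_prop_3}) gives $\tau_1<\infty$, i.e.\ $x_1^\star>0$.
\item Item 3. Since $x_j^\star\ge 0$ and the sequence is nonincreasing by item 1, $x_i^\star=0$ forces $x_j^\star\le x_i^\star=0$ for all $j\ge i$. The converse is trivial.
\item Item 4. Since $x_i^\star>0 \iff \tau_i<\infty$, the index $\ell$ defined here coincides with the $\ell$ of Theorem~\ref{opt_properties_theorem}.(\ref{opt_prop_6}). That item gives $\tau_1<\cdots<\tau_\ell<\infty$, and strict monotonicity of $x\mapsto\theta\sigma^2/x$ turns this into $x_1^\star>\cdots>x_\ell^\star>0$.
\end{itemize}
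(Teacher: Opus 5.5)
Your proposal is correct and follows the paper's route. The paper simply says that the structural properties in Theorem~\ref{opt_properties_theorem} transfer directly through the bijection $M_B$ of Lemma~\ref{atox}. Your argument makes that transfer explicit in two steps. First, you identify $\exp(-\tau_i/\sigma^2)=g(x_i)$ and use $(\ref{lhs=rhsopt})$ to show that $M_B^{-1}(\overrightarrow x^\star)$ is optimal over all of $\Delta^{K-1}$, not just over $\Delta_+^{K-1}$. Second, you translate each item through the strictly decreasing map $x\mapsto\theta\sigma^2/x$.
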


Note that 
\begin{align*}
    \max\{1 \leq i \leq K: x_i > 0 \} = \max\{1 \leq i \leq K: \alpha_i > 0 \}, 
\end{align*}
so the parameter $\ell$ defined in Corollary $\ref{optxproperties}.(\ref{opt_prop_4x})$ denotes the optimal number of bit blocks to transmit. Theorem \ref{active_layer_theorem} gives an explicit upper bound on $\ell$ in terms of $\theta, \overrightarrow d$ and rate $R$. 

Using the optimality conditions in Corollary \ref{optxproperties}, the KKT conditions and the second-order necessary and sufficient conditions based on the Lagrangian Hessian, we prove the following series of theorems.

\begin{theorem}
\label{active_layer_theorem}
Let $\overrightarrow x^\star$ denote an optimal solution in $(\ref{b3})$. Then 
\begin{align}
    x_i^\star \begin{cases}
        > 0 & \text{ for } i \in \{1, \ldots, \ell \}\\
        = 0 & \text{ otherwise}, 
    \end{cases} 
    \label{skfds}
\end{align}
where
    \begin{align}
        \ell \leq \ell_{\operatorname{PDS}} \coloneqq \begin{cases}
            1 & \text{ if } \theta \geq 2\\
            \max \left \{1 \leq i \leq K: \frac{2^{iR}}{d_i} \leq \left(\frac{2^R}{d_1} \right) \frac{4e^{-2}}{\theta^2 e^{-\theta}}  \right \} & \text{ if } \theta < 2.
        \end{cases} \label{elllmdef}    
    \end{align}
    In particular, if
    \begin{align}
        \theta > -2W_0 \left(-\frac{1}{e} \sqrt{\frac{d_2}{2^R d_1}} \right), \label{particularthm}    
    \end{align}
     then $\ell = 1$ and $\overrightarrow x^\star = (1, 0, \ldots, 0)$. 
\end{theorem}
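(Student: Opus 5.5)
The plan is to read off the bound on $\ell$ from the first-order (KKT) conditions and a second-order argument for $(\ref{b3})$, working in the variable $y = \theta/x$, in which $g'$ is expressed through $\psi$.

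\textbf{Step 1 (first-order conditions).} For $x>0$ we have
\begin{align*}
g'(x) = \frac{\theta}{x^2} e^{-\theta/x} = \frac{1}{\theta}\,\psi\!\left(\frac{\theta}{x}\right),
\end{align*}
so $0 < g'(x) \le 4e^{-2}/\theta$. The constraints of $\mathcal{S}_K$ are linear, so KKT conditions hold at any maximizer $\overrightarrow x^\star$. Corollary \ref{optxproperties} says the support of $\overrightarrow x^\star$ is exactly $\{1,\ldots,\ell\}$, which gives $(\ref{skfds})$. It follows that there is a multiplier $\lambda$ with
\begin{align*}
d_i\,\psi(y_i) = \theta\lambda\, 2^{R(i-1)}, \qquad y_i \coloneqq \theta/x_i^\star, \qquad 1\le i\le \ell .
\end{align*}
Applying this at $i=1$ and $i=\ell$ and using $\psi(y_\ell)\le 4e^{-2}$ gives
\begin{align*}
\frac{2^{\ell R}}{d_\ell} \le \frac{2^R}{d_1}\,\frac{4e^{-2}}{\psi(y_1)} .
\end{align*}
So everything reduces to proving the lower bound $\psi(y_1) \ge \psi(\theta) = \theta^2 e^{-\theta}$ whenever $\ell\ge 2$.

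\textbf{Step 2 (second-order argument; the main obstacle).} Since $g'(x) = \psi(\theta/x)/\theta$, differentiating gives $g''(x) = -\psi'(y)\,y^2/\theta^2$. Hence $g$ is strictly convex where $y>2$ and concave where $y\le 2$.

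I claim that at an optimum at most one active coordinate can have $y_i>2$. Suppose instead $i\ne j$ are active with $y_i,y_j>2$. Perturb along the feasible direction $v = 2^{R(j-1)}e_i - 2^{R(i-1)}e_j$, which keeps $\sum_k 2^{R(k-1)}x_k = 1$ and stays in $\mathcal{S}_K$ for small steps since both coordinates are positive. The first-order term vanishes by Step 1. The second-order term is $d_i g''(x_i)v_i^2 + d_j g''(x_j)v_j^2 > 0$. This strictly increases $G$, contradicting optimality.

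Now suppose $\ell\ge 2$. By Corollary \ref{optxproperties} we have $x_1^\star > x_\ell^\star$, i.e. $y_1 < y_\ell$. If $y_1>2$ then also $y_\ell>2$, which the claim forbids; hence $y_1\le 2$. Moreover $x_1^\star\le 1$ gives $y_1\ge\theta$.

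\textbf{Step 3 (the two cases).}
\begin{itemize}
\item[(a)] If $\theta\ge 2$ and $\ell \ge 2$, then $2\le\theta\le y_1<y_2$, so $y_2>2$. Step 2 then forces $y_1\le 2$, hence $y_1=2$. Now $y_1,y_2\ge 2$ with $y_2>2$ strictly and $y_1=2$ on the boundary, where $g''(x_1)=0$. The perturbation of Step 2 applied to $i=1$, $j=2$ then gives second-order term $d_2 g''(x_2)v_2^2>0$, a contradiction. So $\ell=1$ when $\theta\ge 2$.
\item[(b)] If $\theta<2$ and $\ell\ge 2$, then $\theta\le y_1\le 2$. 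Since $\psi$ is increasing on $(0,2]$, we get $\psi(y_1)\ge\psi(\theta)$. Step 1 then shows that index $\ell$ satisfies the defining inequality of $(\ref{elllmdef})$, so $\ell\le\ell_{\operatorname{PDS}}$. The case $\ell=1$ is trivial.
\end{itemize}

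\textbf{Step 4 (the "in particular" clause).} If $\ell\ge 2$, then by Step 3 we must have $\theta<2$, and the Step 1 inequality with $i=2$ gives
\begin{align*}
\psi(\theta) \le 4e^{-2}\,\frac{d_2}{2^R d_1} .
\end{align*}
Since $\theta\in(0,2)$ and $\psi$ is increasing there, this is equivalent to $\theta \le y^- = -2W_0\!\left(-\tfrac1e\sqrt{d_2/(2^R d_1)}\right)$, using the Lambert-function formula for $y^-$ from Section \ref{prelim_section}.

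Therefore $(\ref{particularthm})$ forces $\ell=1$. The constraint $\sum_{i} 2^{R(i-1)} x_i = 1$ with only $x_1>0$ then gives $\overrightarrow x^\star=(1,0,\ldots,0)$.
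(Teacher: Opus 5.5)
Your proof is correct, and its skeleton matches the paper's. Both work with the active stationarity conditions in the variable $y_i=\theta/x_i$. Your inequality, obtained by dividing the $i=\ell$ equation by the $i=1$ equation, is exactly the paper's $\lambda_{\min}\le\lambda\le\lambda_{\max}(\ell)$, i.e.\ $\psi(y_1)\ge\theta^2e^{-\theta}$ and $\psi(y_\ell)\le 4e^{-2}$.

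Where you differ is in how you obtain $y_1\le 2$, and in how you handle the case $\theta\ge 2$.

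\textbf{The paper's route.} Case 1 and Lemma \ref{lambda_lwbd} use only first-order information. Stationarity gives $\psi(y_i)=c_i(\lambda)$, with $c_i(\lambda)=\lambda\theta 2^{R(i-1)}/d_i$ strictly increasing in $i$. Corollary \ref{optxproperties} forces $y_1<\cdots<y_\ell$. On the decreasing branch $y>2$ of $\psi$ these two monotonicities are incompatible, and the leftover configuration ($\ell=1$, $x_1<1$) is infeasible.

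\textbf{Your route.} You use a second-order exchange along $2^{R(j-1)}e_i-2^{R(i-1)}e_j$, exploiting strict convexity of $g$ for $x<\theta/2$. This lemma is stronger and does not rely on the ordering: at most one active coordinate can lie in the convex region. That is precisely why, later in the paper, only $x_\ell$ can sit on the $W_{-1}$ branch.

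For this theorem, however, the second-order lemma is more than you need. Once you have $y_1<y_2$, assuming $y_1>2$ gives $\psi(y_2)<\psi(y_1)$. But stationarity gives $\psi(y_2)=2^R(d_1/d_2)\psi(y_1)>\psi(y_1)$, a contradiction from first-order information alone.

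Similarly, in Step 3(a) the boundary configuration $y_1=2=\theta$ means $x_1^\star=1$. That point is already infeasible with $x_2^\star>0$, so your extra second-order argument there is valid but unnecessary.

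Two small points should be stated explicitly:
\begin{enumerate}
\item The Step 1 inequality holds for every active index, not just $i=\ell$. You use it at $i=2$ in Step 4. Alternatively, note that $2^{iR}/d_i$ is increasing in $i$.
\item $\ell_{\operatorname{PDS}}\ge 1$ always, since $\theta^2e^{-\theta}\le 4e^{-2}$. This is what covers your case $\ell=1$.
\end{enumerate}
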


\begin{theorem}
\label{global_maximizer_thm}
    An optimal solution $\overrightarrow x^\star$ in $(\ref{b3})$ is given by Algorithm $\ref{algorithmglobalmax}$.
\end{theorem}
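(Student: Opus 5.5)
The plan is to show that Algorithm \ref{algorithmglobalmax} enumerates a finite list of candidate points, that this list always contains a global maximizer of $(\ref{b3})$, and that the algorithm returns the best candidate. Since I do not have the algorithm's text in front of me, I assume it has this enumerate-and-compare form; the argument below is organized around that assumption.

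\emph{Reduction to the support.} A maximizer exists because $\mathcal{S}_K$ is compact and $G$ is continuous. By Corollary \ref{optxproperties} and Theorem \ref{active_layer_theorem}, any maximizer satisfies $x_1^\star>\cdots>x_\ell^\star>0$ and $x_{\ell+1}^\star=\cdots=x_K^\star=0$ for some $\ell\le \ell_{\operatorname{PDS}}$. It therefore suffices, for each fixed $\ell\in\{1,\ldots,\ell_{\operatorname{PDS}}\}$, to characterize the maximizers of $\sum_{i\le \ell} g(x_i)d_i$ over the relative interior of the face $\{\sum_{i\le\ell}2^{R(i-1)}x_i=1,\ x_i>0\}$.

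\emph{First-order conditions.} On this face the only active constraint is the linear equality, so the KKT conditions read $d_i g'(x_i)=\lambda 2^{R(i-1)}$ for $i\le\ell$. Since $g'(x)=\frac{\theta}{x^2}e^{-\theta/x}=\psi(\theta/x)/\theta$, these become
\begin{align*}
\psi\!\left(\frac{\theta}{x_i}\right)=c_i(\lambda), \qquad c_i(\lambda)\coloneqq\frac{\lambda\theta 2^{R(i-1)}}{d_i}.
\end{align*}
By the Lambert-function discussion in Section \ref{prelim_section}, each $x_i$ is then either $\theta/y^-(c_i)$, which lies in the concave region $x_i>\theta/2$, or $\theta/y^+(c_i)$, which lies in the convex region $x_i<\theta/2$. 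The case $c_i=4e^{-2}$ gives the double root $x_i=\theta/2$.

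\emph{Second-order conditions.} The Hessian of the Lagrangian is $\operatorname{diag}(d_ig''(x_i))$, and $g''(x)<0$ exactly when $x>\theta/2$. Restricted to the tangent space $\{v:\sum 2^{R(i-1)}v_i=0\}$, which has codimension one, this diagonal form can be negative semidefinite only if at most one diagonal entry is positive. Hence at most one coordinate uses the $y^+$ branch. Because $x_i$ is strictly decreasing in $i$ and the $y^+$ branch gives the smallest values, that coordinate must be $x_\ell$. For each $\ell$ this leaves two one-parameter families indexed by $\lambda$: all coordinates on the $y^-$ branch, or the first $\ell-1$ on $y^-$ and $x_\ell$ on $y^+$.

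\emph{Fixing $\lambda$ and concluding.} Along each family, $\lambda\mapsto\sum_{i\le\ell}2^{R(i-1)}x_i(\lambda)$ is continuous and monotone: $W_0$ is increasing and $W_{-1}$ is decreasing, so $y^-$ increases and $y^+$ decreases in $c$. The admissible range is $\lambda\le 4e^{-2}\min_i d_i/(\theta 2^{R(i-1)})$. Consequently the constraint has at most one root $\lambda$, which \textrm{BisectionSearch} locates. For the mixed family there is an extra check that the resulting point is a genuine local maximum, i.e.\ that the bordered-Hessian determinant has the right sign. This gives at most $2\ell_{\operatorname{PDS}}$ candidates, together with $(1,0,\ldots,0)$ for $\ell=1$. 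Every maximizer lies among them, so the best candidate is a global maximizer.

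\emph{Main obstacle.} The delicate step is the mixed family. There I must verify that the constraint sum is monotone in $\lambda$: the $y^-$ coordinates decrease while the $y^+$ coordinate increases, so monotonicity is not automatic. I must also handle the boundary where $c_\ell$ reaches $4e^{-2}$. I would first try to show that the increase of $x_\ell$ is dominated using the second-order condition, for instance via the bordered-Hessian sign condition $\sum_i 2^{2R(i-1)}/(d_ig''(x_i))<0$. If that fails, I would treat $x_\ell$ as the bisection variable instead of $\lambda$.
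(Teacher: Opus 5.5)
Your support reduction, the Lambert-branch form of the stationarity equations, and the single-root bisection for the all-$W_0$ family are fine. Your second-order argument that only $x_\ell$ can sit on the $y^+$ branch also works; the paper gets this directly from $y_1^-<\cdots<y_\ell^-\le 2\le y_\ell^+<\cdots<y_1^+$ and the strict monotonicity of $\overrightarrow x^\star$.

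The gap is the mixed family. Your claim that the constraint sum is monotone there, with at most one root, is false. Take $K=2$ in the three-critical-point regime of Theorem \ref{K=2theorem}, e.g.\ $R=6$, $\theta=0.025$, $\overrightarrow d=(0.51,0.49)$. Both the maximizer and the neighbouring local minimizer have $x_1>\theta/2>x_2$. The mixed sum exceeds $1$ at both ends of $[\lambda_{\min},\lambda_{\max}(2)]$ and dips below $1$ in between. A single bisection therefore sees no sign change and misses both roots. This is why Algorithm \ref{algorithmglobalmax} calls Algorithm \ref{mixed_branch_roots}, which can return two values of $\lambda$.

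Neither of your fallbacks repairs this. Bisecting in $x_\ell$ (equivalently in $s=y_\ell^+$) is the paper's reparametrization, but the function is still non-monotone in that variable. The second-order route points the wrong way. Differentiating $d_ig'(x_i)=\lambda 2^{R(i-1)}$ gives $\frac{d}{d\lambda}\sum_{i\le\ell}2^{R(i-1)}x_i=\sum_{i\le\ell}2^{2R(i-1)}/(d_ig''(x_i))$. When $x_\ell<\theta/2<x_i$ for $i<\ell$, the second-order necessary condition for a maximum is that this quantity is $\ge 0$ (by the Cauchy--Schwarz argument of Section \ref{combined_proofORA}), not $<0$. So at a genuine local maximizer the growth of $x_\ell$ dominates. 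The condition only signs the slope at admissible roots; it does not bound the number of roots or tell you where to bracket them.

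What is missing is a shape lemma for the reparametrized sum $F_\ell(s)$ on $[2,s_{\ell,\max}]$. With $t_i(s)=y_i^-$, one has $F_\ell'(s)=\frac{\theta}{s^2}M_\ell(s)$, where $M_\ell(s)=\sum_{i<\ell}2^{R(i-1)}Q_{i,\ell}(s)-2^{R(\ell-1)}$ and $Q_{i,\ell}(s)=\frac{s(2-s)}{t_i(t_i-2)}$. For $s>2$ you have $Q_{i,\ell}'(s)=\frac{2(s-t_i)(s+t_i-st_i)}{t_i(2-t_i)^3}>0$. The positivity comes from $\psi(t_i)<\psi(s)<\psi(s/(s-1))$ together with $t_i,\,s/(s-1)<2$, which forces $t_i<s/(s-1)$, i.e.\ $s+t_i>st_i$. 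Also $M_\ell(2^+)=-2^{R(\ell-1)}<0$. Hence $F_\ell$ is either strictly decreasing, or strictly decreasing up to the unique zero $s_0$ of $M_\ell$ and strictly increasing afterwards. So $F_\ell=1$ has at most two roots, bracketed by $[2,s_0]$ and $[s_0,s_{\ell,\max}]$; this is exactly the case analysis in Algorithm \ref{mixed_branch_roots}.

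Three smaller points:
\begin{itemize}
\item Your admissible range omits the lower end $\lambda_{\min}=d_1\theta e^{-\theta}$ (Lemma \ref{lambda_lwbd}), which you need to bracket the search.
\item You also need the gate $H_\ell^-(\lambda_{\min})\ge 1$ for the mixed search. It is valid because $H_\ell^+\le H_\ell^-$ and $H_\ell^-$ is decreasing.
\item Algorithm \ref{algorithmglobalmax} applies no bordered-Hessian filter. It keeps up to three candidates per $\ell$ and compares $G$. That is harmless for global optimality, but it is not the procedure you describe.
\end{itemize}
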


\begin{theorem}
\label{local_maximizer_thm}
A strict local maximizer $\overrightarrow x_{\operatorname{loc}}^\star$ in $(\ref{b3})$ is given by 
Algorithm \ref{algorithmlocalmax} such that $\overrightarrow x_{\operatorname{loc}}^\star$ satisfies the KKT conditions, the optimality conditions of Corollary \ref{optxproperties} and 
    \begin{align}
    \overrightarrow x_{\operatorname{loc}}^\star(i) 
    \begin{cases}
        > 0 & \text{ for } i \in \{1, \ldots, \ell \}\\
        = 0 & \text{ otherwise}, 
    \end{cases} 
    \label{gf3}
\end{align}
where 
    \begin{align}
    \ell \in \mathcal{L}_p\left(\theta, R, \overrightarrow d\right) &\coloneqq \left \{1 \leq j \leq \ell_{\operatorname{PDS}}:  H_j^-(\lambda_{\min}) \geq 1 \geq H_{j}^-(\lambda_{\max}(j))  \right \}, \label{lspecification} \\
    H_{j}^-(\lambda) &\coloneqq \sum_{i=1}^j 2^{R(i-1)} \frac{\theta}{-2W_0\left(-\frac{1}{2} \sqrt{c_i(\lambda)}  \right)} \quad \quad  \text{for each }  j = 1, \ldots, \ell_{\operatorname{PDS}},\\
    c_i(\lambda) &\coloneqq \frac{\lambda \theta 2^{R(i-1)}}{d_i} \quad \quad \quad \quad \quad \quad \quad \quad \quad \quad \quad    \text{for each }  i = 1, \ldots, \ell_{\operatorname{PDS}},\\
    \lambda_{\min} &\coloneqq d_1 \theta e^{-\theta},\\
    \lambda_{\max}(j) &\coloneqq \frac{4 e^{-2} d_{j}}{\theta 2^{R(j-1)}} \quad \quad\quad\quad\quad\quad\quad\quad\quad\quad\quad\,\,\,  \text{for each }  j = 1, \ldots, \ell_{\operatorname{PDS}},
\end{align}
and $\ell_{\operatorname{PDS}}$ is defined in $(\ref{elllmdef})$.
\end{theorem}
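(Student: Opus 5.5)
We have to prove Theorem \ref{local_maximizer_thm}. Algorithm \ref{algorithmlocalmax} is not shown, so the plan treats it as a procedure that picks an admissible support size $\ell$, finds a multiplier $\lambda$ by bisection, and reads off $\overrightarrow x$ from that $\lambda$. The proof then has two parts: show that such an $\ell$ and $\lambda$ exist, and show that the resulting point is a strict local maximizer of (\ref{b3}).

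\textbf{Step 1: KKT system for a fixed support.} Fix a candidate support $\{1,\ldots,\ell\}$ with $\ell \le \ell_{\operatorname{PDS}}$; restricting to $\ell \le \ell_{\operatorname{PDS}}$ is justified by Theorem \ref{active_layer_theorem}.

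Write the Lagrangian $G(\overrightarrow x) - \lambda\bigl(\sum_i 2^{R(i-1)}x_i - 1\bigr) + \sum_i \mu_i x_i$. Since $g'(x) = \frac{\theta}{x^2}e^{-\theta/x}$, stationarity for $x_i>0$ reads
\[
d_i \tfrac{\theta}{x_i^2}e^{-\theta/x_i} = \lambda 2^{R(i-1)} .
\]
Substituting $y_i=\theta/x_i$ turns this into $\psi(y_i) = c_i(\lambda)$.

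Taking the small root $y_i^- = -2W_0(-\tfrac12\sqrt{c_i(\lambda)})\in(0,2)$ gives $x_i = \theta/y_i^- > \theta/2$. On this branch $g$ is concave ($g''<0$ exactly when $x>\theta/2$), which is what will make the point a strict maximizer. Note that $\psi(y_i)=c_i(\lambda)$ is solvable only when $c_i(\lambda)\le 4e^{-2}$, i.e. $\lambda\le\lambda_{\max}(i)$. The definition of $\ell_{\operatorname{PDS}}$, together with $d_i/2^{R(i-1)}$ being decreasing in $i$, shows that $\lambda_{\max}(j)$ is the binding bound for indices $i\le j$.

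The equality constraint becomes the scalar equation $H_\ell^-(\lambda)=1$. Each summand of $H_\ell^-$ is continuous and strictly decreasing in $\lambda$, because $W_0$ is increasing. By the intermediate value theorem, the defining condition of $\mathcal{L}_p$, namely $H_\ell^-(\lambda_{\min})\ge 1\ge H_\ell^-(\lambda_{\max}(\ell))$, yields a unique root $\lambda^\star\in[\lambda_{\min},\lambda_{\max}(\ell)]$. This root is what the bisection step of the algorithm returns.

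A separate check is that $\mathcal{L}_p$ is nonempty, so the algorithm always outputs something. Take $j=1$ and use $\lambda_{\min}=d_1\theta e^{-\theta}$, which corresponds to $x_1=1$, i.e. $c_1=\theta^2e^{-\theta}=\psi(\theta)$. I expect a short argument here, possibly separating the case $\theta\ge2$.

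\textbf{Step 2: the inactive coordinates $x_i=0$ for $i>\ell$.} Since $g'(0^+)=0$, we can set $\mu_i = \lambda 2^{R(i-1)} > 0$, so dual feasibility and complementary slackness hold. The conditions of Corollary \ref{optxproperties} then follow from the structure of the solution: monotonicity comes from $c_i(\lambda)$ increasing in $i$ and $W_0$ increasing, giving $y_1^-<\cdots<y_\ell^-$, hence $x_1>\cdots>x_\ell>0$.

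\textbf{Step 3: strict local maximality (expected main obstacle).} The inactive coordinates sit at the boundary with zero gradient, so the multipliers $\mu_i$ alone do not certify optimality there. The first thing I would try is a direct perturbation argument.

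Moving mass $\delta$ from the active coordinates into some $x_i$ with $i>\ell$ costs, to first order, $\lambda\,2^{R(i-1)}\delta$. The gain from the new coordinate is $g(\delta)d_i = e^{-\theta/\delta}d_i = o(\delta^k)$ for every $k$. Hence the objective strictly decreases for small $\delta>0$, and this handles the strongly active constraints.

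On the face $\{x_i=0,\ i>\ell\}$, the Lagrangian Hessian is diagonal with entries $d_i g''(x_i)<0$, since each $x_i>\theta/2$. It is therefore negative definite on the tangent space, which gives the second-order sufficient condition on the face.

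Finally, the two directions have to be combined. A general nearby feasible point decomposes into a move along the face plus small masses on the inactive coordinates. The first-order penalty on those masses dominates their super-polynomially small gains, so the two estimates together give a strict local maximum.
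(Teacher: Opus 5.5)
Your proposal is correct, and Steps 1--2 match the paper. That includes the substitution $y_i=\theta/x_i$ giving $\psi(y_i)=c_i(\lambda)$, the $W_0$ branch on every active coordinate, and strict monotonicity of $H_\ell^-$ giving a unique root in $[\lambda_{\min},\lambda_{\max}(\ell)]$. It also includes the choice $\mu_i=\lambda 2^{R(i-1)}$ for $i>\ell$.

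You differ in Step 3. The paper simply invokes the second-order sufficient condition on the critical cone (Theorem 12.6 of \cite{nonlinopt}). Since $g'(0)=0$, the multipliers $\mu_i=\lambda 2^{R(i-1)}$ for $i>\ell$ are strictly positive. Hence the critical cone is $\{\overrightarrow w:\sum_{i\le\ell}2^{R(i-1)}w_i=0,\ w_{\ell+1}=\cdots=w_K=0\}$, and the diagonal Lagrangian Hessian is definite on it.

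Your worry that the multipliers ``alone do not certify optimality'' at the inactive coordinates is therefore misplaced. The first-order penalty $\lambda 2^{R(i-1)}\delta$ you compute is exactly $\mu_i\delta$, and strict complementarity is precisely what removes those directions from the critical cone.

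Your perturbation argument is in effect a by-hand proof of the second-order sufficient condition for this diagonal problem. It is valid, and it is cleanest without the face-plus-masses decomposition. For feasible $\overrightarrow x^\star+\overrightarrow h$, the identity $\sum_{i\le\ell}d_ig'(x_i^\star)h_i=-\lambda\sum_{i>\ell}2^{R(i-1)}h_i$ turns the Taylor expansion of $G$ into $-\sum_{i>\ell}\bigl(\lambda 2^{R(i-1)}h_i-d_ig(h_i)\bigr)+\tfrac12\sum_{i\le\ell}d_ig''(\xi_i)h_i^2$. There are then no cross terms to control.

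One correction: the $W_0$ root gives $x_i\ge\theta/2$, not $x_i>\theta/2$. If the bisection lands on $\lambda^\star=\lambda_{\max}(\ell)$, then $x_\ell=\theta/2$ and $g''(x_\ell)=0$. The paper's pointwise condition absorbs this automatically: for $\ell\ge2$ one has $x_1,\ldots,x_{\ell-1}>\theta/2$, and $w_1=\cdots=w_{\ell-1}=0$ forces $w_\ell=0$ on the critical cone. Your neighborhood argument needs an extra line bounding $h_\ell$ through the equality constraint. That line is what makes the possibly positive term $\tfrac12 d_\ell g''(\xi_\ell)h_\ell^2=o(h_\ell^2)$ dominated by the others.
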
 

The combined proof of Theorems $\ref{active_layer_theorem}, \ref{global_maximizer_thm}$ and $\ref{local_maximizer_thm}$ is given in Section \ref{combined_proof}. 

\subsection{Summary of Algorithms \ref{algorithmglobalmax} and \ref{algorithmlocalmax}}

Algorithm \ref{algorithmglobalmax} is based on the fact\footnote{proven in Section \ref{combined_proof}} that an optimal $\overrightarrow x^\star$ in $(\ref{b3})$ is either equal to $(1, 0, \ldots, 0)$ or satisfies  
\begin{align}
    x_i^\star = x_i^\star(\lambda) = \begin{cases}
        \frac{\theta}{-2W_0\left(-\frac{1}{2} \sqrt{c_i(\lambda)}  \right)} & \text{ for } i \in \{1, \ldots, \ell - 1 \}\\
        \frac{\theta}{-2W_0\left(-\frac{1}{2} \sqrt{c_\ell(\lambda)}  \right)} \text{ or } \frac{\theta}{-2W_{-1}\left(-\frac{1}{2} \sqrt{c_\ell(\lambda)}  \right)} & \text{ for } i = \ell \\
        0 & \text{ for } i \in \{\ell + 1, \ldots, K \}
    \end{cases}
    \label{maxi_candsPDS}
\end{align}
for some $2\leq \ell \leq \ell_{\operatorname{PDS}}$ and $\lambda \in [\lambda_{\min}, \lambda_{\max}(\ell)]$ such that
\begin{align}
    \sum_{i=1}^\ell 2^{R(i-1)} x_i^\star(\lambda) = 1. \label{Hconsgrk}
\end{align}
Algorithm \ref{algorithmglobalmax} first checks a sufficient (but not necessary) condition for $\overrightarrow x^\star = (1, 0, \ldots, 0)$. If the condition is not satisfied, then  Algorithm \ref{algorithmglobalmax} searches across the values $\ell \in \{1, \ldots, \ell_{\operatorname{PDS}} \}$ and, for each $\ell$, constructs the maximizer candidates of the form $(\ref{maxi_candsPDS})$ by solving for $\lambda \in [\lambda_{\min}, \lambda_{\max}(\ell)]$ satisfying $(\ref{Hconsgrk})$. For $\ell = 1$, the only maximizer candidate is $(1, 0, \ldots, 0)$. The constraints $\ell \leq \ell_{\operatorname{PDS}}$ and $\lambda \in [\lambda_{\min}, \lambda_{\max}(\ell)]$ guarantee that 
\begin{align*}
    - \frac{1}{e} \leq -\frac{1}{2} \sqrt{c_i(\lambda)} < 0
\end{align*}
for all $1 \leq i \leq \ell$ so that $W_0$ and $W_{-1}$ are well-defined. Algorithm \ref{algorithmlocalmax} uses the same logic but restricts itself to the principal Lambert branch $W_0$ for $i = \ell$. In Algorithm \ref{algorithmlocalmax}, for each $\ell$, the solution $\lambda$ to the equation $(\ref{Hconsgrk})$ is unique if it exists, and a necessary and sufficient condition for the existence is easily specified, making Algorithm \ref{algorithmlocalmax} simpler to implement. But when $x_\ell^*(\lambda)$ is chosen as the secondary Lambert branch $W_{-1}$, then we show in Section \ref{combined_proof} that there can be at most 2 values of $\lambda$ satisfying the equation $(\ref{Hconsgrk})$; this solution set is the output of Algorithm \ref{mixed_branch_roots}, which can be implemented using at most $3$ bisection searches. Algorithm \ref{mixed_branch_roots} is then used as a subroutine in Algorithm \ref{algorithmglobalmax}.

Although Algorithm \ref{algorithmlocalmax} does not do an exhaustive search by ignoring the secondary branch $W_{-1}$, the output of Algorithm \ref{algorithmlocalmax} is proven via a sufficient second-order condition to be a strict local maximizer in $(\ref{b3})$. Furthermore, the objective values based on the output of Algorithm \ref{algorithmlocalmax} match those of Algorithm \ref{algorithmglobalmax} in almost all cases in our numerical experiments over a wide range of parameters, thus establishing that the maximizer $\overrightarrow x^\star$ usually has all components given by the principal Lambert branch. However, Algorithm \ref{algorithmglobalmax} did outperform Algorithm \ref{algorithmlocalmax} in our numerical experiments for a few instances  with a large coding rate $R$, small $\theta$ (high average SNR), and nearly equal importance weights. For example, for $K = 2$, $R = 6$, $\theta = 0.025$ and $\overrightarrow d = \left(0.51, 0.49 \right)$, the second component $x_2^*$ of the maximizer takes values in the secondary branch $W_{-1}$. All numerical results for PDS in this paper will be generated using Algorithm \ref{algorithmglobalmax}. 

\subsection{Definitions for Algorithm \ref{mixed_branch_roots}}

For any $2 \leq \ell \leq \ell_{\operatorname{PDS}}$ and $0 < \theta \leq -2W_0 \left(-\frac{1}{e} \sqrt{\frac{d_2}{2^R d_1}} \right)$, define 
\begin{align*}
  \beta_{i, \ell}(s) &\coloneqq -\frac{1}{2} \sqrt{s^2 e^{-s}\frac{d_{\ell} a_i}{d_i a_{\ell}}} \quad \, \text{ for } i < \ell,\\
  a_i &\coloneqq 2^{R(i-1)} \quad\quad\quad\quad\quad \text{ for } i \leq \ell,\\
  t_i(s) &\coloneqq -2 W_0\left(\beta_{i, \ell}(s)  \right)\,\,\, \quad \text{ for } i < \ell,\\
  F_{\ell}(s) &\coloneqq \theta\left( \sum_{i=1}^{\ell-1}  \frac{a_i}{t_i(s)} + \frac{a_\ell}{s}\right),\\
  Q_{i, \ell}(s) &\coloneqq \frac{s(2-s)}{t_i(s)(t_i(s) - 2)} \quad\,\, \text{ for } i < \ell,\\
  M_\ell(s) &\coloneqq  \sum_{i=1}^{\ell-1}  a_i Q_{i, \ell}(s) - a_\ell,\\
  s_{\ell, \max} &\coloneqq  -2W_{-1}\left( - \frac{1}{2}\sqrt{\frac{d_1 \theta^2 e^{-\theta} 2^{R(\ell-1)}}{d_\ell}} \right). 
\end{align*}
Recall from $(\ref{particularthm})$ that when $\theta > -2W_0 \left(-\frac{1}{e} \sqrt{\frac{d_2}{2^R d_1}} \right)$, the optimal solution $\overrightarrow x^\star = (1, 0, \ldots, 0)$ and there is nothing to do.

\begin{algorithm}[H]
\caption{\textsc{ComputeRootsForIndex}$(\ell,R,\theta,\overrightarrow{d})$}
\label{mixed_branch_roots}
\begin{algorithmic}[1]
\REQUIRE Positive integer $\ell \geq 2$, $R>0$, $0 < \theta \leq -2W_0 \left(-\frac{1}{e} \sqrt{\frac{d_2}{2^R d_1}} \right)$, and $\overrightarrow{d}=(d_1,\dots,d_K)$ with $d_1>\cdots>d_K>0$
\ENSURE A set $\Lambda_\ell$ of roots in the $\lambda$-domain

\IF{$M_\ell(s_{\ell,\max}) \leq 0$}
    \IF{$F_\ell(2) \geq 1 \;\land\; F_\ell(s_{\ell,\max}) \leq 1$}
        \STATE $s_0 \leftarrow \mathrm{BisectionSearch}(F_\ell(s)-1,\,2,\,s_{\ell,\max})$
        \STATE $\lambda \leftarrow \dfrac{s_0^2 e^{-s_0} d_\ell}{\theta\,2^{R(\ell-1)}}$
        \STATE \textbf{return} $\{\lambda\}$
    \ELSE
        \STATE \textbf{return} $\emptyset$
    \ENDIF
\ELSE
    \STATE $s_0 \leftarrow \mathrm{BisectionSearch}(M_\ell(s),\,2,\,s_{\ell,\max})$
    \IF{$F_\ell(s_0) > 1$}
        \STATE \textbf{return} $\emptyset$
    \ELSE
        \IF{$F_\ell(2) \geq 1 \;\land\; F_\ell(s_{\ell,\max}) < 1$}
            \STATE $s_1 \leftarrow \mathrm{BisectionSearch}(F_\ell(s)-1,\,2,\,s_0)$
            \STATE $\lambda \leftarrow \dfrac{s_1^2 e^{-s_1} d_\ell}{\theta\,2^{R(\ell-1)}}$
            \STATE \textbf{return} $\{\lambda\}$
        \ELSIF{$F_\ell(2) < 1 \;\land\; F_\ell(s_{\ell,\max}) \geq 1$}
            \STATE $s_2 \leftarrow \mathrm{BisectionSearch}(F_\ell(s)-1,\,s_0,\,s_{\ell,\max})$
            \STATE $\lambda \leftarrow \dfrac{s_2^2 e^{-s_2} d_\ell}{\theta\,2^{R(\ell-1)}}$
            \STATE \textbf{return} $\{\lambda\}$
        \ELSIF{$F_\ell(2) \geq 1 \;\land\; F_\ell(s_{\ell,\max}) \geq 1$}
            \STATE $s_1 \leftarrow \mathrm{BisectionSearch}(F_\ell(s)-1,\,2,\,s_0)$
            \STATE $s_2 \leftarrow \mathrm{BisectionSearch}(F_\ell(s)-1,\,s_0,\,s_{\ell,\max})$
            \STATE $\lambda_1 \leftarrow \dfrac{s_1^2 e^{-s_1} d_\ell}{\theta\,2^{R(\ell-1)}}$
            \STATE $\lambda_2 \leftarrow \dfrac{s_2^2 e^{-s_2} d_\ell}{\theta\,2^{R(\ell-1)}}$
            \STATE \textbf{return} $\{\lambda_1,\lambda_2\}$
        \ELSE
            \STATE \textbf{return} $\emptyset$
        \ENDIF
    \ENDIF
\ENDIF
\end{algorithmic}
\end{algorithm}

\begin{algorithm}[H]
\small
\caption{Computation of a globally optimal solution $\overrightarrow{x}^\star$ in $(\ref{b3})$ }
\label{algorithmglobalmax}
\begin{algorithmic}[1]
  \REQUIRE $R > 0$, $\theta > 0$, vector $\overrightarrow{d} = (d_1, \dots, d_K)$ with $d_1 > \dots > d_K > 0, K \geq 2$.
  \ENSURE $\overrightarrow{x}$ is an optimal solution in $(\ref{b3})$

  \IF{$\theta > -2 W_0\!\left(-\dfrac{1}{e} \sqrt{\dfrac{d_2}{2^R d_1}} \right)$}
    \STATE $\overrightarrow{x} \gets (1, 0, \ldots, 0)$ \label{replacementheuristic1PDS}
    \STATE \textbf{return} $\overrightarrow{x}$ \label{replacementheuristic2PDS}
  \ENDIF

  \FORALL{$\ell \in \{2, \ldots, \ell_{\operatorname{PDS}} \}$  }
    \STATE Define, for $i = 1, \ldots, \ell$,
    \begin{align*}
      c_i(\lambda) &= \frac{\lambda \theta 2^{R(i-1)}}{d_i},\\
      x_i^-(\lambda) &= \frac{\theta}{-2 W_0\!\left(-\frac{\sqrt{c_i(\lambda)}}{2} \right)},
    \end{align*}
    and set
    \begin{align*}
      x_\ell^+(\lambda) &= \frac{\theta}{-2 W_{-1}\!\left(-\frac{\sqrt{c_\ell(\lambda)}}{2} \right)},\\ 
      H_\ell^-(\lambda) &= \sum_{i=1}^\ell 2^{R(i-1)} x_i^-(\lambda),\\
      H_\ell^+(\lambda) &= 2^{R(\ell-1)} x_\ell^+(\lambda) + \sum_{i=1}^{\ell-1} 2^{R(i-1)} x_i^-(\lambda),\\
      \lambda_{\min} &= d_1 \theta e^{-\theta},\\
      \lambda_{\max}(\ell) &= \frac{4 e^{-2} d_\ell}{\theta 2^{R(\ell-1)}}.
    \end{align*}

        \STATE Let $\mathcal{C}_\ell \gets \emptyset$.
    \IF{$H_\ell^-(\lambda_{\min}) \geq 1 \geq H_\ell^-(\lambda_{\max}(\ell))$} \label{proveit1}
      \STATE $\lambda \leftarrow \mathrm{BisectionSearch}\left(H_\ell^-(\lambda)-1,\,\lambda_{\min},\,\lambda_{\max}(\ell)\right)$
      \STATE Define $\overrightarrow{x}$ by
      \[
        \overrightarrow{x}(i) =
        \begin{cases}
          x_i^-(\lambda) & i = 1, \ldots, \ell,\\[2pt]
          0                & i = \ell + 1, \ldots, K.
        \end{cases}
      \]
      \STATE $\mathcal{C}_\ell \gets \mathcal{C}_\ell \cup \{\overrightarrow{x}\}$
    \ENDIF

    \IF{$H_\ell^-(\lambda_{\min}) \geq 1$} 
    \FOR{$\lambda \in \textsc{ComputeRootsForIndex}(\ell,R,\theta,\overrightarrow{d})$}
      \STATE Define $\overrightarrow{x}$ by
      \[
        \overrightarrow{x}(i) =
        \begin{cases}
          x_i^-(\lambda) & i = 1, \ldots, \ell - 1,\\[2pt]
          x_\ell^+(\lambda) & i = \ell,\\[2pt]
          0                  & i = \ell + 1, \ldots, K.
        \end{cases}
      \]
      \STATE $\mathcal{C}_\ell \gets \mathcal{C}_\ell \cup \{\overrightarrow{x}\}$
    \ENDFOR
    \ENDIF

    \IF{$\mathcal{C}_\ell = \emptyset$}
      \STATE $\overrightarrow{x}^{(\ell)} \gets (0, \ldots, 0)$
    \ELSE
      \STATE $\overrightarrow{x}^{(\ell)} \gets
        \displaystyle \argmax_{\overrightarrow{x} \in \mathcal{C}_\ell} G(\overrightarrow{x})$ \label{iskobhireplacekaro1b}
    \ENDIF
 
  \ENDFOR

  \STATE Set $\overrightarrow{x}^{(1)} \gets (1, 0, \ldots, 0)$.
  \STATE \textbf{return} $\displaystyle
    \argmax_{\overrightarrow{x}^{(\ell)}:\, 1 \leq \ell \leq \ell_{\operatorname{PDS}} }
    G\big(\overrightarrow{x}^{(\ell)}\big)$ \label{replacementheuristic3PDS}
\end{algorithmic}
\end{algorithm}

\begin{algorithm}[H]
\small
\caption{Computation of a strict local maximizer $\overrightarrow{x}$ in $(\ref{b3})$}
\label{algorithmlocalmax}
\begin{algorithmic}[1]
  \REQUIRE $R > 0$, $\theta > 0$, vector $\overrightarrow{d} = (d_1, \dots, d_K)$ with $d_1 > \dots > d_K > 0$, $K \geq 2$. 
  \ENSURE $\overrightarrow{x}$ is a local maximizer in $(\ref{b3})$

  \IF{$\theta > -2 W_0\!\left(-\dfrac{1}{e} \sqrt{\dfrac{d_2}{2^R d_1}} \right)$}
    \STATE $\overrightarrow{x} \gets (1, 0, \ldots, 0)$ 
    \STATE \textbf{return} $\overrightarrow{x}$ 
  \ENDIF

  \FORALL{$\ell \in \mathcal{L}_p(\theta, R, \overrightarrow{d})$ with $\ell \geq 2$}
    \STATE Define, for $i = 1, \ldots, \ell$,
    \begin{align*}
      c_i(\lambda) &= \frac{\lambda \theta 2^{R(i-1)}}{d_i},\\
      x_i^-(\lambda) &= \frac{\theta}{-2 W_0\!\left(-\frac{\sqrt{c_i(\lambda)}}{2} \right)},
    \end{align*}
    and set
    \begin{align*}
      H_\ell^-(\lambda) &= \sum_{i=1}^\ell 2^{R(i-1)} x_i^-(\lambda),\\
      \lambda_{\min} &= d_1 \theta e^{-\theta},\\
      \lambda_{\max}(\ell) &= \frac{4 e^{-2} d_\ell}{\theta 2^{R(\ell-1)}}.
    \end{align*}
    \STATE $\lambda \leftarrow \mathrm{BisectionSearch}(H_\ell^-(\lambda)-1,\,\lambda_{\min},\,\lambda_{\max}(\ell))$.  

    \STATE Define $\overrightarrow{x}^{(\ell)}$ as
    \[
      \overrightarrow{x}^{(\ell)}(i) =
      \begin{cases}
        x_i^-(\lambda) & i = 1, \ldots, \ell,\\[2pt]
        0                  & i = \ell + 1, \ldots, K.
      \end{cases}
    \]
  \ENDFOR

  \STATE Set $\overrightarrow{x}^{(1)} \gets (1, 0, \ldots, 0)$.
  \STATE \textbf{return} $\displaystyle
    \argmax_{\overrightarrow{x}^{(\ell)}:\, \ell \in \mathcal{L}_p(\theta, R, \overrightarrow{d})}
    G\big(\overrightarrow{x}^{(\ell)}\big)$ 
\end{algorithmic}
\end{algorithm}

\subsection{Numerical Evaluation of the PDS scheme}

We numerically compared the objective values produced by the globally optimal Algorithm \ref{algorithmglobalmax} and locally optimal Algorithm \ref{algorithmlocalmax} for a fixed $R$, $\theta \in (0, 1)$ and a randomly generated importance vector $\overrightarrow d$ of length $K = 8$ such that $d_1 > \cdots > d_8 > 0$ and $\sum_{i=1}^8 d_i = 1$. Across almost all of our tested parameter ranges, Algorithms \ref{algorithmglobalmax} and \ref{algorithmlocalmax} yield identical objective values. Consequently, Algorithm \ref{algorithmlocalmax} provides a computationally simpler procedure that empirically recovers a global optimizer in many cases.

An alternative to using Algorithms \ref{algorithmglobalmax} and \ref{algorithmlocalmax} to compute $(\ref{b3})$ is to use a generic constrained solver SLSQP instead (available in SciPy.optimize.minimize function in Python). We ran the SLSQP solver $K$ times with initializations $\overrightarrow{x_1}, \ldots, \overrightarrow {x_K}$ with support sizes $1, 2, \ldots, K$, similar to how Algorithms \ref{algorithmglobalmax} and \ref{algorithmlocalmax} search for the maximizer candidates for different values of $\ell$. The components of each initialization are weighted according to the importance vector $\overrightarrow d$ and suitably rescaled to satisfy the sum constraint in $(\ref{consonxx})$.  Specifically, for $i = 1, \ldots, K$,  
\begin{align}
    \overrightarrow{x_i}(j) = \begin{cases}
        \frac{1}{\sum_{n=1}^i d_n } \frac{d_j}{2^{R(j-1)}} & j = 1, \ldots, i, \\
        0 & \text{otherwise} 
    \end{cases} 
    \label{reasonablexs}
\end{align}
so that 
\begin{align*}
    \sum_{j=1}^K 2^{R(j-1)} \overrightarrow{x_i}(j) = 1. 
\end{align*}
In our numerical testing, the objective values obtained by the SLSQP solver with multiple initialization runs 
were \emph{mostly} within $\sim 10^{-7}$ of the optimal objective values produced by Algorithm \ref{algorithmglobalmax}, but were occasionally suboptimal due to numerical instability. Therefore, Algorithm \ref{algorithmglobalmax} offers a more robust method to solve $(\ref{b3})$ than a typical SLSQP solver.   

For illustration, Figure \ref{typicalAlg2} shows a graph of $\max_{\overrightarrow x \in \mathcal{S}_K} G(\overrightarrow x)$ versus $\theta$, for a fixed $R$ and $\overrightarrow d$, obtained using Algorithm \ref{algorithmglobalmax}.

\begin{figure}[H]
    \centering
\includegraphics[width=10cm]{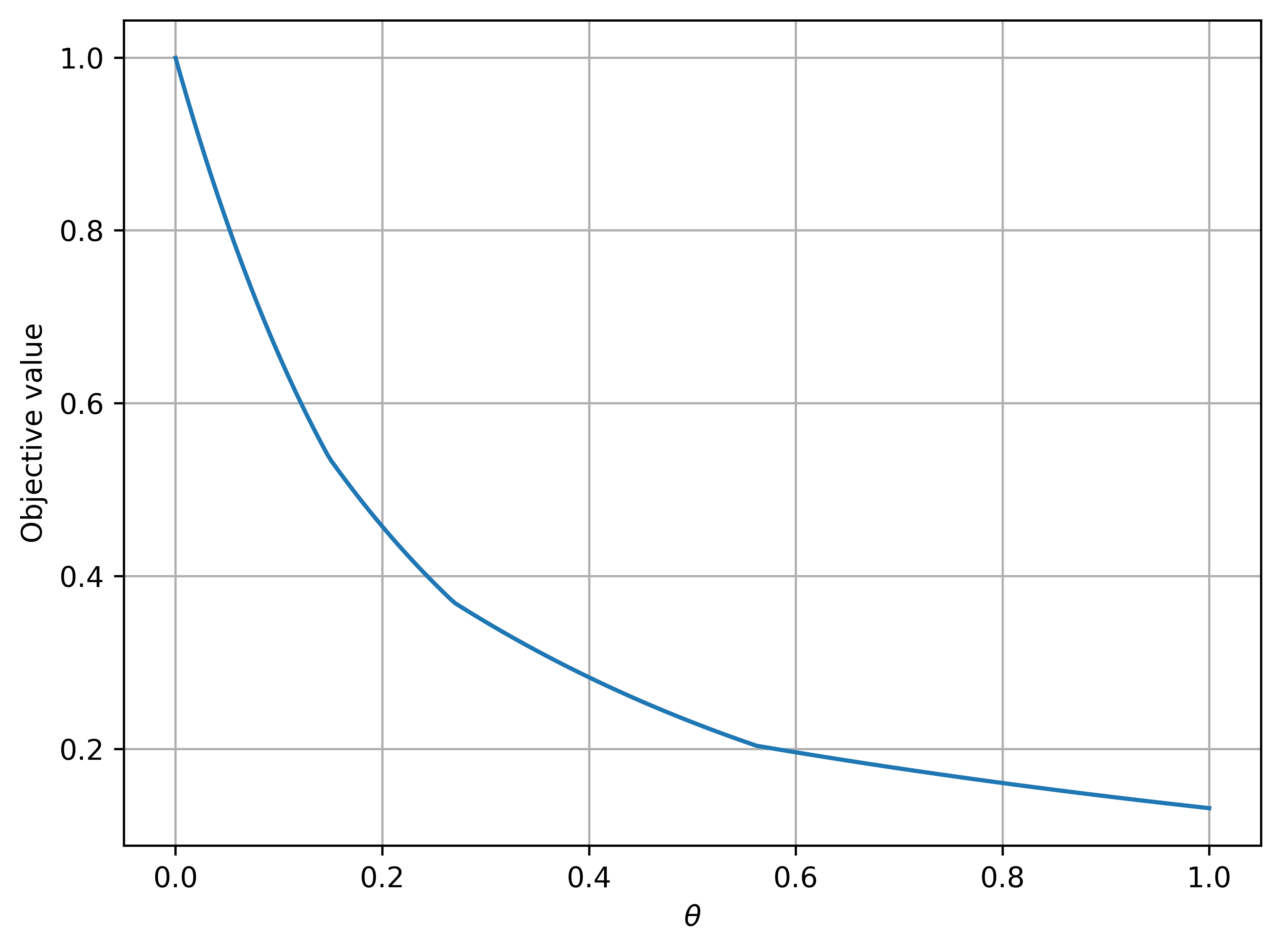}
\caption{For fixed $R = 0.1$ and $\overrightarrow d = \frac{1}{14}(5,4,3,2)$, the objective value $G(\overrightarrow x)$ is plotted against $\theta$, when $\overrightarrow x$ is taken as the output from Algorithm  $\ref{algorithmglobalmax}$. For a fixed $R$, $\theta \propto \frac{1}{P \sigma^2}$ so the $x$-axis should be interpreted as inverse average SNR up to some scaling.}
\label{typicalAlg2}
\end{figure}

Using the outputs $\overrightarrow x$ from Algorithm \ref{algorithmglobalmax} for $\overrightarrow d = \frac{1}{14}(5,4,3,2)$, $\theta \in (0, 1)$ and $R = 0.1$, we can construct the power fractions $(\alpha_1, \alpha_2, \alpha_3, \alpha_4)$ using $(\ref{xtoa})$. Figure \ref{power_plots} plots the power fractions $(\alpha_1, \alpha_2, \alpha_3, \alpha_4)$ versus $\theta$.

\begin{figure}[H]
    \centering
\includegraphics[width=10cm]{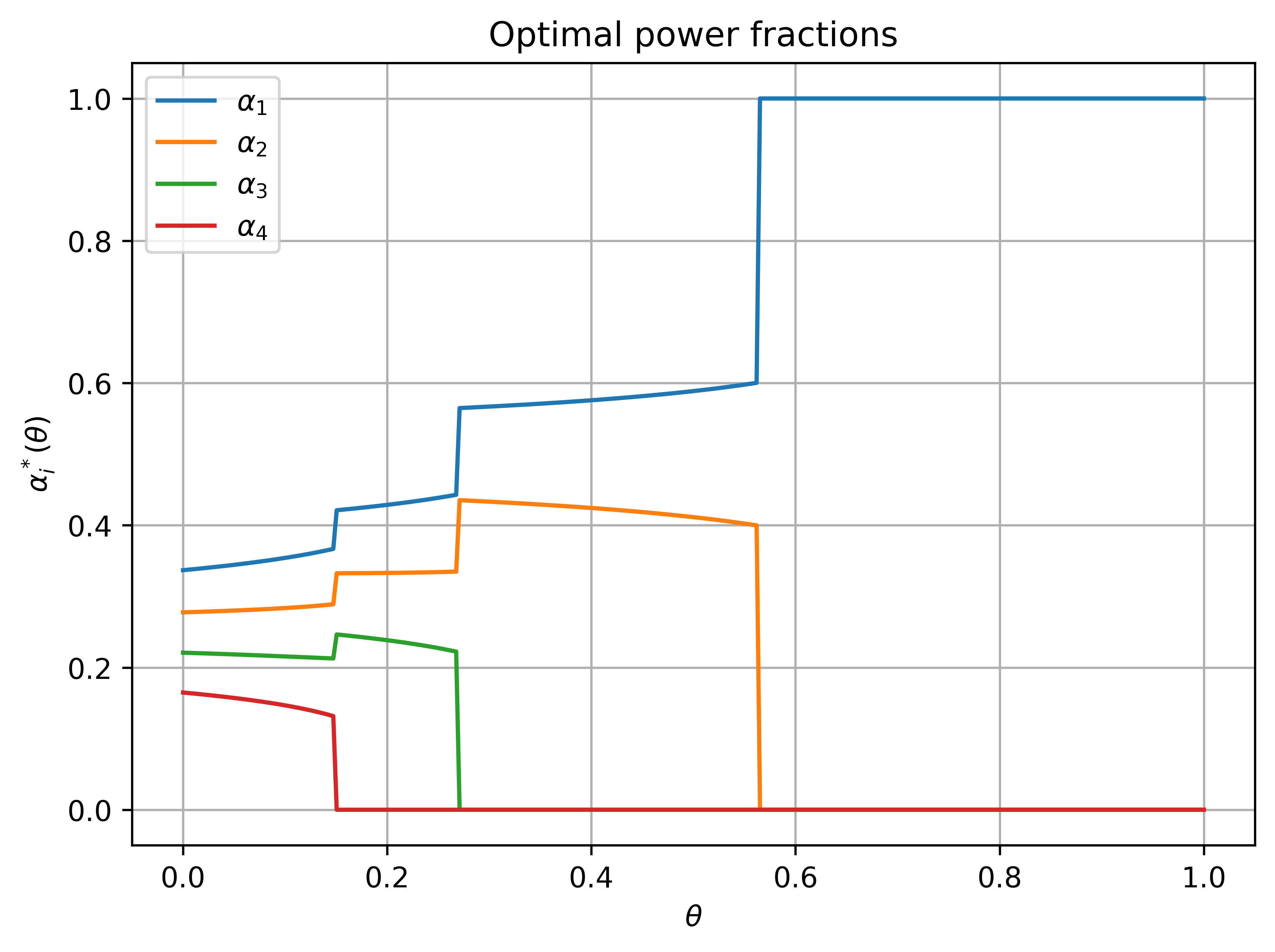}
\caption{For fixed $R = 0.1$ and $\overrightarrow d = \frac{1}{14}(5,4,3,2)$, the power fractions $\overrightarrow \alpha  = (\alpha_1, \alpha_2, \alpha_3, \alpha_4)$ are plotted versus $\theta$. The plots are based on the output of Algorithm \ref{algorithmglobalmax}.}
\label{power_plots}
\end{figure}

\section{Main Results on the ORA scheme \label{ORA_results}}

For each $1 \leq i \leq K$, let $w_i = n_i/n$. Hence, $w_1, \ldots, w_K$ are nonnegative fractions such that $\sum_{i=1}^K w_i = 1$, and $w_i$ is the fraction of $n$ channel uses used for transmitting the $i$th bit block. 
\begin{theorem}
Under the ORA scheme, we have 
    \begin{align}
        \sum_{i=1}^K \mathbb{P}(B_i^{m} = \widehat{B}_i^{m}) d_i \geq \max_{\overrightarrow w \in \Delta_{n}^{K-1}} \sum_{i=1}^K \mathbb{E}_{\gamma}\left [1 - \mathcal{E}\left(w_i n, \frac{R}{w_i}, \gamma P \right)  \right] d_i  \label{finitenORAopt}
    \end{align}
    for any $n \geq 1$, where 
    \begin{align*}
        \Delta_{n}^{K-1} &\coloneqq \left \{ \overrightarrow w \in \Delta^{K-1} : w_i n \in \mathbb{Z} \text{ for all } i \right \}
    \end{align*}
    and the expectation is w.r.t. an exponential random variable $\gamma$ with mean $\sigma^2$.    
    \label{ORAfiniteblocklengthbound}
\end{theorem}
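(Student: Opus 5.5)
Fix an arbitrary $\overrightarrow w \in \Delta_n^{K-1}$ and set $n_i = w_i n \in \mathbb{Z}_{\geq 0}$. Consider the ORA scheme with these segment lengths. Since the right-hand side of $(\ref{finitenORAopt})$ is a maximum over the finite set $\Delta_n^{K-1}$, it suffices to show that this particular scheme satisfies
\begin{align*}
\sum_{i=1}^K \mathbb{P}(B_i^m = \widehat B_i^m)\, d_i \geq \sum_{i=1}^K \mathbb{E}_\gamma\left[1 - \mathcal{E}\left(n_i, \frac{R}{w_i}, \gamma P\right)\right] d_i .
\end{align*}
We then take the best $\overrightarrow w$. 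Indeed, the scheme is a function of $\overrightarrow w$, so the objective value is achieved for every admissible split.

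The argument proceeds blockwise. Condition on $|H|^2 = \gamma$. The output segment $\mathbf{Y}_i = H\sqrt{P}\,\mathbf{X}_i + \mathbf{Z}_i$ depends only on $\mathbf{X}_i = \operatorname{f}_i(B_i^m)$ and on noise $\mathbf{Z}_i \sim \mathcal{CN}(\mathbf{0}, \mathbf{I}_{n_i})$. The decoder $\operatorname{g}_i$ uses only $\mathbf{Y}_i$ and the known $H$. Because $|H|$ is known at the receiver and the noise is circularly symmetric, multiplying by $\bar H/|H|$ turns this into an AWGN channel $\mathcal{CN}(\mathbf{x}, \mathbf{I}_{n_i})$ with input $\sqrt{\gamma P}\,\mathbf{X}_i \sim \mathcal{CN}(\mathbf{0}, \gamma P \mathbf{I}_{n_i})$. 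The codebook has size $2^m = 2^{n_i (R/w_i)}$, so its rate per channel use is $R/w_i$. The rule $\operatorname{g}_i$ is exactly ML decoding for this channel, so the ensemble-average error bounds apply. By Lemmas \ref{error_exp_iid} and \ref{lemma_happy_CLT} with blocklength $n_i$, rate $R/w_i$ and SNR $\gamma P$, we get
\begin{align*}
\mathbb{P}\left(B_i^m \neq \widehat B_i^m \,\middle|\, |H|^2 = \gamma\right) \leq \mathcal{E}\left(n_i, \frac{R}{w_i}, \gamma P\right).
\end{align*}
Here the probability includes averaging over the random codebook ensemble, as in Theorem \ref{PDSfiniteblocklengthbound}.

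Taking the expectation over $\gamma \sim \operatorname{Exp}(\sigma^2)$ gives the per-block bound. Multiplying by $d_i$ and summing over $i$ yields the inequality, by linearity of expectation. Unlike the PDS case, there is no SIC chain, so no product of success probabilities appears. Each block's success is bounded marginally, which matches the marginal form of $(\ref{operational_obj})$.

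The remaining points are edge cases. If $w_i = 0$, then block $i$ gets no channel uses and is decoded by a guess. The convention $\mathcal{E}(0, \cdot, \cdot) = 1$ makes that term's lower bound $0$, which trivially holds; the rate $R/w_i$ is formally infinite but irrelevant. If $\gamma = 0$, the convention $\mathcal{E}(n, R, 0) = 1$ likewise gives a trivial bound. The power constraint $(\ref{expected_cost_constraint})$ holds because $\mathbb{E}\|\mathbf{X}\|^2 = P\sum_i n_i = nP$. The main obstacle I expect is justifying the reduction to an AWGN channel with real SNR $\gamma P$ when $H$ is complex. I would handle it by phase rotation, which is invertible and so preserves the ML decision. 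A secondary obstacle is that Lemmas \ref{error_exp_iid} and \ref{lemma_happy_CLT} are stated for integer blocklength and codebook size $2^{nR}$. Since $n_i$ is an integer and $n_i \cdot R/w_i = m$, both hypotheses are met.
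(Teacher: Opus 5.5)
Your proposal is correct and follows the paper's own argument. Condition on $|H|^2=\gamma$, treat block $i$ as a complex AWGN channel with blocklength $w_i n$, rate $R/w_i$ and SNR $\gamma P$, bound its error by $\mathcal{E}$ via Lemmas \ref{error_exp_iid} and \ref{lemma_happy_CLT}, then average over $\gamma$ and sum with weights $d_i$. Your extra details (the phase rotation, the $w_i=0$ and $\gamma=0$ conventions, and noting that the scheme depends on $\overrightarrow w$ so one picks the maximizing split) are points the paper leaves implicit.
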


\begin{IEEEproof}
    Conditioned on the fading state $|H|^2 = \gamma$, layer $i$ effectively experiences a complex additive white Gaussian noise channel with SNR equal to $\gamma P$. Hence, we can use the finite blocklength bound $(\ref{finitelengthupperbound})$ to write 
\begin{align*}
    \mathbb{P}\left(B_i^m \neq \widehat{B}_i^m \big | |H|^2 =  \gamma \right) &\leq \mathcal{E}\left(w_i n, \frac{R}{w_i}, \gamma P \right).
\end{align*}
Taking the expectation over $\gamma \sim \operatorname{Exp}(\sigma^2)$ establishes 
\begin{align}
    \sum_{i=1}^K \mathbb{P}(B_i^{m} = \widehat{B}_i^{m}) d_i \geq \max_{\overrightarrow w \in \Delta_{n}^{K-1}} \sum_{i=1}^K \mathbb{E}_{\gamma}\left [1 - \mathcal{E}\left(w_i n, \frac{R}{w_i}, \gamma P \right)  \right] d_i. \label{2.}
\end{align}
\end{IEEEproof}

Define 
\begin{align}
    T_n(\overrightarrow w) &\coloneqq \sum_{i=1}^K \mathbb{E}_{\gamma}\left [1 - \mathcal{E}\left(w_i n, \frac{R}{w_i}, \gamma P \right)  \right] d_i. \label{pqpq}
\end{align}

\begin{theorem}
  Let $R > 0$ be any constant independent of $n$. Then
\begin{align*}
    \lim_{n \to \infty}\,\,  \max_{\overrightarrow w \in \Delta_{n}^{K-1}}\,  T_n(\overrightarrow w) &=  \max_{\overrightarrow v \in \Delta^{K-1}}\, \sum_{i=1}^K \exp \left( -\frac{2^{R/v_i} - 1}{2^R - 1} \theta \right) d_i.  
\end{align*}
\label{orasiml}
\end{theorem}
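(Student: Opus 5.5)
We prove the limit by a matching lower bound ($\liminf$) and upper bound ($\limsup$). Let
\[
T(\overrightarrow v) \coloneqq \sum_{i=1}^K \exp\left(-\frac{2^{R/v_i}-1}{2^R-1}\theta\right) d_i,
\]
with the convention that the $i$th term is $0$ when $v_i = 0$. Two preliminary observations drive everything. First, with $\gamma \sim \operatorname{Exp}(\sigma^2)$ we have
\[
\mathbb{P}\left(\log(1+\gamma P) > R/v_i\right) = \exp\left(-\frac{2^{R/v_i}-1}{P\sigma^2}\right) = \exp\left(-\frac{2^{R/v_i}-1}{2^R-1}\theta\right)
\]
by the definition of $\theta$ in $(\ref{thetadef})$. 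Second, $T$ is continuous on $\Delta^{K-1}$, including at boundary points where some $v_i \to 0$, since $2^{R/v_i} \to \infty$ forces the term to $0$. Hence $T$ attains its maximum at some $\overrightarrow v^\star$.

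For the lower bound, fix $\overrightarrow v^\star$ and choose $\overrightarrow w^{(n)} \in \Delta_n^{K-1}$ with $\overrightarrow w^{(n)} \to \overrightarrow v^\star$, for instance by rounding $n v_i^\star$ down and assigning the leftover channel uses to one coordinate. Consider any $i$ with $v_i^\star > 0$. For every $\gamma$ with $\log(1+\gamma P) > R/v_i^\star$, the rate $R/w_i^{(n)}$ is eventually bounded below $\log(1+\gamma P)$ by a fixed margin, while $w_i^{(n)} n \to \infty$. The error-exponent bound in $\mathcal{E}_{\exp}$ then gives $\mathcal{E}(w_i^{(n)} n, R/w_i^{(n)}, \gamma P) \to 0$: the exponent $\max_\lambda[\cdot]$ is continuous in the rate and strictly positive below capacity. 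Since $1-\mathcal{E} \in [0,1]$, Fatou's lemma yields $\liminf_n \mathbb{E}_\gamma[1-\mathcal{E}] \ge \mathbb{P}(\log(1+\gamma P) > R/v_i^\star)$. Terms with $v_i^\star = 0$ contribute $0$ to $T$ and are nonnegative on the left. Summing over $i$ gives $\liminf_n \max_{\overrightarrow w} T_n(\overrightarrow w) \ge T(\overrightarrow v^\star)$.

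The upper bound is the main obstacle, because the maximizer $\overrightarrow w^{(n)}$ of $T_n$ varies with $n$ and $\mathcal{E}$ is only an upper bound on the error probability. We need a lower bound on $\mathcal{E}$ itself, which is available from its explicit formula. Take maximizers $\overrightarrow w^{(n)}$ and pass to a subsequence that attains the $\limsup$ and along which $\overrightarrow w^{(n)} \to \overrightarrow v \in \Delta^{K-1}$. We claim that for each $i$ and each $\gamma$ with $\log(1+\gamma P) < R/v_i$, we have $\mathcal{E}(w_i^{(n)} n, R/w_i^{(n)}, \gamma P) \to 1$; the case $v_i = 0$ is included, with $R/v_i = \infty$. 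We argue this as follows:
\begin{itemize}
\item Eventually the rate $R/w_i^{(n)}$ exceeds $C(\gamma P) + \delta$ for some fixed $\delta > 0$.
\item Every $\lambda \in [0,1]$ then makes the bracket in $\mathcal{E}_{\exp}$ at most $-\lambda\delta\ln 2$, so $\mathcal{E}_{\exp} \ge 1$.
\item In $\mathcal{E}_{\operatorname{nor}}$, the argument of $\Phi$ is at least $\sqrt{w_i^{(n)} n}\,\delta/\sqrt{V_{\operatorname{tot}}}$, and $\Phi$ of it tends to $1$ when $w_i^{(n)} n \to \infty$. When $w_i^{(n)} n$ stays bounded (possible only if $v_i = 0$), the additive term $2/\sqrt{w_i^{(n)} n}$ is bounded below and the rate $R/w_i^{(n)} \to \infty$; pushing the argument of $\Phi$ to $+\infty$ already forces $\mathcal{E}_{\operatorname{nor}} \to 1$ via the $\min\{1,\cdot\}$. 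The degenerate case $w_i^{(n)} n = 0$ gives $\mathcal{E} = 1$ by definition.
\end{itemize}
I expect the delicate part to be writing this uniformly over the bounded versus unbounded $w_i^{(n)} n$ cases. With the claim in hand, dominated convergence gives
\[
\limsup_n \mathbb{E}_\gamma\left[1-\mathcal{E}\right] \le \mathbb{P}\left(\log(1+\gamma P) \ge R/v_i\right),
\]
and the boundary event $\{\log(1+\gamma P) = R/v_i\}$ has probability $0$. Summing over $i$ gives $\limsup_n \max_{\overrightarrow w} T_n(\overrightarrow w) \le T(\overrightarrow v) \le \max_{\Delta^{K-1}} T$. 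This matches the lower bound and completes the proof.
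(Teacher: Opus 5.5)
Your proposal is correct and follows essentially the same route as the paper: pointwise convergence of $\mathcal{E}$ to the outage indicator along convergent sequences $\overrightarrow w^{(n)} \to \overrightarrow v$, then a convergence theorem in $\gamma$, then a convergent subsequence of maximizers for the $\limsup$ and an integer-rounded approximation of a maximizer $\overrightarrow v^\star$ for the $\liminf$. Your explicit check of the pointwise limit fills in the step the paper passes over with ``it can be checked''. That check uses $\mathcal{E}_{\exp} \to 0$ below capacity, and $\mathcal{E}_{\exp} = 1$ together with $\mathcal{E}_{\operatorname{nor}} \to 1$ above capacity, and it covers the case $v_i = 0$ in which $w_i^{(n)} n$ may stay bounded while the rate diverges.
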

\textit{Proof:} The proof of Theorem \ref{orasiml} given in Appendix \ref{orasiml_proof}.

\begin{remark}
    For any $\overrightarrow v \in \Delta^{K-1}$, we can construct an approximation $\overrightarrow w \in \Delta^{K-1}_{n}$ such that $|v_i - w_i| < 1/n$ for all $i = 1, \ldots, K$. We call this mapping $M_I : \overrightarrow v \mapsto \overrightarrow w$. Specifically, let $$r = n - \sum_{i=1}^K \lfloor v_i n \rfloor$$ so that $r$ is an integer between $0$ and $K-  1$. Let $f_i = v_i n - \lfloor v_i n \rfloor$ denote the fractional part of $v_i n$. Let $\mathcal{I}_r \subset \{1, \ldots, K \}$ be a set of $r$ indices with the largest values of $f_i$. Then let 
    \begin{align*}
        w_i &= \begin{cases}
            \left(\lfloor v_i n \rfloor + 1\right) / n & \text{ if } i \in \mathcal{I}_r,\\
            \lfloor v_i n \rfloor \, /n & \text{ otherwise.}
        \end{cases}
    \end{align*}
    With this construction, we have that $\sum_{i=1}^K w_i = 1$ and each $|v_i - w_i | < 1/n$. In particular, $v_i = 0$ implies $w_i = 0$.
    \label{vtowmapping}
\end{remark}

\begin{definition}
We define a first-order asymptotically optimal resource split to be a solution to the following optimization problem:
\begin{align}
    \max_{\overrightarrow v \in \Delta^{K - 1}} \sum_{i=1}^K \exp \left( -\frac{2^{R/v_i} - 1}{2^R - 1} \theta \right) d_i. \label{aa1}
\end{align}   
\label{firstorderORAsol}
\end{definition}

Similar to our analysis of the PDS scheme, we first focus on solving the first-order asymptotically optimal solution $\overrightarrow v^\star$ in $(\ref{aa1})$, which gives us the asymptotic performance of the ORA scheme by Theorem \ref{orasiml}. Define $t : [0, 1] \to [0, 1)$ as 
\begin{align}
    t(v) &\coloneqq \exp \left( -\frac{2^{R/v} - 1}{2^R - 1} \theta \right)
\end{align}
with the convention 
$$t(0) = \lim_{v \downarrow 0} t(v) =  0.$$

We rewrite $(\ref{aa1})$ as 
\begin{align}
    \max_{\overrightarrow v \in \Delta^{K - 1}} T(\overrightarrow v), \label{maxORA}
\end{align}
where we define 
\begin{align*}
    T(\overrightarrow v) &\coloneqq \sum_{i=1}^K t(v_i) d_i. 
\end{align*}
\begin{theorem}
    Let $\overrightarrow v^\star$ be an optimal solution in $(\ref{maxORA})$. Then the following hold:  
    \begin{enumerate}
        \item \label{opt_mu_1} $v_1^\star \geq v_2^\star \geq \cdots \geq v_K^\star$. 
        \item \label{opt_mu_2} $v_1^\star > 0$.
     \item \label{opt_mu_3} For each $i \in \{2, \ldots, K \}$, we have  $v_i^\star = 0$ $\iff$ $v_{j}^\star = 0$ for all $j \geq i$. 
     \item \label{opt_mu_4} Let $\ell \in \{1, \ldots, K \}$ be the largest integer such that $v_i^\star > 0$ for all $i \leq \ell$. Then $v_1^\star > \cdots > v_\ell^\star > 0$. 
    \end{enumerate}
    \label{opt_mu_properties_theorem}
\end{theorem}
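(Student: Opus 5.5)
The plan is a pair of exchange/perturbation arguments built on the fact that $t$ is strictly increasing on $[0,1]$. First I will record the elementary properties of $t$. We have $t(0)=0$, $t(v)>0$ for $v\in(0,1]$, and $t$ is continuous on $[0,1]$. On $(0,1]$ it is differentiable with
\begin{align*}
t'(v) = t(v)\,\frac{\theta\, R \ln 2}{(2^R-1)\,v^2}\, 2^{R/v} > 0 .
\end{align*}
Hence $t$ is strictly increasing on $[0,1]$, including across $v=0$, since $t(0)=0<t(v)$ for every $v>0$. Throughout, $\overrightarrow v^\star$ denotes a maximizer in $(\ref{maxORA})$; one exists because $T$ is continuous and $\Delta^{K-1}$ is compact.

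For property \ref{opt_mu_1}) I will use a swap argument. Suppose $v_i^\star<v_j^\star$ for some $i<j$, and let $\overrightarrow v'$ be $\overrightarrow v^\star$ with coordinates $i$ and $j$ interchanged, which is still in $\Delta^{K-1}$. Then
\begin{align*}
T(\overrightarrow v')-T(\overrightarrow v^\star) = (d_i-d_j)\bigl(t(v_j^\star)-t(v_i^\star)\bigr) > 0,
\end{align*}
because $d_i>d_j$ and $t$ is strictly increasing. This contradicts optimality, so $v_1^\star\ge\cdots\ge v_K^\star$.

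Properties \ref{opt_mu_2}) and \ref{opt_mu_3}) then follow directly. Since $\sum_i v_i^\star=1$, some coordinate is positive, and $v_1^\star$ is the largest, so $v_1^\star\ge 1/K>0$. If $v_i^\star=0$, then by the ordering $0\le v_j^\star\le v_i^\star=0$ for all $j\ge i$; the converse direction is trivial.

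For property \ref{opt_mu_4}), the only step with any content, I will use a first-order perturbation. Suppose $v_i^\star=v_{i+1}^\star=v>0$ for some $i<\ell$. For small $\epsilon\in(0,v)$, shift mass from coordinate $i+1$ to coordinate $i$, i.e. set $v_i=v+\epsilon$ and $v_{i+1}=v-\epsilon$; this stays feasible. The change in $T$ is
\begin{align*}
d_i\bigl(t(v+\epsilon)-t(v)\bigr)-d_{i+1}\bigl(t(v)-t(v-\epsilon)\bigr) = \epsilon\, t'(v)\,(d_i-d_{i+1}) + o(\epsilon),
\end{align*}
which is strictly positive for small $\epsilon$ because $t'(v)>0$ and $d_i>d_{i+1}$. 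This contradicts optimality, so the ordering is strict on the positive coordinates: $v_1^\star>\cdots>v_\ell^\star>0$.

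The main point to get right is the differentiability and strict positivity of $t'$ at interior points $v>0$, which the formula above supplies. No convexity or concavity of $t$ is needed. This avoids the KKT machinery, which would otherwise be complicated by the non-concave shape of $t$ near $0$.
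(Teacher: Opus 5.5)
Your proof is correct, and it takes a different route from the one the paper points to.

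The paper omits this proof and says only that it mirrors the proof of Theorem~\ref{opt_properties_theorem}. There, the ordering property is proved by contradiction with a case split:
\begin{itemize}
\item a zero-power (infinite-threshold) case, handled by moving mass between layers;
\item an interior case, handled by a local perturbation. Optimality forces $\Delta(0)=0$, and a separate inequality, $(1+u)\ln u \ge 2(u-1)$, is then needed to show $\Delta'(0)>0$.
\end{itemize}

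Your swap argument replaces all of this for property 1. The ORA constraint $\sum_i v_i=1$ is invariant under permutations, and the objective uses the same strictly increasing $t$ for every coordinate. So exchanging $v_i^\star<v_j^\star$ gains exactly $(d_i-d_j)\bigl(t(v_j^\star)-t(v_i^\star)\bigr)>0$. This uses only monotonicity of $t$, and zero coordinates are covered automatically because $t(0)=0$.

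A literal transcription of the paper's interior-case argument to ORA would need a sign analysis of $d_i t''(v_i)+d_{i+1}t''(v_{i+1})$ under the first-order condition. Since $t$ is convex and then concave, that sign is not obviously favorable when $v_i$ and $v_{i+1}$ straddle $v_{\operatorname{int}}^*$. Your swap avoids this entirely.

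Properties 2 and 3 then follow from the ordering. Your bound $v_1^\star\ge 1/K$ is neater than the paper's ``beat the zero objective'' argument. Your property 4 is the same first-order perturbation the paper uses for Theorem~\ref{opt_properties_theorem}.(\ref{opt_prop_6}). There, differentiability with $t'(v)>0$ at $v>0$ is genuinely needed: with a kink, mismatched one-sided slopes could make the exchange unprofitable. You handle this correctly.

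What the paper's perturbation style buys is applicability in the PDS $\alpha$-coordinates. There the thresholds are coupled through $\beta_i$, so swapping $\alpha_i$ and $\alpha_{i+1}$ does not simply permute the $\tau_i$. For ORA, your argument is shorter and fully rigorous.
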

\textit{Proof:} The result is analogous to Corollary \ref{optxproperties} and the proof is similar to that of Theorem \ref{opt_properties_theorem}. The proof is thus omitted.

Theorems \ref{active_layer_theoremORA}, \ref{global_maximizer_thmORA} and \ref{local_maximizer_thmORA} below are the direct counterparts of Theorems \ref{active_layer_theorem}, \ref{global_maximizer_thm} and \ref{local_maximizer_thm}, respectively, for the ORA scheme. 

\begin{theorem}
\label{active_layer_theoremORA}
Let $\overrightarrow v^\star$ denote an optimal solution in $(\ref{maxORA})$. Then 
\begin{align}
    v_i^\star \begin{cases}
        > 0 & \text{ for } i \in \{1, \ldots, \ell \}\\
        = 0 & \text{ otherwise}, 
    \end{cases} 
    \label{skfdsora}
\end{align}
where
    \begin{align}
        \ell \leq \ell_{\operatorname{ORA}} &\coloneqq 
            \max \left \{1 \leq i \leq K: d_i \geq \frac{2^R e^{-\theta}}{M_{\operatorname{int}}^*} d_1  \right \},
         \label{elllmdefora}\\
         M^*_{\operatorname{int}} &\coloneqq \max_{v \in [0, 1]} \left [ \frac{ 2^{R/v}}{v^2} \exp \left( -\frac{2^{R/v} - 1}{2^R - 1} \theta \right)\right].
    \end{align} 
    In particular, if
    \begin{align}
        \theta \geq \theta_c \coloneqq \frac{2^R - 1}{2^R} \left( \frac{2}{R \ln(2)} + 1 \right ), \label{particularthmORA}    
    \end{align}
     then $\ell = 1$ and $\overrightarrow v^\star = (1, 0, \ldots, 0)$.
\end{theorem}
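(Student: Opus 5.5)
The plan is to work directly with the first-order (KKT) conditions of $(\ref{maxORA})$, combined with the ordering properties in Theorem \ref{opt_mu_properties_theorem}. The argument mirrors the one used for Theorem \ref{active_layer_theorem}.

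\textbf{Derivative of $t$.} For $v \in (0,1]$,
\begin{align*}
t'(v) = \kappa\, \frac{2^{R/v}}{v^2}\, t(v), \qquad \kappa \coloneqq \frac{\theta R \ln 2}{2^R - 1} > 0,
\end{align*}
so that $\max_{v\in(0,1]} t'(v) = \kappa M^*_{\operatorname{int}}$ and $t'(1) = \kappa 2^R e^{-\theta}$.

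\textbf{KKT conditions.} Let $\overrightarrow v^\star$ be optimal with $\ell$ as in Theorem \ref{opt_mu_properties_theorem}. The constraint set is a simplex, so linearity-type constraint qualification holds. The objective is differentiable at every coordinate with $v_i^\star > 0$. Hence there is a multiplier $\lambda$ with
\begin{align*}
t'(v_i^\star)\, d_i = \lambda \quad \text{for all } i \le \ell .
\end{align*}
If $\ell = 1$, the bound $\ell \le \ell_{\operatorname{ORA}}$ is trivial, since $i=1$ always belongs to the set in $(\ref{elllmdefora})$ (note $M^*_{\operatorname{int}} \ge 2^R e^{-\theta}$ by taking $v=1$). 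So assume $\ell \ge 2$.

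\textbf{Key structural fact: $t'$ is unimodal on $(0,1]$.} Substitute $s = 1/v \in [1,\infty)$. Then
\begin{align*}
h(s) \coloneqq \ln t'(1/s) = \ln\kappa + sR\ln 2 + 2\ln s - c\,(2^{Rs} - 1), \qquad c = \frac{\theta}{2^R-1},
\end{align*}
and $h''(s) = -2/s^2 - c (R\ln 2)^2 2^{Rs} < 0$. So $h$ is strictly concave in $s$, and therefore $t'$ is unimodal in $v$.

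\textbf{Lower bound on $\lambda$.} From the KKT equalities for $i=1$ and $i=\ell$,
\begin{align*}
t'(v_\ell^\star)\, d_\ell = t'(v_1^\star)\, d_1 .
\end{align*}
Since $d_\ell < d_1$, this forces $t'(v_\ell^\star) > t'(v_1^\star)$. By Theorem \ref{opt_mu_properties_theorem} we also have $v_\ell^\star < v_1^\star$. Unimodality then implies that $v_1^\star$ lies on the decreasing side of $t'$, so $t'$ is decreasing on $[v_1^\star, 1]$. Consequently
\begin{align*}
\lambda = t'(v_1^\star)\, d_1 \ge t'(1)\, d_1 = \kappa 2^R e^{-\theta} d_1 .
\end{align*}
This is the analogue of $\lambda_{\min} = d_1 \theta e^{-\theta} = d_1 g'(1)$ in the PDS case.

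\textbf{Upper bound on $\lambda$ and the bound on $\ell$.} On the other side,
\begin{align*}
\lambda = t'(v_\ell^\star)\, d_\ell \le \kappa M^*_{\operatorname{int}}\, d_\ell .
\end{align*}
Combining the two bounds on $\lambda$ gives
\begin{align*}
d_\ell \ge \frac{2^R e^{-\theta}}{M^*_{\operatorname{int}}}\, d_1 ,
\end{align*}
which is exactly $\ell \le \ell_{\operatorname{ORA}}$.

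\textbf{The threshold $\theta_c$.} Since $h$ is concave, $t'$ is nondecreasing in $v$ on all of $(0,1]$ exactly when $h$ is nonincreasing on $[1,\infty)$, i.e. when $h'(1) \le 0$. This condition reads
\begin{align*}
R\ln 2 + 2 - \frac{\theta}{2^R - 1}\, R \ln 2\; 2^R \le 0 ,
\end{align*}
which rearranges to $\theta \ge \theta_c$. In that case, $v_\ell^\star < v_1^\star$ gives $t'(v_\ell^\star) \le t'(v_1^\star)$. This contradicts $t'(v_\ell^\star) > t'(v_1^\star)$, which was forced by the KKT equality and $d_\ell < d_1$. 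Hence $\ell = 1$ and $\overrightarrow v^\star = (1,0,\ldots,0)$.

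\textbf{Main obstacle.} The delicate step is justifying that $t'(v_1^\star) \ge t'(1)$. This needs the unimodality of $t'$ together with the strict ordering $v_\ell^\star < v_1^\star$; I expect the clean route is the concavity of $h$ in $s = 1/v$, as above. A secondary point is verifying the KKT conditions for the active coordinates when some $v_i^\star = 0$, where $t$ is only defined by continuous extension. Because $t'(0^+) = 0$, the inactive coordinates impose no conflicting condition, and the argument only uses the equalities for $i \le \ell$.
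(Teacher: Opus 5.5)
Your proof is correct and follows essentially the same route as the paper. Both arguments use KKT stationarity $t'(v_i^\star)d_i=\lambda$ with $t'=\kappa\,\mathscr{U}$, unimodality of $\mathscr{U}$, and the two-sided bound $\kappa 2^R e^{-\theta} d_1\le\lambda\le\kappa M^*_{\operatorname{int}} d_\ell$. Your concavity of $h(s)=\ln t'(1/s)$ is the same computation as the paper's $N'(v)<0$, since $h'(s)=-N(1/s)$, and $h'(1)\le 0\iff N(1)\ge 0\iff\theta\ge\theta_c$. The only cosmetic difference is that you get the lower bound on $\lambda$ directly by comparing indices $1$ and $\ell$, whereas the paper's Lemma \ref{lambda_lwbdora} obtains $\mathscr{C}_1(\lambda)\ge 2^R e^{-\theta}$ by a root-ordering contradiction.
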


Define a continuously differentiable function $\mathscr{U}: [0, 1] \to [0, \infty)$ as 
\begin{align*}
    \mathscr{U}(v) &= \frac{ 2^{R/v}}{v^2} \exp \left( -\frac{2^{R/v} - 1}{2^R - 1} \theta \right),\\
    \mathscr{U}(0) &= \mathscr{U}(0^+) = 0.
\end{align*}
If $\theta \geq \theta_c$, the optimal solution $\overrightarrow v^\star = (1, 0, \ldots, 0)$ and there is nothing to do. For $\theta < \theta_c$, $\mathscr{U}(v)$ is monotonically increasing over $[0, v_{\operatorname{int}}^*]$ and monotonically decreasing over $[v_{\operatorname{int}}^*, 1]$ for some $v_{\operatorname{int}}^* \in (0, 1)$. Hence, $\mathscr{U}(v)$ attains a maximum value of $M_{\operatorname{int}}^*$ at $v = v_{\operatorname{int}}^*$. Then for any constant $\mathscr{C} \in [2^R e^{-\theta}, M_{\operatorname{int}}^*]$, the equation 
\begin{align}
    \mathscr{U}(v) = \mathscr{C} \label{ORAequation}
\end{align}
has two solutions $v^+ \in [v_{\operatorname{int}}^* , 1]$ and $v^- \in [0, v_{\operatorname{int}}^*]$. We use the following notation to write down the two solutions: 
\begin{align}
\begin{split}
    v^+ &= V_{R, \theta}^+\left( \mathscr{C} \right),\\
    v^- &= V_{R, \theta}^-\left( \mathscr{C} \right),
\end{split}
     \label{V0V-1defs}
\end{align}
where $V_{R, \theta}^+ : [2^R e^{-\theta}, M_{\operatorname{int}}^*] \to [v_{\operatorname{int}}^* , 1]$ and $V_{R, \theta}^- : [2^R e^{-\theta}, M_{\operatorname{int}}^*] \to [0, v_{\operatorname{int}}^*]$ are easily computable functions. Indeed, for any $\mathscr{C} \in [2^R e^{-\theta}, M_{\operatorname{int}}^*]$, $V_{R, \theta}^+(\mathscr{C})$ is the solution of the equation $(\ref{ORAequation})$ obtained by a simple bisection search over the interval $[v_{\operatorname{int}}^* , 1]$, and $V_{R, \theta}^-(\mathscr{C})$ is the solution of the equation $(\ref{ORAequation})$ obtained by a simple bisection search over the interval $[0, v_{\operatorname{int}}^*]$. 

Theorems \ref{global_maximizer_thmORA} and \ref{local_maximizer_thmORA} below express the optimal solution $\overrightarrow v^\star$ in $(\ref{maxORA})$ in terms of the functions $V_{R, \theta}^+$ and $V_{R, \theta}^-$.   
     
\begin{theorem}
\label{global_maximizer_thmORA}
    An optimal solution $\overrightarrow v^\star$ in $(\ref{maxORA})$ is given by Algorithm $\ref{algorithmglobalmaxORA}$.
\end{theorem}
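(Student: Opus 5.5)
We prove Theorem \ref{global_maximizer_thmORA} by characterizing every candidate maximizer of $(\ref{maxORA})$ through first-order necessary conditions, and then checking that Algorithm \ref{algorithmglobalmaxORA} enumerates all such candidates and keeps the best. The argument parallels the PDS case, with $g$ replaced by $t$ and the weighted constraint of $\mathcal{S}_K$ replaced by the plain simplex.

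\emph{Step 1: reduction to a fixed support.} A maximizer exists, since $T$ is continuous on the compact set $\Delta^{K-1}$ (continuity at $v_i=0$ uses $t(0)=t(0^+)=0$). By Theorem \ref{opt_mu_properties_theorem}, $\overrightarrow v^\star$ has support $\{1,\ldots,\ell\}$ with $v_1^\star>\cdots>v_\ell^\star>0$. By Theorem \ref{active_layer_theoremORA}, we have $\ell\le \ell_{\operatorname{ORA}}$, and if $\theta\ge\theta_c$ then $\overrightarrow v^\star=(1,0,\ldots,0)$; this is the first branch of the algorithm. For $\ell=1$ the only candidate is $(1,0,\ldots,0)$.

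\emph{Step 2: KKT conditions for $\ell\ge 2$.} Here $v_1^\star,\ldots,v_\ell^\star$ are interior, so stationarity gives $d_i t'(v_i^\star)=\lambda$ for $i\le\ell$. A direct computation gives
\begin{align*}
t'(v)=\frac{\theta R\ln 2}{2^R-1}\,\mathscr{U}(v),
\end{align*}
so the stationarity condition reads $\mathscr{U}(v_i^\star)=\mathscr{C}/d_i$, where $\mathscr{C}$ is a rescaled multiplier. Complementary slackness for $i>\ell$ requires $\lambda\ge d_i t'(0^+)=0$, which holds automatically. Hence each $v_i^\star$ equals either $V^+_{R,\theta}(\mathscr{C}/d_i)$ or $V^-_{R,\theta}(\mathscr{C}/d_i)$, provided $\mathscr{C}/d_i$ lies in the domain $[2^Re^{-\theta},M^*_{\operatorname{int}}]$.

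We must justify this domain. The upper bound holds because $\mathscr{U}\le M^*_{\operatorname{int}}$. For the lower bound, compare $\overrightarrow v^\star$ with the perturbation that shifts mass from $v_1^\star$ to $v_i^\star$; together with the fact that $v_1^\star\le 1$ and the monotonicity structure of $\mathscr{U}$, this yields $\mathscr{C}/d_1\ge\mathscr{U}(1)=2^Re^{-\theta}$, mirroring the role of $\lambda_{\min}$ in the PDS proof.

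\emph{Step 3: ruling out branch patterns.} We use the second-order necessary condition, namely that the Lagrangian Hessian $\mathrm{diag}(d_i t''(v_i^\star))$ is negative semidefinite on the tangent space $\{\sum_i u_i=0\}$. Since $t''$ has the sign of $\mathscr{U}'$, each component on the $V^-$ branch (where $\mathscr{U}'>0$) contributes a positive curvature direction. If two or more components were on the $V^-$ branch, a direction $u$ supported on those two with $u_1+u_2=0$ would give strictly positive curvature, contradicting optimality. So at most one component is on $V^-$, and the ordering $v_1^\star>\cdots>v_\ell^\star$ forces it to be $v_\ell^\star$. Every candidate is therefore of the form
\begin{align*}
v_i=V^+_{R,\theta}(\mathscr{C}/d_i)\ (i<\ell),\qquad v_\ell\in\{V^+_{R,\theta}(\mathscr{C}/d_\ell),\,V^-_{R,\theta}(\mathscr{C}/d_\ell)\},
\end{align*}
with $\mathscr{C}$ chosen so that $\sum_{i=1}^\ell v_i=1$.

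\emph{Step 4: solving the sum constraint.} In the all-$V^+$ pattern, $\mathscr{C}\mapsto\sum_i V^+_{R,\theta}(\mathscr{C}/d_i)$ is monotone decreasing, so there is at most one root and a single bisection finds it. In the mixed pattern, the sum is not monotone in $\mathscr{C}$. We reparametrize by $s=v_\ell$ on the $V^-$ branch and show that the resulting constraint function is unimodal, with its turning point located by the analog of $M_\ell(s)$. This gives at most two roots, which the ORA analog of Algorithm \ref{mixed_branch_roots} finds with at most three bisections.

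Establishing this unimodality is the main obstacle. We would first attempt it by differentiating implicitly, using $\mathscr{U}'(V^\pm)$, and showing that the derivative's sign is governed by a function of $s$ with a single sign change, exactly as $M_\ell$ is in the PDS case. Once that holds, every optimal solution lies in the finite candidate set that the algorithm enumerates over $\ell\le\ell_{\operatorname{ORA}}$ and both branch patterns, and the algorithm returns the candidate maximizing $T$. Therefore its output is optimal.
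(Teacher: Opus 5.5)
Your Steps 1--3 are essentially sound. Step 3's second-order exclusion of two $V_{R,\theta}^-$ components is a valid alternative to the paper's purely first-order ordering argument. The lower bound $\mathscr{C}/d_1\ge 2^Re^{-\theta}$ in Step 2 is better justified by ordering than by a mass-shifting perturbation: otherwise $v_1^\star$ would lie on the increasing branch of $\mathscr{U}$, which by your Step 3 forces $\ell=1$, hence $v_1^\star=1$ and $\mathscr{C}/d_1=\mathscr{U}(1)$. The all-$V_{R,\theta}^+$ part of Step 4 is also fine.

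The genuine gap is the mixed pattern in Step 4. The theorem concerns Algorithm~\ref{algorithmglobalmaxORA} as written, and that algorithm does not run an ORA analogue of Algorithm~\ref{mixed_branch_roots}. It enumerates the set $\Lambda_\ell^{\operatorname{ORA}}$ of (\ref{mixedrootsORAlambda}), i.e.\ all $t\in[t_{\ell,\operatorname{low}},v_{\operatorname{int}}^*]$ with $F_\ell(t)=1$ and $F_\ell'(t)\ge 0$. Your route needs unimodality of $F_\ell$, which you leave open and which has no evident proof. The PDS fact $Q_{i,\ell}'(s)>0$ rests on the explicit form $\psi(y)=y^2e^{-y}$, whereas here $V_{R,\theta}^+$ is only an implicit inverse of $\mathscr{U}$; the paper deliberately avoids any such property. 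So even if completed, your argument would certify a different procedure. It also never addresses the two facts Algorithm~\ref{algorithmglobalmaxORA} actually depends on. The first is that the mixed-branch root set is finite. The second is that discarding roots with $F_\ell'(t)<0$, and skipping the mixed pattern when $S_\ell^+(\lambda_{\operatorname{low}})<1$, can never discard a maximizer.

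Neither fact needs unimodality.
\begin{itemize}
\item \emph{Finiteness.} $\mathscr{U}$ is real analytic on $(0,\infty)$, and $\mathscr{U}'<0$ on $(v_{\operatorname{int}}^*,1]$. So the analytic inverse function theorem makes $V_{R,\theta}^+$ analytic on an open set containing $[2^Re^{-\theta},M_{\operatorname{int}}^*)$. For $t\in[t_{\ell,\operatorname{low}},v_{\operatorname{int}}^*]$ the arguments $\mathscr{U}(t)d_\ell/d_i$ stay in the compact subset $[2^Re^{-\theta},M_{\operatorname{int}}^*d_\ell/d_i]$. Hence $F_\ell$ extends analytically to an open interval around $[t_{\ell,\operatorname{low}},v_{\operatorname{int}}^*]$. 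Moreover $F_\ell(t_{\ell,\operatorname{low}})\ge t_{\ell,\operatorname{low}}+V_{R,\theta}^+(2^Re^{-\theta})=t_{\ell,\operatorname{low}}+1>1$, so $F_\ell-1\not\equiv 0$ and it has finitely many zeros there.
\item \emph{The filter.} Push your own Step 3 idea through the whole critical cone $\{w:\sum_{i\le\ell}w_i=0,\ w_i=0\ (i>\ell)\}$. The second derivative of $v\mapsto\exp\bigl(-\frac{2^{R/v}-1}{2^R-1}\theta\bigr)$ is a positive multiple of $\mathscr{U}'(v)$. Set $a=\mathscr{U}'(t)d_\ell\ge 0$ and $b_i=-\mathscr{U}'\bigl(V_{R,\theta}^+(\mathscr{U}(t)d_\ell/d_i)\bigr)d_i>0$. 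Eliminating $w_\ell$, the second-order necessary condition becomes $a\bigl(\sum_{i<\ell}w_i\bigr)^2\le\sum_{i<\ell}b_iw_i^2$ for all $w$. By Cauchy--Schwarz, tight at $w_i=1/b_i$, this is equivalent to $a\sum_{i<\ell}1/b_i\le 1$, which is exactly $F_\ell'(t)\ge 0$.
\item \emph{The gate.} It follows from $S_\ell^-\le S_\ell^+\le S_\ell^+(\lambda_{\operatorname{low}})$ on $[\lambda_{\operatorname{low}},\lambda_{\operatorname{upp}}(\ell)]$.
\end{itemize}
With these three facts, every maximizer lies in the finite candidate set that Algorithm~\ref{algorithmglobalmaxORA} compares, and its final argmax is optimal.
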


\begin{theorem}
\label{local_maximizer_thmORA}
A strict local maximizer $\overrightarrow v_{\operatorname{loc}}^\star$ in $(\ref{maxORA})$ is given by 
Algorithm \ref{algorithmlocalmaxORA} such that $\overrightarrow v_{\operatorname{loc}}^\star$ satisfies the KKT conditions, the optimality conditions of Theorem \ref{opt_mu_properties_theorem} and 
    \begin{align}
    \overrightarrow v_{\operatorname{loc}}^\star(i) 
    \begin{cases}
        > 0 & \text{ for } i \in \{1, \ldots, \ell \}\\
        = 0 & \text{ otherwise}, 
    \end{cases} 
    \label{gf3ORA}
\end{align}
where 
    \begin{align}
    \ell \in \mathcal{X}_p\left(\theta, R, \overrightarrow d\right) &\coloneqq \left \{1 \leq j \leq \ell_{\operatorname{ORA}}:  S_j^+(\lambda_{\operatorname{low}}) \geq 1 \geq S_{j}^+(\lambda_{\operatorname{upp}}(j))  \right \}, \label{lspecificationORA} \\
    S_{j}^+(\lambda) &\coloneqq \sum_{i=1}^j V_{R, \theta}^+\left(\mathscr{C}_i(\lambda) \right) \quad \quad  \text{for each }  j = 1, \ldots, \ell_{\operatorname{ORA}},\\
    \mathscr{C}_i(\lambda) &\coloneqq \frac{\lambda\left(2^R-1\right)}{\theta d_i R \ln (2)} \quad\, \quad \quad \quad  \text{for each }  i = 1, \ldots, \ell_{\operatorname{ORA}},\\
    \lambda_{\operatorname{low}} &\coloneqq \frac{2^R e^{-\theta} \theta d_1 R \ln(2)}{2^R - 1},\\
    \lambda_{\operatorname{upp}}(j) &\coloneqq \frac{M_{\operatorname{int}}^*\, \theta d_j R \ln (2)}{2^R - 1} \quad \,\,\,\,\,\,  \text{for each }  j = 1, \ldots, \ell_{\operatorname{ORA}},
\end{align}
and $\ell_{\operatorname{ORA}}$ is defined in $(\ref{elllmdefora})$.
\end{theorem}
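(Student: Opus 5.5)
We mirror the proof of Theorem \ref{local_maximizer_thm}. Fix $\ell \in \mathcal{X}_p(\theta, R, \overrightarrow d)$ and look for a KKT point of (\ref{maxORA}) supported on $\{1,\ldots,\ell\}$. The Lagrangian is $T(\overrightarrow v) - \lambda(\sum_i v_i - 1) + \sum_i \mu_i v_i$. A direct computation gives
\begin{align*}
t'(v) = \frac{\theta R\ln(2)}{2^R-1}\,\frac{2^{R/v}}{v^2}\exp\left(-\frac{2^{R/v}-1}{2^R-1}\theta\right) = \frac{\theta R \ln(2)}{2^R - 1}\,\mathscr{U}(v).
\end{align*}
So stationarity in an active coordinate, $d_i t'(v_i) = \lambda$, is exactly $\mathscr{U}(v_i) = \mathscr{C}_i(\lambda)$. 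We take the solution $v_i = V^+_{R,\theta}(\mathscr{C}_i(\lambda))$ on the decreasing branch of $\mathscr{U}$. This is legitimate when $\mathscr{C}_i(\lambda) \in [2^R e^{-\theta}, M^*_{\operatorname{int}}]$, i.e.\ $\lambda \in [\lambda_{\operatorname{low}}\, d_i/d_1, \lambda_{\operatorname{upp}}(i)]$. Because $d_1 > \cdots > d_\ell$, the intersection over $i\le \ell$ of these intervals contains $[\lambda_{\operatorname{low}}, \lambda_{\operatorname{upp}}(\ell)]$. This is the range used in $\mathcal{X}_p$, and it is nonempty since $\ell \le \ell_{\operatorname{ORA}}$ by (\ref{elllmdefora}).

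Next we solve the equality constraint $S^+_\ell(\lambda) = 1$. Each $\mathscr{C}_i(\lambda)$ is increasing in $\lambda$ and $V^+_{R,\theta}$ is decreasing, so $S^+_\ell$ is continuous and strictly decreasing on $[\lambda_{\operatorname{low}}, \lambda_{\operatorname{upp}}(\ell)]$. The membership condition $S^+_\ell(\lambda_{\operatorname{low}}) \ge 1 \ge S^+_\ell(\lambda_{\operatorname{upp}}(\ell))$ then gives, by the intermediate value theorem, a unique root $\lambda^\star$. This is the root returned by Algorithm \ref{algorithmlocalmaxORA}'s bisection. Setting $v_i = V^+_{R,\theta}(\mathscr{C}_i(\lambda^\star))$ for $i\le\ell$ and $v_i = 0$ otherwise gives a feasible point with the support claimed in (\ref{gf3ORA}). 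The inequality $V^+ \ge v^*_{\operatorname{int}} > 0$ guarantees strict positivity.

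Strict ordering $v_1 > \cdots > v_\ell$ holds because $\mathscr{C}_i(\lambda^\star)$ increases in $i$ and $V^+$ is decreasing. This gives the conditions of Theorem \ref{opt_mu_properties_theorem}. For the inactive coordinates, dual feasibility requires $\mu_i = \lambda^\star - d_i t'(0^+) \ge 0$. Since $t'(0^+) = 0$, this is $\mu_i = \lambda^\star > 0$, so the inactive constraints are strictly active with positive multipliers. Hence the KKT conditions hold.

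For strictness of the local maximum we use the second-order sufficient condition. Since $\mu_i > 0$ for $i > \ell$, the critical cone is $\{\mathbf{h}: h_i = 0 \text{ for } i>\ell,\ \sum_{i\le \ell} h_i = 0\}$. The Lagrangian Hessian is diagonal with entries $d_i t''(v_i) = d_i \frac{\theta R\ln 2}{2^R-1}\mathscr{U}'(v_i)$. Each $v_i$ lies on the branch where $\mathscr{U}$ is decreasing, so these entries are $\le 0$. The main obstacle is strict negativity: $v_i = v^*_{\operatorname{int}}$ only when $\mathscr{C}_i(\lambda^\star) = M^*_{\operatorname{int}}$, which can happen at most for $i=\ell$ at the endpoint $\lambda^\star = \lambda_{\operatorname{upp}}(\ell)$. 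In that case at most one diagonal entry vanishes. A nonzero $\mathbf{h}$ in the critical cone with $\sum_{i \le \ell} h_i = 0$ must then have some nonzero $h_i$ with $i < \ell$, whose entry is strictly negative. Hence $\mathbf{h}^\top \nabla^2 L\, \mathbf{h} < 0$ for all nonzero $\mathbf{h}$ in the cone.

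This argument needs $\ell \ge 2$. For $\ell = 1$ the point $(1,0,\ldots,0)$ is a vertex where the critical cone is $\{0\}$, so strictness follows from the positive multipliers alone. In that case one should also check that $\lambda^\star = d_1 t'(1)$ is consistent. Finally, Algorithm \ref{algorithmlocalmaxORA} returns the best of these candidates over $\ell \in \mathcal{X}_p$, each of which is a strict local maximizer, which proves the theorem.
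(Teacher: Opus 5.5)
Your proposal is correct and follows essentially the same route as the paper. It constructs $\overrightarrow v^+$ on the decreasing branch $V_{R,\theta}^+$, gets the unique root of $S_\ell^+(\lambda)=1$ by monotonicity, and checks the KKT conditions with $\mu_i=\lambda^\star>0$ off the support. It then applies the second-order sufficient condition, using that the diagonal Hessian entries $d_i t''(v_i)\propto \mathscr{U}'(v_i)$ are nonpositive on the critical cone $\{\sum_{i\le\ell} w_i=0,\ w_i=0 \text{ for } i>\ell\}$.

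Your version is slightly more careful than the paper's in two places. You explicitly handle the possible zero Hessian entry when $v_\ell^+=v_{\operatorname{int}}^*$, which occurs at $\lambda^\star=\lambda_{\operatorname{upp}}(\ell)$. You also treat the vertex case $\ell=1$ separately, where the critical cone is $\{\mathbf{0}\}$.
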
 

The combined proof of Theorems $\ref{active_layer_theoremORA}, \ref{global_maximizer_thmORA}$ and $\ref{local_maximizer_thmORA}$ is given in Section \ref{combined_proofORA}.

\subsection{Summary of Algorithms \ref{algorithmglobalmaxORA} and \ref{algorithmlocalmaxORA}}

Algorithm \ref{algorithmglobalmaxORA} is based on the fact\footnote{proven in Section \ref{combined_proofORA}} that an optimal $\overrightarrow v^\star$ in $(\ref{maxORA})$ is either equal to $(1, 0, \ldots, 0)$ or satisfies  
\begin{align}
    v_i^\star = v_i^\star(\lambda) = \begin{cases}
        V_{R, \theta}^+\left(\mathscr{C}_i(\lambda)  \right) & \text{ for } i \in \{1, \ldots, \ell - 1 \}\\
        V_{R, \theta}^+\left(\mathscr{C}_\ell(\lambda)  \right) \text{ or } V_{R, \theta}^-\left(\mathscr{C}_\ell(\lambda)  \right) & \text{ for } i = \ell \\
        0 & \text{ for } i \in \{\ell + 1, \ldots, K \}
    \end{cases}
    \label{maxi_candsORA}
\end{align}
for some $2\leq \ell \leq \ell_{\operatorname{ORA}}$ and $\lambda \in [\lambda_{\operatorname{low}}, \lambda_{\operatorname{upp}}(\ell)]$ such that 
\begin{align}
    \sum_{i=1}^\ell v_i^\star(\lambda) = 1. \label{Hconsgrkora}
\end{align}
Algorithm \ref{algorithmglobalmaxORA} first checks a sufficient (but not necessary) condition for $\overrightarrow v^\star = (1, 0, \ldots, 0)$. If the condition is not satisfied, then  Algorithm \ref{algorithmglobalmaxORA} searches across the values $\ell \in \{1, \ldots, \ell_{\operatorname{ORA}} \}$ and, for each $\ell$, constructs the maximizer candidates of the form $(\ref{maxi_candsORA})$ by solving for $\lambda \in [\lambda_{\operatorname{low}}, \lambda_{\operatorname{upp}}(\ell)]$ satisfying $(\ref{Hconsgrkora})$. For $\ell = 1$, the only maximizer candidate is $(1, 0, \ldots, 0)$. Furthermore, the constraints $\ell \leq \ell_{\operatorname{ORA}}$ and $\lambda \in [\lambda_{\operatorname{low}}, \lambda_{\operatorname{upp}}(\ell)]$ guarantee that 
\begin{align*}
    2^R e^{-\theta} \leq \mathscr{C}_i(\lambda) \leq M_{\operatorname{int}}^*
\end{align*}
for all $1 \leq i \leq \ell$ so that $V_{R, \theta}^+$ and $V_{R, \theta}^-$ are well-defined according to $(\ref{V0V-1defs})$. Algorithm \ref{algorithmlocalmaxORA} uses the same logic but only considers $V_{R, \theta}^+$ for $i = \ell$. In Algorithm \ref{algorithmlocalmaxORA}, for each $\ell$, the solution $\lambda$ to the equation $(\ref{Hconsgrkora})$ is unique if it exists, and a necessary and sufficient condition for existence is easily specified, making Algorithm \ref{algorithmlocalmaxORA} simpler to implement. But when $v_\ell^\star$ is chosen from the negative branch $V_{R, \theta}^-\left(\mathscr{C}_\ell(\lambda)  \right)$, then finding the values of $\lambda$ satisfying $(\ref{Hconsgrkora})$ is not as simple. Nevertheless, we reduce the solution set of  
\begin{align}
    V_{R, \theta}^-\left(\mathscr{C}_\ell(\lambda)  \right) + \sum_{i=1}^{\ell - 1} V_{R, \theta}^+\left(\mathscr{C}_i(\lambda)  \right)  = 1, \label{fds4}
\end{align}
to finding the roots of a one-dimensional real-analytic function on a compact interval, thereby showing that at most finitely many solutions exist for $(\ref{fds4})$ for $\lambda \in [\lambda_{\operatorname{low}}, \lambda_{\operatorname{upp}}(\ell)]$. We also derive a simple second-order necessary condition to filter out the solutions that do not lead to a local maximizer. This filtered solution set is given in $(\ref{mixedrootsORAlambda})$, and the proof is given in Section \ref{combined_proofORA}.   

Although Algorithm \ref{algorithmlocalmaxORA} does not do an exhaustive search by ignoring the  possible solutions given by $V_{R, \theta}^-$, the output of Algorithm \ref{algorithmlocalmaxORA} is proven via a sufficient second-order condition to be a strict local maximizer in $(\ref{maxORA})$. All numerical results for ORA in this paper will be generated using the locally optimal Algorithm \ref{algorithmlocalmaxORA}. This choice is motivated partly by simplicity, and partly because our numerical experiments show that the analogous locally optimal Algorithm \ref{algorithmlocalmax} for PDS (which similarly ignores the secondary branch $W_{-1}$) yields results that match the globally optimal Algorithm \ref{algorithmglobalmax} in almost all cases. We thus conjecture that the maximizer $\overrightarrow v^\star$ usually has all components given by the positive branch $V_{R, \theta}^+$. 

On the other hand, the optimal objective values produced by an SLSQP solver run $K$ times with initializations $\overrightarrow{v_1}, \ldots, \overrightarrow {v_K}$, where 
\begin{align}
    \overrightarrow{v_i}(j) = \begin{cases}
        \frac{d_j}{\sum_{n=1}^i d_n } & j = 1, \ldots, i, \\
        0 & \text{otherwise}, 
    \end{cases} \label{reasonablevis}
\end{align}
were \emph{mostly} within $\sim 10^{-7}$ of the optimal objective values produced by Algorithm \ref{algorithmlocalmaxORA} in our numerical testing, but were occasionally suboptimal. Therefore, Algorithm \ref{algorithmlocalmaxORA} offers a more robust method to solve $(\ref{maxORA})$ than a typical SLSQP solver. We omit the analogous plots to Figures \ref{typicalAlg2} and \ref{power_plots} for the ORA scheme, since they are qualitatively similar. Instead, in Section \ref{finalcompnumerical}, we compare the achievable first-order asymptotic and finite blocklength performance of the PDS and ORA schemes.

\subsection{Definitions for Algorithm \ref{algorithmglobalmaxORA} \label{ORAmixedroots}}

For any $2 \leq \ell \leq \ell_{\operatorname{ORA}}$ and $0 < \theta < \theta_c$, define 
\begin{align*}
    F_\ell(t) &\coloneqq t + \sum_{i=1}^{\ell - 1} V_{R, \theta}^+\left(\mathscr{U}(t) \frac{d_\ell}{d_i} \right),\\
    F_\ell'(t) &\coloneqq 1 + d_\ell \mathscr{U}'(t) \sum_{i=1}^{\ell - 1} \frac{1}{d_i\mathscr{U}'\left( V_{R, \theta}^+\left(\mathscr{U}(t) \frac{d_\ell}{d_i} \right) \right)}, 
\end{align*}
for $t \in \left [ t_{\ell, \operatorname{low}}, v_{\operatorname{int}}^* \right ]$, where 
\begin{align*}
    t_{\ell, \operatorname{low}} \coloneqq  V_{R, \theta}^-\left(\frac{2^R e^{-\theta} d_1 }{ d_\ell }   \right).
\end{align*}
Then define 
\begin{align}
    \Lambda_{\ell}^{\operatorname{ORA}} \coloneqq \left \{ \mathscr{U}(t) \frac{d_\ell \theta  R \ln(2)}{2^R - 1}: F_\ell(t) = 1, F_\ell'(t) \geq 0, t \in \left [ t_{\ell, \operatorname{low}}, v_{\operatorname{int}}^* \right ]  \right \}. \label{mixedrootsORAlambda}
\end{align}
In Section \ref{combined_proofORA}, we show that the set $\Lambda_{\ell}^{\operatorname{ORA}}$ is finite, and that $F_\ell(t)$ has a real analytic extension to an open set containing $\left [ t_{\ell, \operatorname{low}}, v_{\operatorname{int}}^* \right ]$.

\begin{algorithm}[H]
\small
\caption{Computation of a globally optimal solution $\overrightarrow{v}^\star$ in $(\ref{maxORA})$ }
\label{algorithmglobalmaxORA}
\begin{algorithmic}[1]
  \REQUIRE $R > 0$, $\theta > 0$, vector $\overrightarrow{d} = (d_1, \dots, d_K)$ with $d_1 > \dots > d_K > 0, K \geq 2$.
  \ENSURE $\overrightarrow{v}$ is an optimal solution in $(\ref{maxORA})$

  \IF{$\theta \geq \theta_c$}
    \STATE $\overrightarrow{v} \gets (1, 0, \ldots, 0)$
    \STATE \textbf{return} $\overrightarrow{v}$
  \ENDIF

  \FORALL{$\ell \in \{2, \ldots, \ell_{\operatorname{ORA}} \}$  }
    \STATE Define, for $i = 1, \ldots, \ell$,
    \begin{align*}
      \mathscr{C}_i(\lambda) &= \frac{\lambda\left(2^R-1\right)}{\theta d_i R \ln (2)},
    \end{align*}
    and set
    \begin{align*} 
      S_\ell^+(\lambda) &= \sum_{i=1}^\ell  V_{R, \theta}^+\left(\mathscr{C}_i(\lambda)  \right),\\
      S_\ell^-(\lambda) &= V_{R, \theta}^-\left( \mathscr{C}_\ell(\lambda) \right) + \sum_{i=1}^{\ell-1} V_{R, \theta}^+\left(\mathscr{C}_i(\lambda)  \right),\\
      \lambda_{\operatorname{low}} &= \frac{2^R e^{-\theta} \theta d_1 R \ln(2)}{2^R - 1},\\
      \lambda_{\operatorname{upp}}(\ell) &= \frac{M_{\operatorname{int}}^*\, \theta d_\ell R \ln (2)}{2^R - 1}.
    \end{align*}

        \STATE Let $\mathcal{C}_\ell \gets \emptyset$.
    \IF{$S_\ell^+(\lambda_{\operatorname{low}}) \geq 1 \geq S_\ell^+(\lambda_{\operatorname{upp}}(\ell))$} \label{proveit1ORA}
      \STATE $\lambda \leftarrow \mathrm{BisectionSearch}\left(S_\ell^+(\lambda)-1,\,\lambda_{\operatorname{low}},\,\lambda_{\operatorname{upp}}(\ell)\right)$
      \STATE Define $\overrightarrow{v}$ by
      \[
        \overrightarrow{v}(i) =
        \begin{cases}
          V_{R, \theta}^+\left(\mathscr{C}_i(\lambda)  \right) & i = 1, \ldots, \ell,\\[2pt]
          0                & i = \ell + 1, \ldots, K.
        \end{cases}
      \]
      \STATE $\mathcal{C}_\ell \gets \mathcal{C}_\ell \cup \{\overrightarrow{v}\}$
    \ENDIF

    \IF{$S_\ell^+(\lambda_{\operatorname{low}}) \geq 1$} 
    \FOR{$\lambda \in \Lambda_\ell^{\operatorname{ORA}}$}
      \STATE Define $\overrightarrow{v}$ by
      \[
        \overrightarrow{v}(i) =
        \begin{cases}
          V_{R, \theta}^+\left(\mathscr{C}_i(\lambda)  \right) & i = 1, \ldots, \ell - 1,\\[2pt]
           V_{R, \theta}^-\left(\mathscr{C}_\ell(\lambda)  \right) & i = \ell,\\[2pt]
          0                  & i = \ell + 1, \ldots, K.
        \end{cases}
      \]
      \STATE $\mathcal{C}_\ell \gets \mathcal{C}_\ell \cup \{\overrightarrow{v}\}$
    \ENDFOR
    \ENDIF

    \IF{$\mathcal{C}_\ell = \emptyset$}
      \STATE $\overrightarrow{v}^{(\ell)} \gets (0, \ldots, 0)$
    \ELSE
      \STATE $\overrightarrow{v}^{(\ell)} \gets
        \displaystyle \argmax_{\overrightarrow{v} \in \mathcal{C}_\ell} T(\overrightarrow{v})$
    \ENDIF
 
  \ENDFOR

  \STATE Set $\overrightarrow{v}^{(1)} \gets (1, 0, \ldots, 0)$.
  \STATE \textbf{return} $\displaystyle
    \argmax_{\overrightarrow{v}^{(\ell)}:\, 1 \leq \ell \leq \ell_{\operatorname{ORA}} }
    T\big(\overrightarrow{v}^{(\ell)}\big)$
\end{algorithmic}
\end{algorithm}

\begin{algorithm}[H]
\small
\caption{Computation of a strict local maximizer $\overrightarrow{v}$ in $(\ref{maxORA})$}
\label{algorithmlocalmaxORA}
\begin{algorithmic}[1]
  \REQUIRE $R > 0$, $\theta > 0$, vector $\overrightarrow{d} = (d_1, \dots, d_K)$ with $d_1 > \dots > d_K > 0$, $K \geq 2$. 
  \ENSURE $\overrightarrow{v}$ is a local maximizer in $(\ref{maxORA})$

  \IF{$\theta \geq \theta_c$}
    \STATE $\overrightarrow{v} \gets (1, 0, \ldots, 0)$ \label{replacementheuristic1ORA}
    \STATE \textbf{return} $\overrightarrow{v}$ \label{replacementheuristic2ORA}
  \ENDIF

  \FORALL{$\ell \in \mathcal{X}_p(\theta, R, \overrightarrow{d})$ with $\ell \geq 2$}
    \STATE Define, for $i = 1, \ldots, \ell$,
    \begin{align*}
      \mathscr{C}_i(\lambda) &= \frac{\lambda\left(2^R-1\right)}{\theta d_i R \ln (2)},
    \end{align*}
    and set
    \begin{align*}
      S_\ell^+(\lambda) &= \sum_{i=1}^\ell  V_{R, \theta}^+(\mathscr{C}_i(\lambda)),\\
      \lambda_{\operatorname{low}} &= \frac{2^R e^{-\theta} \theta d_1 R \ln(2)}{2^R - 1},\\
      \lambda_{\operatorname{upp}}(\ell) &= \frac{M_{\operatorname{int}}^*\, \theta d_\ell R \ln (2)}{2^R - 1}.
    \end{align*}

    \STATE $\lambda \leftarrow \mathrm{BisectionSearch}\left(S_\ell^+(\lambda)-1,\,\lambda_{\operatorname{low}},\,\lambda_{\operatorname{upp}}(\ell)\right)$  

    \STATE Define $\overrightarrow{v}^{(\ell)}$ as
    \[
      \overrightarrow{v}^{(\ell)}(i) =
      \begin{cases}
        V_{R, \theta}^+(\mathscr{C}_i(\lambda)) & i = 1, \ldots, \ell,\\[2pt]
        0                  & i = \ell + 1, \ldots, K.
      \end{cases}
    \]
  \ENDFOR

  \STATE Set $\overrightarrow{v}^{(1)} \gets (1, 0, \ldots, 0)$.
  \STATE \textbf{return} $\displaystyle
    \argmax_{\overrightarrow{v}^{(\ell)}:\, \ell \in \mathcal{X}_p(\theta, R, \overrightarrow{d})}
    T\big(\overrightarrow{v}^{(\ell)}\big)$ \label{replacementheuristic3ORA}
\end{algorithmic}
\end{algorithm}

\section{Numerical Comparison between PDS and ORA \label{finalcompnumerical}}

We define the following:
\begin{itemize}
    \item $N_1 \coloneqq \max_{\overrightarrow \alpha \in \Delta^{K - 1}}\,G_n(\overrightarrow \alpha)$ is the finite blocklength achievable performance of the PDS scheme (Theorem \ref{PDSfiniteblocklengthbound} and $(\ref{tt33})$). 
    \item $N_2 \coloneqq \max_{\overrightarrow x \in \mathcal{S}_K} G(\overrightarrow x)$ is the asymptotically achievable performance of the PDS scheme (Theorem \ref{finitetofirstorderreduction} and Lemma \ref{atox}). 
    \item $N_3 \coloneqq\max_{\overrightarrow w \in \Delta_{n}^{K-1}}\, T_n(\overrightarrow w)$ is the finite blocklength achievable performance of the ORA scheme (Theorem \ref{ORAfiniteblocklengthbound} and $(\ref{pqpq})$).
    \item $N_4 \coloneqq \max_{\overrightarrow v \in \Delta^{K - 1}} T(\overrightarrow v)$ is the asymptotically achievable performance of the ORA scheme (Theorem \ref{orasiml} and Remark \ref{vtowmapping}).
\end{itemize}

Maximizers in $N_1$ and $N_3$ are hard to compute; numerical optimization is difficult because even a single evaluation of the functions $G_n$ and $T_n$ is numerically expensive. To approximate a maximizer in $N_1$, we consider a modified Algorithm \ref{algorithmglobalmax} with the following two modifications:
\begin{itemize}
    \item Replace lines \ref{replacementheuristic1PDS} and \ref{replacementheuristic2PDS} by 
    \begin{align*}
        \textbf{return} \quad 
    \argmax_{\overrightarrow{x}_i: 1 \leq i \leq K}
    G_n\left( M_B^{-1}(  \overrightarrow{x}_i )\right) 
    \end{align*}
    \item Replace line \ref{iskobhireplacekaro1b} by 
    \begin{align*}
        \overrightarrow{x}^{(\ell)} \gets
        \displaystyle \argmax_{\overrightarrow{x} \in \mathcal{C}_\ell} G_n( M_B^{-1}\left(  \overrightarrow{x} \right)  )
    \end{align*}
    \item Replace line \ref{replacementheuristic3PDS} by 
    \begin{align*}
        \textbf{return} \quad 
    \argmax_{ \overrightarrow x \in \left \{ \overrightarrow{x}^{(\ell)}:\, \ell \in \mathcal{L}_p(\theta, R, \overrightarrow{d}) \right \} \cup \left \{ \overrightarrow x_i: 1 \leq i \leq K \right \} }
    G_n\left(M_B^{-1}(\overrightarrow{x} )\right) 
    \end{align*}
\end{itemize}
where $M_B$ is defined in Lemma \ref{atox} and the $\overrightarrow x_i$'s are defined in $(\ref{reasonablexs})$. Let $\overrightarrow x^{\star, n}$ denote the output of this modified Algorithm \ref{algorithmglobalmax}. Define  
\begin{align}
    N_5 \coloneqq G_n\left( M_B^{-1}(\overrightarrow x^{\star, n}) \right). \label{n5defo}
\end{align}
We will use $N_5$ as an approximation for $N_1$. In plain words, we are approximating the best achievable finite blocklength performance $N_1$ by taking the maximum value of $G_n$ among all the points $\overrightarrow x^{(\ell)}$ that are candidates for the asymptotically optimal solution as well as the points $\overrightarrow x_i$ which are arguably good heuristic choices for a maximizer in $N_1$. Since these points include the asymptotically optimal solution for PDS and we have the convergence result from Theorem \ref{finitetofirstorderreduction}, $N_5$ becomes an increasingly accurate approximation for $N_1$ for large $n$. Further discussion on this approximation accuracy is given after Figure \ref{finite_blocklength_within_ORA_5000}. 

Similarly, consider a modified Algorithm \ref{algorithmlocalmaxORA} with the following two modifications:
\begin{itemize}
    \item Replace lines \ref{replacementheuristic1ORA} and \ref{replacementheuristic2ORA} by 
    \begin{align*}
        \textbf{return} \quad 
    \argmax_{\overrightarrow{v}_i: 1 \leq i \leq K}
    T_n\left( M_I(  \overrightarrow{v}_i )\right) 
    \end{align*}
    \item Replace line \ref{replacementheuristic3ORA} by 
    \begin{align}
        \textbf{return} \quad 
    \argmax_{ \overrightarrow v \in \left \{ \overrightarrow{v}^{(\ell)}:\, \ell \in \mathcal{X}_p(\theta, R, \overrightarrow{d}) \right \} \cup \left \{ \overrightarrow v_i: 1 \leq i \leq K \right \} }
    T_n\left(M_I(\overrightarrow{v} )\right) \label{oneoftheseORA}
    \end{align}
\end{itemize}
where $M_I$ is defined in Remark \ref{vtowmapping} and the $\overrightarrow v_i$'s are defined in $(\ref{reasonablevis})$. Let $\overrightarrow v^{\star, n}$ denote the output of this modified Algorithm \ref{algorithmlocalmaxORA}. Define  
\begin{align}
    N_6 \coloneqq T_n\left( M_I(\overrightarrow v^{\star, n}) \right). \label{n6defo}
\end{align}
We will use $N_6$ as an approximation for $N_3$. Since one of the points in the argmax $(\ref{oneoftheseORA})$ is the asymptotically optimal solution for ORA and we have the convergence result from Theorem \ref{orasiml}, $N_6$ becomes an increasingly accurate approximation for $N_3$ as $n$ increases.

Also recall the definition 
\begin{align*}
    \theta = \frac{2^R - 1}{P \sigma^2}.
\end{align*}
For fixed $R$ and $P$ or a fixed $R$ only, we can interpret $\theta$ as the inverse average SNR of the channel (up to some scaling).

\begin{itemize}
    \item Figures \ref{asymptotic_PDS_vs_ORA} - \ref{finite_blocklength_PDS_vs_ORA_5000} compare the performance of PDS and ORA in the asymptotic regime and the finite blocklength regime ($n = 1000$ and $n = 5000$).
    \item Figures \ref{finite_blocklength_within_PDS_1000} and \ref{finite_blocklength_within_PDS_5000} show the gap between the asymptotic and finite blocklength performance for the PDS scheme for $n = 1000$ and $n = 5000$.
    \item Figures \ref{finite_blocklength_within_ORA_1000} and \ref{finite_blocklength_within_ORA_5000} show the gap between the asymptotic and finite blocklength performance for the ORA scheme for $n = 1000$ and $n = 5000$.
    \item Figures \ref{partition_asymptotic} and \ref{partition_finite} show the performance improvements for a multi-layered transmission using ORA in the asymptotic and finite blocklength regimes ($n = 5000$). 
\end{itemize}

\begin{figure}[H]
    \centering
\includegraphics[width=10cm]{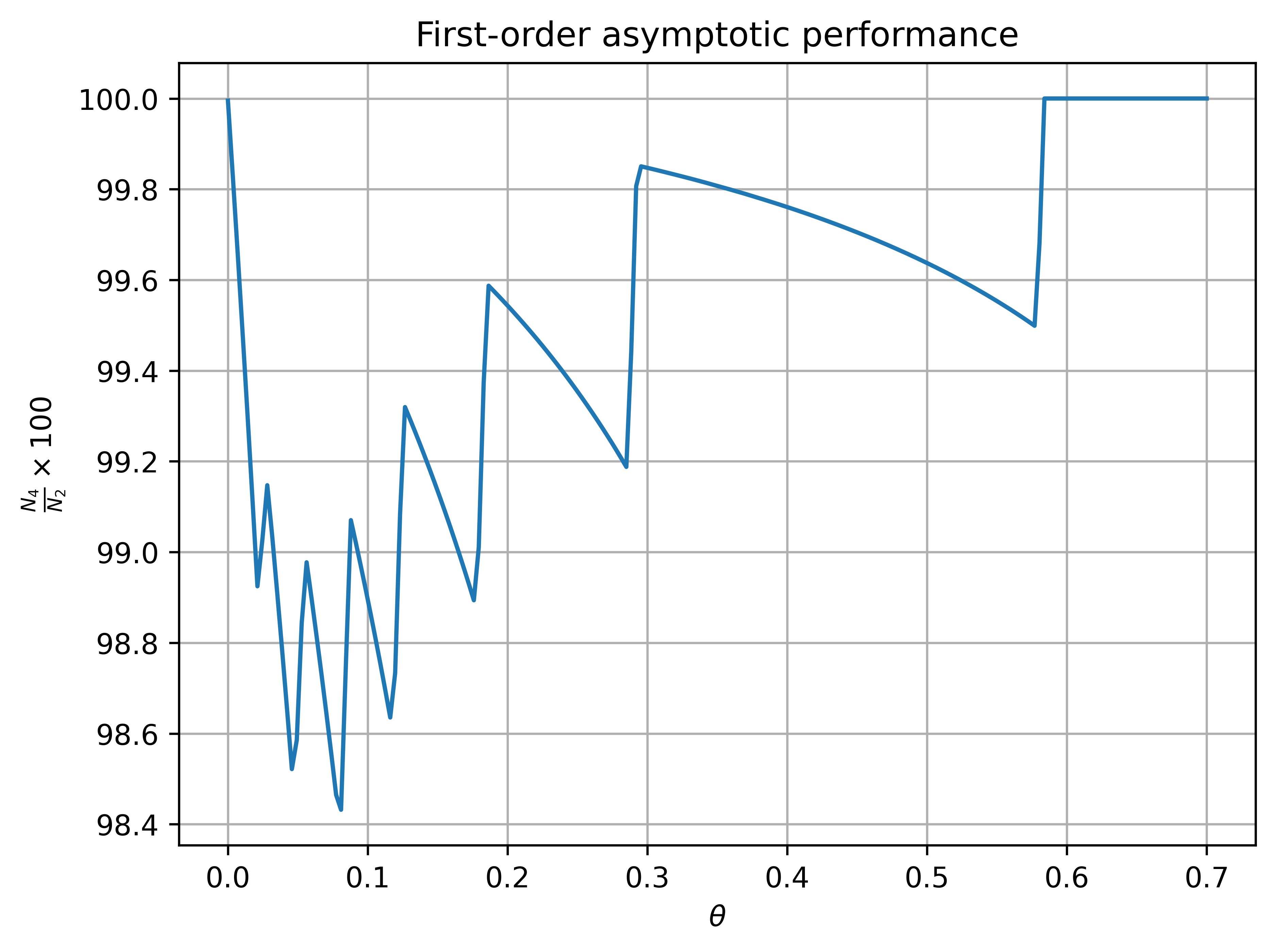}
\caption{For $R = 0.1$, importance vector $\overrightarrow d = \frac{1}{440}[100, 85, 70, 60, 50, 40, 25, 10]$, the percentage difference $ \frac{N_4}{N_2} \times 100$ is plotted against $\theta$, where $N_1$ is computed using Algorithm $\ref{algorithmglobalmax}$ and $N_4$ is computed using Algorithm \ref{algorithmlocalmaxORA}. The asymptotic performance of the ORA scheme is only slightly less than that of the PDS scheme. The troughs represent points where the PDS and ORA drop packets as the channel gets worse (see, e.g., Figure \ref{power_plots}).}
\label{asymptotic_PDS_vs_ORA}
\end{figure}

\begin{figure}[H]
    \centering
\includegraphics[width=10cm]{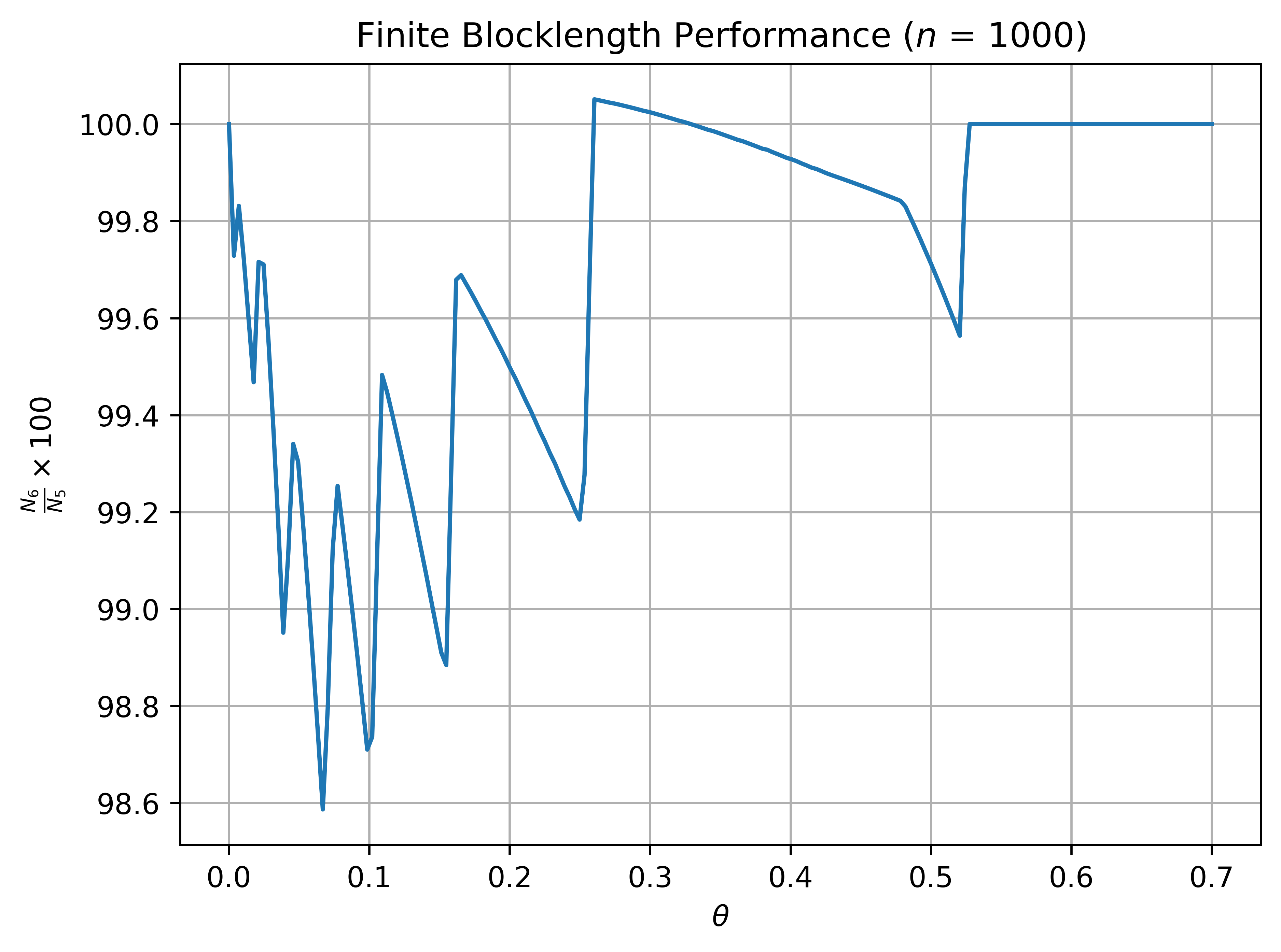}
\caption{For $n=1000$, $R = 0.1$, $P = 1$, importance vector $\overrightarrow d = \frac{1}{440}[100, 85, 70, 60, 50, 40, 25, 10]$, the percentage difference $ \frac{N_6}{N_5} \times 100$ is plotted against $\theta$. Similar to Figure \ref{asymptotic_PDS_vs_ORA}, the ORA performance even at finite blocklength is within $2\%$ of the PDS scheme.}
\label{finite_blocklength_PDS_vs_ORA_1000}
\end{figure}

\begin{figure}[H]
    \centering
\includegraphics[width=10cm]{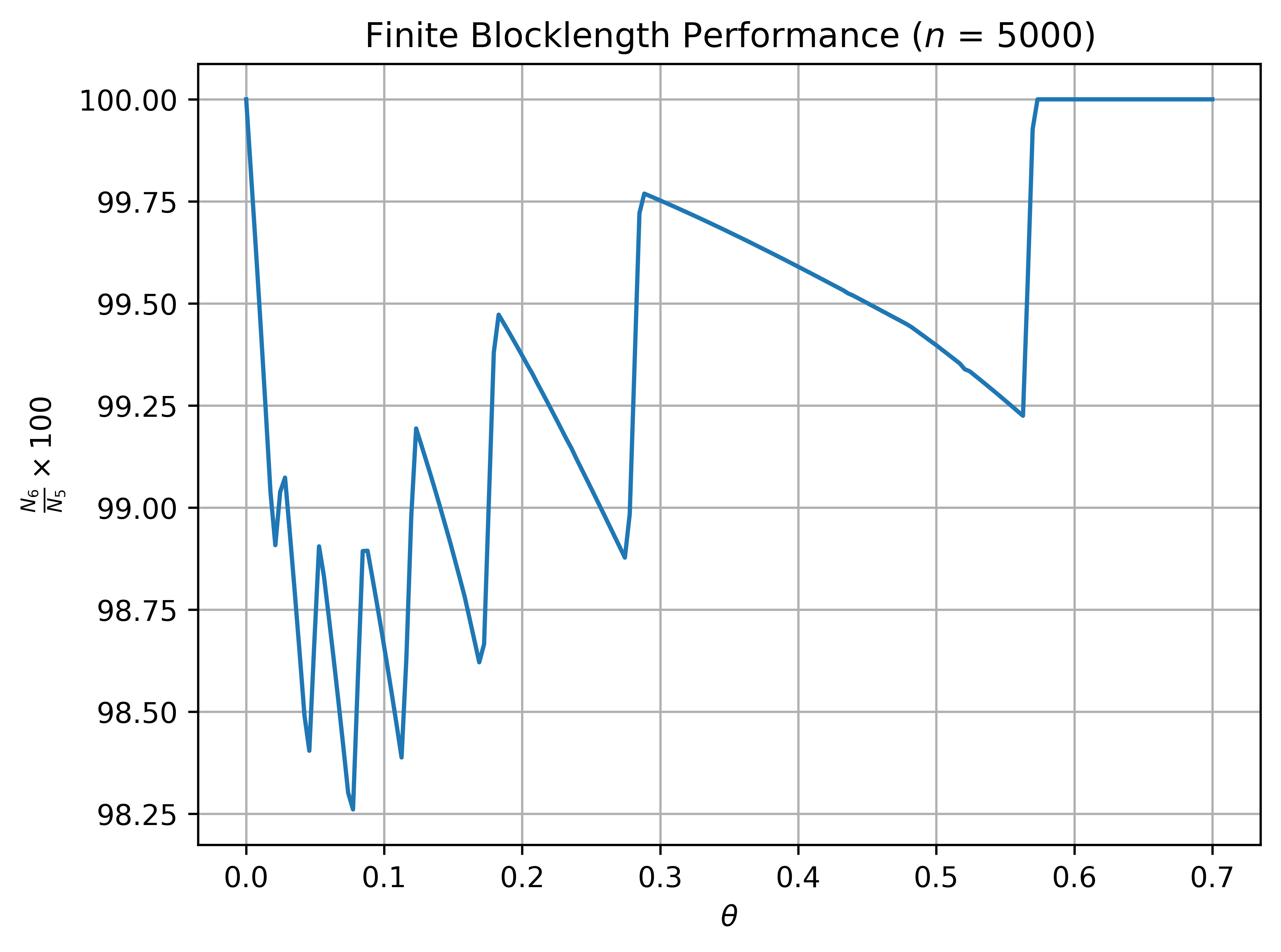}
\caption{For $n=5000$, $R = 0.1$, $P = 1$, importance vector $\overrightarrow d = \frac{1}{440}[100, 85, 70, 60, 50, 40, 25, 10]$, the percentage difference $ \frac{N_6}{N_5} \times 100$ is plotted against $\theta$.}
\label{finite_blocklength_PDS_vs_ORA_5000}
\end{figure}

A question which is not fully answered in this paper is how good $N_5$ and $N_6$ are as approximations for $N_1$ and $N_3$, respectively. Certainly, as we increase the blocklength from $n = 1000$ (Fig. \ref{finite_blocklength_PDS_vs_ORA_1000}) to $n = 5000$ (Fig. \ref{finite_blocklength_PDS_vs_ORA_5000}), the finite blocklength performance profile shows "convergence" to the asymptotic performance profile in Fig. \ref{asymptotic_PDS_vs_ORA}. To obtain additional insight, we next plot the percentage difference between the asymptotic and finite blocklength performance separately for the PDS and ORA schemes to get a sense of the speed of convergence to their respective asymptotic limits, where the said convergence was proved in Theorems \ref{finitetofirstorderreduction} and \ref{orasiml} for PDS and ORA, respectively. 

\begin{figure}[H]
    \centering
\includegraphics[width=10cm]{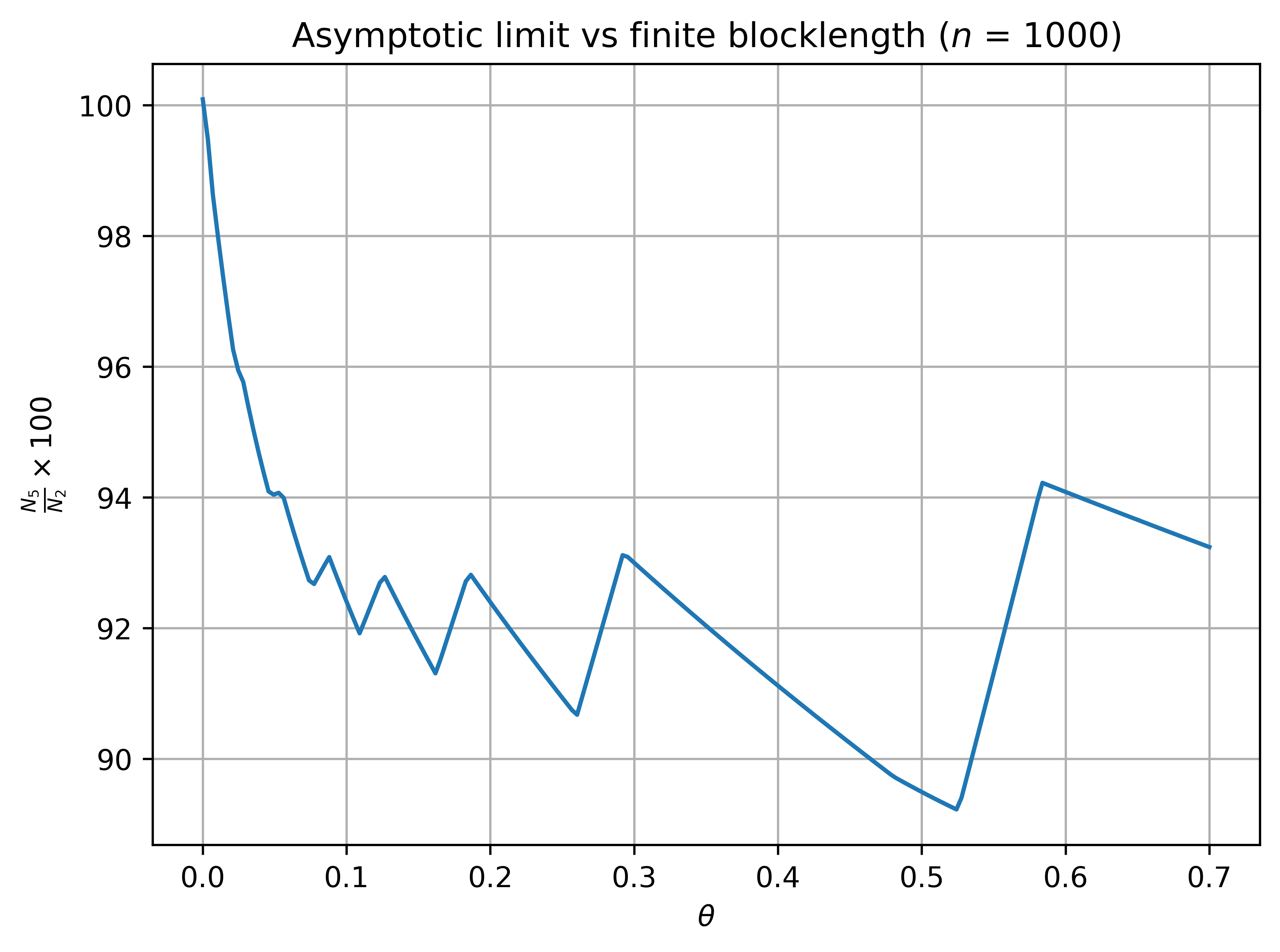}
\caption{For $n=1000$, $R = 0.1$, $P = 1$, importance vector $\overrightarrow d = \frac{1}{440}[100, 85, 70, 60, 50, 40, 25, 10]$, the percentage difference $ \frac{N_5}{N_2} \times 100$ is plotted against $\theta$ for the PDS scheme. At $n = 1000$, the finite blocklength performance is roughly within $10\%$ of the asymptotic limit and the backoff generally increases as the channel conditions get worse.}
\label{finite_blocklength_within_PDS_1000}
\end{figure}

\begin{figure}[H]
    \centering
\includegraphics[width=10cm]{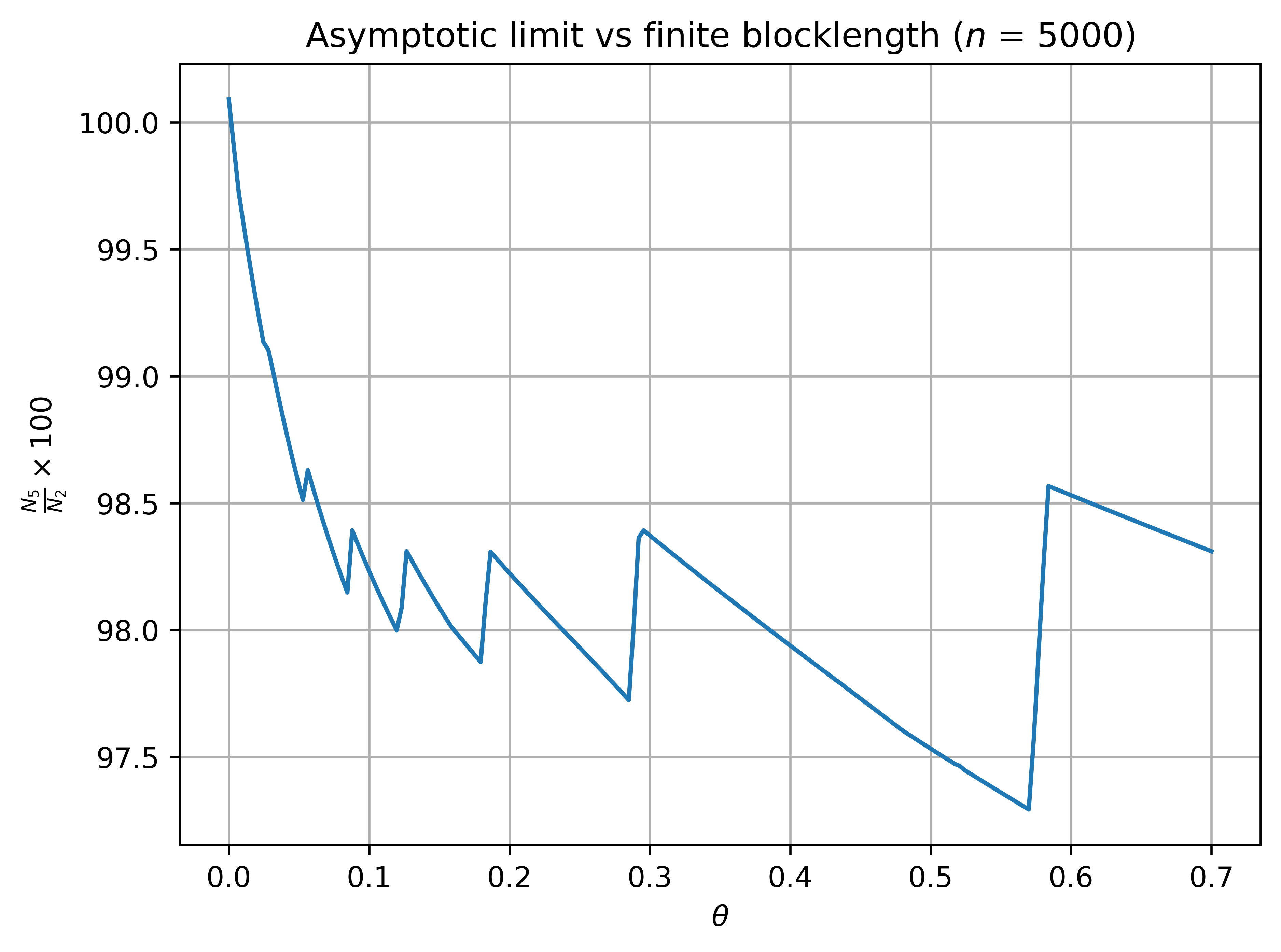}
\caption{For $n=5000$, $R = 0.1$, $P = 1$, importance vector $\overrightarrow d = \frac{1}{440}[100, 85, 70, 60, 50, 40, 25, 10]$, the percentage difference $ \frac{N_5}{N_2} \times 100$ is plotted against $\theta$ for the PDS scheme. At $n = 5000$, the finite blocklength performance is roughly within $2.5\%$ of the asymptotic limit and the backoff generally increases as the channel conditions get worse.}
\label{finite_blocklength_within_PDS_5000}
\end{figure}

\begin{figure}[H]
    \centering
\includegraphics[width=10cm]{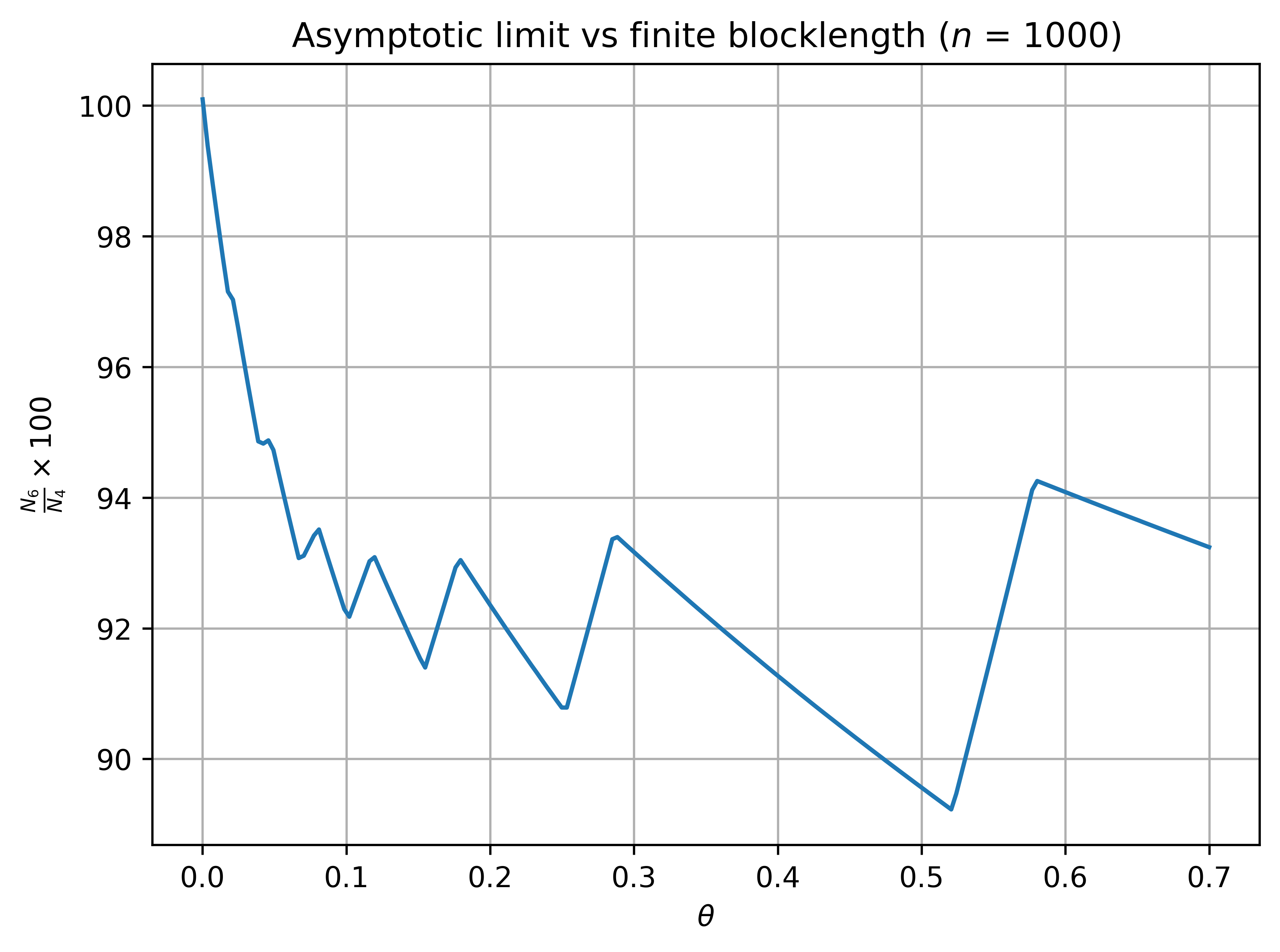}
\caption{For $n=1000$, $R = 0.1$, $P = 1$, importance vector $\overrightarrow d = \frac{1}{440}[100, 85, 70, 60, 50, 40, 25, 10]$, the percentage difference $ \frac{N_6}{N_4} \times 100$ is plotted against $\theta$ for the ORA scheme. At $n = 1000$, the finite blocklength performance is roughly within $10\%$ of the asymptotic limit and the backoff generally increases as the channel conditions get worse.}
\label{finite_blocklength_within_ORA_1000}
\end{figure}

\begin{figure}[H]
    \centering
\includegraphics[width=10cm]{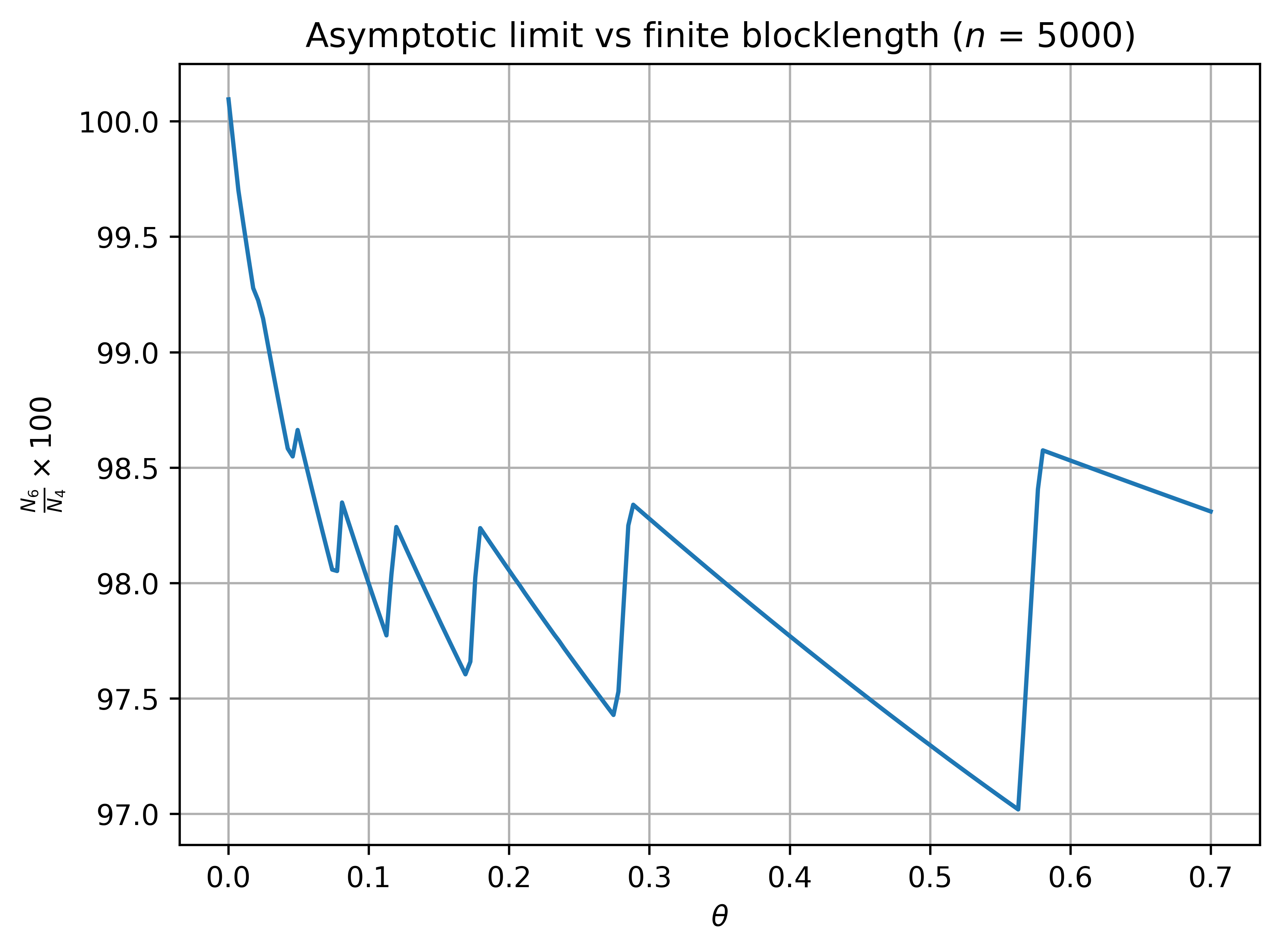}
\caption{For $n=5000$, $R = 0.1$, $P = 1$, importance vector $\overrightarrow d = \frac{1}{440}[100, 85, 70, 60, 50, 40, 25, 10]$, the percentage difference $ \frac{N_6}{N_4} \times 100$ is plotted against $\theta$ for the ORA scheme. At $n = 5000$, the finite blocklength performance is within $3\%$ of the asymptotic limit and the backoff generally increases as the channel conditions get worse.}
\label{finite_blocklength_within_ORA_5000}
\end{figure}

Note that the $10\%$ and $3\%$ backoff from the asymptotic limits in Figures \ref{finite_blocklength_within_PDS_1000} - \ref{finite_blocklength_within_ORA_5000} are upper bounds since we do not fully maximize the finite blocklength performance in $(\ref{n5defo})$ and $(\ref{n6defo})$. However, we expect these bounds to be reasonably tight by the following two heuristic arguments: 
\begin{itemize}
\item Consider the finite blocklength performance given in $(\ref{finitenPDSopt})$ or $(\ref{finitenORAopt})$. The asymptotic performance is then obtained by essentially approximating the error probability $\mathcal{E}(n, R, \rho)$ by a step function $\mathds{1}(R > C(\rho))$. The step function is a coarse approximation because the actual $\mathcal{E}(n, R, \rho)$ has a continuous transition from $0$ to $1$ as $R$ increases from $R < C(\rho)$ to $R > C(\rho)$. However, the maximization objective involves $\mathbb{E}_\gamma\left [ \mathcal{E}(n, R, \rho)\right]$, where the channel gain $\gamma$ has a continuous probability distribution. Therefore, the discontinuity of the step approximation is killed by the expectation, e.g., $\mathbb{E}_\gamma\left [ \mathds{1}(\gamma < \tau) \right] = \mathbb{P}(\gamma < \tau)$ which is a smooth function of $\tau$. Since the channel gain is unknown at the transmitter, the transmitter optimizes the power split $\overrightarrow \alpha$ or the resource split $\overrightarrow w$ with respect to $\mathbb{E}_\gamma\left [ \mathcal{E}(n, R, \rho)\right]$, where $\mathbb{E}_\gamma\left [ \mathcal{E}(n, R, \rho)\right]$ mimics a continuous transition from $0$ to $1$ characteristic of the finite blocklength regime, even if an asymptotic step function approximation is used for $\mathcal{E}(n, R, \rho)$. Therefore, the resulting asymptotic solutions $\overrightarrow \alpha^\star$ and $\overrightarrow w^\star$ would tend to perform well in the finite blocklength regime. 
    \item It is reported in \cite[p. 4233]{6802432} that communication strategies that are optimized for the (asymptotic) outage probability $\mathbb{E}_\gamma\left [ \mathds{1}(\gamma < \tau) \right] = \mathbb{P}(\gamma < \tau)$ tend to perform well at finite blocklength. This follows from the result \cite[(4)]{6802432} that for quasi-static fading channels, the maximum achievable rate $R^*(n, \epsilon)$ at blocklength $n$ and with block error probability $\epsilon$ converges much faster to the asymptotic limit than the typical $1/\sqrt{n}$ convergence rate that applies for many non-fading channels \cite{ppv}. 
\end{itemize}

In our last numerical experiment, we evaluate the performance improvement of the ORA scheme for increasing values of $K$. Note that $$N_4 = N_4\left(R, \overrightarrow d, K, P, \sigma^2 \right)$$ is a function of $R, \overrightarrow d, K, P$ and $\sigma^2$. Consider a fixed importance vector $\overrightarrow d$ of length $K$, where $K$ is even. Then for $i=K/2, K/4, \ldots, 1$, let $\overrightarrow d^{(i)}$ denote the importance vector obtained from $\overrightarrow d$ by repeatedly aggregating adjacent entries in pairs until the vector has length $i$. For example, if $\overrightarrow d = [0.5, 0.25, 0.2, 0.05]$, then $\overrightarrow d^{(4)} = \overrightarrow d$, $\overrightarrow d^{(2)} = [0.75, 0.25]$ and $\overrightarrow d^{(1)} = [1]$.

We then plot $N_4$ versus $\sigma^2$ with the remaining parameters fixed as follows: 
\begin{align*}
    &N_4\left(R, \overrightarrow d^{(16)}, 16, P, \sigma^2 \right)\\
    &N_4\left(2R, \overrightarrow d^{(8)}, 8, P, \sigma^2 \right)\\
    &N_4\left(4R, \overrightarrow d^{(4)}, 4, P, \sigma^2 \right)\\
    &N_4\left(8 R, \overrightarrow d^{(2)}, 2, P, \sigma^2 \right)\\
    &N_4\left(16 R, \overrightarrow d^{(1)}, 1, P, \sigma^2 \right), 
\end{align*}
where $R = 0.1$, $P = 1$ and $\overrightarrow d^{(16)} = \frac{1}{2560} [1000, 300, 250, 200, 150, 110, 100, 90, 80, 70, 60, 50, 40, 30, 20, 10]$. Figure \ref{partition_asymptotic} shows these five plots whereas Figure \ref{partition_finite} shows similar plots for $N_6$ for blocklength $n = 5000$. The performance improvement is more significant at poor channel conditions; recall that $\sigma^2$ is the expected channel power gain. There is also a diminishing marginal improvement as the number of blocks $K$ increases. Plots for $N_2$ and $N_5$ for the PDS scheme are similar and hence omitted.  

\begin{figure}[H]
    \centering
\includegraphics[width=10cm]{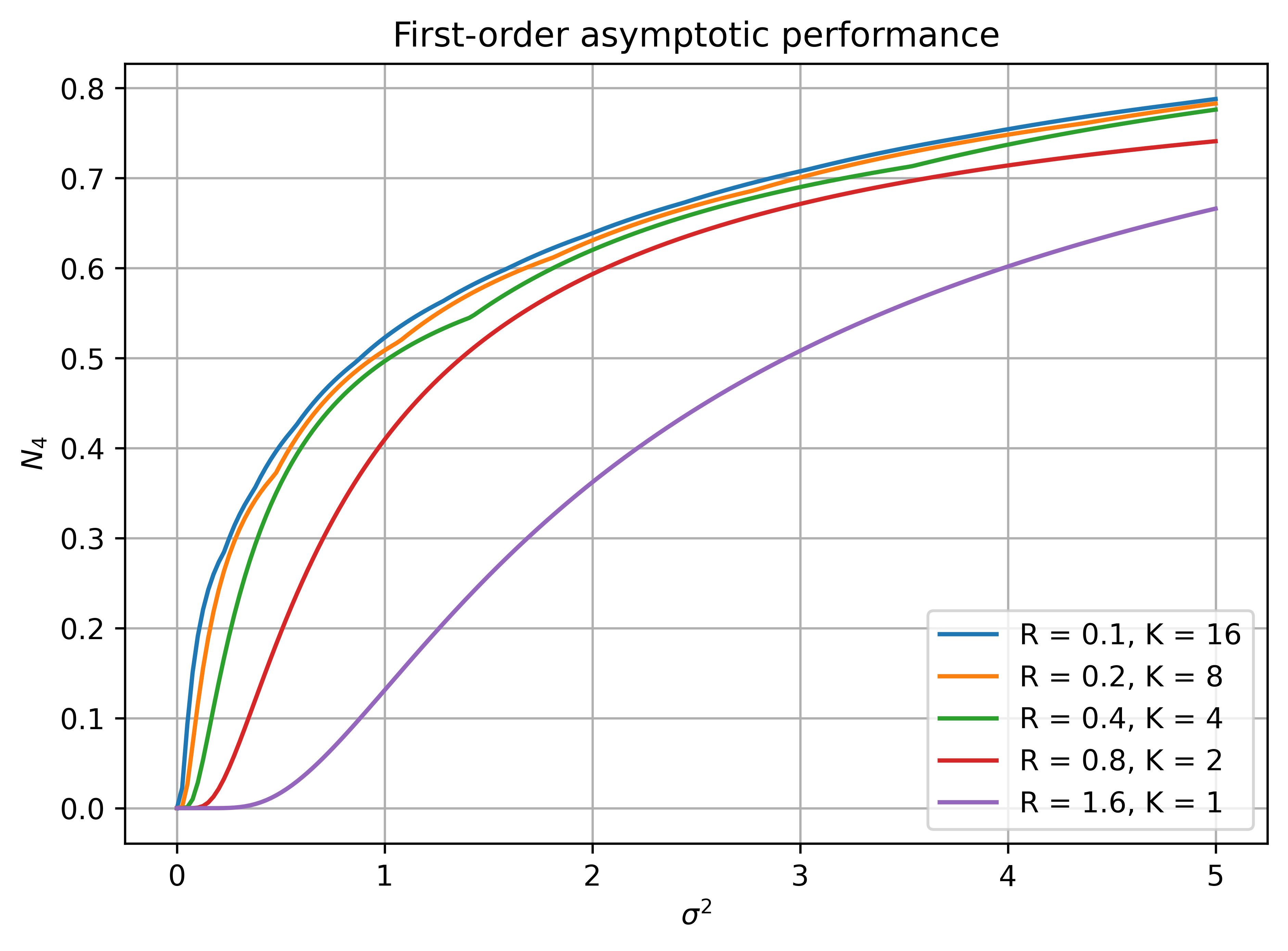}
\caption{}
\label{partition_asymptotic}
\end{figure}

\begin{figure}[H]
    \centering
\includegraphics[width=10cm]{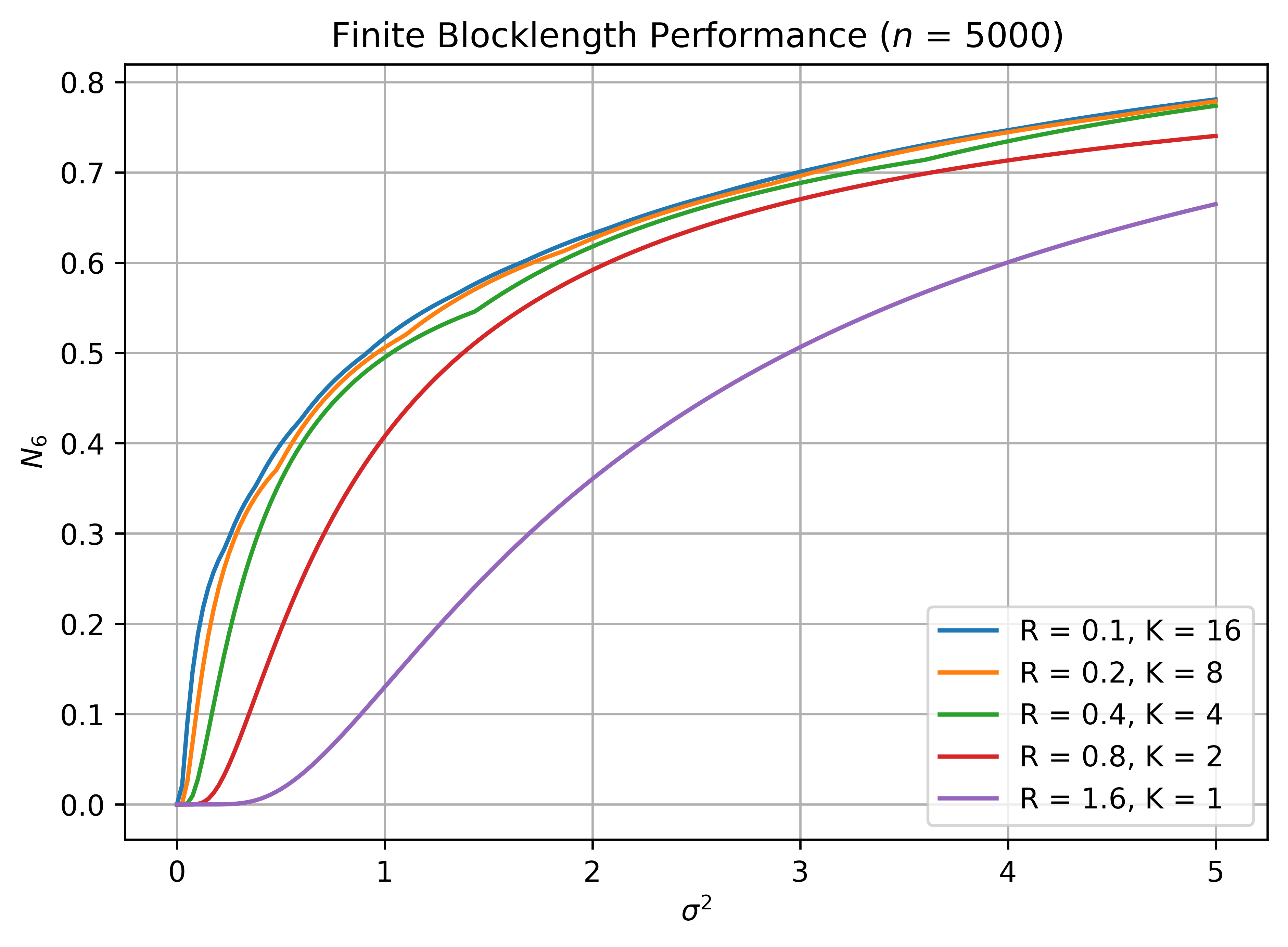}
\caption{}
\label{partition_finite}
\end{figure}

\section{Proof of Theorems \ref{active_layer_theorem}, \ref{global_maximizer_thm} and \ref{local_maximizer_thm} \label{combined_proof}}

In addition to the optimality conditions in Corollary $\ref{optxproperties}$, the KKT conditions are also necessary for an optimal $\overrightarrow x^\star$ in $(\ref{b3})$ since the Linear
Independence Constraint Qualification (LICQ) \cite[Definition 12.4]{nonlinopt} holds at any optimal $\overrightarrow x^\star$. Specifically, the gradient vectors corresponding to the constraint functions in $\mathcal{S}_K$ are given by 
\begin{align}
    \overrightarrow{e}_1, \ldots, \overrightarrow{e}_K \text{ and } \left(1, 2^R, \ldots, 2^{R(K-1)}\right)^T,  \label{grad_vects}
\end{align}
where $\overrightarrow{e}_i \in \mathbb{R}^K$ is the $i$th standard unit vector. From Corollary \ref{optxproperties}, the constraint $x_1 \geq 0$ is inactive for an optimal $\overrightarrow x^\star$. Since the set of vectors in $(\ref{grad_vects})$ excluding $\overrightarrow{e}_1$ is linearly independent, LICQ holds at any maximizer $\overrightarrow x^\star$ in $(\ref{b3})$, making the KKT conditions necessary \cite[Theorem 12.1]{nonlinopt}. A sufficient condition for a point $\overrightarrow x$ to be a strict local maximizer in $(\ref{b3})$ will be given later in the proof. The mathematical tools used in this section are described in \cite[Sections 12.3 \& 12.5]{nonlinopt}. 

We first rewrite $(\ref{b3})$ as 
\begin{align}
    &\min_{\overrightarrow x \in \mathcal{S}_K} \overline{G}(\overrightarrow x), \text{ where } \label{maxtomin}\\
    & \overline{G}(\overrightarrow x) = - \sum_{i=1}^K g(x_i) d_i.
\end{align}
Note that $\overline{G}(\overrightarrow x)$ is twice continuously differentiable over $\mathcal{S}_K$. Specifically, for $x > 0$, 
\begin{align*}
    g'(x) &= \frac{\theta}{x^2} e^{-\theta/x},\\
    g''(x) &= \frac{\theta(\theta - 2x)}{x^4} e^{-\theta/x}
\end{align*}
with 
\begin{align*}
    g'(0) &= \lim_{x \downarrow 0} g'(x) = 0,\\
    g''(0) &= \lim_{x \downarrow 0} g''(x) = 0.
\end{align*}
Also note that 
\begin{itemize}
    \item $g'(x) > 0$ for all $x > 0$, 
    \item $g''(x) > 0$ for $0 < x < \theta/2$ (convex)
    \item $g''(x) < 0$ for $x > \theta/2$ (concave)
\end{itemize}

Define the Lagrangian as 
\begin{align*}
    \mathcal{L}(\overrightarrow x, \lambda, \overrightarrow \mu ) &= -\sum_{i=1}^K g(x_i) d_i + \lambda\left(  \sum_{i=1}^K (2^R)^{i-1} x_i - 1\right) - \sum_{i=1}^K  \mu_i x_i, 
\end{align*}
where $\lambda \in \mathbb{R}$ and $\mu_i \geq 0$. The KKT conditions \cite[Theorem 12.1]{nonlinopt} and the optimality properties from Corollary \ref{optxproperties} imply the following: for any maximizer in $(\ref{b3})$, there exist Lagrange multipliers $(\lambda, \overrightarrow \mu)$ such that 
\begin{align}
    \nabla_{x_i} \mathcal{L}(\overrightarrow x, \lambda, \overrightarrow \mu ) &= -   \frac{\theta}{x_i^2} e^{-\theta/x_i} d_i + \lambda (2^R)^{i-1} - \mu_i = 0 \label{stationarity}\\
    \mu_i x_i &= 0 \label{comp_slack}\\
    \mu_i &\geq 0 \label{dual_feasibility}\\
    \sum_{i=1}^K (2^R)^{i-1} x_i &= 1 \label{eq_const}\\
    1 &\geq x_1 \geq \cdots \geq x_K \geq 0, \label{monotonicity_x}     
\end{align}
where $(\ref{stationarity}), (\ref{comp_slack})$ and $(\ref{dual_feasibility})$ hold for all $i \in\{1, \ldots, K \}$.

\begin{lemma}
For any point $(\overrightarrow x, \lambda, \overrightarrow \mu)$ satisfying $(\ref{stationarity}) - (\ref{monotonicity_x})$, we have $\lambda > 0$. 
    \label{lambda_pos_res}
\end{lemma}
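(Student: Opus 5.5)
The plan is to use the equality constraint $(\ref{eq_const})$ to find an index with $x_i>0$, and then read off $\lambda$ from the stationarity condition $(\ref{stationarity})$ at that index.

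\textbf{Step 1: some coordinate is positive.} By $(\ref{eq_const})$, $\sum_{i=1}^K (2^R)^{i-1}x_i = 1$, and every $x_i \geq 0$ by $(\ref{monotonicity_x})$. Hence at least one coordinate is strictly positive. By the monotonicity in $(\ref{monotonicity_x})$, $x_1 \geq x_i$ for all $i$, so in fact $x_1 > 0$. This uses only the listed conditions, not the stronger optimality properties of Corollary \ref{optxproperties}.

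\textbf{Step 2: solve for $\lambda$ at $i=1$.} Since $x_1>0$, complementary slackness $(\ref{comp_slack})$ gives $\mu_1 = 0$. Stationarity $(\ref{stationarity})$ at $i=1$, where $(2^R)^{0}=1$, then reads
\begin{align*}
\lambda = \frac{\theta}{x_1^2}e^{-\theta/x_1} d_1 = g'(x_1)\, d_1 .
\end{align*}
We have $\theta = (2^R-1)/(P\sigma^2) > 0$ because $R>0$. Also $d_1>0$, $x_1\in(0,1]$, and the exponential is strictly positive. So the right-hand side is strictly positive, and therefore $\lambda>0$.

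\textbf{Possible obstacle.} The only subtle point is a degenerate case in which $x_1$ is zero or the derivative formula is invoked at $0$, where $g'(0)=0$ would give $\lambda = 0$. Step 1 rules this out: the equality constraint forces $x_1>0$, so we stay in the region where the explicit formula for $g'$ applies and is strictly positive.
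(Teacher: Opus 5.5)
Your proof is correct and is essentially the paper's argument. The paper fixes any index $j$ with $x_j \in (0,1]$ and observes that stationarity at $j$ cannot hold if $\lambda \le 0$, since $\mu_j \ge 0$ and $g'(x_j)d_j > 0$. You instead specialize to $j=1$ via monotonicity and use complementary slackness to set $\mu_1 = 0$, which is only a cosmetic difference.
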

\begin{IEEEproof}
Fix $j \in \{1, \ldots, K \}$ such that $x_j \in (0, 1]$. If $\lambda \leq 0$, then $(\ref{stationarity})$ cannot hold for $i = j$.
\end{IEEEproof}

\begin{lemma}
    For any minimizer $\overrightarrow x$ in $(\ref{maxtomin})$, there exists a unique $(\lambda, \overrightarrow \mu)$ such that $(\overrightarrow x, \lambda, \overrightarrow \mu)$ satisfies the conditions $(\ref{stationarity}) - (\ref{monotonicity_x})$. 
    \label{unique_mult}
\end{lemma}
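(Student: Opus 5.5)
Existence follows directly from the LICQ argument given at the beginning of this section. By Corollary \ref{optxproperties}, any minimizer $\overrightarrow x$ in $(\ref{maxtomin})$ has $x_1 > 0$, so the constraint $x_1 \geq 0$ is inactive. The remaining constraint gradients in $(\ref{grad_vects})$ are linearly independent, so LICQ holds and \cite[Theorem 12.1]{nonlinopt} supplies multipliers $(\lambda, \overrightarrow \mu)$ with $\mu_i \geq 0$. These satisfy $(\ref{stationarity})$--$(\ref{eq_const})$. Condition $(\ref{monotonicity_x})$ is a property of $\overrightarrow x$ alone: it is item 1 of Corollary \ref{optxproperties} together with $x_1 \leq 1$, which holds on $\mathcal{S}_K$.

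For uniqueness, I would use the standard fact that under LICQ the KKT multipliers are unique, but I would prove it explicitly for this constraint structure. Suppose $(\lambda, \overrightarrow \mu)$ and $(\lambda', \overrightarrow \mu')$ both satisfy the conditions at the same $\overrightarrow x$.
\begin{itemize}
\item Since $x_1 > 0$, complementary slackness $(\ref{comp_slack})$ gives $\mu_1 = \mu_1' = 0$.
\item Stationarity $(\ref{stationarity})$ at $i = 1$ then reads $\lambda = \frac{\theta}{x_1^2} e^{-\theta/x_1} d_1 = \lambda'$, so $\lambda$ is determined by $\overrightarrow x$.
\item For every other $i$, stationarity gives $\mu_i = \lambda (2^R)^{i-1} - g'(x_i) d_i$, which is then fixed as well. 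Here $g'(0)=0$ by continuous extension, so for $x_i = 0$ this reads $\mu_i = \lambda (2^R)^{i-1}$.
\end{itemize}
Hence $\overrightarrow \mu = \overrightarrow \mu'$.

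Equivalently, the difference of the two stationarity equations is $(\lambda - \lambda')\,(1, 2^R, \ldots, 2^{R(K-1)})^T - \sum_{i \in A} (\mu_i - \mu_i')\, \overrightarrow e_i = 0$, where $A = \{i : x_i = 0\}$ is the active set. Since $1 \notin A$, linear independence forces all the coefficients to vanish.

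The only delicate point is that the first-order conditions require $g$ to be differentiable at $x_i = 0$. This holds by the continuous extension $g'(0) = 0$ noted earlier, so $\overline{G}$ is $C^1$ on $\mathcal{S}_K$ and the KKT theorem applies. As a consistency check, Lemma \ref{lambda_pos_res} gives $\lambda > 0$, which agrees with the explicit formula for $\lambda$.
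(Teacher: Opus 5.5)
Your proposal is correct and follows the paper's proof. The paper also uses LICQ, which holds because $x_1 > 0$ makes the constraint $x_1 \geq 0$ inactive, to get the KKT multipliers from \cite[Theorem 12.1]{nonlinopt}, and it cites the standard fact that multipliers are unique under LICQ \cite[p.~321]{nonlinopt}; your explicit derivation ($\mu_1 = 0$, then $\lambda = g'(x_1) d_1$, then each $\mu_i$ read off from stationarity) simply verifies that cited fact directly for this constraint structure.
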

\begin{IEEEproof}
As mentioned in the discussion following $(\ref{grad_vects})$, LICQ holds at any maximizer in $(\ref{b3})$ (equivalently, minimizer in $(\ref{maxtomin})$). Hence, for any given maximizer in $(\ref{b3})$,  
\begin{itemize}
    \item KKT conditions are satisfied \cite[Theorem 12.1]{nonlinopt}, and 
    \item its associated Lagrange mulitpliers are unique \cite[p. 321]{nonlinopt}. 
\end{itemize} 
\end{IEEEproof}

By defining $y_i = \theta/x_i$ so that each $y_i \geq \theta$, $(\ref{stationarity})$ can be rewritten as
\begin{align}
    \psi(y_i) &= \frac{\lambda \theta (2^R)^{i-1}}{d_i} - \frac{\theta \mu_i}{d_i}\\
    \psi(y_i) &= c_i(\lambda) - \frac{\theta \mu_i}{d_i}, \label{m08}
\end{align}
where $\psi : (0, \infty] \to [0, 4 e^{-2}]$ is defined as $\psi(y) \coloneqq y^2 e^{-y}$ and $c_i(\lambda) \coloneqq \frac{\lambda \theta (2^R)^{i-1}}{d_i}$. Key facts about $\psi(y)$:
\begin{itemize}
    \item $\psi(0^+) = 0$ and $\psi(y) \to 0$ as $y \to \infty$.
    \item $\psi(y)$ is unimodal, increasing for $y \leq 2$ and decreasing for $y \geq 2$.
    \item $\psi'(y) = y e^{-y}(2-y)$; therefore, $\psi(y)$ attains a maximum at $y = 2$ with $\psi(2) = 4e^{-2}$. 
\end{itemize}

\textbf{Case 1: $\theta \geq 2$}

In this case, $\psi(y_i) \in [0, \theta^2 e^{-\theta}]$ for all $i$. Hence, if $c_t(\lambda) > \theta^2 e^{-\theta}$ for some $t \in \{1, \ldots, K \}$, then $(\ref{m08})$ can hold only if $\mu_t > 0$ in which case $y_t = +\infty$ or $x_t = 0$ by the complementary slackness condition in $(\ref{comp_slack})$. Hence, if $c_t(\lambda) > \theta^2 e^{-\theta}$, $(\ref{m08})$ simplifies to $\mu_t = \lambda 2^{R(t-1)}$. Given $\lambda > 0$ from Lemma \ref{lambda_pos_res}, $c_i(\lambda)$ is increasing in $i$ since $d_i$ is decreasing in $i$. Hence, if $c_t(\lambda) > \theta^2 e^{-\theta}$ for some $t$, then $c_j(\lambda) > \theta^2 e^{-\theta}$ for all $j \geq t$, leading to $x_j = 0$ for all $j \geq t$. This is consistent with the structure of an optimal $\overrightarrow x$ as established in Corollary $\ref{optxproperties}.(\ref{opt_prop_3x})$. On the other hand, if $0 < c_i(\lambda) \leq \theta^2 e^{-\theta}$, then $0 \leq \mu_i \leq \lambda 2^{R(i-1)}$ in order to satisfy $(\ref{m08})$. If $\mu_i = \lambda 2^{R(i-1)}$, then $x_i = 0$. If $\mu_i < \lambda 2^{R(i-1)}$, then there is a finite solution to $(\ref{m08})$ so that $\mu_i = 0$ and $(\ref{m08})$ simplifies to 
\begin{align}
    \psi(y_i) &= c_i(\lambda).  \label{sol_ybis}
\end{align}
Using Corollary \ref{optxproperties}, an optimal $\overrightarrow x$ must satisfy 
\begin{align}
    x_i \begin{cases}
        > 0 \text{ for } i \in \{1, \ldots, \ell \}\\
        = 0 \text{ otherwise}
    \end{cases} 
    \label{bo2nt}
\end{align}
for some integer $\ell \in \{1, \ldots, K \}$. Hence, we must have 
\begin{align*}
    \ell &= \max \{1 \leq i \leq K: \mu_i = 0 \}\\
    &\leq \max \left \{1 \leq i \leq K : c_i(\lambda) \leq \theta^2 e^{-\theta} \right \}. 
\end{align*}
We have 
\begin{align}
    0 < c_1(\lambda) < \cdots  < c_\ell(\lambda) \leq \theta^2 e^{-\theta} \label{c_m2ono}
\end{align}
and, from Corollary $\ref{optxproperties}.(\ref{opt_prop_4x})$,  
\begin{align}
    1\geq x_1 > x_2 > \cdots > x_\ell > 0 \quad \iff \quad \theta \leq y_1 < \cdots < y_\ell < \infty. \label{mo2no_y}
\end{align}
Since $\psi$ is strictly decreasing over $[\theta, \infty)$ for $\theta \geq 2$, there is a unique solution $y_i^+ \in [\theta, \infty)$ of $\psi(y) = c_i(\lambda)$ for each $1 \leq i \leq \ell$. From $(\ref{c_m2ono})$ and the fact that $\psi(\cdot)$ is decreasing over $[2, \infty)$, we have 
\begin{align}
    \theta \leq y_\ell^+ < \cdots < y_1^+ < \infty. \label{order2_of_sols} 
\end{align}
From $(\ref{order2_of_sols})$, it is clear that in order to satisfy $(\ref{mo2no_y})$, we must have $\ell=1$. Hence, for $\theta \geq 2$, 
\begin{itemize}
    \item $\mu_1 = 0$ and $\mu_i = \lambda 2^{R(i-1)}$ for all $i \geq 2$,
    \item $c_1(\lambda) = \theta^2 e^{-\theta} \iff \lambda = d_1 \theta e^{-\theta}$,
    \item $x_1 = 1$ and $x_i = 0$ for all $i \geq 2$. 
\end{itemize}
From \cite[Theorem 12.6]{nonlinopt}, a sufficient condition for $\overrightarrow x = (1, 0, \ldots, 0)$ to be a minimizer in $(\ref{maxtomin})$ is $\overrightarrow w^T \nabla^2_{xx} \mathcal{L}(\overrightarrow x, \lambda, \overrightarrow \mu) \overrightarrow w > 0$ for all $\overrightarrow w \neq 0$ in the critical cone \cite[p. 330]{nonlinopt} for this $\overrightarrow x$. But it can be checked that the critical cone at this point is just the zero vector so $\overrightarrow x = (1, 0, \ldots, 0)^T$ trivially satisfies the sufficient condition for being a minimizer in $(\ref{maxtomin})$.

\textbf{Case 2: $\theta < 2$}

In this case, $\psi(y_i) \in [0, 4e^{-2}]$. If $c_t(\lambda) > 4 e^{-2}$ for some $t \in \{1, \ldots, K \}$, then $(\ref{m08})$ can hold only if $\mu_t > 0$ in which case $x_t = 0$ by the complementary slackness condition in $(\ref{comp_slack})$. Hence, if $c_t(\lambda) > 4 e^{-2}$, then $(\ref{m08})$ simplifies to $\mu_t = \lambda 2^{R(t-1)}$. Since $c_i(\lambda)$ is increasing in $i$, if $c_t(\lambda) > 4 e^{-2}$ for some $t$, then $c_j(\lambda) > 4 e^{-2}$ for all $j \geq t$, leading to $x_j = 0$ for all $j \geq t$. On the other hand, if $0 < c_i(\lambda) \leq 4 e^{-2}$, then $0 \leq \mu_i \leq \lambda 2^{R(i-1)}$ in order to satisfy $(\ref{m08})$. If $\mu_i = \lambda 2^{R(i-1)}$, then $y_i = +\infty$ is the solution to $(\ref{m08})$, giving us $x_i = 0$. If $\mu_i < \lambda 2^{R(i-1)}$, then there is a finite solution to $(\ref{m08})$ so that $\mu_i = 0$ and $(\ref{m08})$ simplifies to 
\begin{align}
    \psi(y_i) &= c_i(\lambda).  \label{sol_yis}
\end{align}
This argument also establishes strict complementarity, i.e., either $\mu_i = 0$ or $x_i = 0$, but not both.

Using Corollary \ref{optxproperties}, an optimal $\overrightarrow x$ must satisfy 
\begin{align}
    x_i \begin{cases}
        > 0 \text{ for } i \in \{1, \ldots, \ell \}\\
        = 0 \text{ otherwise}
    \end{cases} 
    \label{bont}
\end{align}
for some integer $\ell \in \{1, \ldots, K \}$. Hence, we have 
\begin{align*}
    \ell &= \max \{1 \leq i \leq K: \mu_i = 0 \}\\
    &= \max \{1 \leq i \leq K: x_i > 0  \}\\
    &\leq \max \left \{1 \leq i \leq K : c_i(\lambda) \leq 4 e^{-2} \right \}. 
\end{align*}
We have 
\begin{align}
    0 < c_1(\lambda) < \cdots  < c_\ell(\lambda) \leq 4 e^{-2} \label{c_mono}
\end{align}
and, from Corollary $\ref{optxproperties}.(\ref{opt_prop_4x})$,  
\begin{align}
    1\geq x_1 > x_2 > \cdots > x_\ell > 0 \quad \iff \quad \theta \leq y_1 < \cdots < y_\ell < \infty \label{mono_y}
\end{align}

\begin{lemma}
Let $\theta \in (0, 2)$. Then for any point $(\overrightarrow x, \lambda, \overrightarrow \mu)$ satisfying $(\ref{stationarity}) - (\ref{monotonicity_x})$ and $(\ref{bont})$, where $\overrightarrow x$ is also\footnote{Every maximizer in $(\ref{b3})$ satisfies $(\ref{stationarity}) - (\ref{monotonicity_x})$ and $(\ref{bont})$ for a unique $(\lambda, \overrightarrow \mu)$, but not every $(\overrightarrow x, \lambda, \overrightarrow \mu)$ satisfying $(\ref{stationarity}) - (\ref{monotonicity_x})$ and $(\ref{bont})$ implies that $\overrightarrow x$ is a maximizer in $(\ref{b3})$.} a maximizer in $(\ref{b3})$, we must have $\theta^2 e^{-\theta} \leq c_1(\lambda) < \cdots < c_\ell(\lambda) \leq 4 e^{-2}$. In particular,
\begin{align}
d_1 \theta e^{-\theta} \leq \lambda \leq \frac{4 e^{-2} d_\ell}{\theta 2^{R(\ell-1)}}.    
\end{align} 
    \label{lambda_lwbd}
\end{lemma}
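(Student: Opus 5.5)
The plan is to read both bounds off the relation $\psi(y_i)=c_i(\lambda)$ in $(\ref{sol_yis})$ for the active indices $1\le i\le \ell$. We will use the ordering $(\ref{mono_y})$ of the $y_i$ from Corollary \ref{optxproperties} together with the unimodality of $\psi$. Maximality of $\overrightarrow x$ enters only through Corollary \ref{optxproperties}, which gives the support structure $(\ref{bont})$ and the strict ordering $x_1>\cdots>x_\ell>0$. We already know from the Case 2 discussion that $\mu_i=0$ for $i\le\ell$.

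\textbf{Step 1 (ordering of the $c_i$ and the upper bound).} By Lemma \ref{lambda_pos_res}, $\lambda>0$. Since $d_i$ is strictly decreasing and $2^{R(i-1)}$ is increasing, we get $c_1(\lambda)<\cdots<c_\ell(\lambda)$. For $i\le\ell$ the value $\psi(y_i)=c_i(\lambda)$ lies in the range of $\psi$, so $c_\ell(\lambda)\le 4e^{-2}$. This is exactly $(\ref{c_mono})$, and unwinding $c_\ell(\lambda)=\lambda\theta 2^{R(\ell-1)}/d_\ell$ gives $\lambda\le 4e^{-2}d_\ell/(\theta 2^{R(\ell-1)})$.

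\textbf{Step 2 (the lower bound $c_1(\lambda)\ge\theta^2e^{-\theta}$).} We split on $\ell$.
\begin{itemize}
\item If $\ell=1$, the equality constraint $(\ref{eq_const})$ forces $x_1=1$. Then $y_1=\theta$ and $c_1(\lambda)=\psi(\theta)=\theta^2e^{-\theta}$.
\item If $\ell\ge2$, the two orderings are $y_1<y_2$ (from $(\ref{mono_y})$) and $\psi(y_1)=c_1(\lambda)<c_2(\lambda)=\psi(y_2)$. If $y_1\ge2$, then $\psi$ is strictly decreasing on $[y_1,\infty)$, which would give $\psi(y_2)<\psi(y_1)$, a contradiction. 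Hence $y_1\in[\theta,2)$, using $y_1=\theta/x_1\ge\theta$ since $x_1\le1$. Because $\psi$ is increasing on $(0,2]$ and $\theta<2$, we obtain $c_1(\lambda)=\psi(y_1)\ge\psi(\theta)=\theta^2e^{-\theta}$.
\end{itemize}
In both cases, dividing through by $\theta/d_1$ yields $\lambda\ge d_1\theta e^{-\theta}$.

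The only delicate point is Step 2 for $\ell\ge2$. There one must exclude the possibility that $y_1$ lies on the decreasing branch of $\psi$, where $\psi(y_1)$ could fall below $\psi(\theta)$. The key fact that rules this out is that the strict monotonicity of the $y_i$ (inherited from Corollary \ref{optxproperties}) and the strict monotonicity of the $c_i$ run in compatible directions only on the increasing branch. This is also where the hypothesis $\theta<2$ is used, since it makes $[\theta,2)$ nonempty and lets $\psi$ increase on it. Everything else is a direct rearrangement of the definition of $c_i(\lambda)$.
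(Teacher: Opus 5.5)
Your proof is correct and uses the same idea as the paper. Because the $c_i(\lambda)$ and the $y_i$ are both strictly increasing, the active $y_i$ cannot sit on the decreasing branch of $\psi$; this pins $y_1$ to $[\theta,2)$, and the case $\ell=1$ is settled by $x_1=1$. The paper argues by contradiction from an arbitrary index $u$ with $c_u(\lambda)<\theta^2e^{-\theta}$. You instead reduce directly to $c_1$ via the ordering of the $c_i$ and invoke the equality constraint $(\ref{eq_const})$ for $\ell=1$, where the paper says ``not a maximizer''; this is a slightly cleaner version of the same argument.
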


 \begin{IEEEproof}
 We first note that for $1 \leq i \leq \ell$, all solutions to $(\ref{sol_yis})$ must be finite so that each $x_i > 0$. In particular, they must satisfy $(\ref{mono_y})$. 
 
 Now suppose $0 < c_u(\lambda) < \theta^2 e^{-\theta}$ for some $u \in \{1, \ldots, \ell \}$. Then there is a unique solution $y_u^+ \in (2, \infty)$ of $\psi(y) = c_i(\lambda)$ over the interval $[\theta, \infty)$ so that $x_u = \theta/y_u^+ < 1$. If $u < \ell$, then since $c_{u + 1}(\lambda) > c_u(\lambda)$, there is no solution $y_{u+1}$ to $(\ref{sol_yis})$ for $i=u + 1$ such that $\infty > y_{u+1} > y_u^+$. Hence, we must have $u = l$. Then if $u = \ell > 1$, since $c_{u-1}(\lambda) < c_u(\lambda)$, the unique solution $y_{u-1}^+$ to $(\ref{sol_yis})$ for $i=u-1$ satisfies $y_{u-1}^+ > y_u^+$, contradicting $(\ref{mono_y})$. So we must have $u = \ell = 1$. This gives $x_1 = \theta/y_1^+$ and $x_i = 0$ for all $i \geq 2$. Since $c_1(\lambda) < \theta^2 e^{-\theta}$, we have $y_1^+ > 2$ and $x_1 < 1$ since $\theta < 2$. But this is clearly not a maximizer in $(\ref{b3})$ so we must have $c_i(\lambda) \geq \theta^2 e^{-\theta}$ for all $i \in \{1, \ldots, \ell \}$. 
 
 Lastly, since $c_\ell(\lambda) \leq 4e^{-2}$ from $(\ref{c_mono})$, we have   
\begin{align}
    \frac{\lambda \theta 2^{R(\ell-1)}}{d_\ell} \leq 4 e^{-2} \iff \lambda \leq \lambda_{\max}(\ell) \coloneqq \frac{4 e^{-2} d_\ell}{\theta 2^{R(\ell-1)}}.
\end{align}     
 \end{IEEEproof}

From Lemma \ref{lambda_lwbd}, we have for $\theta \in (0, 2)$,    
\begin{align}
d_1 \theta e^{-\theta} &\leq \frac{4 e^{-2} d_\ell}{\theta 2^{R(\ell-1)}}\\
    \iff   \theta^2 e^{-\theta} &\leq \frac{4 e^{-2} d_\ell}{ 2^{R(\ell-1)} d_1 } \label{lstrongcond}\\
    \implies  \theta^2 e^{-\theta} &\leq \frac{4 e^{-2} }{ 2^{R(\ell-1)}  }. \label{lweakcond}
\end{align}

The parameter $\ell$ of a minimizer $\overrightarrow x$ satisfies both $(\ref{lstrongcond})$ and $(\ref{lweakcond})$. Therefore, if a certain $\ell' \in \{1, \ldots, K \}$ does not satisfy $(\ref{lstrongcond})$ or $(\ref{lweakcond})$, then since the RHS of both $(\ref{lstrongcond})$ and $(\ref{lweakcond})$ is decreasing in $\ell$, the number of strictly positive $x_i$'s in a minimizer is $\leq \ell' -1$. Since $\theta \in (0, 2)$, we have proved $(\ref{elllmdef})$ in Theorem $\ref{active_layer_theorem}$. Furthermore, $(\ref{particularthm})$ in Theorem $\ref{active_layer_theorem}$ is the  statement that $(\ref{lstrongcond})$ is not satisfied for $\ell = 2$ so that it not satisfied for any $\ell \geq 2$. This completes the proof of Theorem \ref{active_layer_theorem}. A weakened version of Theorem $\ref{active_layer_theorem}$ is given below. 
\begin{corollary}
The parameter $\ell$ of any maximizer $\overrightarrow x$ in $(\ref{b3})$ satisfies 
    \begin{align*}
        \ell \leq \begin{cases}
            1 & \text{ if } \theta \geq 2\\
            1 + \frac{1}{R} \log \left(  \frac{4 e^{-2}}{\theta^2 e^{-\theta}} \right) & \text{ if } \theta < 2.
        \end{cases}    
    \end{align*}
    \label{ell3}
\end{corollary}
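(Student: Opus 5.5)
The corollary follows directly from Theorem \ref{active_layer_theorem}, or more precisely from the weaker necessary condition (\ref{lweakcond}) derived in its proof. The case $\theta \geq 2$ needs no work: Case 1 of the proof above already showed that any maximizer has $\ell = 1$, which is also the first line of (\ref{elllmdef}). So I will concentrate on $\theta < 2$.

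For $\theta \in (0,2)$, let $\overrightarrow x$ be any maximizer of (\ref{b3}) with parameter $\ell$. By Lemma \ref{unique_mult}, it has multipliers $(\lambda, \overrightarrow\mu)$ satisfying (\ref{stationarity})--(\ref{monotonicity_x}) and (\ref{bont}). Lemma \ref{lambda_lwbd} then gives $d_1\theta e^{-\theta} \leq \lambda \leq \frac{4e^{-2} d_\ell}{\theta 2^{R(\ell-1)}}$. This yields (\ref{lstrongcond}), and since $d_\ell < d_1$ (for $\ell \ge 2$; for $\ell=1$ the two sides coincide), it also yields (\ref{lweakcond}):
\begin{align*}
    \theta^2 e^{-\theta} \leq \frac{4e^{-2}}{2^{R(\ell-1)}}.
\end{align*}
Rearranging gives $2^{R(\ell-1)} \leq \frac{4e^{-2}}{\theta^2 e^{-\theta}}$. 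Taking $\log$ base $2$, which is monotone, gives $R(\ell-1) \leq \log\left(\frac{4e^{-2}}{\theta^2e^{-\theta}}\right)$. Dividing by $R > 0$ produces exactly the claimed bound $\ell \leq 1 + \frac{1}{R}\log\left(\frac{4e^{-2}}{\theta^2 e^{-\theta}}\right)$.

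As a sanity check, on $(0,2)$ we have $\theta^2e^{-\theta} = \psi(\theta) < \psi(2) = 4e^{-2}$. The logarithm is therefore positive and the bound is at least $1$, which is consistent with $\ell \geq 1$ from Corollary \ref{optxproperties}.

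There is essentially no obstacle here. The only point to state carefully is that Lemma \ref{lambda_lwbd} applies to maximizers, not merely to KKT points. That is guaranteed because every maximizer satisfies the KKT system with unique multipliers and has the support structure (\ref{bont}) by Corollary \ref{optxproperties}.
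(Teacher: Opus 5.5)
Your proposal is correct and matches the paper's proof. The paper also obtains Corollary~\ref{ell3} from Case~1 (for $\theta \geq 2$) together with (\ref{lweakcond}). That inequality comes from Lemma~\ref{lambda_lwbd} via (\ref{lstrongcond}) and $d_\ell \leq d_1$, and is then rearranged exactly as you do, by taking $\log$ base $2$ and dividing by $R$.
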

\begin{IEEEproof}
Corollary $\ref{ell3}$ is a restatement of $(\ref{lweakcond})$ coupled with the result from Case $1$: $\theta \geq 2$.  
\end{IEEEproof}

For $1 \leq i \leq \ell$, since  $\theta^2 e^{-\theta} \leq c_i(\lambda) \leq 4 e^{-2}$, we have two solutions, denoted henceforth as $y_i^-, y_i^+$, of $\psi(y) = c_i(\lambda)$ so that $y_i^- \in [\theta, 2]$ and $y_i^+ \in [2, \infty)$ for each $1 \leq i \leq \ell$. Specifically, 
\begin{align}
    y_i^- &= -2W_0\left( -\frac{\sqrt{c_i(\lambda)}}{2} \right), \label{yi-sol}\\
    y_i^+ &= -2W_{-1}\left( -\frac{\sqrt{c_i(\lambda)}}{2} \right). \label{yi+sol}
\end{align}

From $(\ref{c_mono})$ and the fact that $\psi(\cdot)$ is increasing over $[\theta, 2]$ and decreasing over $[2, \infty)$, we have 
\begin{align}
    \theta &\leq y_1^- < y_2^- < \cdots < y_\ell^- \leq 2 \leq y_\ell^+ < \cdots < y_1^+ < \infty. \label{order_of_sols} 
\end{align}
From $(\ref{order_of_sols})$, it is clear that if $\ell = 1$, we must have $x_1 = \theta/y_1^- = 1$, $c_1(\lambda) = \theta^2 e^{-\theta}$ and $y_1^- = \theta$. If $\ell \geq 2$, then in order to satisfy $(\ref{mono_y})$, we must have $x_i = \theta/y_i^-$ for all $i \in \{1, \ldots, \ell - 1 \}$ whereas $x_\ell$ can either be $x_\ell = \theta/y_\ell^-$ or $x_\ell = \theta/y_\ell^+$. Note that $x_i \geq \theta/2$ for all $i \in \{1,\ldots, \ell - 1 \}$, whereas 
\begin{align*}
    x_\ell \begin{cases}
        \geq \theta/2 & \text{ if } x_\ell = \theta/y_\ell^-\\
        \leq \theta/2 & \text{ if } x_\ell = \theta/y_\ell^+. 
    \end{cases}
\end{align*}
\begin{lemma}
\label{x1cond}
    For $\theta \in (0, 2)$, any minimizer $(x_1, \ldots, x_K)$ in $(\ref{maxtomin})$ satisfies $x_1 > \theta/2$.
\end{lemma}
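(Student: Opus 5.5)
We need to show that any minimizer has $x_1 > \theta/2$ when $\theta \in (0,2)$. From the preceding analysis, $x_1 = \theta/y_1^-$ with $y_1^- \le 2$ whenever $\ell \ge 2$, and $x_1 = 1 > \theta/2$ when $\ell = 1$. So it remains to exclude the boundary case $y_1^- = 2$, and the case $\ell \ge 2$ where $x_1$ might instead equal $\theta/y_1^+$.

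\textbf{Step 1 (reduction to a single branch).} Suppose, for contradiction, that $x_1 \le \theta/2$, i.e.\ $y_1 \ge 2$. By Corollary \ref{optxproperties} we have $x_1 > x_2 > \cdots > x_\ell > 0$, so every active component satisfies $y_i > y_1 \ge 2$. Hence every active $x_i = \theta/y_i^+$ lies in $(0,\theta/2]$, which is the region where $g$ is convex ($g''>0$ on $(0,\theta/2)$).

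\textbf{Step 2 (case $\ell = 1$).} Here $x_1 = 1$, and $1 > \theta/2$ since $\theta<2$. This contradicts the assumption, so we may take $\ell \ge 2$.

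\textbf{Step 3 (case $\ell \ge 2$, second-order improvement).} Now all active coordinates lie in the convex region $[0,\theta/2]$. I would produce a feasible direction that strictly increases $G$. The natural choice is to move mass from $x_2$ to $x_1$ along the constraint: set $w = (2^R, -1, 0, \ldots, 0)\,t$, which preserves $\sum_i 2^{R(i-1)}x_i = 1$. This direction lies in the critical cone, because $\mu_1=\mu_2=0$ and the equality constraint is preserved. Then
\[
w^T \nabla^2_{xx}\mathcal{L}\, w = -t^2\bigl(2^{2R} d_1 g''(x_1) + d_2 g''(x_2)\bigr) < 0,
\]
since $g''>0$ at points in $(0,\theta/2)$. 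This violates the second-order necessary condition \cite[Theorem 12.5]{nonlinopt} for a minimizer of $(\ref{maxtomin})$. LICQ holds, as established above, so that theorem applies.

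\textbf{Main obstacle.} The delicate point is the boundary $x_1 = \theta/2$ exactly, where $g''(x_1)=0$. There the sum is still strictly negative, because $x_2 < x_1 = \theta/2$ gives $g''(x_2) > 0$. So the strict inequality survives.

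Alternatively, one can argue first-order: $c_1(\lambda) < c_2(\lambda)$ forces $y_1^+ > y_2^+$, i.e.\ $x_1 < x_2$, contradicting monotonicity. This is a one-line argument, and I would present it as the primary proof, using the order relation $(\ref{order_of_sols})$ directly. Its one remaining gap is $y_1 = 2$ exactly: then $c_1 = 4e^{-2} \ge c_2$, contradicting the strict monotonicity of the $c_i$. The Hessian argument serves as a backup for that case.
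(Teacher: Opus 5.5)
Your proposal is correct, and your primary (first-order) argument is essentially the paper's proof. For $\ell \ge 2$, the paper reads off $x_1 = \theta/y_1^-$ from the branch ordering (\ref{order_of_sols}) together with the strict monotonicity (\ref{mono_y}). It then observes $c_1(\lambda) < c_\ell(\lambda) \le 4e^{-2}$, so $y_1^- < 2$ and $x_1 > \theta/2$. That strict inequality $c_1 < c_2 \le 4e^{-2}$ is exactly what you use to dispose of the boundary case $y_1 = 2$. So your first-order argument is already complete on its own, and the Hessian argument is not needed as a backup.

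Your Steps 1--3 do give a genuinely different, and also valid, route. They need only $x_1 > x_2 > 0$, LICQ, and the sign of $g''$ on $(0,\theta/2)$; none of the Lambert-branch bookkeeping or the $c_i(\lambda)$ ordering enters.

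One small precision: the direction $(2^R,-1,0,\ldots,0)$ lies in the critical cone because the constraints $x_1 \ge 0$ and $x_2 \ge 0$ are inactive and $w_j = 0$ for $j \ge 3$, not because of $\mu_1=\mu_2=0$ per se. You could also drop the appeal to \cite[Theorem 12.5]{nonlinopt} entirely. The map $t \mapsto G(\overrightarrow x + t(2^R,-1,0,\ldots,0))$ stays feasible for small $|t|$, has zero derivative at $t=0$, and has second derivative $2^{2R} d_1 g''(x_1) + d_2 g''(x_2) > 0$. So $t=0$ is a strict local minimum of it, contradicting maximality.

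The trade-off is this. The paper's one-liner is cheaper because the branch analysis preceding the lemma is already in place. Your second-order argument is more self-contained and would survive if that stationarity-based classification were unavailable.
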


\begin{IEEEproof}
    If $\ell = 1$, we have $x_1 = 1 > \theta/2$. If $\ell > 1$, then $c_1(\lambda) < 4e^{-2}$ so that $y_1^- < 2$ and $x_1 = \theta/y_1^- > \theta/2$.  
\end{IEEEproof}

The only candidates for the minimizers in $(\ref{maxtomin})$ are $\overrightarrow x^- = (x_1^-, \ldots, x_{\ell}^-, 0, \ldots, 0)$ and $\overrightarrow x^+ = (x_1^+, \ldots, x_\ell^+, 0, \ldots, 0)$, where  
\begin{enumerate}
    \item \label{cand1min} $x_i^- = \theta/y_i^-$ for all $i \in \{1, \ldots, \ell \}$ and $x_i^- = 0$ for $\ell < i \leq K$.   
    \item \label{cand2min} $x_i^+ = \theta/y_i^-$ for all $i \in \{1, \ldots, \ell-1 \}$, $x_\ell^+ = \theta/y_\ell^+$ and $x_i^+ = 0$ for $\ell  <i \leq K$.  
\end{enumerate}
Note that the dependence of $\overrightarrow x^-$ and $\overrightarrow x^+$ on $\lambda$ is implicit and is given by $(\ref{yi-sol})$ and $(\ref{yi+sol})$. From $(\ref{eq_const})$, any minimizer $\overrightarrow x$ in $(\ref{maxtomin})$ must satisfy  
\begin{align}
    \sum_{i=1}^\ell 2^{R(i-1)} x_i = 1 \label{eq_toc}
\end{align}
for some $\lambda \in [\lambda_{\min}, \lambda_{\max}(\ell)]$. Algorithm \ref{algorithmglobalmax} searches for a global maximizer in $(\ref{b3})$ by evaluating $\overrightarrow x^-$ and $\overrightarrow x^+$ for all possible values of $\ell \leq \ell_{\operatorname{PDS}}$. For $\overrightarrow x = \overrightarrow x^-$, it can be checked that each $x_i^- = \theta/y_i^-$ is continuous and decreasing as a function of $\lambda \in [\lambda_{\min}, \lambda_{\max}(\ell)]$, which makes the LHS of $(\ref{eq_toc})$ also continuous and decreasing in $\lambda$. This proves that $H_\ell^-(\lambda_{\min}) \geq 1 \geq H_{\ell}^-(\lambda_{\max}(\ell))$ is a necessary and sufficient condition for a unique $\lambda$ satisfying $(\ref{eq_toc})$ for $\overrightarrow x = \overrightarrow x^-$. 

For $\overrightarrow x = \overrightarrow x^+$, finding the roots of $H_{\ell}^+(\lambda) = 1$ is more complex. For every $\lambda \in [\lambda_{\min}, \lambda_{\max}(\ell)]$, we have $H_\ell^-(\lambda) \geq H_\ell^+(\lambda)$. Hence, if $H_\ell^-(\lambda_{\min}) < 1$, we have $H_\ell^+(\lambda) \leq H_\ell^-(\lambda) < 1$
for all $\lambda \in [\lambda_{\min}, \lambda_{\max}(\ell)]$, i.e., no roots exist. Hence, a necessary condition for the existence of solutions of $H_\ell^+(\lambda) = 1$ over $\lambda \in [\lambda_{\min}, \lambda_{\max}(\ell)]$ is $H_\ell^-(\lambda) \geq 1$. Let $s = y_\ell^+$. Then  
\begin{align*}
    s = -2W_{-1}\left( -\frac{\sqrt{c_\ell(\lambda)}}{2} \right) \iff  \frac{s^2 e^{-s} d_\ell}{\theta 2^{R(\ell - 1)}} = \lambda \text{ and } 2 \leq s &\leq -2W_{-1}\left( - \frac{1}{2}\sqrt{\frac{d_1 \theta^2 e^{-\theta} 2^{R(\ell-1)}}{d_\ell}} \right). 
\end{align*}
Note that as $\lambda$ increases from $\lambda_{\min}$ to $\lambda_{\max}(\ell)$, we have $s$ decreasing from $s_{\ell, \max}$ to $2$, where
\begin{align*}
   s_{\ell, \max} \coloneqq  -2W_{-1}\left( - \frac{1}{2}\sqrt{\frac{d_1 \theta^2 e^{-\theta} 2^{R(\ell-1)}}{d_\ell}} \right). 
\end{align*}
We reparametrize $H_{\ell}^+(\lambda) = F_{\ell}(s)$ in terms of $s$ as follows: 
\begin{align*}
    F_{\ell}(s) &= \theta\left( \sum_{i=1}^{\ell-1}  \frac{2^{R(i-1)}}{-2W_0\left(-\frac{1}{2} \sqrt{s^2 e^{-s}\frac{d_{\ell} 2^{iR}}{d_i 2^{\ell R}}}  \right)} + \frac{ 2^{R(\ell-1)}}{s}\right).
\end{align*}
Define
\begin{align*}
  \beta_{i, \ell}(s) &= -\frac{1}{2} \sqrt{s^2 e^{-s}\frac{d_{\ell} a_i}{d_i a_{\ell}}} \quad \, \text{ for } i < \ell,\\
  a_i &= 2^{R(i-1)} \quad\quad\quad\quad\quad \text{ for } i \leq \ell,\\
  t_i(s) &= -2 W_0\left(\beta_{i, \ell}(s)  \right)\,\,\, \quad \text{ for } i < \ell.
\end{align*}
We then have  
\begin{align*}
    F_{\ell}(s) &= \theta\left( \sum_{i=1}^{\ell-1}  \frac{a_i}{t_i(s)} + \frac{a_\ell}{s}\right),\\
    F_{\ell}'(s) &= \frac{\theta}{s^2}\left( \sum_{i=1}^{\ell-1}  a_i Q_{i, \ell}(s) - a_\ell\right),
\end{align*}
where 
\begin{align*}
   Q_{i, \ell}(s) &= \frac{s(2-s)}{t_i(t_i - 2)},\\
    Q_{i, \ell}'(s) &= \frac{2(s - t_i)(s + t_i - st_i)}{t_i(2 -t_i)^3}, 
\end{align*}
where we used the shorthand $t_i = t_i(s)$. Recall that $t_i(s) = y_i^-$ from before so that 
$$0 < \theta \leq t_1(s) < \cdots < t_{\ell-1}(s) < 2 \leq y_\ell^+ = s.$$
In particular, 
\begin{align}
    t_i(2) &= -2W_0\left( -\frac{1}{2} \sqrt{4 e^{-2}\frac{d_{\ell} a_i}{d_i a_{\ell}}} \right) \notag\\
    &< -2 W_0\left(-e^{-1} \right) = 2. \label{sjhdfjfddsfd}
\end{align}
\begin{lemma}
For $s > 2$, $Q_{i, \ell}'(s) > 0$ for all $i = 1, \ldots, \ell - 1$.
\end{lemma}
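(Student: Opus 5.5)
The plan is to read off the sign of each factor in the closed form already displayed,
\[
Q_{i,\ell}'(s)=\frac{2(s-t_i)(s+t_i-st_i)}{t_i(2-t_i)^3}.
\]
If a check of that formula is wanted, it comes from implicit differentiation. We have $\psi(t_i(s))=\kappa_i\,\psi(s)$ with $\kappa_i\coloneqq \frac{d_\ell a_i}{d_i a_\ell}$, since $t_i=-2W_0(\beta_{i,\ell}(s))$ and $\beta_{i,\ell}(s)=-\tfrac12\sqrt{\kappa_i\psi(s)}$. Hence $t_i'=\kappa_i\psi'(s)/\psi'(t_i)$. Using $\psi'(y)=ye^{-y}(2-y)$ and $\kappa_i\psi(s)=\psi(t_i)$, this simplifies to $t_i'=\frac{t_i(2-s)}{s(2-t_i)}$. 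Substituting into the quotient rule for $Q_{i,\ell}=\frac{s(2-s)}{t_i(t_i-2)}$ gives the displayed expression after routine algebra.

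\textbf{Step 1: the easy factors.} Because $d_\ell<d_i$ and $a_i<a_\ell$ for $i<\ell$, we have $\kappa_i\in(0,1)$. For $s>2$ we have $\psi(s)<4e^{-2}$, so $\beta_{i,\ell}(s)\in(-1/e,0)$. Therefore $t_i(s)\in(0,2)$, consistent with $(\ref{sjhdfjfddsfd})$. This makes the denominator $t_i(2-t_i)^3$ positive. It also gives $s-t_i>s-2>0$.

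\textbf{Step 2: the factor $s+t_i-st_i=s(1-t_i)+t_i$.} This is the main obstacle, because it is not automatically positive. If $t_i\le 1$ it is clearly positive. If $t_i\in(1,2)$, positivity is equivalent to $s<t_i/(t_i-1)$. Note that $t_i/(t_i-1)>2$ in this range, so both numbers lie in $(2,\infty)$, where $\psi$ is strictly decreasing. It therefore suffices to show $\psi(s)>\psi\big(t_i/(t_i-1)\big)$. We already know $\psi(s)=\psi(t_i)/\kappa_i>\psi(t_i)$, so it is enough to prove $\psi(t)\ge\psi\big(t/(t-1)\big)$ for $t\in(1,2)$.

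Write $t=1+a$ with $a\in(0,1)$, so that $t/(t-1)=1+1/a$. Taking logarithms of $\psi(y)=y^2e^{-y}$, the inequality becomes
\[
h(a)\coloneqq 2\ln(1+a)-(1+a)-2\ln(1+1/a)+(1+1/a)=2\ln a-a+\tfrac1a\ \ge 0 .
\]
Its derivative is $h'(a)=\tfrac2a-1-\tfrac1{a^2}=-\left(\tfrac1a-1\right)^2\le 0$. So $h$ is nonincreasing on $(0,1]$, and $h(a)\ge h(1)=0$.

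Combining the two steps, $\psi(s)>\psi(t_i)\ge\psi\big(t_i/(t_i-1)\big)$, which forces $s<t_i/(t_i-1)$ and hence $s+t_i-st_i>0$. With Step 1, all factors of $Q_{i,\ell}'(s)$ are positive, which proves $Q_{i,\ell}'(s)>0$ for every $s>2$ and every $i=1,\ldots,\ell-1$.
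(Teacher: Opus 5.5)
Your proof is correct and is essentially the paper's argument. The paper sets $\tau = s/(s-1)\in(1,2)$ and shows $\psi(\tau)>\psi(s)>\psi(t_i)$ using $f(u)=u-1/u-2\ln u$ with $u=s-1$, which is exactly your $h(a)$ under $a=1/u$, only compared on the increasing branch of $\psi$ rather than the decreasing one; this lets the paper avoid your case split on $t_i\le 1$. Your checks of the closed form for $Q_{i,\ell}'$ and of $t_i\in(0,2)$ supply details the paper takes for granted.
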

\begin{IEEEproof}
Recall the definition $\psi(t) = t^2 e^{-t}$ and that $\psi(t_i) < \psi(s)$. Let $\tau = \frac{s}{s - 1} \in (1, 2)$. Consider 
\begin{align*}
   f(u) = \ln \left(\frac{\psi(\tau)}{\psi(s)}\right) \Big |_{s = 1 + u} &=  u -\frac{1}{u} -2 \ln(u).
\end{align*}
Its derivative is 
\begin{align*}
    f'(u) = \frac{(u-1)^2}{u^2} > 0
\end{align*}
since $u = s-1 > 1$. Hence, $f(1) = 0$ and $f(u) > 0$ for all $u > 1$. This implies that 
\begin{align*}
    \psi(\tau) > \psi(s) > \psi(t_i(s))
\end{align*}
for all $s > 2$. Since $\tau, t_i < 2$ and $\psi$ is increasing over $(0, 2)$, we have $\tau > t_i$ which is the same as $s + t_i > s t_i$.  Since $s > 2 > t_i$, we have the desired result.    
\end{IEEEproof}
   Therefore, 
\begin{align*}
    M_\ell(s) &\coloneqq  \sum_{i=1}^{\ell-1}  a_i Q_{i, \ell}(s) - a_\ell
\end{align*}
is strictly increasing over $(2, s_{\ell, \max}]$. 

Since $F_\ell'(s) = \frac{\theta}{s^2} M_\ell(s)$, we have 
\begin{align*}
    \left \{s \in (2, s_{\ell, \max}] : F_\ell'(s) = 0 \right \} = \left \{s \in (2, s_{\ell, \max}] : M_\ell(s) = 0 \right \}.  
\end{align*}
Since $M_\ell(s)$ is strictly increasing, there is at most one zero of $F_\ell'(s) = 0$. Also, we have 
\begin{align*}
    \lim_{s \downarrow 2} F_\ell'(s) &= \lim_{s \downarrow 2}  \frac{\theta}{s^2}\left( \sum_{i=1}^{\ell-1}  a_i Q_{i, \ell}(s) - a_\ell\right)\\
    &= \lim_{s \downarrow 2}  \frac{\theta}{4}\left( \sum_{i=1}^{\ell-1}  a_i \frac{0}{t_i(2)(t_i(2) - 2)} - a_\ell\right)\\
    &= -\frac{\theta a_\ell}{4} < 0, 
\end{align*}
where the last equality follows from $(\ref{sjhdfjfddsfd})$.

From this information, one of the following must be true:  
\begin{itemize}
    \item $F_\ell'(s) < 0$ for all $s \in (2, s_{\ell, \max}]$ in which case $F_\ell(s)$ is strictly decreasing over $(2, s_{\ell, \max}]$ so a unique solution $s_0$ of $F_\ell(s) = 1$ obtainable by a bisection search exists if and only if $F_\ell(2) \geq 1 \geq F_\ell(s_{\ell, \max})$. We then set $\lambda = \frac{s_0^2 e^{-s_0} d_\ell}{\theta 2^{R(\ell - 1)}}$. 
    \item $F_\ell'(s) < F_\ell'(s_0) = 0 < F_\ell'(s)$ for some $s_0 \in (2, s_{\ell, \max}]$, in which case $F_\ell(s)$ is decreasing over $(2, s_0]$ and then increasing over $[s_0, s_{\ell, \max}]$. In this case, no solution of $F_\ell(s) = 1$ exists if $F_\ell(s_0) > 1$, exactly one solution exists if $F_\ell(s_0) = 1$ and at most two solutions exist if $F_\ell(s_0) < 1$.   
\end{itemize}

Since $M_\ell(2^+) < 0$, the above two scenarios can be checked as follows:
\begin{itemize}
    \item If $M_\ell(s_{\ell, \max}) \leq 0$ and $F_\ell(2) \geq 1 \geq F_\ell(s_{\ell, \max})$, then use bisection to find the unique root $s_0$ of $F_\ell(s) = 1$ over the interval $(2, s_{\ell, \max}]$. Return $\{\lambda \}$ where $\lambda = \frac{s_0^2 e^{-s_0} d_\ell}{\theta 2^{R(\ell - 1)}}$. 
    \item If $M_\ell(s_{\ell, \max}) > 0$, then first find the unique root $s_0$ of $M_\ell(s) = 0$ using bisection on the interval $(2, s_{\ell, \max}]$. If $F_\ell(s_0) > 1$, then return empty set. If $F_\ell(s_0) = 1$  return $\{\lambda \}$ where $\lambda = \frac{s_0^2 e^{-s_0} d_\ell}{\theta 2^{R(\ell - 1)}}$. If $F_\ell(s_0) < 1$, then  
    \begin{itemize}
        \item If $ F_\ell(2) \geq 1$ and $F_\ell(s_{\ell, \max}) < 1$, then use bisection to find the unique root $s_1$ of $F_\ell(s) = 1$ over the interval $[2, s_0]$. Then return $\{\lambda \}$ where $\lambda = \frac{s_1^2 e^{-s_1} d_\ell}{\theta 2^{R(\ell - 1)}}$.
        \item  If $F_\ell(2) < 1$ and $F_\ell(s_{\ell, \max}) \geq 1$, then use bisection to find the unique root $s_2$ of $F_\ell(s) = 1$ over the interval $[s_0, s_{\ell, \max}]$. Then return $\{\lambda \}$ where $\lambda = \frac{s_2^2 e^{-s_2} d_\ell}{\theta 2^{R(\ell - 1)}}$.
        \item If $F_\ell(2) \geq 1$ and $F_\ell(s_{\ell, \max}) \geq 1$, then use bisection to find the unique root $s_1$ of $F_\ell(s) = 1$ over the interval $[2, s_0]$ and use another bisection to find the unique root $s_2$ of $F_\ell(s) = 1$ over the interval $[s_0, s_{\ell, \max}]$. Then return $\{\lambda_1, \lambda_2 \}$ where $\lambda_1 = \frac{s_1^2 e^{-s_1} d_\ell}{\theta 2^{R(\ell - 1)}}$ and $\lambda_2 = \frac{s_2^2 e^{-s_2} d_\ell}{\theta 2^{R(\ell - 1)}}$.
    \end{itemize}
    \item Otherwise, return an empty set.    
\end{itemize}
The above description is exactly Algorithm \ref{mixed_branch_roots}. This completes the proof of Theorem \ref{global_maximizer_thm}.     

To prove Theorem \ref{local_maximizer_thm}, note that $\overrightarrow x^-$ already satisfies $(\ref{gf3})$ and $(\ref{lspecification})$ as well as the KKT conditions. Hence, it suffices to show that $\overrightarrow x^-$ is a strict local minimizer in $(\ref{maxtomin})$. To do that, we evaluate the Lagrangian Hessian, which is a diagonal matrix given by  
\begin{align*}
    \frac{\partial^2 \mathcal{L}(\overrightarrow x, \lambda, \overrightarrow \mu)}{\partial x_i^2} &= - g''(x_i) d_i \\
    &= \begin{cases}
        - g''(x_i) d_i & \text{ if } 1 \leq i \leq \ell\\
        0 & \text{ if } i > \ell.
    \end{cases}
\end{align*}
The critical cone \cite[p. 330]{nonlinopt} at the point $\overrightarrow x^-$ is given by 
\begin{align*}
    \mathcal{C}(\overrightarrow x^-) &= \Bigg \{\overrightarrow w \in \mathbb{R}^K: \sum_{i=1}^K (2^R)^{i-1} w_i = 0,\\ 
    &\quad \quad \quad \quad \quad \quad \quad  w_{\ell+1} = \cdots = w_K = 0 \Bigg \}\\
    &= \Bigg \{\overrightarrow w \in \mathbb{R}^K: \sum_{i=1}^\ell (2^R)^{i-1} w_i = 0,\\ 
    & \quad \quad \quad \quad \quad \quad \quad  w_{\ell+1} = \cdots = w_K = 0 \Bigg \}. 
\end{align*}
Since each KKT point $\overrightarrow x$ for the optimization problem in $(\ref{maxtomin})$ is associated with unique Lagrange multipliers $\lambda$ and $\overrightarrow \mu$ by Lemma \ref{unique_mult}, we write the critical cone as $\mathcal{C}(\overrightarrow x)$ instead of $\mathcal{C}(\overrightarrow x, \lambda, \overrightarrow \mu)$. From \cite[Theorem 12.5]{nonlinopt}, a second-order necessary condition for a KKT point $\overrightarrow x$ to be a local minimizer in $(\ref{maxtomin})$ is  
\begin{align}
    \sum_{i=1}^\ell - g''(x_i) d_i w_i^2 &\geq 0\\
    \iff \sum_{i=1}^\ell g''(x_i) d_i w_i^2 &\leq 0 \label{second_order_cond}
\end{align}
for all $(w_1, \ldots, w_\ell)$ satisfying 
\begin{align*}
    \sum_{i=1}^\ell (2^R)^{i-1} w_i = 0. 
\end{align*}
A sufficient condition is when the inequality $(\ref{second_order_cond})$ is strict. 

For $\overrightarrow x = \overrightarrow x^-$, each $x_i^- \geq \theta/2$ with $x_1^- > \theta/2$. Since $g''(x) < 0$ for $x > \theta/2$, it is evident that the inequality in $(\ref{second_order_cond})$ is a strictly inequality for $\overrightarrow x = \overrightarrow x^-$ and $\overrightarrow w \neq \overrightarrow 0$. Hence, from Theorem \cite[Theorem 12.6]{nonlinopt}, $\overrightarrow x = \overrightarrow x^-$ is a strict local minimizer in $(\ref{maxtomin})$.

\section{Proof of Theorems \ref{active_layer_theoremORA}, \ref{global_maximizer_thmORA} and \ref{local_maximizer_thmORA} \label{combined_proofORA}}

Using the same argument as used in the beginning of Section \ref{combined_proof}, it can be checked that the KKT conditions are necessary for a maximizer in $(\ref{maxORA})$. We start by rewriting  $(\ref{maxORA})$ as  
\begin{align}
     &\min_{\overrightarrow v \in \Delta^{K - 1}} \overline{T}(\overrightarrow v), \label{maxtominora} \\
    \text{ where }\quad  \overline{T}(\overrightarrow v) &= - \sum_{i=1}^K t(v_i) d_i,\\
    t(v) &= \exp \left( -\frac{2^{R/v} - 1}{2^R - 1} \theta \right)
\end{align}
and $t : [0, 1] \to [0, 1)$. Note that $\overline{T}(\overrightarrow v)$ is twice continuously differentiable over $\Delta^{K-1}$. Specifically, for $v > 0$, 
\begin{align*}
    t'(v) &= \frac{\theta  R \ln (2) 2^{R/v} t(v)}{\left(2^R-1\right) v^2},\\
    t''(v) &= \frac{\theta  R \ln (2)}{2^R - 1}  \frac{2^{R/v} t(v)}{v^4} \left(R \ln (2) \left(\frac{\theta  2^{R/v}}{2^R-1}-1\right)-2 v \right)
\end{align*}
with 
\begin{align*}
    t'(0) &= \lim_{v \downarrow 0} t'(v) = 0,\\
    t''(0) &= \lim_{v \downarrow 0} t''(v) = 0.
\end{align*}
Also note that $t'(v) > 0$ for all $v > 0$. Define the Lagrangian as 
\begin{align*}
    \mathcal{L}(\overrightarrow v, \lambda, \overrightarrow \mu ) &= -\sum_{i=1}^K t(v_i) d_i + \lambda\left(  \sum_{i=1}^K  v_i - 1\right) - \sum_{i=1}^K  \mu_i v_i, 
\end{align*}
where $\lambda \in \mathbb{R}$ and $\mu_i \geq 0$. The KKT conditions \cite[Theorem 12.1]{nonlinopt} and the optimality properties from Theorem \ref{opt_mu_properties_theorem} imply the following: for any maximizer in $(\ref{maxORA})$, there exist Lagrange multipliers $(\lambda, \overrightarrow \mu)$ such that
\begin{align}
    -t'(v_i) d_i + \lambda - \mu_i &= 0\label{stationarityORA}\\
    \mu_i v_i &= 0 \label{compslackORA}\\
    \mu_i &\geq 0 \label{dualfeasibilityora}\\
    \sum_{i=1}^K v_i &= 1 \label{eq_consora}\\
    1 &\geq v_1 \geq \cdots \geq v_K \geq 0, \label{monotonicity_v}
\end{align}
where $(\ref{stationarityORA}), (\ref{compslackORA})$ and $(\ref{dualfeasibilityora})$ hold for all $i \in\{1, \ldots, K \}$.

\begin{lemma}
For any point $(\overrightarrow v, \lambda, \overrightarrow \mu)$ satisfying $(\ref{stationarityORA}) - (\ref{monotonicity_v})$, we have $\lambda > 0$. 
    \label{lambda_posi}
\end{lemma}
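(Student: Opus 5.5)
The plan mirrors the proof of Lemma \ref{lambda_pos_res}, arguing by contradiction from the stationarity condition $(\ref{stationarityORA})$ at an index where the constraint $v_i \geq 0$ is inactive.

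First, I would locate a strictly positive coordinate. From $(\ref{eq_consora})$ we have $\sum_{i=1}^K v_i = 1$, and from $(\ref{monotonicity_v})$ every $v_i \geq 0$. Hence there is at least one index $j$ with $v_j \in (0,1]$; by the ordering in $(\ref{monotonicity_v})$ one may take $j=1$. Complementary slackness $(\ref{compslackORA})$ then forces $\mu_j = 0$, so $(\ref{stationarityORA})$ at $i=j$ reduces to
\begin{align*}
    \lambda = t'(v_j)\, d_j .
\end{align*}

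Next, I would check that the right-hand side is strictly positive. For $v>0$ we have
\begin{align*}
    t'(v) = \frac{\theta R \ln(2)\, 2^{R/v}\, t(v)}{(2^R-1)v^2}.
\end{align*}
Every factor here is strictly positive when $\theta>0$ and $R>0$: $2^R-1>0$, $2^{R/v}>0$, and $t(v)=\exp(\cdot)>0$. Together with $d_j>0$, this gives $\lambda = t'(v_j)d_j>0$. Phrased as a contradiction: if $\lambda \leq 0$, then $-t'(v_j)d_j + \lambda < 0 = \mu_j$, so $(\ref{stationarityORA})$ fails at $i=j$.

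The argument is essentially immediate, and I see no substantive obstacle. The only points needing care are, first, that the chosen index has $v_j>0$, rather than the boundary value where the convention $t'(0)=0$ would apply; and second, that $\theta>0$, which holds since $\theta=(2^R-1)/(P\sigma^2)$ with $R,P,\sigma^2>0$.
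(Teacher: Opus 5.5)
Your proof is correct and follows the paper's approach. The paper proves this lemma by analogy with Lemma \ref{lambda_pos_res}: it fixes an index with a strictly positive coordinate and observes that stationarity cannot hold there if $\lambda \leq 0$, which is exactly your argument. As a small simplification, you don't need complementary slackness, since $\lambda = t'(v_j)d_j + \mu_j \geq t'(v_j)d_j > 0$ follows from $\mu_j \geq 0$ alone.
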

\begin{lemma}
    For any minimizer $\overrightarrow v$ in $(\ref{maxtominora})$, there exists a unique $(\lambda, \overrightarrow \mu)$ such that $(\overrightarrow x, \lambda, \overrightarrow \mu)$ satisfies the conditions $(\ref{stationarityORA}) - (\ref{monotonicity_v})$.
    \label{unique_multi_ORA}
\end{lemma}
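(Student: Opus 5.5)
The statement to prove is Lemma~\ref{unique_multi_ORA}: for any minimizer $\overrightarrow v$ of $(\ref{maxtominora})$, the multipliers $(\lambda,\overrightarrow\mu)$ satisfying $(\ref{stationarityORA})$--$(\ref{monotonicity_v})$ exist and are unique. I will mirror the proof of Lemma~\ref{unique_mult}, which rests on LICQ.

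\textbf{Step 1 (LICQ).} The constraint gradients for $\Delta^{K-1}$ are the inequality gradients $\overrightarrow e_1,\ldots,\overrightarrow e_K$ (for $v_i\ge 0$) and the equality gradient $\mathbf{1}=(1,\ldots,1)^T$. By Theorem~\ref{opt_mu_properties_theorem}.(\ref{opt_mu_2}), any optimal $\overrightarrow v$ has $v_1>0$. Hence $v_1\ge 0$ is inactive, and the active set at $\overrightarrow v$ is contained in
\begin{align*}
\{\mathbf{1}\}\cup\{\overrightarrow e_i : i\ge 2,\ v_i=0\}.
\end{align*}
These vectors are linearly independent. Any subset of $\overrightarrow e_2,\ldots,\overrightarrow e_K$ is independent, and $\mathbf{1}$ is not in their span because its first coordinate is $1$. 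So LICQ holds at every minimizer.

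\textbf{Step 2 (existence).} Since $\overline T$ is continuously differentiable on $\Delta^{K-1}$ (as noted before the Lagrangian is defined), \cite[Theorem 12.1]{nonlinopt} gives multipliers $(\lambda,\overrightarrow\mu)$ satisfying stationarity $(\ref{stationarityORA})$, complementary slackness $(\ref{compslackORA})$, and dual feasibility $(\ref{dualfeasibilityora})$. Primal feasibility gives $(\ref{eq_consora})$. The ordering $(\ref{monotonicity_v})$ follows from Theorem~\ref{opt_mu_properties_theorem}.(\ref{opt_mu_1}).

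\textbf{Step 3 (uniqueness).} Under LICQ the multipliers are unique \cite[p.~321]{nonlinopt}. This can also be seen directly.
\begin{itemize}
\item Because $v_1>0$, complementary slackness forces $\mu_1=0$, and stationarity at $i=1$ then fixes $\lambda=t'(v_1)d_1$.
\item For every other $i$, stationarity fixes $\mu_i=\lambda-t'(v_i)d_i$.
\end{itemize}
So $(\lambda,\overrightarrow\mu)$ is fully determined by $\overrightarrow v$.

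The only nontrivial ingredient is $v_1^\star>0$, which Theorem~\ref{opt_mu_properties_theorem} already supplies; the rest is routine.
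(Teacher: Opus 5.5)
Your proposal is correct and follows essentially the same route as the paper. The paper proves this lemma by analogy with Lemma~\ref{unique_mult}: LICQ holds at any minimizer because $v_1>0$ (Theorem~\ref{opt_mu_properties_theorem}), so \cite[Theorem 12.1]{nonlinopt} gives the KKT multipliers and \cite[p.~321]{nonlinopt} gives their uniqueness. Your direct uniqueness check ($\mu_1=0$ forces $\lambda=t'(v_1)d_1$, and then stationarity determines each $\mu_i$) is a self-contained supplement, not a different approach.
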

The proofs of Lemmas \ref{lambda_posi} and \ref{unique_multi_ORA} are similar to those of Lemmas \ref{lambda_pos_res} and \ref{unique_mult}, respectively.

We can rewrite $(\ref{stationarityORA})$ as
\begin{align}
    \mathscr{U}(v_i)  &= \mathscr{C}_i(\lambda)  -\frac{\mu_i \left(2^R-1\right)}{\theta d_i  R \ln (2)}, \label{m08ORA}
\end{align}
where we define $\mathscr{C}_i(\lambda) \coloneqq \frac{\lambda\left(2^R-1\right)}{\theta d_i R \ln (2)}$ and the function $\mathscr{U}: [0, 1] \to [0, \infty)$ as 
\begin{align*}
    \mathscr{U}(v) &= \frac{ 2^{R/v}}{v^2} \exp \left( -\frac{2^{R/v} - 1}{2^R - 1} \theta \right),\\
    \mathscr{U}(0) &= \mathscr{U}(0^+) = 0.
\end{align*}
Let 
\begin{align*}
    \mathscr{V}(v) =  \ln \mathscr{U}(v) &= \frac{R}{v} \ln(2) - 2 \ln(v)  -\frac{2^{R/v} - 1}{2^R - 1} \theta.
\end{align*}
Then 
\begin{align*}
    \mathscr{V}'(v) &= \frac{1}{\mathscr{U}(v)} \mathscr{U}'(v)  \\
    \mathscr{U}'(v) &= \mathscr{U}(v) \mathscr{V}'(v)\\
    &= \frac{\mathscr{U}(v)}{v^2} \left(R \ln (2) \left(\frac{\theta  2^{R/v}}{2^R-1}-1\right)-2 v \right). 
\end{align*}
Since $\mathscr{U}(v) > 0$ for $v > 0$, the sign of $\mathscr{U}'(v)$ is the sign of $N(v)$ defined as 
\begin{align*}
    N(v) &= R \ln (2) \left(\frac{\theta  2^{R/v}}{2^R-1}-1\right)-2 v. 
\end{align*}
We have 
\begin{align*}
    N'(v) &= -\frac{\theta  R^2 \ln ^2(2) 2^{R/v}}{\left(2^R-1\right) v^2}-2 < 0
\end{align*}
for all $v \in [0, 1]$. Since $N(0^+) = +\infty$, $\mathscr{U}(v)$ is always increasing for sufficiently small $v \to 0^+$. Hence, if $N(1) < 0$, $\mathscr{U}(v)$ increases and then decreases over the interval $[0,1]$. Otherwise, $\mathscr{U}(v)$ is increasing throughout $[0, 1]$. We have   
\begin{align*}
    N(1) &= R \ln (2) \left(\frac{\theta  2^{R}}{2^R-1}-1\right)-2.
\end{align*}
\begin{align*}
    N(1) \geq 0 \iff \theta \geq \theta_c \coloneqq \frac{2^R - 1}{2^R} \left( \frac{2}{R \ln(2)} + 1 \right ).
\end{align*}

\textbf{Case 1 $\theta \geq \theta_c$:}

In this case, $\mathscr{U}$ is monotonically increasing over $[0, 1]$. Recall that $\mathscr{U}(0^+) = 0$ and $\mathscr{U}(1) = 2^R e^{-\theta}$. If $\mathscr{C}_t(\lambda) > 2^R e^{-\theta}$ for some $t \in \{1, \ldots, K \}$, then $(\ref{m08ORA})$ can hold only if $\mu_t > 0$ in which case $v_t = 0$ by the complementary slackness condition in $(\ref{compslackORA})$. Hence, if $\mathscr{C}_t(\lambda) > 2^R e^{-\theta}$, $(\ref{m08ORA})$ simplifies to $\mu_t = \lambda$. Given $\lambda > 0$ from Lemma \ref{lambda_posi}, $\mathscr{C}_i(\lambda)$ is increasing in $i$ since $d_i$ is decreasing in $i$. Hence, if $\mathscr{C}_t(\lambda) > 2^R e^{-\theta}$ for some $t$, then $\mathscr{C}_j(\lambda) > 2^R e^{-\theta}$ for all $j \geq t$, leading to $v_j = 0$ for all $j \geq t$. This is consistent with the structure of an optimal $\overrightarrow v$ as established in Theorem $\ref{opt_mu_properties_theorem}$. On the other hand, if $0 < \mathscr{C}_i(\lambda) \leq 2^R e^{-\theta}$, then $0 \leq \mu_i \leq \lambda$ in order to satisfy $(\ref{m08ORA})$. If $\mu_i = \lambda$, then $v_i = 0$. If $\mu_i < \lambda$, then there is a nonzero solution to $(\ref{m08ORA})$ so that $\mu_i = 0$ and $(\ref{m08ORA})$ simplifies to
\begin{align}
    \mathscr{U}(v_i) = \mathscr{C}_i(\lambda). \label{simpli_for_ORA_ilessl}
\end{align}
Using Theorem \ref{opt_mu_properties_theorem}, an optimal $\overrightarrow v$ must satisfy 
\begin{align}
    v_i \begin{cases}
        > 0 \text{ for } i \in \{1, \ldots, \ell \}\\
        = 0 \text{ otherwise}
    \end{cases} 
    \label{bo2ntORA}
\end{align}
for some integer $\ell \in \{1, \ldots, K \}$. Hence, we must have 
\begin{align*}
    \ell &= \max \{1 \leq i \leq K: \mu_i = 0 \}\\
    &\leq \max \left \{1 \leq i \leq K : \mathscr{C}_i(\lambda) \leq 2^R e^{-\theta} \right \}. 
\end{align*}
We have 
\begin{align}
    0 < \mathscr{C}_1(\lambda) < \cdots  < \mathscr{C}_\ell(\lambda) \leq 2^R e^{-\theta} \label{c_m2onoORA}
\end{align}
and, from Theorem $\ref{opt_mu_properties_theorem}$,  
\begin{align}
    1\geq v_1 > v_2 > \cdots > v_\ell > 0. \label{mo2noORA}
\end{align}

Since $\mathscr{U}$ is strictly increasing over $[0, 1]$, there exists a unique solution $v_i^* > 0$ of $(\ref{simpli_for_ORA_ilessl})$ for $i \in \{1, \ldots, \ell \}$. From $(\ref{c_m2onoORA})$ and the fact that $\mathscr{U}$ is strictly increasing over $[0, 1]$, we have 
\begin{align}
    0 < v_1^* < \cdots < v_\ell^* \leq 1. \label{wrongorder}
\end{align}
From $(\ref{wrongorder})$, it is clear that in order to satisfy $(\ref{mo2noORA})$, we must have $\ell = 1$. Hence, for $\theta \geq \theta_c$, 
\begin{itemize}
    \item $\mu_1 = 0$ and $\mu_i = \lambda $ for all $i \geq 2$,
    \item $\mathscr{C}_1(\lambda) = 2^R e^{-\theta},$
    \item $v_1 = 1$ and $v_i = 0$ for all $i \geq 2$. 
\end{itemize}
From \cite[Theorem 12.6]{nonlinopt}, a sufficient condition for $\overrightarrow v = (1, 0, \ldots, 0)$ to be a minimizer in $(\ref{maxtominora})$ is $\overrightarrow w^T \nabla^2_{vv} \mathcal{L}(\overrightarrow v, \lambda, \overrightarrow \mu) \overrightarrow w > 0$ for all $\overrightarrow w \neq 0$ in the critical cone \cite[p. 330]{nonlinopt} for this $\overrightarrow v$. But it can be checked that the critical cone at this point is just the zero vector so $\overrightarrow v = (1, 0, \ldots, 0)^T$ trivially satisfies the sufficient condition for being a minimizer in $(\ref{maxtominora})$.

\textbf{Case 2 $\theta < \theta_c$:}

In this case, $\mathscr{U}$ is unimodal, i.e., increases then decreases over $[0, 1]$. Define 
\begin{align*}
    v_{\operatorname{int}}^* &\coloneqq \argmax_{v \in [0, 1]} \mathscr{U}(v),\\
    M^*_{\operatorname{int}} &\coloneqq \max_{v \in [0, 1]} \mathscr{U}(v).  
\end{align*}
Note that $v_{\operatorname{int}}^* \in (0, 1)$ when $\theta < \theta_c$. Note that 
\begin{itemize}
    \item $t''(v) > 0$ for $0 < v < v_{\operatorname{int}}^*$ (convex),
    \item $t''(v) < 0$ for $v > v_{\operatorname{int}}^*$ (concave). 
\end{itemize}

If $\mathscr{C}_t(\lambda) > M^*_{\operatorname{int}}$ for some $t \in \{1, \ldots, K \}$, then $(\ref{m08ORA})$ can hold only if $\mu_t > 0$ in which case $v_t = 0$ by the complementary slackness condition in $(\ref{compslackORA})$. Hence, if $\mathscr{C}_t(\lambda) > M^*_{\operatorname{int}}$, $(\ref{m08ORA})$ simplifies to $\mu_t = \lambda$. Given $\lambda > 0$ from Lemma \ref{lambda_posi}, $\mathscr{C}_i(\lambda)$ is increasing in $i$ since $d_i$ is decreasing in $i$. Hence, if $\mathscr{C}_t(\lambda) > M^*_{\operatorname{int}}$ for some $t$, then $\mathscr{C}_j(\lambda) > M^*_{\operatorname{int}}$ for all $j \geq t$, leading to $v_j = 0$ for all $j \geq t$. This is consistent with the structure of an optimal $\overrightarrow v$ as established in Theorem $\ref{opt_mu_properties_theorem}$. On the other hand, if $0 < \mathscr{C}_i(\lambda) \leq M^*_{\operatorname{int}}$, then $0 \leq \mu_i \leq \lambda$ in order to satisfy $(\ref{m08ORA})$. If $\mu_i = \lambda$, then $v_i = 0$. If $\mu_i < \lambda$, then there is a nonzero solution to $(\ref{m08ORA})$ so that $\mu_i = 0$ and $(\ref{m08ORA})$ simplifies to
\begin{align}
    \mathscr{U}(v_i) = \mathscr{C}_i(\lambda). \label{simpli_for_ORA_ilessl2}
\end{align}
Using Theorem \ref{opt_mu_properties_theorem}, an optimal $\overrightarrow v$ must satisfy 
\begin{align}
    v_i \begin{cases}
        > 0 \text{ for } i \in \{1, \ldots, \ell \}\\
        = 0 \text{ otherwise}
    \end{cases} 
    \label{bo2ntORA2}
\end{align}
for some integer $\ell \in \{1, \ldots, K \}$. Hence, we must have 
\begin{align*}
    \ell &= \max \{1 \leq i \leq K: \mu_i = 0 \}\\
    &\leq \max \left \{1 \leq i \leq K : \mathscr{C}_i(\lambda) \leq M^*_{\operatorname{int}} \right \}. 
\end{align*}
We have 
\begin{align}
    0 < \mathscr{C}_1(\lambda) < \cdots  < \mathscr{C}_\ell(\lambda) \leq M^*_{\operatorname{int}} \label{c_m2onoORA2}
\end{align}
and, from Theorem $\ref{opt_mu_properties_theorem}$,  
\begin{align}
    1\geq v_1 > v_2 > \cdots > v_\ell > 0. \label{mo2noORA2}
\end{align}

\begin{lemma}
Let $\theta \in (0, \theta_c)$. Then for any point $(\overrightarrow v, \lambda, \overrightarrow \mu)$ satisfying $(\ref{stationarityORA}) - (\ref{monotonicity_v})$ and $(\ref{bo2ntORA2})$, where $\overrightarrow v$ is also a minimizer in $(\ref{maxtominora})$, we must have $2^R e^{-\theta} \leq \mathscr{C}_1(\lambda) < \cdots < \mathscr{C}_\ell(\lambda) \leq M^*_{\operatorname{int}}$. In particular,
\begin{align}
\frac{2^R e^{-\theta} \theta d_1 R \ln(2)}{2^R - 1} \leq \lambda \leq \frac{M_{\operatorname{int}}^*\, \theta d_\ell R \ln (2)}{2^R - 1}.    \label{lambda_int_ora}
\end{align} 
    \label{lambda_lwbdora}
\end{lemma}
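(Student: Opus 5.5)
The proof will mirror that of Lemma \ref{lambda_lwbd}, with $\psi$ replaced by $\mathscr{U}$ and the boundary value $\theta^2 e^{-\theta}=\psi(\theta)$ replaced by $\mathscr{U}(1)=2^R e^{-\theta}$. For every $1\le i\le \ell$ we have $v_i>0$ and hence $\mu_i=0$ by complementary slackness. So each $v_i$ solves $\mathscr{U}(v_i)=\mathscr{C}_i(\lambda)$, and by (\ref{mo2noORA2}) the solutions satisfy $1\ge v_1>\cdots>v_\ell>0$. The upper bound $\mathscr{C}_\ell(\lambda)\le M^*_{\operatorname{int}}$ is already contained in (\ref{c_m2onoORA2}). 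Rewriting $\mathscr{C}_\ell(\lambda)\le M^*_{\operatorname{int}}$ as a bound on $\lambda$ gives the right inequality in (\ref{lambda_int_ora}). So the real content is the lower bound $\mathscr{C}_1(\lambda)\ge 2^R e^{-\theta}$.

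\emph{Key structural fact.} For $\theta<\theta_c$, $\mathscr{U}$ increases on $[0,v^*_{\operatorname{int}}]$ and decreases on $[v^*_{\operatorname{int}},1]$, with $\mathscr{U}(1)=2^Re^{-\theta}$. Take a level $c\in(0,2^Re^{-\theta})$. The equation $\mathscr{U}(v)=c$ has no solution on the decreasing branch $[v^*_{\operatorname{int}},1]$, since there $\mathscr{U}\ge 2^Re^{-\theta}$. It therefore has exactly one solution in $[0,1]$, and this solution lies on the increasing branch $(0,v^*_{\operatorname{int}})$.

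\emph{Contradiction step.} Suppose $\mathscr{C}_u(\lambda)<2^Re^{-\theta}$ for some $u\le\ell$. Since $\mathscr{C}_i(\lambda)$ is increasing in $i$, we get $\mathscr{C}_j(\lambda)<2^Re^{-\theta}$ for all $j\le u$. So each $v_j$ with $j\le u$ is the unique root on the increasing branch, and on that branch larger levels give larger roots. This yields $v_1<v_2<\cdots<v_u$, which contradicts $v_1>\cdots>v_\ell$ unless $u=1$.

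\emph{Case $u=1$.} First suppose $\ell\ge2$. Then $\mathscr{C}_1(\lambda)<2^Re^{-\theta}$ forces $v_1<v^*_{\operatorname{int}}$. Since $v_2<v_1$, we also have $v_2<v^*_{\operatorname{int}}$, so $v_2$ lies on the increasing branch at level $\mathscr{C}_2(\lambda)>\mathscr{C}_1(\lambda)$. Hence $v_2>v_1$, a contradiction. Now suppose $\ell=1$. Then feasibility (\ref{eq_consora}) forces $v_1=1$, so $\mathscr{C}_1(\lambda)=\mathscr{U}(1)=2^Re^{-\theta}$, which contradicts the strict inequality. In every case we conclude $\mathscr{C}_1(\lambda)\ge 2^Re^{-\theta}$, and rewriting this gives the left inequality in (\ref{lambda_int_ora}).

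The step that needs care is ruling out $u=\ell=1$. In the PDS analogue this case was excluded by non-optimality, whereas here I plan to exclude it directly through the equality constraint $\sum_{i=1}^\ell v_i=1$. I also need to confirm that the minimizer hypothesis is used only through Theorem \ref{opt_mu_properties_theorem}.
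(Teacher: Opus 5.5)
Your proof is correct and follows essentially the same route as the paper: assume some $\mathscr{C}_u(\lambda)<2^Re^{-\theta}$, observe that the corresponding root of $\mathscr{U}(v)=\mathscr{C}_u(\lambda)$ must lie on the increasing branch $(0,v^*_{\operatorname{int}})$, and use $\mathscr{C}_1(\lambda)<\cdots<\mathscr{C}_\ell(\lambda)$ to contradict the ordering of the $v_i$ unless $u=\ell=1$. You then dispose of $u=\ell=1$ through the equality constraint $\sum_i v_i=1$, which forces $v_1=1$ and hence $\mathscr{C}_1(\lambda)=\mathscr{U}(1)=2^Re^{-\theta}$; this is slightly cleaner than the paper's remark that such a point is \emph{clearly not a minimizer}, since that point is in fact infeasible.
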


 \begin{IEEEproof}
 We first note that for $1 \leq i \leq \ell$, all solutions to $(\ref{simpli_for_ORA_ilessl2})$ must be nonzero and satisfy $(\ref{mo2noORA2})$. 
 
 Now suppose $0 < \mathscr{C}_u(\lambda) < 2^R e^{-\theta}$ for some $u \in \{1, \ldots, \ell \}$. Then there is a unique solution $v_u^- \in (0, v_{\operatorname{int}}^*)$ of $\mathscr{U}(v) = \mathscr{C}_i(\lambda)$ over the interval $[0,1]$. If $u < \ell$, then since $\mathscr{C}_{u + 1}(\lambda) > \mathscr{C}_u(\lambda)$, there is no solution $v_{u+1}$ to $(\ref{simpli_for_ORA_ilessl2})$ for $i=u + 1$ such that $v_{u+1} > v_u^-$. Hence, we must have $u = \ell$. Then if $u = \ell > 1$, since $\mathscr{C}_{u-1}(\lambda) < \mathscr{C}_u(\lambda)$, the unique solution $v_{u-1}^-$ to $(\ref{simpli_for_ORA_ilessl2})$ for $i=u-1$ satisfies $v_{u-1}^- < v_u^-$, contradicting $(\ref{mo2noORA2})$. So we must have $u = \ell = 1$. This gives $v_1 = v_1^- \in (0, v_{\operatorname{int}}^*)$ and $v_i = 0$ for all $i \geq 2$. Since $v_{\operatorname{int}}^* < 1$, $v_1 < 1$. But this is clearly not a minimizer in $(\ref{maxtominora})$ so we must have $\mathscr{C}_i(\lambda) \geq 2^R e^{-\theta}$ for all $i \in \{1, \ldots, \ell \}$. 
 
 Lastly, since $\mathscr{C}_\ell(\lambda) \leq M_{\operatorname{int}}^*$ from $(\ref{c_m2onoORA2})$, we have   
\begin{align}
    \frac{\lambda\left(2^R-1\right)}{\theta d_\ell R \ln (2)} \leq M_{\operatorname{int}}^* \iff \lambda \leq  \frac{M_{\operatorname{int}}^*\, \theta d_\ell R \ln (2)}{2^R - 1}.
\end{align}     
Similarly, 
\begin{align*}
    \frac{\lambda\left(2^R-1\right)}{\theta d_1 R \ln (2)} \geq 2^R e^{-\theta} \iff \lambda \geq \frac{2^R e^{-\theta} \theta d_1 R \ln(2)}{2^R - 1}. 
\end{align*}
 \end{IEEEproof}

From Lemma \ref{lambda_lwbdora}, we have for $\theta \in (0, \theta_c)$,
\begin{align}
   \frac{2^R e^{-\theta} \theta d_1 R \ln(2)}{2^R - 1} &\leq \frac{M_{\operatorname{int}}^*\, \theta d_\ell R \ln (2)}{2^R - 1}\\
   2^R e^{-\theta}  d_1  &\leq M_{\operatorname{int}}^*\,  d_\ell\\
   \frac{2^R e^{-\theta}}{M_{\operatorname{int}}^*} d_1 &\leq d_{\ell}. \label{63i} 
\end{align}
The parameter $\ell$ of a minimizer $\overrightarrow v$ in $(\ref{maxtominora})$ satisfies $(\ref{63i})$. Therefore, if a certain $\ell' \in \{1, \ldots, K \}$ does not satisfy $(\ref{63i})$, then since the RHS of $(\ref{63i})$ is decreasing in $\ell$, the number of strictly positive $v_i$'s in a minimizer is $\leq \ell' -1$. If $\theta \geq \theta_c$, then $M_{\operatorname{int}}^* = 2^R e^{-\theta}$ so $(\ref{63i})$ is equivalent to $\ell = 1$. This proves Theorem \ref{active_layer_theoremORA}. 

For $1 \leq i \leq \ell$, since  $2^R e^{-\theta} \leq \mathscr{C}_i(\lambda) \leq M_{\operatorname{int}}^*$, we have two solutions, denoted henceforth as $v_i^-$ and $v_i^+$, of $\mathscr{U}(v) = \mathscr{C}_i(\lambda)$ so that $v_i^- \in [0, v^*_{\operatorname{int}}]$ and $v_i^+ \in [v^*_{\operatorname{int}}, 1]$ for each $1 \leq i \leq \ell$. From $(\ref{c_m2onoORA2})$ and the fact that $\mathscr{U}$ is increasing over $[0, v_{\operatorname{int}}^*]$ and decreasing over $[v_{\operatorname{int}}^*, 1]$, we have 
\begin{align}
    0 < v_1^- <  \cdots < v_{\ell}^- \leq v_{\operatorname{int}}^* \leq v_\ell^+ < \cdots < v_{1}^+ \leq 1. \label{orderofsolsora}
\end{align}
From $(\ref{orderofsolsora})$, it is clear that if $\ell = 1$, we must have $v_1 = v_1^+$, $\mathscr{C}_1(\lambda) = 2^R e^{-\theta}$. If $\ell \geq 2$, then in order to satisfy $(\ref{mo2noORA2})$, we must have $v_i = v_i^+$ for all $i \in \{1, \ldots, \ell - 1 \}$ whereas $v_\ell$ can either be $v_\ell = v_{\ell}^+$ or $v_\ell = v_\ell^-$. Note that $v_i \geq v_{\operatorname{int}}^*$ for all $i \in \{1,\ldots, \ell - 1 \}$, whereas 
\begin{align*}
    v_\ell \begin{cases}
        \geq v_{\operatorname{int}}^* & \text{ if } v_\ell = v_{\ell}^+\\
        \leq v_{\operatorname{int}}^* & \text{ if } v_\ell = v_{\ell}^-. 
    \end{cases}
\end{align*}
\begin{lemma}
\label{v1cond}
    For $\theta \in (0, \theta_c)$, any minimizer $(v_1, \ldots, v_K)$ in $(\ref{maxtominora})$ satisfies $v_1 > v_{\operatorname{int}}^*$.
\end{lemma}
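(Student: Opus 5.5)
We split into the cases $\ell = 1$ and $\ell \geq 2$, mirroring the proof of Lemma \ref{x1cond} for the PDS scheme.

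\emph{Case $\ell = 1$.} Here the minimizer is $(1,0,\ldots,0)$, so $v_1 = 1$. Since $\theta < \theta_c$, we have $N(1) < 0$, and $\mathscr{U}$ is strictly increasing then strictly decreasing on $[0,1]$. Hence $v_{\operatorname{int}}^* \in (0,1)$, and $v_1 = 1 > v_{\operatorname{int}}^*$.

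\emph{Case $\ell \geq 2$.} By the discussion following $(\ref{orderofsolsora})$, the first coordinate must be taken from the decreasing branch, $v_1 = v_1^+ = V_{R,\theta}^+(\mathscr{C}_1(\lambda)) \geq v_{\operatorname{int}}^*$. So it only remains to rule out equality. Equality $v_1^+ = v_{\operatorname{int}}^*$ holds only if $\mathscr{C}_1(\lambda) = M_{\operatorname{int}}^*$, because $M_{\operatorname{int}}^*$ is the unique maximum value of the unimodal $\mathscr{U}$ on $[0,1]$.

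Now use $(\ref{c_m2onoORA2})$: since $d_1 > d_2$ and $\lambda > 0$ (Lemma \ref{lambda_posi}), we have strict increase $\mathscr{C}_1(\lambda) < \mathscr{C}_2(\lambda) \leq M_{\operatorname{int}}^*$. Therefore $\mathscr{C}_1(\lambda) < M_{\operatorname{int}}^*$. The equation $\mathscr{U}(v) = \mathscr{C}_1(\lambda)$ then has its larger root strictly to the right of the peak, so $v_1 = v_1^+ > v_{\operatorname{int}}^*$.

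The only point needing care is the well-definedness and strictness of the two branches, namely that $\mathscr{U}$ is \emph{strictly} monotone on each side of $v_{\operatorname{int}}^*$. This follows from $N' < 0$: $N$ is strictly decreasing and changes sign exactly once, so $\mathscr{U}'$ vanishes only at $v_{\operatorname{int}}^*$. I expect no real obstacle beyond recording this strictness.
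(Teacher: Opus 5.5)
Your proposal is correct and follows essentially the same argument as the paper. The case $\ell = 1$ gives $v_1 = 1 > v_{\operatorname{int}}^*$, and for $\ell \geq 2$ the strict ordering $\mathscr{C}_1(\lambda) < \mathscr{C}_\ell(\lambda) \leq M_{\operatorname{int}}^*$ forces $v_1 = v_1^+ > v_{\operatorname{int}}^*$; your remark that $N' < 0$ makes $\mathscr{U}$ strictly monotone on each side of $v_{\operatorname{int}}^*$ just spells out the strictness the paper uses implicitly.
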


\begin{IEEEproof}
    If $\ell = 1$, we have $v_1 = 1 > v_{\operatorname{int}}^*$. If $\ell > 1$, then $\mathscr{C}_1(\lambda) < M_{\operatorname{int}}^*$ so that $v_1 = v_1^+ > v_{\operatorname{int}}^*$.  
\end{IEEEproof}

The only candidates for the minimizers in $(\ref{maxtominora})$ are $\overrightarrow v^+ \coloneqq (v_1^+, \ldots, v_{\ell}^+, 0, \ldots, 0)$ and $\overrightarrow v^- \coloneqq (v_1^+, \ldots, v_{\ell-1}^+, v_{\ell}^-, 0, \ldots, 0)$. 

Note that the dependence of $\overrightarrow v^-$ and $\overrightarrow v^+$ on $\lambda$ is implicit. From $(\ref{eq_consora})$, any minimizer $\overrightarrow v$ in $(\ref{maxtominora})$ must satisfy  
\begin{align}
    \sum_{i=1}^\ell v_i = 1 \label{eq_tocora}
\end{align}
for some $\lambda$ satisfying $(\ref{lambda_int_ora})$. Algorithm \ref{algorithmglobalmaxORA} searches for a global maximizer in $(\ref{maxORA})$ by evaluating $\overrightarrow v^-$ and $\overrightarrow v^+$ for all possible values of $\ell \leq \ell_{\operatorname{ORA}}$. For $\overrightarrow v = \overrightarrow v^+$, it can be checked that each $v_i^+$ is continuous and decreasing as a function of $\lambda \in [\lambda_{\operatorname{low}}, \lambda_{\operatorname{upp}}(\ell)]$, which makes the LHS of $(\ref{eq_tocora})$ also continuous and decreasing in $\lambda$. This proves that $S_\ell^+(\lambda_{\operatorname{low}}) \geq 1 \geq S_{\ell}^+(\lambda_{\operatorname{upp}}(\ell))$ is a necessary and sufficient condition for a unique $\lambda$ satisfying $(\ref{eq_tocora})$ for $\overrightarrow v = \overrightarrow v^+$.

Finding the roots of $S_{\ell}^-(\lambda) = 1$ is more complex. For every $\lambda \in [\lambda_{\operatorname{low}}, \lambda_{\operatorname{upp}}(\ell)]$, we have $S_\ell^+(\lambda) \geq S_\ell^-(\lambda)$. Hence, if $S_\ell^+(\lambda_{\operatorname{low}}) < 1$, we have $S_\ell^-(\lambda) \leq S_\ell^+(\lambda) < 1$
for all $\lambda \in [\lambda_{\operatorname{low}}, \lambda_{\operatorname{upp}}(\ell)]$, i.e., no roots exist. Hence, a necessary condition for the existence of solutions of $S_\ell^-(\lambda) = 1$ over $\lambda \in [\lambda_{\operatorname{low}}, \lambda_{\operatorname{upp}}(\ell)]$ is $S_\ell^+(\lambda) \geq 1$. Let $t = v_\ell^-$. Then 
\begin{align}
    t = V_{R, \theta}^-\left(\mathscr{C}_\ell(\lambda)  \right) \iff \mathscr{U}(t) \frac{d_\ell \theta  R \ln(2)}{2^R - 1} = \lambda \text{ and } V_{R, \theta}^-\left(\mathscr{C}_\ell(\lambda_{\operatorname{low}})  \right) \leq t \leq v_{\operatorname{int}}^*. \label{reparx0}
\end{align}
Note that as $\lambda$ increases from $\lambda_{\operatorname{low}}$ to $\lambda_{\operatorname{upp}}(\ell)$, we have $t$ increasing from $V_{R, \theta}^-\left(\mathscr{C}_\ell(\lambda_{\operatorname{low}})  \right)$ to $v_{\operatorname{int}}^*$. We define 
\begin{align}
    t_{\ell, \operatorname{low}} \coloneqq V_{R, \theta}^-\left(\mathscr{C}_\ell(\lambda_{\operatorname{low}})  \right) = V_{R, \theta}^-\left(\frac{2^R e^{-\theta} d_1 }{ d_\ell }   \right).
\end{align}
We reparametrize $S_{\ell}^-(\lambda) = F_{\ell}(t)$ in terms of $t$ as follows:
\begin{align}
    F_\ell(t) &= t + \sum_{i=1}^{\ell - 1} V_{R, \theta}^+\left(\mathscr{U}(t) \frac{d_\ell}{d_i} \right). \label{looki}
\end{align}
Recall from Lemma \ref{lambda_lwbdora} that for $i < \ell$ and for $t_{\ell, \operatorname{low}} \leq t \leq v_{\operatorname{int}}^*$,
\begin{align}
    2^R e^{-\theta} \leq \mathscr{U}(t) \frac{d_\ell}{d_i} \leq M_{\operatorname{int}}^* \frac{d_{\ell}}{d_i} < M_{\operatorname{int}}^*. \label{ana2nana}
\end{align}
Now define for $i < \ell$, 
\begin{align*}
  \beta_{i, \ell} &= \frac{d_\ell}{d_i} < 1,\\
  \psi_{i, \ell}(t) &= V_{R, \theta}^+\left(\mathscr{U}(t) \beta_{i, \ell} \right), 
\end{align*}
where recall from $(\ref{orderofsolsora})$ that
\begin{align*}
    \psi_{1, \ell}(t) > \cdots > \psi_{\ell-1, \ell}(t) > v_{\operatorname{int}}^* \geq t. 
\end{align*}
We can then write  
\begin{align}
    F_\ell(t) &= t + \sum_{i=1}^{\ell - 1} \psi_{i, \ell}(t), \label{reparx1}\\
    F_\ell'(t) &= 1 + \sum_{i=1}^{\ell - 1} \psi_{i, \ell}'(t) \notag \\
    &= 1 + \sum_{i=1}^{\ell - 1} \beta_{i, \ell}\frac{ \mathscr{U}'(t) }{\mathscr{U}'\left( \psi_{i, \ell}(t) \right)} \notag \\
    &= 1 + d_\ell \mathscr{U}'(t) \sum_{i=1}^{\ell - 1} \frac{1}{d_i\mathscr{U}'\left( \psi_{i, \ell}(t) \right)}. \notag 
\end{align}
First, we show that $F_\ell(t)$ extends to a real analytic function on some open interval $I_\ell \supset [t_{\ell, \operatorname{low}}, v_{\operatorname{int}}^* ]$, i.e., there exists a real analytic function $\widetilde{F}_\ell(t)$ on $I_\ell$ such that $\widetilde{F}_\ell(t) = F_\ell(t)$ for all $t \in [t_{\ell, \operatorname{low}}, v_{\operatorname{int}}^* ]$. Looking at $(\ref{looki})$, we can see that $\mathscr{U}(t)$ is analytic for all $t > 0$, and note that $t_{\ell, \operatorname{low}} > 0$. Furthermore, since $V_{R, \theta}^+$ is the inverse of $\mathscr{U}$ over the domain $[v_{\operatorname{int}}^*, 1]$ and $\mathscr{U}'(t) < 0$ for all $t \in (v_{\operatorname{int}}^*, 1]$, it follows from the inverse function theorem that $V_{R, \theta}^+$ has an analytic extension on an open interval containing $[2^R e^{-\theta}, M_{\operatorname{int}}^*)$. From $(\ref{ana2nana})$, the argument to $V_{R, \theta}^+$ in $(\ref{looki})$ lies in the interval $[2^R e^{-\theta}, M_{\operatorname{int}}^*)$ for $t \in [t_{\ell, \operatorname{low}}, v_{\operatorname{int}}^* ]$. Hence, for small enough $I_\ell \supset [t_{\ell, \operatorname{low}}, v_{\operatorname{int}}^* ]$, the argument to $V_{R, \theta}^+$ in $(\ref{looki})$ lies in an open interval containing $[2^R e^{-\theta}, M_{\operatorname{int}}^*)$ on which $V_{R, \theta}^+$ has a real analytic extension. This gives us that $F_\ell(t)$ extends to a real analytic function on $I_\ell$. Now note that   
\begin{align*}
    F_\ell(t_{\ell, \operatorname{low}}) &= t_{\ell, \operatorname{low}} + \sum_{i=1}^{\ell - 1}  V_{R, \theta}^+\left(\mathscr{U}(t_{\ell, \operatorname{low}}) \beta_{i, \ell} \right)\\
    &= t_{\ell, \operatorname{low}} + \sum_{i=1}^{\ell - 1}  V_{R, \theta}^+\left(2^R e^{-\theta} \frac{d_1}{d_i}\right) > V_{R, \theta}^+\left(2^R e^{-\theta} \right) = 1.  
\end{align*}
Hence, $F_\ell(t)$ is not identically zero on $I_\ell$. Hence, the compact interval $[t_{\ell, \operatorname{low}} , v_{\operatorname{int}}^*]$ contains finitely many solutions of $F_\ell(t) = 1$. 

For any point $(\overrightarrow v, \lambda, \overrightarrow \mu)$ satisfying $(\ref{stationarityORA}) - (\ref{monotonicity_v})$ and $(\ref{bo2ntORA2})$, the critical cone \cite[p. 330]{nonlinopt} at the point $\overrightarrow v$ is given by 
\begin{align*}
    \mathcal{C}(\overrightarrow v) &= \Bigg \{\overrightarrow w \in \mathbb{R}^K: \sum_{i=1}^K w_i = 0,\\ 
    &\quad \quad \quad \quad \quad \quad \quad  w_{\ell+1} = \cdots = w_K = 0 \Bigg \}\\
    &= \Bigg \{\overrightarrow w \in \mathbb{R}^K: \sum_{i=1}^\ell  w_i = 0,\\ 
    & \quad \quad \quad \quad \quad \quad \quad  w_{\ell+1} = \cdots = w_K = 0 \Bigg \}. 
\end{align*}
Since each KKT point $\overrightarrow v$ for the optimization problem in $(\ref{maxtominora})$ is associated with unique Lagrange multipliers $\lambda$ and $\overrightarrow \mu$ by Lemma \ref{unique_mult}, we write the critical cone as $\mathcal{C}(\overrightarrow v)$ instead of $\mathcal{C}(\overrightarrow v, \lambda, \overrightarrow \mu)$. The Lagrangian Hessian is a diagonal matrix given by  
\begin{align*}
    \frac{\partial^2 \mathcal{L}(\overrightarrow v, \lambda, \overrightarrow \mu)}{\partial v_i^2} &= - t''(v_i) d_i \\
    &= \begin{cases}
        - t''(v_i) d_i & \text{ if } 1 \leq i \leq \ell\\
        0 & \text{ if } i > \ell.
    \end{cases}
\end{align*}
From \cite[Theorem 12.5]{nonlinopt}, a second-order necessary condition for a KKT point $\overrightarrow v$ to be a local minimizer in $(\ref{maxtominora})$ is  
\begin{align}
    \sum_{i=1}^\ell - t''(v_i) d_i w_i^2 &\geq 0\\
    \iff \sum_{i=1}^\ell t''(v_i) d_i w_i^2 &\leq 0 \label{second_order_condora}
\end{align}
for all $(w_1, \ldots, w_\ell)$ satisfying 
\begin{align*}
    \sum_{i=1}^\ell w_i = 0. 
\end{align*}
A sufficient condition is when the inequality $(\ref{second_order_condora})$ is strict. Note that 
\begin{align*}
    t''(v) &= \frac{\theta  R \ln (2)}{2^R - 1}  \frac{2^{R/v} t(v)}{v^4} \left(R \ln (2) \left(\frac{\theta  2^{R/v}}{2^R-1}-1\right)-2 v \right)\\
    &= \mathscr{U}'(v) \frac{\theta  R \ln (2)}{2^R - 1}. 
\end{align*}
We can then rewrite  $(\ref{second_order_condora})$ as
\begin{align*}
    \sum_{i=1}^\ell t''(v_i) d_i w_i^2 &\leq 0\\
    \iff \sum_{i=1}^\ell \mathscr{U}'(v_i) d_i w_i^2 &\leq 0.
\end{align*}
For $\overrightarrow v = (v_1^+, \ldots, v_{\ell-1}^+, v_{\ell}^-, 0, \ldots, 0)$ and using the reparameterization given in $(\ref{reparx0}) - (\ref{reparx1})$, the above can be written as  
\begin{align*}
    \mathscr{U}'(t) d_\ell w_\ell^2 + \sum_{i=1}^{\ell-1} \mathscr{U}'(\psi_{i, \ell}(t)) d_i w_i^2 &\leq 0. 
\end{align*}
Then since $w_\ell = - \sum_{i=1}^{\ell - 1} w_i$, the second-order necessary condition becomes 
\begin{align}
    a \left( \sum_{i=1}^{\ell - 1} w_i\right)^2 &\leq \sum_{i=1}^{\ell-1} b_i w_i^2 \label{,.7432}
\end{align}
for all $w_1, \ldots, w_{\ell-1}$, where we define $a = \mathscr{U}'(t) d_\ell$ and $b_i = - \mathscr{U}'(\psi_{i, \ell}(t)) d_i > 0$. Then by Cauchy–Schwarz inequality,  
\begin{align*}
    \left( \sum_{i=1}^{\ell - 1} w_i\right)^2 \leq  \left( \sum_{i=1}^{\ell - 1} b_i w_i^2\right) \left( \sum_{i=1}^{\ell - 1} \frac{1}{b_i}\right).  
\end{align*}
Hence, if 
\begin{align}
    a\left( \sum_{i=1}^{\ell - 1} \frac{1}{b_i}\right) \leq 1, \label{alsonece} 
\end{align}
then $(\ref{,.7432})$ holds for all $w_1, \ldots, w_{\ell -1}$. But $(\ref{alsonece})$ is also necessary because if $(\ref{alsonece})$ does not hold, then choosing $w_i = 1/b_i$ in $(\ref{,.7432})$ would violate the inequality. Finally, we note that $(\ref{alsonece})$ is just $F_\ell'(t) \geq 0$. This completes the proof of Theorem \ref{global_maximizer_thmORA}.

To prove Theorem \ref{local_maximizer_thmORA}, note that $\overrightarrow v^+$ already satisfies $(\ref{gf3ORA})$ and $(\ref{lspecificationORA})$ as well as the KKT conditions. Hence, it suffices to show that $\overrightarrow v^+$ is a strict local maximizer in $(\ref{maxORA})$. For $\overrightarrow v = \overrightarrow v^+$, each $v_i^+ \geq v_{\operatorname{int}}^*$ for $1 \leq i \leq \ell$ and $v_1^+ > v_{\operatorname{int}}^*$. It can be checked that $t''(v) < 0$ for $v > v_{\operatorname{int}}^*$. Hence, the inequality in $(\ref{second_order_condora})$ is a strictly inequality for $\overrightarrow v = \overrightarrow v^+$ and $\overrightarrow w \neq \overrightarrow 0$. Hence, from Theorem \cite[Theorem 12.6]{nonlinopt}, $\overrightarrow v = \overrightarrow v^+$ is a strict local minimizer in $(\ref{maxtominora})$, equivalently, a strict local maximizer in $(\ref{maxORA})$.

\appendices

\section{Proof of Lemma \ref{power_cons_satisfied} \label{power_cons_satisfied_proof}}

If treat the random vectors as elements of the Hilbert space $\mathcal{H} := L^2(\Omega; \mathbb{C}^n)$ with the inner product 
$$\left \langle \mathbf{X}, \mathbf{Y} \right \rangle_\mathcal{H} \coloneqq \mathbb{E}\left [ \left \langle \mathbf{X}, \mathbf{Y} \right \rangle \right] = \mathbb{E}\left [\sum_{i=1}^n X_i \overline{Y}_i \right],$$ 
then we have

\begin{align*}
    \mathbb{E}\left[ \left | \left | \mathbf{X} \right | \right |^2 \right] &= \left \langle \mathbf{X},\mathbf{X} \right \rangle_{\mathcal{H}}\\
    &=  \left \langle \sum_{i=1}^K  \sqrt{\alpha_i P} \mathbf{X}_i, \sum_{j=1}^K \sqrt{ \alpha_j P } \mathbf{X}_j \right \rangle_\mathcal{H}\\
    &= \sum_{i=1}^K \sqrt{ \alpha_i P }  \left \langle  \mathbf{X}_i, \sum_{j=1}^K  \sqrt{\alpha_j P }  \mathbf{X}_j \right \rangle_\mathcal{H} \\
    &=  \sum_{i=1}^K \sum_{j=1}^K P \sqrt{\alpha_j \alpha_i} \left \langle  \mathbf{X}_i,  \mathbf{X}_j \right \rangle_\mathcal{H} \\
    &= \sum_{i=1}^K \sum_{j=1}^K P \sqrt{\alpha_j \alpha_i} \left[ \sum_{t=1}^n \mathbb{E} [ X_{i, t} \overline{X_{j, t}}] \right]  \\
    &\stackrel{(a)}{=} \sum_{i=1}^K P \alpha_i \left[ \sum_{t=1}^n \mathbb{E} [ |X_{i, t}|^2] \right]\\
    &= nP \sum_{i=1}^K \alpha_i = nP,
\end{align*}
where the equality $(a)$ above uses the fact that codebooks for $i \neq j$ are independently generated.

\section{Proof of Lemma \ref{error_exp_iid} \label{error_exp_iid_proof}}

The starting point of the proof is Lemma \ref{gallager_bound} below which can be found in \cite[(11)]{gallager_paper_65} or \cite[Theorem 3]{ppv}, although the latter does not explicitly mention the random coding argument used to derive the bound $(\ref{x3})$. 
\begin{lemma}
    Consider any channel $W(\cdot|\cdot)$ from $\mathbb{C}^n$ to $\mathbb{C}^n$. Consider a random channel code $(\operatorname{f}_n, \operatorname{g}_n)$ with codebook size $2^{nR}$ such that the channel input $\mathbf{X} \sim \overline{P}$ for some distribution $\overline{P} \in \mathcal{P}(\mathbb{C}^n)$. Then for any $n, \lambda \in [0, 1]$ and $R$, the ensemble average error probability of this code is upper bounded by  
    \begin{align}
        2^{n\lambda R}\, \mathbb{E}\left [ \left( \mathbb{E}\left [ \exp\left(\frac{1}{1 + \lambda} \ln \left(\frac{W(\mathbf{Y}|\overline{\mathbf{X}})}{\overline{P}W(\mathbf{Y})} \right)  \right) \Bigg | \mathbf{Y} \right] \right)^{1 + \lambda} \right ], \label{x3}
    \end{align}
    where $\overline{\mathbf{X}} \sim \overline{P}$, $\mathbf{Y} \sim \overline{P}W$ and $\overline{\mathbf{X}} \perp \!\!\! \perp \mathbf{Y}$. 
    \label{gallager_bound}
\end{lemma}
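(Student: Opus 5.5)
This is the standard Gallager random-coding bound, and I would prove it by the classical union-bound-with-$\rho$ argument, taking care that the decoder matches the information density appearing in (\ref{x3}).

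\textbf{Setup.} Let $M=2^{nR}$ codewords $\mathbf{X}(1),\ldots,\mathbf{X}(M)$ be drawn i.i.d.\ from $\overline{P}$. The decoder outputs the index maximizing $W(\mathbf{y}|\mathbf{x}(j))$ (maximum likelihood), with ties counted as errors. By symmetry of the ensemble, the ensemble-average error probability equals that for message $1$. So I would fix the transmitted codeword $\mathbf{X}(1)=\mathbf{x}$ and the output $\mathbf{y}$, and bound the conditional error probability.

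\textbf{Union step with exponent $\lambda$.} Conditioned on $(\mathbf{x},\mathbf{y})$, an error requires some $j\neq 1$ with $W(\mathbf{y}|\mathbf{X}(j))\ge W(\mathbf{y}|\mathbf{x})$. For any $s>0$ the per-codeword indicator is bounded by $\big(W(\mathbf{y}|\mathbf{X}(j))/W(\mathbf{y}|\mathbf{x})\big)^{s}$. I would combine this with the inequality $\min\{1,\sum_j a_j\}\le(\sum_j a_j)^\lambda$ for $\lambda\in[0,1]$ and then Jensen's inequality (concavity of $u\mapsto u^\lambda$) over the other codewords. This gives the conditional error bound
\begin{align*}
(M-1)^\lambda\Big(\mathbb{E}\big[W(\mathbf{y}|\overline{\mathbf{X}})^{s}\big]\, W(\mathbf{y}|\mathbf{x})^{-s}\Big)^\lambda .
\end{align*}

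\textbf{Choosing $s$ and averaging.} Choose $s=1/(1+\lambda)$ and average over $\mathbf{x}\sim W(\mathbf{y}|\cdot)\overline{P}$. The two factors then combine into
\begin{align*}
\int \Big(\int \overline{P}(d\mathbf{x})\,W(\mathbf{y}|\mathbf{x})^{1/(1+\lambda)}\Big)^{1+\lambda} d\mathbf{y} .
\end{align*}
To reach the stated form, I would divide and multiply inside by $\overline{P}W(\mathbf{y})$. Using $(\overline{P}W(\mathbf{y}))^{1+\lambda}=\overline{P}W(\mathbf{y})\cdot \overline{P}W(\mathbf{y})^{\lambda}$, the expression rewrites as $\mathbb{E}_{\mathbf{Y}\sim\overline{P}W}$ of $(\mathbb{E}[(W/\overline{P}W)^{1/(1+\lambda)}\mid\mathbf{Y}])^{1+\lambda}$. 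This is exactly (\ref{x3}) once I bound $(M-1)^\lambda\le 2^{n\lambda R}$ and write $\exp\big(\tfrac{1}{1+\lambda}\ln(\cdot)\big)$ for the power.

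\textbf{Main obstacle.} The hard part is measure-theoretic bookkeeping for densities on $\mathbb{C}^n$. I need $W(\cdot|\mathbf{x})$ and $\overline{P}W$ to have densities so that the ratio is well defined $\overline{P}W$-a.e. I also need the interchange of integrals justified by Tonelli, which works because everything is nonnegative. Ties are handled automatically by the non-strict inequality.
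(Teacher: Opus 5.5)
Your argument is correct and is essentially the classical derivation the paper relies on. The paper does not reprove this lemma; it cites Gallager \cite[(11)]{gallager_paper_65}, whose proof matches yours step for step: bound the ML error indicator pointwise by $\bigl(\sum_{j\neq 1} (W(\mathbf{y}|\mathbf{X}(j))/W(\mathbf{y}|\mathbf{x}))^{1/(1+\lambda)}\bigr)^{\lambda}$, average over the ensemble using Jensen for $\lambda\in[0,1]$, normalize by $\overline{P}W(\mathbf{y})$, and finish with $(M-1)^\lambda \le 2^{n\lambda R}$.
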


Choosing $\overline{P} = \mathcal{C} \mathcal{N}(\mathbf{0}, \rho \mathbf{I}_n)$, $W(\cdot|\mathbf{x}) = \mathcal{C} \mathcal{N}(\mathbf{x}, \mathbf{I}_n)$ and $\mathbf{Y} \sim \mathcal{C} \mathcal{N}(\mathbf{0}, (1 + \rho) \mathbf{I}_n )$ in Lemma \ref{gallager_bound}, we first evaluate 
\begin{align*}
    &\mathbb{E}\left [ \exp\left(\frac{1}{1 + \lambda} \ln \frac{W(\mathbf{Y}|\overline{\mathbf{X}})}{\overline{P}W(\mathbf{Y})}  \right) \Bigg | \mathbf{Y} = \mathbf{y} \right]\\
    &= \mathbb{E}\left [ \exp \left( \frac{1}{1 + \lambda } \left [ \ln \left( \frac{1}{\pi^n} \right) - ||\mathbf{y} - \overline{\mathbf{X}}||^2 - \ln \left(\frac{1}{\pi^n(\rho + 1)^n} \right) + \frac{||\mathbf{y}||^2}{\rho + 1} \right] \right) \Bigg | \mathbf{Y} = \mathbf{y} \right ]\\
    &= \mathbb{E}\left [ \exp \left(   \ln \left(1 + \rho \right)^{\frac{n}{1 + \lambda}} - \frac{1}{1 + \lambda} ||\mathbf{y} - \overline{\mathbf{X}}||^2  + \frac{||\mathbf{y}||^2}{(1 + \rho)(1 + \lambda)} \right) \Bigg | \mathbf{Y} = \mathbf{y} \right ].
\end{align*}
We have 
\begin{align*}
    &\exp \left(   \ln \left(1 + \rho \right)^{\frac{n}{1 + \lambda}} - \frac{1}{1 + \lambda} ||\mathbf{y} - \overline{\mathbf{X}}||^2  + \frac{||\mathbf{y}||^2}{(1 + \rho)(1 + \lambda)} \right)\\
    &=\left(1 + \rho \right)^{\frac{n}{1 + \lambda}} \exp \left(\frac{||\mathbf{y}||^2}{(1 + \rho)(1 + \lambda)} \right) \exp \left( - \frac{1}{1 + \lambda} ||\mathbf{y} - \overline{\mathbf{X}}||^2  \right) 
\end{align*}
and 
\begin{align*}
    \mathbb{E}\left [ \exp \left( - \frac{1}{1 + \lambda} ||\mathbf{y} - \overline{\mathbf{X}}||^2  \right) \right] &= \prod_{i=1}^n \mathbb{E}\left[ \exp\left( -\frac{1}{1 + \lambda} \big |\overline{X}_i- y_i \big |^2 \right) \right]\\
    &= \prod_{i=1}^n \frac{(\lambda +1) \exp\left(-\frac{|y_i|^2}{\lambda +\rho +1}\right)  }{\lambda +\rho +1}\\
    &= \left(\frac{1 + \lambda}{1 + \lambda + \rho} \right)^n \exp \left(- \frac{||\mathbf{y}||^2}{1 + \lambda + \rho} \right).
\end{align*}
So we have 
\begin{align*}
    &\mathbb{E}\left [ \exp\left(\frac{1}{1 + \lambda} \ln \frac{W(\mathbf{Y}|\overline{\mathbf{X}})}{\overline{P}W(\mathbf{Y})}  \right) \Bigg | \mathbf{Y} \right]\\
    &= \left(1 + \rho \right)^{\frac{n}{1 + \lambda}} \exp \left(\frac{||\mathbf{Y}||^2}{(1 + \rho)(1 + \lambda)} \right) \left(\frac{1 + \lambda}{1 + \lambda + \rho} \right)^n \exp \left(- \frac{||\mathbf{Y}||^2}{1 + \lambda + \rho} \right)\\
    &= \left(1 + \rho \right)^{\frac{n}{1 + \lambda}} \left(\frac{1 + \lambda}{1 + \lambda + \rho} \right)^n \exp \left(-\frac{\lambda  \rho ||\mathbf{Y}||^2 }{(\lambda +1) (\rho +1) (\lambda +\rho +1)} \right).
\end{align*}
Then 
\begin{align*}
    &\left(\mathbb{E}\left [ \exp\left(\frac{1}{1 + \lambda} \ln \frac{W(\mathbf{Y}|\overline{\mathbf{X}})}{\overline{P}W(\mathbf{Y})}  \right) \Bigg | \mathbf{Y} \right]\right)^{1 + \lambda}\\
    &=  \left(1 + \rho \right)^{n} \left(\frac{1 + \lambda}{1 + \lambda + \rho} \right)^{n(1 + \lambda)} \exp \left(-\frac{\lambda  \rho ||\mathbf{Y}||^2 }{ (\rho +1) (\lambda +\rho +1)} \right).
\end{align*}
Then we take the outer expectation w.r.t. $\mathbf{Y} \sim \mathcal{CN}(\mathbf{0}, (1 + \rho) \mathbf{I}_n)$ to obtain 
\begin{align*}
    &\mathbb{E}\left [ \exp \left(-\frac{\lambda  \rho ||\mathbf{Y}||^2 }{ (\rho +1) (\lambda +\rho +1)} \right) \right ]\\
    &= \prod_{i=1}^n \mathbb{E}\left [ \exp \left(-\frac{\lambda  \rho |Y_i|^2 }{ (\rho +1) (\lambda +\rho +1)} \right) \right ]\\
    &= \left( \frac{\lambda +\rho +1}{\lambda  \rho +\lambda +\rho +1}\right)^n,
\end{align*}
giving us 
\begin{align*}
    &2^{n \lambda R}\,\mathbb{E}\left [ \left(\mathbb{E}\left [ \exp\left(\frac{1}{1 + \lambda} \ln \frac{W(\mathbf{Y}|\overline{\mathbf{X}})}{\overline{P}W(\mathbf{Y})}  \right) \Bigg | \mathbf{Y} \right]\right)^{1 + \lambda} \right]\\
    &= e^{n \lambda R_{\operatorname{nats}}}\, \left(1 + \rho \right)^{n} \left(\frac{1 + \lambda}{1 + \lambda + \rho} \right)^{n(1 + \lambda)} \left( \frac{\lambda +\rho +1}{\lambda  \rho +\lambda +\rho +1}\right)^n\\
    &= e^{n \lambda R_{\operatorname{nats}}}   \left(\frac{1 + \lambda}{1 + \lambda + \rho} \right)^{n \lambda}\\
    &= \exp \left(n \lambda R \ln(2) - n\lambda \ln \left(\frac{1 + \lambda + \rho}{1 + \lambda} \right) \right)\\
    &= \exp \left(-n \left [ \lambda \ln \left(1 + \frac{\rho}{1 + \lambda} \right) - \lambda R \ln(2) \right] \right). 
\end{align*}

\section{Proof of Lemma \ref{lemma_happy_CLT} \label{lemma_happy_CLT_proof}}

The starting point of the proof is Shannon's achievability bound \cite[Theorem 1]{SHANNON19576}, presented as Lemma \ref{stillaaron} below. A variant of Lemma \ref{stillaaron} for random feedback codes can be found in \cite[Lemma 14]{9099482}.  

\begin{lemma}
    Consider any channel $W(\cdot|\cdot)$ from $\mathbb{C}^n$ to $\mathbb{C}^n$. Consider a random channel code $(\operatorname{f}_n, \operatorname{g}_n)$ with codebook size $2^{nR}$ such that the channel input $\mathbf{X} \sim \overline{P}$ for some distribution $\overline{P} \in \mathcal{P}(\mathbb{C}^n)$. Then for any $n, \kappa$ and $R$, the ensemble average error probability of the code is upper bounded by  
    \begin{align*}
        \mathbb{P}\left(\frac{1}{n} \log \left(\frac{W(\mathbf{Y} | \mathbf{X})}{\overline{P}W(\mathbf{Y})} \right)\leq R + \kappa \right) + 2^{-n \kappa}, 
    \end{align*}
    where $(\mathbf{X}, \mathbf{Y}) \sim \overline{P} \circ W$ and $\mathbf{Y} \sim \overline{P}W$.
    \label{stillaaron}
\end{lemma}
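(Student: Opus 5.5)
This is Shannon's threshold-decoding bound, and I will prove it by a standard random-coding argument. Fix the threshold $\gamma = n(R+\kappa)$ and the information density
\begin{align*}
\imath(\mathbf{x};\mathbf{y}) \coloneqq \log \frac{W(\mathbf{y}|\mathbf{x})}{\overline{P}W(\mathbf{y})}.
\end{align*}
The ML decoder is optimal for average error probability over equiprobable messages, so it suffices to bound the ensemble error of a suboptimal threshold decoder. This decoder outputs the unique message $j$ with $\imath(\mathbf{x}_j;\mathbf{y}) > \gamma$, and declares an error if there is no such message or more than one. The ML error of every fixed codebook is at most the threshold-decoder error of the same codebook, and averaging over the ensemble preserves this inequality.

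By symmetry of the i.i.d. codebook construction, I condition on message $1$ being sent. An error occurs only if one of two events happens:
\begin{enumerate}
\item $\imath(\mathbf{X}_1;\mathbf{Y}) \leq \gamma$. Since $(\mathbf{X}_1,\mathbf{Y}) \sim \overline{P}\circ W$, its probability is exactly the first term of the bound.
\item Some $j \neq 1$ has $\imath(\mathbf{X}_j;\mathbf{Y}) > \gamma$.
\end{enumerate}
For the second event, $\overline{\mathbf{X}} = \mathbf{X}_j$ is independent of $\mathbf{Y}$, with $\overline{\mathbf{X}} \sim \overline{P}$ and $\mathbf{Y} \sim \overline{P}W$. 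A change of measure then gives
\begin{align*}
\mathbb{P}\big(\imath(\overline{\mathbf{X}};\mathbf{Y}) > \gamma\big)
= \mathbb{E}\Big[ 2^{-\imath(\mathbf{X};\mathbf{Y})}\, \mathds{1}\{\imath(\mathbf{X};\mathbf{Y}) > \gamma\} \Big]
\leq 2^{-\gamma},
\end{align*}
where the expectation is over the joint law $\overline{P}\circ W$. The identity holds because the density of the product measure relative to the joint measure is $2^{-\imath}$. Both measures have densities here (complex Gaussian in our application), so this is just an integral identity.

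Applying the union bound over the $2^{nR}-1$ incorrect codewords bounds the second event by $2^{nR}\,2^{-n(R+\kappa)} = 2^{-n\kappa}$. Adding the two contributions yields the claimed bound. No step is a real obstacle. The only delicate point is justifying that bounding the ML error by the threshold decoder's error is legitimate; this follows from the optimality of ML decoding for uniform messages, applied codebook by codebook. If $2^{nR}$ is not an integer, I will take the codebook size to be $\lceil 2^{nR} \rceil$ and absorb the rounding in the union-bound count.
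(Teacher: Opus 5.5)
Your proof is correct. The paper does not prove this lemma; it simply invokes Shannon's achievability bound \cite[Theorem 1]{SHANNON19576}, and your argument is the standard proof of that result: threshold decoding at $n(R+\kappa)$, a change of measure bounding each incorrect codeword by $2^{-n(R+\kappa)}$, a union bound, and transfer to the ML decoder via its optimality for equiprobable messages. You also correctly make explicit two assumptions the statement leaves implicit, namely independently drawn codewords with marginal $\overline{P}$ and an ML decoder; without them the bound would not hold for an arbitrary \emph{random code}.
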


We now specialize the result in Lemma \ref{stillaaron} to our case by letting $\mathbf{X} \sim \overline{P} = \mathcal{C} \mathcal{N}(\mathbf{0}, \rho \mathbf{I}_n)$, $\mathbf{Z} \sim \mathcal{C} \mathcal{N}(\mathbf{0},\mathbf{I}_n)$, $\mathbf{X} \perp \!\!\! \perp \mathbf{Z}$, $\mathbf{Y} = \mathbf{X} + \mathbf{Z}$ and $W$ be the channel from $\mathbf{X}$ to $\mathbf{Y}$. Then $W(\cdot|\mathbf{x}) = \mathcal{C} \mathcal{N}(\mathbf{x}, \mathbf{I}_n)$ and $\mathbf{Y} \sim \mathcal{C} \mathcal{N}(0, (1 + \rho) \mathbf{I}_n )$. We have
\begin{align*}
    W(\mathbf{y}|\mathbf{x}) & = \frac{1}{\pi^n} \exp \left( - ||\mathbf{y} - \mathbf{x}||^2 \right) \\
    \overline{P}W(\mathbf{y}) & = \frac{1}{\pi^n (1 +\rho)^{n}} \exp \left( -\frac{||\mathbf{y}||^2}{1 + \rho} \right).
\end{align*} 
Hence, 
\begin{align*}
    &\ln \left(\frac{W(\mathbf{Y} | \mathbf{X})}{\overline{P}W(\mathbf{Y})} \right) \notag \\
    &= \ln \left( \frac{1}{\pi^n} \right) - ||\mathbf{Y} - \mathbf{X}||^2 - \ln \left(\frac{1}{\pi^n(\rho + 1)^n} \right) + \frac{||\mathbf{Y}||^2}{\rho + 1} \notag \\
    &= n\ln \left(1 + \rho  \right) - ||\mathbf{Z}||^2  + \frac{||\mathbf{X} + \mathbf{Z} ||^2}{\rho + 1} \notag  \\
    &= n\ln \left(1 + \rho  \right) - \sum_{i=1}^n |Z_i|^2 + \frac{1}{1 + \rho}\sum_{i=1}^n |X_i + Z_i|^2\\
    &= n\ln \left(1 + \rho  \right) + \sum_{i=1}^n \left[ \frac{1}{1 + \rho} |X_i|^2 - \frac{\rho}{1 + \rho} |Z_i|^2 + \frac{2}{1 + \rho} \operatorname{Re}\left\{ X_iZ_i^* \right \}  \right ] \\
    &= n\ln \left(1 + \rho  \right) + \sum_{i=1}^n \zeta_i,
\end{align*}
where the $\zeta_i$'s are i.i.d. RVs. Let $\zeta_i \stackrel{d}{=} \zeta$. Let $U, V \sim \mathcal{CN}(0, 1)$ and $U \ind V$. Let $\overrightarrow g = (U, V)^T$ so that $\overrightarrow g \in \mathbb{C}^2$. Let 
\begin{align*}
    \mathbf{A} = \frac{1}{1 + \rho} \begin{bmatrix}
        \rho & \sqrt{\rho}\\
        \sqrt{\rho} & -\rho
    \end{bmatrix}.
\end{align*}
Then since $\mathbf{A}$ is Hermitian, we can write $\mathbf{A} = \mathbf{Q}^H \mathbf{D} \mathbf{Q}$, where $\mathbf{Q}$ is unitary, $\mathbf{D} = \operatorname{diag}(\lambda, -\lambda)$  and 
\begin{align*}
    \lambda &= \sqrt{\frac{\rho}{1 + \rho}}. 
\end{align*}
Then 
\begin{align*}
    \zeta &\stackrel{d}{=} \frac{\rho}{1 + \rho} \left( |U|^2 - |V|^2 \right) + \frac{2 \sqrt{\rho}}{1 + \rho} \operatorname{Re} \left \{ U V^* \right \}\\
    &= \overrightarrow g^H \mathbf{A} \overrightarrow g\\
    &= \overrightarrow g^H \mathbf{Q}^H \mathbf{D} \mathbf{Q} \overrightarrow g\\
    &\stackrel{d}{=} \overrightarrow g^H \mathbf{D} \overrightarrow g\\
    &= \lambda \left( |U|^2 - |V|^2 \right)\\
    &\stackrel{d}{=} T,
\end{align*}
where $T \sim \operatorname{Laplace}(0, \lambda)$. Then it can be checked that  
\begin{align*}
    \mathbb{E}[\zeta] &= 0,\\
    \operatorname{Var}(\zeta) &= \frac{2 \rho}{1 + \rho},\\
    \mathbb{E}\left [ |\zeta|^3 \right] &= 6\left(\frac{\rho}{1 + \rho} \right)^{3/2}.
\end{align*}
Recall that 
\begin{align*}
    \log \left(\frac{W(\mathbf{Y} | \mathbf{X})}{\overline{P}W(\mathbf{Y})} \right) &= n\log \left(1 + \rho  \right) + \log(e) \sum_{i=1}^n \zeta_i. 
\end{align*}
Then we apply the Berry-Esseen Theorem for identically distributed summands to obtain 
\begin{align}
    &\mathbb{P}\left( \log \left(\frac{W(\mathbf{Y} | \mathbf{X})}{\overline{P}W(\mathbf{Y})} \right)\leq nR + n\kappa \right) + 2^{-n \kappa}\\
    &= \mathbb{P}\left( n\log \left(1 + \rho  \right) + \log(e) \sum_{i=1}^n \zeta_i \leq nR + n\kappa \right) + 2^{-n \kappa}\\
    &= \mathbb{P}\left( \sum_{i=1}^n \zeta_i \leq \frac{1}{\log(e)}\left( nR - n\log \left(1 + \rho  \right) + n\kappa \right) \right) + 2^{-n \kappa}\\
    &\leq \Phi \left( \frac{\sqrt{1 + \rho}}{\log(e) \sqrt{2 \rho n}}\left( nR - n\log \left(1 + \rho  \right) + n\kappa \right) \right) + \frac{C_{\operatorname{BE}}}{\sqrt{n}} \frac{3}{\sqrt{2}} +  2^{-n \kappa} \\
    &\leq \Phi \left( \frac{ \sqrt{1 + \rho} \sqrt{n}}{\log(e) \sqrt{2 \rho}}  \left( R - \log \left(1 + \rho  \right) + \kappa \right) \right) + \frac{1}{\sqrt{n}} +  2^{-n \kappa}. \label{e6}
\end{align}
In the last inequality above, we used the bound $C_{\operatorname{BE}} \leq 0.4690$ \cite{Shevtsova2013}. 
Since the above bound holds for any $\kappa$, we choose $\kappa = \frac{\log n}{2n}$ to upper bound $(\ref{e6})$ by  
\begin{align*}
    &\Phi \left( \frac{ \sqrt{1 + \rho} \sqrt{n}}{\log(e) \sqrt{2 \rho}}  \left( R - \log \left(1 + \rho  \right) + \frac{\log n}{2n} \right) \right) +  \frac{2}{\sqrt{n}}\\
    &= \Phi\left( \frac{\sqrt{n}\left(R - C(\rho) \right) + \frac{\log n}{2\sqrt{n}}}{\sqrt{V_{\operatorname{tot}}(\rho)}} \right) + \frac{2}{\sqrt{n}}. 
\end{align*}

\section{Proof of Theorem \ref{finitetofirstorderreduction} \label{finitetofirstorderreduction_proof}}

First, consider any sequence $\overrightarrow \alpha^{(n)}$ that converges to some $\overrightarrow \alpha \in \Delta^{K-1}$. Let $\beta_i^{(n)} = \sum_{j=i + 1}^K \alpha_j^{(n)}$ and $\beta_i = \sum_{j=i + 1}^K \alpha_j$. Define 
\begin{align*}
    \rho_j^{(n)}(\gamma) &= \frac{\gamma \alpha_j^{(n)} P}{1 + \gamma P \beta_j^{(n)}},\\
    \rho_j(\gamma) &= \frac{\gamma \alpha_j P}{1 + \gamma P \beta_j}
\end{align*}
so that $\rho_j^{(n)}(\gamma) \to \rho_j(\gamma)$. Based on $(\ref{firstorderapproxRconstant})$, it can be checked that for every $\gamma \geq 0$ such that $\rho_j(\gamma) \neq 2^R - 1$, 
\begin{align*}
    \lim_{n \to \infty} \mathcal{E}\left(n, R, \rho^{(n)}_j(\gamma) \right) = \mathds{1}(R > \log(1 + \rho_j(\gamma))).
\end{align*}
Since $\mathbb{P}(\rho_j(\gamma) = 2^R - 1) = 0$, we can argue by dominated convergence theorem that 
    \begin{align}
        \lim_{n \to \infty} G_n(\overrightarrow \alpha^{(n)}) &= \sum_{i=1}^K \mathbb{E}_{\gamma}\left [\prod_{j=1}^i \left(1 - \lim_{n \to \infty} \mathcal{E}\left(n, R, \rho_j^{(n)}(\gamma) \right) \right) \right] d_i\\
        &= \sum_{i=1}^K \mathbb{E}_{\gamma}\left [\prod_{j=1}^i \mathds{1}(R < \log(1 + \rho_j(\gamma))) \right] d_i  \label{gnlim}
    \end{align}

Then note that 
\begin{align}
   R &< \log \left(1 + \frac{\gamma \alpha_j P}{1 + \gamma P \beta_j}  \right) \label{6b}  
\end{align}
is equivalent to 
\begin{align}
    \alpha_j -(2^{R} - 1) \beta_j > 0 \quad \text{ and } \quad \gamma > \frac{2^{R} - 1}{P(\alpha_j -(2^{R} - 1) \beta_j)}, \label{m3x}
\end{align}
which in turn is equivalent to $\gamma > \tau_j$. Hence, we obtain 
\begin{align}
        \lim_{n \to \infty} G_n(\overrightarrow \alpha^{(n)}) &= \sum_{i=1}^K \mathbb{E}_{\gamma}\left [\prod_{j=1}^i \mathds{1}\left( \gamma > \tau_j \right) \right] d_i \\
        &= \sum_{i=1}^K \mathbb{E}_{\gamma}\left [\mathds{1}\left(\gamma > \max \{\tau_1, \ldots, \tau_i \} \right) \right] d_i\\
        &= \sum_{i=1}^K \mathbb{P}\left(\gamma > \max \{\tau_1, \ldots, \tau_i \} \right) d_i\\
        &= \sum_{i=1}^K \exp \left( - \frac{\max \{\tau_1, \ldots, \tau_i \}}{\sigma^2} \right) d_i, \label{choosealkar}
    \end{align}
where the last equality above follows since $\gamma = |H|^2$ is exponential with mean $\sigma^2$, given that $H \sim \mathcal{CN}(0, \sigma^2)$. 

Now we let $\overrightarrow \alpha^{(n)} \in \Delta^{K-1}$ be a maximizer (it exists because $\Delta^{K-1}$ is compact and $G_n$ is continuous) in 
\begin{align*}
    \max_{\overrightarrow \alpha \in \Delta^{K - 1}}\,G_n(\overrightarrow \alpha). 
\end{align*}
Since $\Delta^{K-1}$ is compact, we can assume that $\overrightarrow \alpha^{(n_m)} \to \overrightarrow \alpha$ for some $\overrightarrow \alpha \in \Delta^{K-1}$ by passing down to a convergent subsequence $\overrightarrow \alpha^{(n_m)}$ which additionally satisfies
\begin{align*}
    \limsup_{n \to \infty}  \max_{\overrightarrow \alpha \in \Delta^{K - 1}}\,G_n(\overrightarrow \alpha) = \lim_{m \to \infty} G_{n_m}(\overrightarrow \alpha^{(n_m)}).
\end{align*}
Hence, from $(\ref{choosealkar})$, we have 
\begin{align}
    \limsup_{n \to \infty}  \max_{\overrightarrow \alpha \in \Delta^{K - 1}}\,G_n(\overrightarrow \alpha) \leq \max_{\overrightarrow \alpha \in \Delta^{K - 1}}  \sum_{i=1}^K \exp \left( - \frac{\max \{\tau_1, \ldots, \tau_i \}}{\sigma^2} \right) d_i. \label{m1-}
\end{align}

Now choose $\overrightarrow \alpha$ to be a maximizer in 
\begin{align*}
    \max_{\overrightarrow \alpha \in \Delta^{K - 1}} \sum_{i=1}^K \exp \left( - \frac{\max \{\tau_1, \ldots, \tau_i \}}{\sigma^2} \right) d_i. 
\end{align*}
Then assuming a constant sequence $\overrightarrow \alpha^{(n)} = \overrightarrow \alpha$ and invoking $(\ref{choosealkar})$ again, we obtain
\begin{align}
    \liminf_{n \to \infty} \max_{\overrightarrow \alpha \in \Delta^{K - 1}}\,G_n(\overrightarrow \alpha) \geq \max_{\overrightarrow \alpha \in \Delta^{K - 1}} \sum_{i=1}^K \exp \left( - \frac{\max \{\tau_1, \ldots, \tau_i \}}{\sigma^2} \right) d_i. \label{m2-}
\end{align}
Combining $(\ref{m1-})$ and $(\ref{m2-})$ establishes the result.

\section{Proof of $(\ref{opt_prop_1})$ in Theorem \ref{opt_properties_theorem} \label{opt_prop_1_proof}}

Define 
\begin{align*}
   f(\alpha_1, \ldots, \alpha_K) =  \sum_{i=1}^K  \exp \left(- \frac{\tau_i}{\sigma^2}  \right) d_i.
\end{align*}

We prove by contradiction. Let  $(\alpha_1, \ldots, \alpha_K)$ be an optimal solution such that $\tau_i > \tau_{i+1}$ for some $1 \leq i  \leq K - 1$. We divide the proof into two cases: $(1)$ $\tau_i = +\infty$ and $(2)$ $\tau_i < \infty$

\subsection{Case 1}

Since $\tau_{i + 1} < \tau_i = \infty$, we must have $\alpha_{i + 1} > (2^R - 1) \beta_{i + 1}$. Consider a modified feasible point $(\alpha_1', \ldots, \alpha_K')$ with $\alpha_i' = \alpha_i + \alpha_{i + 1}$, $\alpha_{i + 1}' = 0$ and $\alpha_j' = \alpha_j$ for all $j \neq i, i + 1$.  Note that $\tau_j' = \tau_j$ for all $j \neq i$ and $j \neq i + 1$. Also note that $\alpha_i' > (2^R - 1)\beta_i'$ since $\beta_i' = \beta_{i + 1}$ and $\alpha_i \geq 0$. Therefore, $\tau_{i + 1}' = \tau_i = +\infty$,
\begin{align*}
    \tau_{i + 1} &= \frac{2^R - 1}{P (\alpha_{i + 1} -(2^{R} - 1) \beta_{i + 1})}, \\
    \tau_i' &= \frac{2^{R} - 1}{P(\alpha_i' -(2^{R} - 1) \beta_i')}\\
    &= \frac{2^{R} - 1}{P(\alpha_i + \alpha_{i + 1} -(2^{R} - 1) \beta_{i + 1})}.
\end{align*}
Hence, 
\begin{align*}
     &f(\alpha_1, \ldots, \alpha_K) - f(\alpha_1', \ldots, \alpha_K')\\
     &= \left [ \exp \left(- \frac{\tau_i}{\sigma^2}  \right) - \exp \left(- \frac{\tau_i'}{\sigma^2}  \right) \right] d_i + \left [ \exp \left(- \frac{\tau_{i + 1}}{\sigma^2}  \right) - \exp \left(- \frac{\tau_{i + 1}'}{\sigma^2}  \right) \right] d_{i + 1} \\
    &=  - \exp \left(-\frac{2^{R} - 1}{\sigma^2 P(\alpha_i + \alpha_{i + 1} -(2^{R} - 1) \beta_{i + 1})}  \right) d_i +  \exp \left(- \frac{2^R - 1}{\sigma^2 P (\alpha_{i + 1} -(2^{R} - 1) \beta_{i + 1})}  \right)   d_{i + 1}\\ 
    &\leq - \exp \left(-\frac{2^{R} - 1}{\sigma^2 P( \alpha_{i + 1} -(2^{R} - 1) \beta_{i + 1})}  \right) d_i +  \exp \left(- \frac{2^R - 1}{\sigma^2 P (\alpha_{i + 1} -(2^{R} - 1) \beta_{i + 1})}  \right)   d_{i + 1}\\
    &< 0
\end{align*}
since $d_i > d_{i  +1}$. Hence, $(\alpha_1, \ldots, \alpha_K)$ cannot be optimal. 

\subsection{Case 2}

Define 
\begin{align*}
\theta &= \frac{2^R - 1}{\sigma^2 P} > 0,\\
    x &= \alpha_i - (2^R - 1) \beta_i,\\
    y &= \alpha_{i + 1} - (2^R - 1) \beta_{i + 1}.
\end{align*}
If $\tau_{i + 1} < \tau_i < \infty$, then $y > x > 0$. Consider a modified feasible point 
\begin{align*}
     (\alpha_1', \ldots, \alpha_K') = (\alpha_1, \ldots, \alpha_i + \delta, \alpha_{i+1} - \delta, \ldots, \alpha_K),  
\end{align*}
where $|\delta| \in \left(0, \min \{ \alpha_i, \alpha_{i + 1} \}\right)$ is chosen small enough so that $\alpha_{j} - |\delta| > (2^R - 1) \beta_{j}$, for $j = i, i + 1$. Note that $\tau_j' = \tau_j$ for all $j \neq i$ and $j \neq i + 1$. Define 
\begin{align}
    \Delta(\delta) &\coloneqq \frac{\partial f(\alpha_1', \ldots, \alpha_K') }{\partial \delta}.
\end{align}
We have  
\begin{align*}
    \tau_i' &= \frac{2^R - 1}{P(\alpha_i - (2^R - 1) \beta_i + 2^R \delta )} \\
    \tau_{i + 1}' &= \frac{2^R - 1}{P(\alpha_{i  +1}  - (2^R - 1) \beta_{i + 1} - \delta)}\\
    \Delta(\delta) &= \frac{\partial }{\partial \delta} \left [ \exp \left(- \frac{\tau_i'}{\sigma^2}  \right) d_i  + \exp \left(- \frac{\tau_{i + 1}'}{\sigma^2}  \right) d_{i + 1}\right]\\
    &= \frac{\partial }{\partial \delta} \left [ \exp\left(-  \frac{\theta}{\alpha_i - (2^R - 1) \beta_i + 2^R \delta }\right) d_i + \exp \left(- \frac{\theta}{\alpha_{i  +1}  - (2^R - 1) \beta_{i + 1} - \delta}  \right) d_{i+1}  \right]\\
    &= \frac{\theta 2^R}{(\alpha_i - (2^R - 1) \beta_i + 2^R \delta)^2} \exp\left(-  \frac{\theta}{\alpha_i - (2^R - 1) \beta_i + 2^R \delta }\right) d_i - \mbox{} \\
    & \quad \quad \quad \quad \frac{\theta}{(\alpha_{i  +1}  - (2^R - 1) \beta_{i + 1} - \delta)^2}\exp \left(- \frac{\theta}{\alpha_{i  +1}  - (2^R - 1) \beta_{i + 1} - \delta}  \right) d_{i+1}, \\
    &= \theta \left(\frac{2^R}{(x + 2^R \delta)^2} \exp\left(-  \frac{\theta}{x + 2^R \delta }\right) d_i - \frac{1}{(y - \delta)^2}\exp \left(- \frac{\theta}{y - \delta}  \right) d_{i+1} \right),\\
    \Delta(0) &= \theta \left( \frac{ 2^R}{x^2} \exp\left(- \frac{\theta}{x} \right) d_i - \frac{1}{y^2} \exp\left(- \frac{\theta}{y} \right) d_{i + 1} \right).
\end{align*} 
Since $(\alpha_1, \ldots, \alpha_K)$ is optimal with $\alpha_i, \alpha_{i + 1} > 0$, we must have $\Delta(0) = 0$. But we will show that the second derivative $\Delta'(0) > 0$ so that $(\alpha_1, \ldots, \alpha_K)$ cannot be the maximizer. Indeed, 
\begin{align}
    \Delta'(0) &= \theta \left( \left [ \frac{ \theta  - 2x  }{x^4} \right] 2^{2R} \exp\left(-\frac{\theta}{x} \right) d_i+\left [ \frac{\theta - 2y  }{y^4} \right] \exp\left(-\frac{\theta}{y} \right) d_{i + 1} \right). \label{ht}
\end{align}
Given $\Delta(0) = 0$, we have 
\begin{align}
     &\frac{ 2^R}{x^2} \exp\left(- \frac{\theta}{x} \right) d_i = \frac{1}{y^2} \exp\left(- \frac{\theta}{y} \right) d_{i + 1} \notag \\
     \implies & d_{i} =  \frac{x^2}{2^R y^2} \exp\left( \frac{\theta}{x}- \frac{\theta}{y} \right) d_{i + 1}. \label{ratiob}
\end{align}
Substituting this in $(\ref{ht})$, we obtain 
\begin{align}
    \Delta'(0) &= \frac{\theta d_{i + 1}}{y^4} \exp\left(-\frac{\theta}{y} \right) \left( \left (  \theta  - 2x \right)   \frac{2^{R} y^2}{ x^2}  +\theta - 2y    \right)\\
    &= \frac{\theta d_{i + 1}}{y^4} \exp\left(-\frac{\theta}{y} \right) \left( \theta \left(1 + \frac{2^{R} y^2}{ x^2}  \right)     -  \frac{2^{R + 1} y^2}{ x}  - 2y    \right). \label{deltaprime0}
\end{align}
The sign of $\Delta'(0)$ is determined by the sign of the expression inside the parentheses above. Since $d_i > d_{i + 1}$, $(\ref{ratiob})$ implies that 
\begin{align}
    \frac{2^R y^2}{x^2} < \exp\left( \theta \left( \frac{1}{x} - \frac{1}{y} \right)   \right). \label{ygx}
\end{align}
Define $t \coloneqq \frac{y}{x}$ and $u = 2^R t^2$. Note that $t, u > 1$ and $u > t^2$ since $R > 0$ and $y > x > 0$. From $(\ref{ygx})$, we have 
\begin{align}
    &\exp\left( \theta \left( \frac{1}{x} - \frac{1}{y} \right)   \right) > u\\
    \implies \theta &> \frac{\ln(u)}{y - x} x y  \\
    &= \frac{\ln(u)}{t- 1} x t. \label{kappalower}
\end{align}
Using $(\ref{kappalower})$ and $t = y/x > 1$ in $(\ref{deltaprime0})$, we obtain 
\begin{align*}
     \theta \left(1 + \frac{2^{R} y^2}{ x^2}  \right)     -  \frac{2^{R + 1} y^2}{ x}  - 2y   &>  \left(1 + u  \right)  \frac{\ln(u)}{t- 1} x t    - 2 u x   - 2 tx\\
     &\stackrel{(a)}{\geq} \frac{x}{t-1} \left( 2(u - 1) t     - 2 (u + t)(t-1) \right)\\
     &= \frac{x}{t-1} \left( 2tu - 2t     -2tu + 2u - 2t^2 + 2t \right)\\
     &= \frac{2x}{t-1} \left(  u - t^2  \right) > 0. 
\end{align*}
In inequality $(a)$ above, we used the inequality $(1 + u) \ln(u) \geq 2 (u - 1)$ for all $u \geq 1$. 

\section{Proof of  $(\ref{opt_prop_2})$ in Theorem \ref{opt_properties_theorem} \label{opt_prop_2_proof}}

Define 
\begin{align*}
   f(\alpha_1, \ldots, \alpha_K) =  \sum_{i=1}^K  \exp \left(- \frac{\tau_i}{\sigma^2}  \right) d_i.
\end{align*}

If $\alpha_i = 0$, then it is clear from the definition of $\tau_i$ in $(\ref{deftau})$ that $\tau_i = +\infty$. We prove the reverse implication by contradiction. Let $(\alpha_1, \ldots, \alpha_K)$ be an optimal solution. First, note that $\tau_1$ has to be finite; otherwise, by the first point of Theorem \ref{opt_properties_theorem}, $f(\alpha_1, \ldots, \alpha_K) = 0$. If $\tau_i = +\infty$ for some $2 \leq i \leq K$, then Theorem $\ref{opt_properties_theorem}.(\ref{opt_prop_1})$ implies that 
\begin{align*}
    f(\alpha_1, \ldots, \alpha_K) = \sum_{k=1}^{i-1} \exp \left(- \frac{\tau_k}{\sigma^2}  \right) d_k.
\end{align*}
If $\tau_i = +\infty$ but $\alpha_i > 0$, consider a modified feasible point given by 
\begin{align}
    (\alpha_1', \ldots, \alpha_K') = (\alpha_1 + \alpha_{i}, \alpha_2, \ldots, \alpha_{i-1}, 0, \alpha_{i+1}, \ldots, \alpha_K).  
\end{align}
Then $\tau_1' < \tau_1$ and $\tau_j' \leq \tau_j$ for $j = 2, \ldots, i-1$. Therefore, 
\begin{align}
 f(\alpha_1',  \ldots, \alpha_K')   &\geq \sum_{k=1}^{i-1} \exp \left(- \frac{\tau_k'}{\sigma^2}  \right) d_k\\
 &= \exp\left(-\frac{\tau_1'}{\sigma^2} \right) d_1 +  \sum_{k=2}^{i-1} \exp \left(- \frac{\tau_k'}{\sigma^2}  \right) d_k \\
 &> \exp\left(-\frac{\tau_1}{\sigma^2} \right) d_1 +  \sum_{k=2}^{i-1} \exp \left(- \frac{\tau_k'}{\sigma^2}  \right) d_k\\
 &\geq \exp\left(-\frac{\tau_1}{\sigma^2} \right) d_1 +  \sum_{k=2}^{i-1} \exp \left(- \frac{\tau_k}{\sigma^2}  \right) d_k \\
 &= f(\alpha_1, \ldots, \alpha_K). 
\end{align}
Hence, $(\alpha_1, \ldots, \alpha_K)$ cannot be optimal.

\section{Proof of $(\ref{opt_prop_3}), (\ref{opt_prop_4}), (\ref{opt_prop_5})$ and $(\ref{opt_prop_6})$ in Theorem \ref{opt_properties_theorem} \label{3-6_proof}}

\subsection{Proof of $(\ref{opt_prop_3})$}

If $\tau_1 = \infty$, then Theorem $\ref{opt_properties_theorem}.(\ref{opt_prop_1})$ implies that $\tau_i = \infty$ for all $i \geq 1$. This implies that the maximum value is zero. However, choosing $\alpha_1 = 1, \alpha_j = 0$ for $j \geq 2$ gives a strictly positive value of the objective, leading to a contradiction. Hence, we have $\tau_1 < \infty$ which also implies $\alpha_1 > 0$ by Theorem $\ref{opt_properties_theorem}.(\ref{opt_prop_2})$.

\subsection{Proof of $(\ref{opt_prop_4})$}

Suppose $\alpha_i - (2^R - 1) \beta_i < 0$ for some $i$ so that $\beta_i > 0$. Then Theorem $\ref{opt_properties_theorem}.(\ref{opt_prop_3})$ implies that $i \geq 2$. By Theorem $\ref{opt_properties_theorem}.(\ref{opt_prop_1})$, we have $\tau_j = \infty$ for all $j \geq i$. Hence, the maximum value is given by 
\begin{align}
    \sum_{j=1}^{i-1} \exp\left(-\frac{\tau_j}{\sigma^2} \right)d_j. \label{nvf}
\end{align}
Consider a modified feasible point 
\begin{align*}
     (\alpha_1', \ldots, \alpha_K') = (\alpha_1 + \alpha_i + \beta_i, \ldots, \alpha_{i-1}, 0, \ldots, 0),  
\end{align*}
This gives a strictly higher value than $(\ref{nvf})$ because $\beta_i > 0$, leading to a contradiction.

\subsection{Proof of $(\ref{opt_prop_5})$}

If $\beta_{i-1} = 0$, then $\alpha_i = \alpha_{i + 1} = \cdots = \alpha_K = 0$. Then clearly, $\alpha_i - (2^R - 1) \beta_i = 0$. On the other hand, when $\alpha_i - (2^R - 1)\beta_i = 0$, then $\tau_i = \infty$, and $\tau_{j} = \infty$ for all $j \geq i$ by Theorem $\ref{opt_properties_theorem}.(\ref{opt_prop_1}).$ Then Theorem $\ref{opt_properties_theorem}.(\ref{opt_prop_2})$ implies that $\alpha_j = 0$ for all $j \geq i$. This implies that $\beta_{i-1} = 0.$

\subsection{Proof of $(\ref{opt_prop_6})$}

Suppose $\tau_i = \tau_{i + 1}$ for some $i \in \{1, \ldots, \ell - 1 \}$. Since $\tau_i, \tau_{i + 1} < \infty$, this implies that 
\begin{align}
    \frac{2^{R} - 1}{P(\alpha_i -(2^{R} - 1) \beta_i)} &= \frac{2^{R} - 1}{P(\alpha_{i + 1} -(2^{R} - 1) \beta_{i + 1})}\\
    \implies \alpha_{i + 1} -(2^{R} - 1) \beta_{i + 1} &= \alpha_i -(2^{R} - 1) \beta_i \eqqcolon \xi. \label{xideft} 
\end{align}

Then consider a modified feasible point $\overrightarrow \alpha'$ specified as $\alpha_i' = \alpha_i + \delta$, $\alpha_{i + 1}' = \alpha_i - \delta$ and $\alpha_j' = \alpha_j$ for $j \neq i, i + 1$. Since $\tau_i, \tau_{i + 1} < \infty$, we choose $\delta$ small enough so that $\tau_i', \tau_{i + 1}' < \infty$.
Define 
\begin{align*}
   f(\alpha_1, \ldots, \alpha_K) &=  \sum_{i=1}^K  \exp \left(- \frac{\tau_i}{\sigma^2}  \right) d_i.\\
    \Delta(\delta) &\coloneqq \frac{\partial f(\alpha_1', \ldots, \alpha_K') }{\partial \delta}.
\end{align*}
We then have $\tau_j' = \tau_j$ for all $j \neq i, i + 1$ and   
\begin{align*}
    \tau_i' &= \frac{2^R - 1}{P(\alpha_i - (2^R - 1) \beta_i + 2^R \delta )} \\
    \tau_{i + 1}' &= \frac{2^R - 1}{P(\alpha_{i  +1}  - (2^R - 1) \beta_{i + 1} - \delta)}.
\end{align*}
Then 
\begin{align*}
    \Delta(\delta) &= \frac{\partial }{\partial \delta} \left [ \exp \left(- \frac{\tau_i'}{\sigma^2}  \right) d_i  + \exp \left(- \frac{\tau_{i + 1}'}{\sigma^2}  \right) d_{i + 1}\right]\\
    &= \frac{\partial }{\partial \delta} \left [ \exp\left(-  \frac{\theta}{\alpha_i - (2^R - 1) \beta_i + 2^R \delta }\right) d_i + \exp \left(- \frac{\theta}{\alpha_{i  +1}  - (2^R - 1) \beta_{i + 1} - \delta}  \right) d_{i+1}  \right]\\
    &= \frac{\theta 2^R}{(\alpha_i - (2^R - 1) \beta_i + 2^R \delta)^2} \exp\left(-  \frac{\theta}{\alpha_i - (2^R - 1) \beta_i + 2^R \delta }\right) d_i - \mbox{} \\
    & \quad \quad \quad \quad \frac{\theta}{(\alpha_{i  +1}  - (2^R - 1) \beta_{i + 1} - \delta)^2}\exp \left(- \frac{\theta}{\alpha_{i  +1}  - (2^R - 1) \beta_{i + 1} - \delta}  \right) d_{i+1},\\
    \Delta(0) &= \frac{\theta 2^R}{(\alpha_i - (2^R - 1) \beta_i)^2} \exp\left(-  \frac{\theta}{\alpha_i - (2^R - 1) \beta_i  }\right) d_i - \mbox{} \\
    & \quad \quad \quad \quad \frac{\theta}{(\alpha_{i  +1}  - (2^R - 1) \beta_{i + 1} )^2}\exp \left(- \frac{\theta}{\alpha_{i  +1}  - (2^R - 1) \beta_{i + 1} }  \right) d_{i+1}\\
    &\stackrel{(a)}{=} \frac{\theta 2^R}{\xi^2} \exp\left(-  \frac{\theta}{\xi  }\right) d_i - \frac{\theta}{\xi^2}\exp \left(- \frac{\theta}{\xi }  \right) d_{i+1}\\
    &= \frac{\theta}{\xi^2} \exp\left(-  \frac{\theta}{\xi  }\right) \left [ 2^R d_i - d_{i + 1} \right ] > 0. 
\end{align*}
Hence, $\overrightarrow \alpha$ cannot be optimal. In equality $(a)$ above, we used $(\ref{xideft})$. 

\section{Proof of Corollary \ref{corl4} \label{corl4_proof}}

Clearly, we have 
\begin{align}
    \max_{\overrightarrow \alpha \in \Delta^{K - 1}} \sum_{i=1}^K  \exp \left(- \frac{\tau_i}{\sigma^2}  \right) d_i \geq \max_{\overrightarrow \alpha \in \Delta^{K - 1}} \sum_{i=1}^K  \exp \left(- \frac{\max \{\tau_1, \ldots, \tau_i \}}{\sigma^2}  \right) d_i. \label{65a}
\end{align}
To show the reverse inequality, we write
\begin{align}
    \max_{\overrightarrow \alpha \in \Delta^{K - 1}} \sum_{i=1}^K  \exp \left(- \frac{\tau_i}{\sigma^2}  \right) d_i &\stackrel{(a)}{=} \max_{\substack{\overrightarrow \alpha \in \Delta^{K - 1}\\\tau_1 \leq \cdots \leq \tau_K}} \sum_{i=1}^K  \exp \left(- \frac{\tau_i}{\sigma^2}  \right) d_i \notag \\
    &= \max_{\substack{\overrightarrow \alpha \in \Delta^{K - 1}\\\tau_1 \leq \cdots \leq \tau_K}} \sum_{i=1}^K  \exp \left(- \frac{\max \{\tau_1, \ldots, \tau_i \}}{\sigma^2}  \right) d_i\notag \\
    &\leq \max_{\overrightarrow \alpha \in \Delta^{K - 1}} \sum_{i=1}^K  \exp \left(- \frac{\max \{\tau_1, \ldots, \tau_i \}}{\sigma^2}  \right) d_i. \label{65b}
\end{align}
In equality $(a)$ above, we used Theorem $\ref{opt_properties_theorem}.(\ref{opt_prop_1})$.   

Define 
\begin{align*}
    F(\overrightarrow \alpha) &= \sum_{i=1}^K  \exp \left(- \frac{\max \{\tau_1, \ldots, \tau_i \}}{\sigma^2}  \right) d_i,\\
    G(\overrightarrow \alpha) &= \sum_{i=1}^K  \exp \left(- \frac{\tau_i}{\sigma^2}  \right) d_i.
\end{align*}
From $(\ref{65a})$ and $(\ref{65b})$, we have 
\begin{align}
    \max_{\overrightarrow \alpha \in \Delta^{K - 1}} F(\overrightarrow \alpha) = \max_{\overrightarrow \alpha \in \Delta^{K - 1}} G(\overrightarrow \alpha). \label{equality_invals}
\end{align}

Let $\overrightarrow{\alpha}^\star$ be any optimal solution in the LHS of $(\ref{equality_invals})$. Then 
\begin{align}
    \max_{\overrightarrow \alpha \in \Delta^{K - 1}} F(\overrightarrow \alpha) &=  F(\overrightarrow{\alpha}^\star) \leq G(\overrightarrow{\alpha}^\star) \leq    \max_{\overrightarrow \alpha \in \Delta^{K - 1}} G(\overrightarrow \alpha).  \label{ineqgkmt}
\end{align}
But $(\ref{equality_invals})$ implies that all the inequalities in $(\ref{ineqgkmt})$ must be equalities; hence, $\overrightarrow \alpha^\star$ is also an optimal solution in the RHS of $(\ref{equality_invals})$.

Now let $\overrightarrow \alpha^\star$ be any optimal solution to the RHS of $(\ref{equality_invals})$. Theorem \ref{opt_properties_theorem} implies that $\overrightarrow \alpha^\star$ satisfies $\tau_1^\star \leq \cdots \leq \tau_K^\star$. Then 
\begin{align}
    \max_{\overrightarrow \alpha \in \Delta^{K - 1}} G(\overrightarrow \alpha) &=  G(\overrightarrow{\alpha}^\star) = F(\overrightarrow \alpha^\star) \leq \max_{\overrightarrow \alpha \in \Delta^{K - 1}} F(\overrightarrow \alpha).  \label{gdsn}
\end{align}
Then $(\ref{equality_invals})$ again implies that the inequality in $(\ref{gdsn})$ must be an equality. Therefore, $\overrightarrow \alpha^\star$ is also an optimal solution in the LHS of $(\ref{equality_invals})$.

\section{Proof of Theorem \ref{K=2theorem} \label{K=2theorem_proof}}

Let $s_i \in [0, 1]$ be the remaining power budget after $\alpha_1, \ldots, \alpha_{i-1}$ have been chosen, i.e., $s_i = 1 - \sum_{j=1}^{i-1} \alpha_j$. Then $\alpha_i \in [0, s_i ]$. Note also that $\beta_i = s_i - \alpha_i$. We can express $\tau_i$ in $(\ref{deftau})$ as a function of $\alpha_i$ and $s_i$; specifically, $\tau_i = \tau(s_i, \alpha_i)$ where  
\begin{align*}
    \tau(s, \alpha) = \begin{cases}
         \frac{2^{R} - 1}{P(2^R( \alpha-s) + s)} & \text{ if } 2^R( \alpha -s) + s > 0,\\
         +\infty & \text{ otherwise}.  
    \end{cases}
\end{align*}
Define 
\begin{align*}
    r_i(s_i, \alpha_i) &= \exp \left(- \frac{\tau_i}{\sigma^2} \right) d_i.  
\end{align*}
Define the value function 
\begin{align*}
    V_t(s) &= \max_{\substack{\alpha_t, \ldots, \alpha_K \in [0, s]\\
    \sum_{i=t}^K \alpha_i = s}} \sum_{i=t}^K r_i(s_i, \alpha_i),
\end{align*}
where $s_t = s$ and $s_i = s_{i-1} - \alpha_{i-1}$ for $i = t + 1, \ldots, K$. Therefore, the optimization problem $(\ref{c1})$ can be written as  
\begin{align*}
    \max_{\overrightarrow \alpha \in \Delta^{K - 1}} \sum_{i=1}^K  \exp \left(- \frac{\tau_i}{\sigma^2}  \right) d_i = V_1(1).  
\end{align*}
Bellman equation gives us  
\begin{align}
    V_t(s) &= \max_{\alpha \in [0, s]} \,  \left [ r_t(s, \alpha) + V_{t + 1}(s - \alpha) \right] \label{bellmaneq}
\end{align}
for $t = K - 1, \ldots, 1$ and 
\begin{align}
    V_K(s) =  \max_{\alpha \in [0, s]} r_K(s, \alpha) &= \max_{\alpha \in [0, s]} \exp \left(- \frac{\tau_K}{\sigma^2} \right) d_K \notag \\
    &= \begin{cases}
        \exp \left(- \frac{2^R - 1}{s P \sigma^2} \right) d_K & s > 0\\
        0 & s = 0
    \end{cases} \notag \\
    &=  \exp \left(- \frac{2^R - 1}{s P \sigma^2} \right) d_K \label{b23}
\end{align}
for all $s \in [0, 1]$ with the
convention that
\begin{align}
  \exp \left(- \frac{2^R - 1}{s P \sigma^2} \right) d_K \Bigg |_{s = 0} =  \lim_{s \downarrow 0} \exp \left(- \frac{2^R - 1}{s P \sigma^2} \right) d_K = 0. \label{convs=0}
\end{align}
Hence, the optimal power assignment $\alpha_K$ given the remaining power budget $s_K$ at step $K$ is $\alpha_K = s_K$. For $K = 2$, this implies 
\begin{align*}
    V_2(s) &= \exp \left(- \frac{2^R - 1}{s P \sigma^2} \right) d_2. 
\end{align*}
Then using $(\ref{bellmaneq})$ along with the fact that $s_1 = 1$ gives us 
\begin{align*}
    V_{1}(1) &= \max_{\alpha \in [0, s]} \left [   r_{1}(1, \alpha )  + V_{2}\left( 1 - \alpha \right)\right ]\\
    &= \max_{\alpha \in [0, 1]} \left [  \exp \left( -\frac{\tau(1, \alpha)}{\sigma^2} \right) d_1 +  \exp \left(- \frac{2^R - 1}{ P \sigma^2(1 - \alpha)} \right) d_2\right ]\\
    &= \max_{\alpha \in [0, 1] }  w(\alpha),  
\end{align*}
where 
\begin{align}
    w(\alpha) &\coloneqq \exp \left( -\frac{\tau(1, \alpha)}{\sigma^2} \right) d_1 +  \exp \left(- \frac{2^R - 1}{ P \sigma^2(1 - \alpha)} \right) d_2. \label{bm} 
\end{align}
When $\alpha = 0$, 
\begin{align*}
    w(0) &=  \exp \left(- \frac{2^R - 1}{ P \sigma^2 } \right) d_2. 
\end{align*}
As $\alpha$ increases from $0$, the second term in $(\ref{bm})$ decreases and the first term remains constant at zero until $2^R (\alpha -1) + 1 > 0$ or, equivalently, 
\begin{align}
    \alpha > \alpha' = \frac{2^R - 1}{2^R}.
\end{align}
Hence, the function $w(\alpha)$ is decreasing over $[0, \alpha']$. For $\alpha \in (\alpha', 1]$, we have 
\begin{align}
    w(\alpha) &=  \exp \left( -\frac{2^{R} - 1}{\sigma^2 P(2^R( \alpha-1) + 1)} \right) d_1 +  \exp \left(- \frac{2^R - 1}{ P \sigma^2(1 - \alpha)} \right) d_2. 
\end{align}
Over $(\alpha', 1]$, the first term increases and the second term decreases. Since $d_1 > d_2$, 
\begin{align*}
    w(1) = \exp \left( -\frac{2^{R} - 1}{ P \sigma^2} \right) d_1 > w(0) \geq \max_{\alpha \in [0, \alpha']} w(\alpha).
\end{align*}
Since the function $w(\alpha)$ is continuous over $[0,1]$, the maximum value over $[0,1]$ must be attained for some $\alpha \in (\alpha', 1]$. Define 
\begin{align*}
    \theta \coloneqq \frac{2^R - 1}{P \sigma^2} > 0.
\end{align*}
Define a function $\tilde{w} : \mathbb{R} \to (0, \infty)$ as
\begin{align*}
    \tilde{w}(\alpha) &= \exp \left( -\frac{2^{R} - 1}{\sigma^2 P(2^R( \alpha-1) + 1)} \right) d_1 +  \exp \left(- \frac{2^R - 1}{ P \sigma^2(1 - \alpha)} \right) d_2\\
    &=  \exp \left( -\frac{\theta}{2^R( \alpha-1) + 1} \right) d_1 +  \exp \left(- \frac{\theta}{1 - \alpha} \right) d_2. 
\end{align*}
An optimal power allocation is then given by $(\alpha^\star, 1 - \alpha^\star)$, where 
\begin{align*}
    \alpha^\star &= \argmax_{\alpha \in [\alpha', 1]}  \tilde{w}(\alpha). 
\end{align*}
For $\alpha \neq \alpha', 1$,  
\begin{align}
    \tilde{w}'(\alpha) &=  \theta \left [ \frac{2^R   }{ \left(2^R (\alpha -1)+1\right)^2} \exp\left(-\frac{\theta}{2^R (\alpha -1)+1}\right) d_1   - \frac{1}{(1-\alpha )^2} \exp \left(- \frac{\theta}{ 1-\alpha } \right) d_2 \right ]. \notag 
\end{align}
Note that 
\begin{align}
    \lim_{\alpha \uparrow 1} \tilde{w}'(\alpha) &> 0 \label{posd}\\
    \lim_{\alpha \downarrow \alpha'} \tilde{w}'(\alpha) &< 0. \label{negd}
\end{align}
Hence, $\alpha^\star > \alpha'$ since $\tilde{w}(1) > \tilde{w}(\alpha')$. Therefore, either $\alpha^\star = 1$ or $\alpha^\star \in (\alpha', 1)$ such that $\tilde{w}'(\alpha^\star) = 0.$ 

From $(\ref{posd})$, $(\ref{negd})$ and the fact that $\tilde{w}'(\alpha)$ is differentiable over $(\alpha', 1)$, there always exists a local minimizer $\alpha_{1} \in (\alpha', 1)$ such that $\tilde{w}'(\alpha_{1}) = 0$. We will later show that there can be at most $3$ critical points of $\tilde{w}(\alpha)$ over $(\alpha', 1)$. Therefore, a necessary condition for $\alpha^\star \in (\alpha', 1)$ is that there be three critical points of $\tilde{w}(\alpha)$ over $(\alpha', 1)$.

Let $r = \frac{d_2}{d_1} \in (0, 1)$. Setting $\tilde{w}'(\alpha) = 0$ gives us 
\begin{align}
    \frac{2^R   }{ \left(2^R (\alpha -1)+1\right)^2} \exp\left(-\frac{\theta}{2^R (\alpha -1)+1}\right)   &= \frac{1}{(1-\alpha )^2} \exp \left(- \frac{\theta}{ 1-\alpha } \right) r\\
    \ln(2^R) - 2 \ln \left( 2^R (\alpha -1)+1 \right) -\frac{\theta}{2^R (\alpha -1)+1} &= \ln(r) - \frac{\theta}{ 1-\alpha } - 2 \ln(1 - \alpha)\\
    \ln(2^R/r) + 2 \ln \left(\frac{1-\alpha}{ 2^R (\alpha -1)+1} \right) &= \frac{\theta (2^R - 2^R \alpha - \alpha)}{(1-\alpha)(2^R \alpha - 2^R + 1)}. \label{xtoq}
\end{align}
We are only interested in the solutions to $(\ref{xtoq})$ that lie inside the interval $(\alpha', 1)$. To restrict our attention to the interval $(\alpha', 1)$, let 
\begin{align}
    q = q(\alpha) = \frac{ 2^R (\alpha -1)+1}{1-\alpha} \label{alphatoqmapping}    
\end{align}
so that for $\alpha \in (\alpha', 1)$, $q \in (0, \infty)$ with $q \to 0$ as $\alpha \to \alpha'$ and $q \to +\infty$ as $\alpha \uparrow 1$. And $q : (\alpha', 1) \to (0, \infty)$ is strictly increasing and invertible so that $\alpha = \frac{q + 2^R - 1}{q + 2^R}$. Then $(\ref{xtoq})$ is equivalent to 
\begin{align}
    \ln\left( \frac{rq^2}{2^R} \right)  &=  \theta \left(q + 2^R - 1 -  \frac{2^R}{q} \right)\\
     \frac{rq^2}{2^R} &= \exp \left( \theta \left(q + 2^R - 1 -  \frac{2^R}{q} \right) \right). 
\end{align}
Define $h : (0, \infty) \to (0, \infty)$ as 
\begin{align}
    h(q) = \frac{2^R}{rq^2} \exp \left( \theta \left(q + 2^R - 1 -  \frac{2^R}{q} \right) \right). \label{hdef}
\end{align}
Then the solution set of $\tilde{w}'(\alpha) = 0$ for $\alpha \in (\alpha', 1)$ is specified by the solution set of $h(q) = 1$ for $q \in (0, \infty)$. We have $h(0^+) = 0$ and $h(q) \to +\infty$ as $q \to \infty$. 
A sufficient condition for a unique solution to $h(q) = 1$ over $(0, \infty)$ is that $h(q)$ be strictly increasing over $(0, \infty)$. We have 
\begin{align}
    \frac{d}{dq}\left( \ln h(q)\right) &= \theta \frac{q^2 - \frac{2}{\theta} q + 2^R}{q^2} = \frac{h'(q)}{h(q)}. \label{derhq} 
\end{align}
Hence, 
\begin{align*}
    q^2 - \frac{2}{\theta} q + 2^R > 0 \text{ for all } q > 0 \iff \theta > \frac{1}{2^{R/2}}
\end{align*}
is a sufficient condition for 
\begin{itemize}
    \item a unique solution to $h(q) = 1$ for $q \in (0, \infty)$,
    \item the existence of exactly one critical point (a local minimum) of $\tilde{w}(\alpha)$ for $\alpha \in (\alpha', 1)$,  
    \item $\alpha^\star = 1$.
\end{itemize}
For $\theta = \frac{1}{2^{R/2}}$, $h'(q) > 0$ for all $q > 0, q \neq \frac{1}{\theta}$. Hence, we have the following lemma.  
\begin{lemma}
    If $$\frac{2^{R} - 1}{P \sigma^2} \geq 2^{-R/2},$$ then $\alpha^\star = 1$ for all $d_1 > d_2 > 0$.  
\end{lemma}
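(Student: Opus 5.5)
We use the reduction already in place: the optimum is $\max_{\alpha\in[\alpha',1]}\tilde w(\alpha)$, and on $(\alpha',1)$ the critical points of $\tilde w$ correspond, through the increasing bijection $q=q(\alpha)$ in $(\ref{alphatoqmapping})$, to the solutions of $h(q)=1$ on $(0,\infty)$. Since $h(0^+)=0$ and $h(q)\to\infty$, we have $h(q)<1$ near $0$ and $h(q)>1$ for large $q$. Translating back, $\tilde w'<0$ near $\alpha'$ and $\tilde w'>0$ near $1$; this is consistent with $(\ref{posd})$ and $(\ref{negd})$, since $\tilde w'$ has the sign of $h(q)-1$ up to a positive factor. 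So it is enough to show that $h(q)=1$ has exactly one root, i.e., that $\tilde w$ has a single critical point on $(\alpha',1)$, which must be a local minimum.

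By $(\ref{derhq})$, $h'/h$ has the sign of the quadratic $p(q)=q^2-\frac{2}{\theta}q+2^R$, whose discriminant is $\frac{4}{\theta^2}-4\cdot 2^R$.

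\textbf{Case $\theta>2^{-R/2}$.} The discriminant is negative, so $p>0$ on $(0,\infty)$ and $h$ is strictly increasing.

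\textbf{Case $\theta=2^{-R/2}$.} Here $p(q)=(q-1/\theta)^2$, which vanishes only at $q=1/\theta$. So $h'>0$ except at one point, and $h$ is still strictly increasing.

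In both cases $h$ is continuous and strictly increasing from $0$ to $\infty$, so there is exactly one $q_1$ with $h(q_1)=1$. The sign of $h-1$ changes from negative to positive at $q_1$, hence $\tilde w$ decreases on $(\alpha',\alpha_1]$ and increases on $[\alpha_1,1)$, where $\alpha_1=q^{-1}(q_1)$.

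Therefore $\max_{[\alpha',1]}\tilde w=\max\{\tilde w(\alpha'),\tilde w(1)\}$. From $\tilde w(1)=w(1)>w(0)\ge w(\alpha')$, shown earlier, the maximum is at $\alpha=1$. Together with the earlier observation that $w$ is decreasing on $[0,\alpha']$ and $w(1)>w(0)$, this gives $\alpha^\star=1$ for every $d_1>d_2>0$. The condition does not involve $r=d_2/d_1$, because $r$ only rescales $h$ and does not affect its monotonicity.

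The only delicate point is the boundary case $\theta=2^{-R/2}$, where $h'$ vanishes at the single point $q=1/\theta$. This is handled by noting that a continuous function whose derivative is positive except at an isolated point is strictly increasing.
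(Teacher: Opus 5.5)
Your proof is correct and follows the paper's argument essentially step for step. The paper also uses the fact that $\tilde w'$ has the sign of $h(q)-1$, and the discriminant of $q^2-\frac{2}{\theta}q+2^R$, to get $h$ strictly increasing for $\theta>2^{-R/2}$; it then treats $\theta=2^{-R/2}$ separately, where $h'$ vanishes only at $q=1/\theta$. This yields a single critical point of $\tilde w$ on $(\alpha',1)$, a local minimum, and hence $\alpha^\star=1$ via $w(1)>w(0)\geq \max_{[0,\alpha']}w$; your only difference is that you spell out this final endpoint comparison, which the paper leaves implicit.
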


If $0 < \theta < \frac{1}{2^{R/2}}$, then let $q_- = \frac{1}{\theta} - \sqrt{\frac{1}{\theta^2} - 2^R}$ and $q_+ = \frac{1}{\theta} + \sqrt{\frac{1}{\theta^2} - 2^R}$. For $q \in (0, q_-)$, $h(q)$ is increasing. For $q \in (q_-, q_+)$, $h(q)$ is decreasing. For $q > q_+$, $h(q)$ is increasing. Hence, $h(q) = 1$ can have at most $3$ solutions for $q \in (0, \infty)$. Hence, $\theta < \frac{1}{2^{R/2}}$ is a necessary condition for the existence of three solutions of $\tilde{w}'(\alpha) = 0$ over $(\alpha', 1)$. A necessary and sufficient condition for the existence of three solutions of $h(q) = 1$ over $(0, \infty)$ is described after stating the following lemma.  

\begin{lemma}
 For all $R > 0$, $\theta < \frac{1}{2^{R/2}}$ and $r \in (0, 1)$, $h(q_-) \geq \frac{2^R}{r} > 1$.
 \label{hq->1}
\end{lemma}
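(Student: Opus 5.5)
The plan is to reduce the claim to an elementary one-variable inequality by exploiting the fact that $q_-$ is a root of the quadratic in $(\ref{derhq})$. From $(\ref{hdef})$, $h(q_-) = \frac{2^R}{r}\cdot q_-^{-2}\exp\big(\theta(q_- + 2^R - 1 - 2^R/q_-)\big)$. Since $r<1$, the second inequality $\frac{2^R}{r} > 1$ is immediate. It therefore suffices to show
\begin{align*}
q_-^{-2}\exp\left(\theta\left(q_- + 2^R - 1 - \frac{2^R}{q_-}\right)\right) \geq 1 .
\end{align*}
Because $q_-^2 - \frac{2}{\theta}q_- + 2^R = 0$, we have $2^R/q_- = \frac{2}{\theta} - q_-$. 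The exponent then collapses to $2\theta q_- - 2 + \theta(2^R-1)$, so the target becomes
\begin{align*}
2\theta q_- - 2 + \theta(2^R - 1) - 2\ln q_- \geq 0 .
\end{align*}

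Next I will reparametrize by $u \coloneqq \theta q_- = 1 - \sqrt{1 - 2^R\theta^2} \in (0,1)$, which satisfies $2^R\theta^2 = u(2-u)$. Hence $\theta = 2^{-R/2}\sqrt{u(2-u)}$ and $2\ln q_- = 2\ln u - \ln(u(2-u)) + R\ln 2$. Writing $c \coloneqq (2^R-1)2^{-R/2} = 2\sinh(R\ln 2/2)$, the inequality becomes
\begin{align*}
\varphi(u) \coloneqq 2(u-1) - \ln u + \ln(2-u) + c\sqrt{u(2-u)} - R\ln 2 \geq 0, \quad u\in(0,1).
\end{align*}
As $u\uparrow 1$ (i.e. $\theta \uparrow 2^{-R/2}$), this tends to $c - R\ln 2 = 2\sinh(x) - 2x \geq 0$ with $x = R\ln2/2$. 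As $u\downarrow 0$, $\varphi\to+\infty$.

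The main obstacle is that $\varphi$ is not monotone on $(0,1)$. With $s = 1-u$ and $v = u(2-u)$, one gets $\varphi'(u) = \frac{s}{\sqrt v}\big(c - \frac{2s}{\sqrt v}\big)$. Thus $\varphi$ decreases and then increases, with a unique interior minimizer $u_0$ where $2(1-u_0)/\sqrt{u_0(2-u_0)} = c$. I would evaluate $\varphi$ at this critical point. There $c\sqrt{u_0(2-u_0)} = 2(1-u_0)$ and $R\ln 2 = 2\,\mathrm{arcsinh}(c/2)$, so $\varphi(u_0)$ becomes an explicit function of $u_0$ alone:
\begin{align*}
4(u_0-1) \cdot 0 + 2(u_0 - 1) + 2(1-u_0) - \ln u_0 + \ln(2-u_0) - 2\,\mathrm{arcsinh}\left(\frac{1-u_0}{\sqrt{u_0(2-u_0)}}\right),
\end{align*}
that is, $\ln\frac{2-u_0}{u_0} - 2\,\mathrm{arcsinh}\big(\frac{1-u_0}{\sqrt{u_0(2-u_0)}}\big)$. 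Using $\mathrm{arcsinh}(y) = \ln(y + \sqrt{1+y^2})$ and $1 + \frac{(1-u_0)^2}{u_0(2-u_0)} = \frac{1}{u_0(2-u_0)}$, the arcsinh term equals $\ln\frac{2-u_0}{u_0}$. Hence $\varphi(u_0) = 0$, which is exactly the required bound with equality at the critical configuration. I would double-check this algebra carefully, since it is the crux of the argument. Should the critical-point evaluation fail to be exact, the fallback is to bound $\varphi(u_0)$ below using $\mathrm{arcsinh}(y)\le \ln(y+\sqrt{1+y^2})$ estimates, or to treat $(u,R)$ jointly and minimize over $R$ first for fixed $u$.
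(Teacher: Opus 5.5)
Your proof is correct and follows essentially the paper's route. The paper also drops the factor $2^R/r$ and reparametrizes by $t=q_-$, so that $\theta = 2t/(t^2+2^R)$; your $u=\theta q_- = 2t^2/(t^2+2^R)$ is just an increasing change of variable from that $t$. The paper then asserts, without verification, that the resulting one-variable function attains its minimum value $1$ at $t=1$, which is exactly your critical point $u_0 = 2/(1+2^R)$. Your sign analysis of $\varphi'$ and the exact arcsinh evaluation $\varphi(u_0)=0$ supply the check the paper omits. The stray term $4(u_0-1)\cdot 0$ and the fallback remark are unnecessary and can be removed.
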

\textit{Proof:} The proof of Lemma \ref{hq->1} is given in Appendix \ref{hq->1proof}. 
 
Since $h(q_-)$ is a local maximum and $h(q_+)$ is a local minimum, Lemma \ref{hq->1} implies that  
\begin{align*}
     \# \{\alpha \in (\alpha', 1): \tilde{w}'(\alpha) = 0 \} = \begin{cases}
        1 & \text{ if } \theta \geq \frac{1}{2^{R/2}}\\
        1 & \text{ if } h(q_+) > 1, \theta < \frac{1}{2^{R/2}}\\
        2 & \text{ if } h(q_+) = 1, \theta < \frac{1}{2^{R/2}}\\
        3 & \text{ if } h(q_+) < 1, \theta < \frac{1}{2^{R/2}}. 
    \end{cases}
\end{align*}

Define 
\begin{align*}
    \xi(P, \sigma^2, R) = \left( \frac{\theta 2^{R/2}}{1 +  \sqrt{1-2^R\theta^2 }}\right)^2  \exp \left( \theta(2^R - 1) +  2 \sqrt{1 - 2^R \theta^2} \right).
\end{align*}
\begin{lemma}
   Let $0 < \theta < 2^{-R/2}$. Then the condition $h(q_+) > 1$ is equivalent to 
   \begin{align}
       \frac{d_2}{d_1} < \xi(P, \sigma^2, R). \label{b-}
   \end{align}
   \label{rcond}
\end{lemma}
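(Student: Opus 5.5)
The plan is to turn the condition $h(q_+) > 1$ into an explicit upper bound on $r = d_2/d_1$. Then I will show that this bound equals $\xi(P,\sigma^2,R)$ by simplifying with the algebraic relations satisfied by the two roots $q_\pm$ of $q^2 - \frac{2}{\theta}q + 2^R = 0$.

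\textbf{Step 1: isolate $r$.} From $(\ref{hdef})$, and because $r > 0$ and $q_+ > 0$, the inequality $h(q_+) > 1$ is equivalent to
\begin{align*}
 r < \frac{2^R}{q_+^2}\exp\left(\theta\left(q_+ + 2^R - 1 - \frac{2^R}{q_+}\right)\right).
\end{align*}
All remaining work consists of simplifying this right-hand side. No monotonicity argument is needed, since each step is an equivalence.

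\textbf{Step 2: relations between the roots.} Set $s \coloneqq \sqrt{1 - 2^R\theta^2}$, which lies in $(0,1)$ because $0 < \theta < 2^{-R/2}$. Then $q_\pm = (1 \pm s)/\theta$, and Vieta's formulas give $q_+ q_- = 2^R$. Consequently:
\begin{itemize}
 \item $2^R/q_+ = q_-$;
 \item $\theta(q_+ - q_-) = 2s$;
 \item $2^R/q_+^2 = q_-^2/2^R$.
\end{itemize}
Rationalizing gives $q_- = \frac{1-s}{\theta} = \frac{1-s^2}{\theta(1+s)} = \frac{2^R\theta}{1+s}$, and therefore
\begin{align*}
 \frac{q_-^2}{2^R} = \left(\frac{\theta 2^{R/2}}{1+s}\right)^2.
\end{align*}

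\textbf{Step 3: assemble.} The exponent becomes $\theta(q_+ - q_-) + \theta(2^R - 1) = 2s + \theta(2^R - 1)$. Hence the bound in Step 1 equals
\begin{align*}
 \left(\frac{\theta 2^{R/2}}{1+\sqrt{1-2^R\theta^2}}\right)^2 \exp\left(\theta(2^R-1) + 2\sqrt{1-2^R\theta^2}\right) = \xi(P,\sigma^2,R),
\end{align*}
which proves $(\ref{b-})$.

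The only step that needs care is rewriting $2^R/q_+^2$ in the rationalized form with $1+s$ in the denominator, so that it matches the displayed definition of $\xi$. The identity $q_+ q_- = 2^R$ handles this directly.
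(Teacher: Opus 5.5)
Your proposal is correct and takes essentially the same route as the paper's proof. Both substitute $q_+$ into $h$, simplify $2^R/q_+^2 = \bigl(\theta 2^{R/2}/(1+\sqrt{1-2^R\theta^2})\bigr)^2$ and $\theta(q_+ - 2^R/q_+) = 2\sqrt{1-2^R\theta^2}$, and then read off $h(q_+)>1 \iff r<\xi$; your detour through $q_-$ and Vieta is just a different way of computing $1/q_+ = \theta/(1+\sqrt{1-2^R\theta^2})$, which also follows directly from $q_+ = (1+\sqrt{1-2^R\theta^2})/\theta$.
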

\textit{Proof:} The proof of Lemma \ref{rcond} is given in Appendix \ref{rcond_proof}.

Note that that $\theta < 2^{-R/2}, h(q_+) < 1$ is a necessary but not sufficient condition for $\alpha^\star \in (\alpha', 1)$. This is because $\theta < \frac{1}{2^{R/2}}$ only implies that $h(q)$ has a local maximizer $q_-$ and a local minimizer $q_+$ over $(0, \infty)$. With $h(q_+) < 1$, we only have that $\tilde{w}(\alpha)$ has two local minima and one local maximum, denoted as $\tilde{w}(\alpha_{0})$, in the interval $(\alpha', 1)$. A necessary and sufficient condition therefore is $\tilde{w}(\alpha_0) > w(1).$ 

From $(\ref{posd})$ and $(\ref{negd})$, it is clear that the order of the critical points of $\tilde{w}$ is min - max - min. Given $\theta < 2^{-R/2}, h(q_+) < 1$, there exist three distinct solutions $0 < q_1 < q_0 < q_3 < \infty$ to    
\begin{align}
    \frac{2^R}{rq^2} \exp \left( \theta \left(q + 2^R - 1 -  \frac{2^R}{q} \right) \right) = 1, \label{gi}
\end{align}
where $q_0 \in (q_-, q_+)$ corresponds to the local maximizer $\alpha_0$ of $\tilde{w}(\alpha)$ over $(\alpha', 1)$ because $(\ref{alphatoqmapping})$ preserves the order of the solutions of $h(q) = 1$ and those of $\tilde{w}'(\alpha) = 0$. Then \begin{align*}
    \tilde{w}\left(\alpha_0 = \frac{q_0 + 2^R - 1}{q_0 + 2^R}\right) &= d_1 \exp\left(-\frac{\theta \left(q_0+2^R\right)}{q_0} \right)  + r d_1 \exp \left( -\theta \left(q_0+2^R\right) \right)\\
    &\stackrel{(a)}{=} d_1 \exp\left(-\frac{\theta \left(q_0+2^R\right)}{q_0} \right) +  d_1 \frac{2^R}{q_0^2} \exp \left(-\theta\left( \frac{q_0 + 2^R}{q_0} \right) \right)\\
    &= d_1 \exp\left(-\frac{\theta \left(q_0+2^R\right)}{q_0} \right) \left( 1 + \frac{2^R}{q_0^2} \right).
\end{align*}
In equality $(a)$ above, we used the fact that $q_0$ satisfies $(\ref{gi})$. Hence, $\tilde{w}(1) < \tilde{w}(\alpha_0)$ can be written as  
\begin{align*}
     1  < \exp\left(-\frac{\theta 2^R}{q_0} \right) \left( 1 + \frac{2^R}{q_0^2} \right).
\end{align*}

\section{Proof of Lemma \ref{atox} \label{atox_proof}}

For any $\overrightarrow \alpha \in \Delta_+^{K-1}$, we show that $\overrightarrow x = M_B( \overrightarrow \alpha)$ be specified by
\begin{align}
    x_i = \alpha_i - (2^R -1)\beta_i \text{ for all } 1 \leq i \leq K. \label{atoxmapping}   
\end{align}
We now show that $\overrightarrow x \in \mathcal{S}_K$. For $1 \leq i \leq K$, 
\begin{align}
    \alpha_i &= x_i + (2^R -1)\beta_i \label{recur1} \\
    \beta_{i-1} - \beta_i &=  x_i + (2^R -1)\beta_i \notag \\
    \beta_{i-1} &= x_i + 2^R\beta_i. \label{recurbeta} 
\end{align}
The fact that $(\ref{recurbeta})$ holds for all $1 \leq i \leq K$ implies that 
\begin{align}
    \beta_i &= \sum_{j = i + 1}^K (2^R)^{j - i - 1} x_j \text{ for } i  = 0, \ldots, K, \label{recur4}
\end{align}
where $\beta_K = 0$ by convention. Hence, 
\begin{align}
    &\sum_{j=1}^{K} (2^R)^{j-1} x_j = \beta_0 = \sum_{i=1}^K \alpha_i.  \label{xcon} 
\end{align}
The last expression above on the RHS is equal to $1$ since $\overrightarrow \alpha \in \Delta_+^{K-1}$. Also, each $x_i \geq 0$ so we conclude that $\overrightarrow x = M_B(\overrightarrow \alpha) \in \mathcal{S}_K$ for all $\overrightarrow \alpha \in \Delta_+^{K-1}$. 

We will now show that $M_B : \Delta_+^{K-1} \to \mathcal{S}_K$ is a bijection by constructing a mapping $\Psi : \mathcal{S}_K \to \Delta_+^{K-1}$ such that
\begin{itemize}
    \item $M_B(\Psi(\overrightarrow x)) = \overrightarrow x$ for all $\overrightarrow x \in \mathcal{S}_K$
    \item $\Psi(M_B(\overrightarrow \alpha)) = \overrightarrow \alpha$ for all $\overrightarrow \alpha \in \Delta_+^{K-1}$.
\end{itemize}
\vspace{0.1cm}
Let $\overrightarrow{\tilde{\alpha}} = \Psi(\overrightarrow x)$ be specified as 
\begin{align}
    \tilde{\alpha}_{i} &= x_{i} + (2^R - 1)\sum_{j=i + 1}^{K} 2^{R(j-i-1)} x_{j} \label{psispecification}
\end{align}
for all $1 \leq i \leq K$. Then it can be checked that 
\begin{align}
    \tilde{\beta}_i &= \sum_{j=i + 1}^K \tilde{\alpha}_j = \sum_{j=i + 1}^{K} 2^{R(j-i - 1)} x_{j}. \label{psispecification2} 
\end{align}
Clearly, $\tilde{\alpha}_i \geq 0$ and $\tilde{\alpha}_i - (2^R-1) \tilde{\beta}_i \geq 0$. Moreover,  
\begin{align*}
    \sum_{i=1}^K \tilde{\alpha}_i = \tilde{\beta}_0 =  \sum_{j=1}^{K} (2^R)^{j-1} x_j = 1, 
\end{align*}
where the last equality above uses the fact that $\overrightarrow x \in \mathcal{S}_K$. Hence, $\overrightarrow{\tilde{\alpha}} \in \Delta_+^{K-1}$.

\subsection{$M_B(\Psi(\overrightarrow x)) = \overrightarrow x$}

In this subsection, we show that $M_B(\Psi(\overrightarrow x)) = \overrightarrow x$. Let $\overrightarrow{\tilde{\alpha}} = \Psi(\overrightarrow x)$, where $\overrightarrow{\tilde{\alpha}}$ is specified in $(\ref{psispecification})$ and $M_B(\cdot)$ is specified in $(\ref{atoxmapping})$. For clarity, we write $\overrightarrow{\tilde{x}} = M_B(\overrightarrow{\tilde{\alpha}})$ and proceed to show that $\overrightarrow{\tilde{x}} = \overrightarrow{x}$. Indeed,
\begin{align*}
    \tilde{x}_K &\stackrel{(a)}{=} \tilde{\alpha}_K \stackrel{(b)}{=} x_K\\
    \tilde{x}_{K-1} &\stackrel{(a)}{=} \tilde{\alpha}_{K-1} - (2^R - 1)\tilde{\beta}_{K-1} \stackrel{(b)}{=} x_{K-1}\\
    &\,\,\,\vdots\\
    \tilde{x}_i &\stackrel{(a)}{=} \tilde{\alpha}_i - (2^R-1) \tilde{\beta}_i \stackrel{(b)}{=} x_i
\end{align*}
for $i = K - 1, \ldots, 1$, where the equalities $(a)$ use the specification of $M_B(\cdot)$ given in  $(\ref{atoxmapping})$ and the equalities $(b)$ use the specification of $\Psi(\cdot)$ given in $(\ref{psispecification})$ and $(\ref{psispecification2})$. 

\subsection{$\Psi(M_B(\overrightarrow \alpha)) = \overrightarrow \alpha$}

In this subsection, we show that $\Psi(M_B(\overrightarrow\alpha)) = \overrightarrow\alpha$ for all $\overrightarrow\alpha \in \Delta_+^{K-1}$. Let $\overrightarrow x = M_B(\overrightarrow\alpha)$ as specified in $(\ref{atoxmapping})$ and let $\overrightarrow{\tilde{\alpha}} = \Psi(\overrightarrow x)$ as specified in $(\ref{psispecification}) - (\ref{psispecification2})$. Starting from $(\ref{psispecification})$, we have
\begin{align*}
    \tilde{\alpha}_{i} &= x_{i} + (2^R - 1)\sum_{j=i + 1}^{K} 2^{R(j-i-1)} x_{j} \\
    &\stackrel{(a)}{=} \alpha_i - (2^R -1)\beta_i + (2^R - 1) \sum_{j=i + 1}^{K} 2^{R(j-i-1)} x_{j}\\
    &\stackrel{(b)}{=} \alpha_i 
\end{align*}
for all $1 \leq i \leq K$, where equality $(a)$ above uses $(\ref{atoxmapping})$ and equality $(b)$ above uses $(\ref{recur4})$.

\section{Proof of Theorem \ref{orasiml} \label{orasiml_proof}}

First, consider any sequence $\overrightarrow w^{(n)} \in \Delta_{n}^{K-1}$ that converges to some $\overrightarrow v \in \Delta^{K-1}$. Based on $(\ref{firstorderapproxRconstant})$, it can be checked that for every $\gamma \geq 0$ such that $\gamma P \neq 2^{R/v_i} - 1$, 
\begin{align*}
    \lim_{n \to \infty} \mathcal{E}\left(w_i^{(n)} n, \frac{R}{w_i^{(n)}}, \gamma P \right) &= \mathds{1}\left( R > v_i \log(1 + \gamma P) \right).
\end{align*}
Since $\mathbb{P}(\gamma P = 2^{R/v_i} - 1) = 0$, we can argue by dominated convergence theorem that 
\begin{align}
        \lim_{n \to \infty} T_n(\overrightarrow w^{(n)}) &= \sum_{i=1}^K \mathbb{E}_{\gamma}\left [1 - \lim_{n \to \infty} \mathcal{E}\left(w_i^{(n)} n, \frac{R}{w_i^{(n)}}, \gamma P \right)  \right] d_i\\
        &= \sum_{i=1}^K \mathbb{E}_{\gamma}\left [ \mathds{1}\left( R < v_i \log(1 + \gamma P) \right)  \right] d_i\\
        &= \sum_{i=1}^K \mathbb{P}\left(\gamma > \frac{2^{R/v_i}-1}{P} \right) d_i\\
    &= \sum_{i=1}^K \exp\left(- \frac{2^{R/v_i}-1}{P\sigma^2}\right) d_i\\
    &= \sum_{i=1}^K \exp\left(- \frac{2^{R/v_i}-1}{2^R - 1} \theta\right) d_i, \label{.,.q}
\end{align}
where the last equality follows by 
substituting the definition of $\theta$ from $(\ref{thetadef})$. 

Now we let $\overrightarrow w^{(n)} \in \Delta_n^{K-1} \subset \Delta^{K-1}$ be a maximizer in 
\begin{align*}
    \max_{\overrightarrow w \in \Delta_n^{K - 1}}\,T_n(\overrightarrow w). 
\end{align*}
Since $\Delta^{K-1}$ is compact, we can assume that $\overrightarrow w^{(n_m)} \to \overrightarrow v$ for some $\overrightarrow v \in \Delta^{K-1}$ by passing down to a convergent subsequence $\overrightarrow w^{(n_m)}$ which additionally satisfies
\begin{align*}
    \limsup_{n \to \infty}  \max_{\overrightarrow w \in \Delta_n^{K - 1}}\,T_n(\overrightarrow w) = \lim_{m \to \infty} T_{n_m}(\overrightarrow w^{(n_m)}).
\end{align*}
Hence, from $(\ref{.,.q})$, we have 
\begin{align}
    \limsup_{n \to \infty}  \max_{\overrightarrow w \in \Delta_n^{K - 1}}\,T_n(\overrightarrow w) \leq \max_{\overrightarrow v \in \Delta^{K - 1}}  \sum_{i=1}^K \exp\left(- \frac{2^{R/v_i}-1}{2^R - 1} \theta\right) d_i. \label{m/1-}
\end{align}

Now choose $\overrightarrow v$ to be a maximizer in 
\begin{align*}
    \max_{\overrightarrow v \in \Delta^{K - 1}} \sum_{i=1}^K \exp\left(- \frac{2^{R/v_i}-1}{2^R - 1} \theta\right) d_i. 
\end{align*}
Given this $\overrightarrow v$ and for each integer $n \geq 1$, construct $\overrightarrow w^{(n)}$ as outlined in Remark $\ref{vtowmapping}$ so that $\overrightarrow w^{(n)} \to \overrightarrow v$. Then invoking $(\ref{.,.q})$ again, we obtain
\begin{align}
    \liminf_{n \to \infty} \max_{\overrightarrow w \in \Delta_n^{K - 1}}\,T_n(\overrightarrow w) \geq \max_{\overrightarrow v \in \Delta^{K - 1}} \sum_{i=1}^K \exp\left(- \frac{2^{R/v_i}-1}{2^R - 1} \theta\right) d_i. \label{m/2-}
\end{align}
Combining $(\ref{m/1-})$ and $(\ref{m/2-})$ establishes the result.

\section{Proof of Lemma \ref{hq->1} \label{hq->1proof}}

For $\theta < \frac{1}{2^{R/2}}$, let $t = \frac{1}{\theta} - \sqrt{\frac{1}{\theta^2} - 2^R}$ so that $t \in (0,2^{R/2})$ and $\theta = \frac{2t}{t^2 + 2^R}$. Then  
\begin{align*}
    h(t) &= \frac{2^R}{r t^2} \exp \left( \frac{2t}{t^2 + 2^R} \left(t + 2^R - 1 -  \frac{2^R}{t} \right) \right).
\end{align*}
So it suffices to show that
$$F(t) = \frac{1}{t^2} \exp \left( \frac{2t}{t^2 + 2^R} \left(t + 2^R - 1 -  \frac{2^R}{t} \right) \right) \geq 1$$
for all $t \in (0, 2^{R/2})$, since $2^R/r > 1$. Indeed, the minimum of $F(t)$ over $(0, 2^{R/2})$ is attained at $t = 1$ and the minimum value is $1$. 

\section{Proof of Lemma \ref{rcond} \label{rcond_proof}}

We first write 
    \begin{align*}
        h(q) &= \frac{e^{ \theta \left(2^R - 1 \right)}}{r}  \frac{2^R}{q^2} \exp \left( \theta \left(q -  \frac{2^R}{q} \right) \right). 
    \end{align*}
    Then using 
    \begin{align*}
        \frac{1}{q_+^2} &= \left( \frac{\theta}{1 +  \sqrt{1-2^R\theta^2 }}\right)^2
    \end{align*}
    and 
    \begin{align*}
         \theta \left(q_+ -  \frac{2^R}{q_+} \right) &= 2 \sqrt{1 - 2^R \theta^2},  
    \end{align*}
    we have 
    \begin{align*}
        h(q_+) &= \frac{2^R}{r} \left( \frac{\theta}{1 +  \sqrt{1-2^R\theta^2 }}\right)^2  \exp \left( \theta(2^R - 1) +  2 \sqrt{1 - 2^R \theta^2} \right). 
    \end{align*}

\ifCLASSOPTIONcaptionsoff
  \newpage
\fi


\begin{thebibliography}{10}
\providecommand{\url}[1]{#1}
\csname url@samestyle\endcsname
\providecommand{\newblock}{\relax}
\providecommand{\bibinfo}[2]{#2}
\providecommand{\BIBentrySTDinterwordspacing}{\spaceskip=0pt\relax}
\providecommand{\BIBentryALTinterwordstretchfactor}{4}
\providecommand{\BIBentryALTinterwordspacing}{\spaceskip=\fontdimen2\font plus
\BIBentryALTinterwordstretchfactor\fontdimen3\font minus \fontdimen4\font\relax}
\providecommand{\BIBforeignlanguage}[2]{{%
\expandafter\ifx\csname l@#1\endcsname\relax
\typeout{** WARNING: IEEEtran.bst: No hyphenation pattern has been}%
\typeout{** loaded for the language `#1'. Using the pattern for}%
\typeout{** the default language instead.}%
\else
\language=\csname l@#1\endcsname
\fi
#2}}
\providecommand{\BIBdecl}{\relax}
\BIBdecl

\bibitem{kostina_JSCC}
V.~Kostina and S.~Verdú, ``Lossy joint source-channel coding in the finite blocklength regime,'' \emph{IEEE Transactions on Information Theory}, vol.~59, no.~5, pp. 2545--2575, 2013.

\bibitem{gunduz2022transmittingbitscontextsemantics}
\BIBentryALTinterwordspacing
D.~Gunduz, Z.~Qin, I.~E. Aguerri, H.~S. Dhillon, Z.~Yang, A.~Yener, K.~K. Wong, and C.-B. Chae, ``Beyond transmitting bits: Context, semantics, and task-oriented communications,'' 2022. [Online]. Available: \url{https://arxiv.org/abs/2207.09353}
\BIBentrySTDinterwordspacing

\bibitem{deepJSCC}
\BIBentryALTinterwordspacing
E.~Bourtsoulatze, D.~Burth~Kurka, and D.~Gunduz, ``Deep joint source-channel coding for wireless image transmission,'' \emph{IEEE Transactions on Cognitive Communications and Networking}, vol.~5, no.~3, p. 567–579, Sep. 2019. [Online]. Available: \url{http://dx.doi.org/10.1109/TCCN.2019.2919300}
\BIBentrySTDinterwordspacing

\bibitem{homaspaper1}
T.-Y. Tung, H.~Esfahanizadeh, J.~Du, and H.~Viswanathan, ``Multi-level reliability interface for semantic communications over wireless networks,'' \emph{IEEE Transactions on Communications}, vol.~73, no.~8, pp. 6023--6035, 2025.

\bibitem{homapaper2}
\BIBentryALTinterwordspacing
H.~Esfahanizadeh, N.~Fayaz, J.~Du, and H.~Viswanathan, ``Block erasure-aware semantic multimedia compression via jscc autoencoder,'' 2026. [Online]. Available: \url{https://arxiv.org/abs/2601.20707}
\BIBentrySTDinterwordspacing

\bibitem{UEP_paper}
S.~Borade, B.~Nakiboğlu, and L.~Zheng, ``Unequal error protection: An information-theoretic perspective,'' \emph{IEEE Transactions on Information Theory}, vol.~55, no.~12, pp. 5511--5539, 2009.

\bibitem{SR_broadcast_andrea}
C.~T.~K. Ng, D.~Gunduz, A.~J. Goldsmith, and E.~Erkip, ``Distortion minimization in {G}aussian layered broadcast coding with successive refinement,'' \emph{IEEE Transactions on Information Theory}, vol.~55, no.~11, pp. 5074--5086, 2009.

\bibitem{SR_via_broadcast}
C.~Tian, A.~Steiner, S.~Shamai, and S.~N. Diggavi, ``Successive refinement via broadcast: Optimizing expected distortion of a {G}aussian source over a {G}aussian fading channel,'' \emph{IEEE Transactions on Information Theory}, vol.~54, no.~7, pp. 2903--2918, 2008.

\bibitem{utility_max}
M.~Shaqfeh, W.~Mesbah, and H.~Alnuweiri, ``Utility maximization for layered transmission using the broadcast approach,'' \emph{IEEE Transactions on Wireless Communications}, vol.~11, no.~3, pp. 1228--1238, 2012.

\bibitem{1055184}
P.~Bergmans, ``A simple converse for broadcast channels with additive white {G}aussian noise (corresp.),'' \emph{IEEE Transactions on Information Theory}, vol.~20, no.~2, pp. 279--280, 1974.

\bibitem{1237140}
S.~Shamai and A.~Steiner, ``A broadcast approach for a single-user slowly fading mimo channel,'' \emph{IEEE Transactions on Information Theory}, vol.~49, no.~10, pp. 2617--2635, 2003.

\bibitem{7156144}
W.~Yang, G.~Caire, G.~Durisi, and Y.~Polyanskiy, ``Optimum power control at finite blocklength,'' \emph{IEEE Trans.\ Inf.\ Theory}, vol.~61, no.~9, pp. 4598--4615, 2015.

\bibitem{NISTHandbook}
F.~W.~J. Olver, D.~W. Lozier, R.~F. Boisvert, and C.~W. Clark, Eds., \emph{NIST Handbook of Mathematical Functions}.\hskip 1em plus 0.5em minus 0.4em\relax New York: Cambridge University Press, 2010.

\bibitem{timidboldadeel}
A.~Mahmood and A.~B. Wagner, ``Timid/bold coding for channels with cost constraints,'' in \emph{2023 IEEE International Symposium on Information Theory (ISIT)}, 2023, pp. 1442--1447.

\bibitem{gallager_paper_65}
R.~Gallager, ``A simple derivation of the coding theorem and some applications,'' \emph{IEEE Transactions on Information Theory}, vol.~11, no.~1, pp. 3--18, 1965.

\bibitem{gallager1968}
R.~G. Gallager, \emph{Information Theory and Reliable Communication}.\hskip 1em plus 0.5em minus 0.4em\relax New York, NY, USA: Wiley, 1968.

\bibitem{7056434}
V.~Y.~F. Tan and M.~Tomamichel, ``The third-order term in the normal approximation for the awgn channel,'' \emph{IEEE Trans.\ Inf.\ Theory}, vol.~61, no.~5, pp. 2430--2438, 2015.

\bibitem{ppv}
Y.~Polyanskiy, H.~V. Poor, and S.~Verdu, ``Channel coding rate in the finite blocklength regime,'' \emph{IEEE Transactions on Information Theory}, vol.~56, no.~5, pp. 2307--2359, 2010.

\bibitem{6802432}
W.~Yang, G.~Durisi, T.~Koch, and Y.~Polyanskiy, ``Quasi-static multiple-antenna fading channels at finite blocklength,'' \emph{IEEE Transactions on Information Theory}, vol.~60, no.~7, pp. 4232--4265, 2014.

\bibitem{nonlinopt}
J.~Nocedal and S.~J. Wright, \emph{Numerical Optimization}, 2nd~ed., ser. Springer Series in Operations Research and Financial Engineering.\hskip 1em plus 0.5em minus 0.4em\relax New York: Springer, 2006.

\bibitem{SHANNON19576}
\BIBentryALTinterwordspacing
C.~E. Shannon, ``Certain results in coding theory for noisy channels,'' \emph{Information and Control}, vol.~1, no.~1, pp. 6--25, 1957. [Online]. Available: \url{https://www.sciencedirect.com/science/article/pii/S0019995857900396}
\BIBentrySTDinterwordspacing

\bibitem{9099482}
A.~B. Wagner, N.~V. Shende, and Y.~Altuğ, ``A new method for employing feedback to improve coding performance,'' \emph{IEEE Trans.\ Inf.\ Theory}, vol.~66, no.~11, pp. 6660--6681, 2020.

\bibitem{Shevtsova2013}
I.~G. Shevtsova, ``On the absolute constants in the {Berry}--{Esseen} inequality and its structural and nonuniform improvements,'' \emph{Inform. Primen.}, vol.~7, no.~1, pp. 124--125, 2013.

\end{thebibliography}
\end{document}